\documentclass[pdflatex,sn-mathphys-num]{sn-jnl}
\usepackage{graphicx}%
\usepackage{multirow}%
\usepackage{amsmath,amssymb,amsfonts}%
\usepackage{amsthm}%
\usepackage{mathrsfs}%
\usepackage[title]{appendix}%
\usepackage{xcolor}%
\usepackage{textcomp}%
\usepackage{manyfoot}%
\usepackage{booktabs}%
\usepackage{algorithm}%
\usepackage{algorithmicx}%
\usepackage{algpseudocode}%
\usepackage{listings}%

\usepackage{mathtools}
\usepackage{thmtools, thm-restate}
\usepackage{bussproofs}
\usepackage{bbding}
\usepackage{hyperref}
\usepackage{cleveref}

\theoremstyle{thmstyleone}
\newtheorem{theorem}{Theorem}
\newtheorem{lemma}[theorem]{Lemma}
\newtheorem{proposition}[theorem]{Proposition}
\newtheorem{definition}[theorem]{Definition}
\newtheorem{conjecture}[theorem]{Conjecture}

\theoremstyle{thmstyletwo}
\newtheorem{example}[theorem]{Example}

\newcommand{\abs}[1]{\left|#1\right| }
\newcommand{\suchthat}{\ |\ }

\newcommand{\mset}[1]{\left\{#1\right\}}
\newcommand{\union}{\cup}

\newcommand{\Union}{\bigcup}
\newcommand{\seq}[1]{\overline{#1}}

\renewcommand{\iff}{\equiv}
\newcommand{\imp}{\Rrightarrow}
\newcommand{\rimp}{\Lleftarrow}
\newcommand{\limp}{\rightarrow}
\newcommand{\liff}{\leftrightarrow}
\newcommand{\Lor}{\bigvee}
\newcommand{\Land}{\bigwedge}
\newcommand{\oeq}{\simeq}
\newcommand{\noeq}{\not\oeq}

\newcommand{\resclosure}[2]{\mathrm{Res}_{#1}^{<\omega}({#2})}

\newcommand{\dom}{\mathrm{dom}}

\newcommand{\N}{\mathbb{N}}

\newcommand{\Exists}[1]{\exists #1 \, }
\newcommand{\Forall}[1]{\forall #1 \, }

\newcommand{\FEQ}{\ensuremath{\mathrm{FEQ}}}

\newcommand{\WSOQE}{\ensuremath{\mathrm{WSOQE}}}

\newcommand{\witness}{\alpha}
\newcommand{\velim}{\rightarrow_\mathrm{ve}}
\newcommand{\velimtrans}{\velim^\ast}
\newcommand{\subsumes}{\trianglelefteq}
\newcommand{\subsumesL}[1]{\subsumes_{#1}}
\newcommand{\subsumesLvelim}[1]{\subsumes_{#1,\mathrm{ve}}}
\newcommand{\subsumesLvelimtrans}[1]{\subsumesLvelim{#1}^\ast}

\usepackage{todonotes}
\newboolean{todoactive}
\setboolean{todoactive}{false}
\let\todonote=\todo
\newcommand{\todocolor}{red}
\renewcommand{\todo}[1]{%
  \ifthenelse{\boolean{todoactive}}%
  {{\color{\todocolor}\footnote{\todonote[inline,size=\footnotesize,textcolor=\todocolor,bordercolor=white,backgroundcolor=white,noinlinepar]{TODO #1}}}}%
  {}%
}

\AtBeginEnvironment{proof}{
  \makeatletter
  \let\oldenumerate\enumerate
  \renewcommand{\enumerate}{\oldenumerate\small\normalfont}
  \makeatother
}
\AtBeginEnvironment{theorem}{
  \makeatletter
  \let\oldenumerate\enumerate
  \renewcommand{\enumerate}{\oldenumerate\small\normalfont}
  \makeatother
}
\AtBeginEnvironment{lemma}{
  \makeatletter
  \let\oldenumerate\enumerate
  \renewcommand{\enumerate}{\oldenumerate\small\normalfont}
  \makeatother
}
\AtBeginEnvironment{proposition}{
  \makeatletter
  \let\oldenumerate\enumerate
  \renewcommand{\enumerate}{\oldenumerate\small\normalfont}
  \makeatother
}
\AtBeginEnvironment{definition}{
  \makeatletter
  \let\oldenumerate\enumerate
  \renewcommand{\enumerate}{\oldenumerate\small\normalfont}
  \makeatother
}
\AtBeginEnvironment{conjecture}{
  \makeatletter
  \let\oldenumerate\enumerate
  \renewcommand{\enumerate}{\oldenumerate\small\normalfont}
  \makeatother
}
\AtBeginEnvironment{example}{
  \makeatletter
  \let\oldenumerate\enumerate
  \renewcommand{\enumerate}{\oldenumerate\small\normalfont}
  \makeatother
}
\AtBeginEnvironment{problem}{
  \makeatletter
  \let\oldenumerate\enumerate
  \renewcommand{\enumerate}{\oldenumerate\small\normalfont}
  \makeatother
}
\AtBeginEnvironment{remark}{
  \makeatletter
  \let\oldenumerate\enumerate
  \renewcommand{\enumerate}{\oldenumerate\small\normalfont}
  \makeatother
}

\newcommand{\orcidID}[1]{\href{https://orcid.org/#1}{#1}}

\begin{document}

\title{Computing Witnesses Using the SCAN Algorithm}

\author*[1]{\fnm{Fabian} \sur{Achammer} \orcidID{0009-0002-3799-6393} \email{fabian.achammer@tuwien.ac.at}}

\author[1]{\fnm{Stefan} \sur{Hetzl} \orcidID{0000-0002-6461-5982} \email{stefan.hetzl@tuwien.ac.at}}
% \equalcont{These authors contributed equally to this work.}

\author[2]{\fnm{Renate} A.~\sur{Schmidt} \orcidID{0000-0002-6673-3333} \email{renate.schmidt@manchester.ac.uk}}
% \equalcont{These authors contributed equally to this work.}

\affil*[1]{\orgdiv{Institute for Discrete Mathematics and Geometry}, \orgname{TU Wien}, \orgaddress{\city{Vienna}, \country{Austria}}}

\affil*[2]{\orgdiv{Department of Computer Science}, \orgname{University of Manchester}, \orgaddress{\city{Manchester}, \country{United Kingdom}}}

\abstract{
  Second-order quantifier elimination is the problem of finding, given a formula with second-order quantifiers, a logically equivalent first-order formula.
  While such formulas are not computable in general, there are practical algorithms and subclasses with applications throughout computational logic.
  One of the most prominent algorithms for second-order quantifier elimination is the saturation-based SCAN algorithm.
  In this paper we show how the SCAN algorithm on clause sets can be extended to solve a more general problem: namely, finding a witness for the second-order quantifiers that results in a logically equivalent first-order formula.
  In addition, we provide a prototype implementation of the proposed method.
}

\keywords{Second-order quantifier elimination, SCAN algorithm, Formula equations, Saturation theorem proving}

\maketitle

\section*{Acknowledgements}
This research was funded in part by the Austrian Science Fund (FWF) grant number 10.55776/P35787.

The first author is grateful for the support in facilitating two research visits to the University of Manchester, which greatly benefited this work.

We thank Luke Sanderson who dockerfied the original version of SCAN with help from the University of Manchester Open Source Software Club.

\section{Introduction}
Given a formula with second-order quantifiers
$\Exists{\seq{X}} \varphi$, where $\varphi$ is first-order,
\emph{second-order quantifier elimination (SOQE)}~\cite{Gabbay92Quantifier,Gabbay08Second} is the problem of finding a first-order formula~$\psi$ such that
$$
  \Exists{\seq{X}} \varphi \iff \psi.
$$
For example, consider $\varPhi = \Exists{X} (X(a) \land \Forall{u} (X(u) \limp B(u)))$.
Then $\varPhi$ implies the first-order formula $B(a)$.
Furthermore, given a model $\mathcal{M}$ of $B(a)$ one can show~\mbox{$\mathcal{M} \models\Exists{X} (X(a) \land \Forall{u} (X(u) \limp B(u)))$} by setting $X = \mset{a^\mathcal{M}}$.
Thus $\varPhi$ is equivalent to the first-order formula $B(a)$.
An algorithm for eliminating existential quantifiers can
be used to eliminate universal quantifiers by writing
$\Forall{\seq{X}}$ as $\neg \exists \seq{X} \neg$.

SOQE is an important topic in logic, automated reasoning and artificial
intelligence that has real-world applications in diverse areas.
It has been used to automate correspondence theory for modal
logic~\cite{Gabbay92Quantifier}, which has led
to new attention and results in correspondence theory of various
algebras and logics~\cite{BrinkGabbayOhlbach95,Goranko03Scan,ConradieGorankoVakarelov06a,ConradieGhilardiPalmigiano14}.
SOQE can be used to transform a knowledge base (or set of formulas, given by~$\varphi$) into
a restricted view $\psi$ of the knowledge base in which all occurrences
of the predicate symbols $\overline X$ have been eliminated.
This reduction has also been referred to as
forgetting~\cite{LinReiter94} or projection and studied in the form
of uniform interpolation or strongest necessary condition in knowledge
representation~\cite{KonevWaltherWolter09,KoopmannSchmidt13a,KoopmannSchmidt13c,Delgrande17,FereeVanDerGiessenEtAl24,DohertyLukaszewiczSzalas01}
and answer set programming~\cite{EiterKernIsberner19,GoncalvesKnorrLeite23}.
Forgetting offers solutions to several important applications and manipulations of
ontologies: ontology extraction, ontology creation, reuse and comparison,
and abductive reasoning~\cite{ChenAlghamdiEtAl19a,LudwigKonev14,LiuLuEtAl21,KoopmannDelPintoEtAl20}.
In automated reasoning an application of SOQE is (predicate) symbol
elimination~\cite{Gabbay92Quantifier,HoderKovacsVoronkov10,KhasidashviliKorovin16,PeuterSofronieStokkermans21}.
Another very promising application domain is agent communication,
requiring knowledge sharing among agents~\cite{DohertySzalas04,ToluhiSchmidtParsia21}.

Two prominent approaches to computing SOQE-solutions
are the SCAN algorithm introduced
in~\cite{Gabbay92Quantifier} and the DLS algorithm
introduced in~\cite{Doherty97Computing} and extended
in~\cite{Doherty98General,Nonnengart98Fixpoint}.
SCAN is based on the idea of computing all logical
consequences of $\Exists{\seq{X}} \varphi$ and omitting those formulas that contain
predicate variables from $\seq{X}$.
SCAN transforms the first-order part $\varphi$ into clausal
normal form and computes the closure under a constraint resolution calculus,
performing inferences only on $\seq{X}$-literals. During this process
clauses with $\seq{X}$-literals for which sufficiently many inferences
have been performed are deleted~\cite{Engel96Quantifier,Ohlbach96SCAN}.
If the algorithm terminates, it returns a set of first-order consequences
which is logically equivalent to $\Exists{\seq{X}} \varphi$.
Reverse Skolemization might be applied,
if during the clause form transformation Skolemization was performed.
SCAN has been shown complete for computing frame correspondence properties
for the class of modal Sahlqvist axioms~\cite{Goranko03Scan}.
The SCAN algorithm is closely related to hierarchic
superposition~\cite{Bachmair94Refutational,BaumgartnerWaldmann19}.

The other prominent approach, the DLS algorithm, is based on a result of
Ackermann~\cite{Ackermann35Untersuchungen}.
Known as Ackermann's Lemma, it
states a condition under which a predicate variable is eliminable from a second-order formula $\exists X \varphi$.
The idea of the DLS algorithm is to use equivalence preserving
operations to bring the problem into a form
where Ackermann's Lemma is applicable.

In this paper we are interested in a more general problem which we call
\emph{witnessed second-order quantifier elimination (WSOQE)}: Given a formula $\Exists{\seq{X}} \varphi$ with first-order~$\varphi$, find first-order predicates
(formulas) $\seq{\witness}$ satisfying this \emph{WSOQE-condition}
\begin{equation*}
  \label{wsoqe-condition}
  \Exists{\seq{X}} \varphi \iff \varphi[\seq{X} \leftarrow \seq{\witness}] \tag{$\ast$}.
\end{equation*}
That is, the goal is to find a first-order instantiation $\seq{\witness}$ (in the same language as $\varphi$) of variables $\seq X$
such that an equivalent first-order formula is obtained.
We call $\seq{\witness}$ a \emph{WSOQE-witness for $\Exists{\seq{X}} \varphi$}.

Recall that the second-order formula $\varPhi$ from the beginning of the introduction is equivalent to $B(a)$.
One can check that $\alpha = \lambda u. u \oeq a$ and $\beta = \lambda u. B(u)$ are WSOQE-witnesses for $\varPhi$.

The contribution of this paper is an algorithm to solve the WSOQE problem and compute such witnesses.
Our algorithm, called \emph{WSCAN}, is an extension of the SCAN algorithm with
a post-processing step which extracts a witness from the derivation
of a terminating SCAN run.
The witness construction proceeds bottom-up, iteratively computing
a witness for derivation step~$i-1$ from a witness at derivation step~$i$.
A prototype has been implemented in version 2.19.0 of the GAPT software package~\cite{GAPT}.
In addition to solving the more general WSOQE problem, the concepts
introduced in this paper provide a new correctness proof for the
SCAN algorithm.

This paper is an extension of \cite{AchammerHetzlSchmidt25}.
We provide a fixpoint construction of witnesses and give a method that produces first-order witnesses for a larger class of derivations than the previous method.
We also outline an approach to integrating theory reasoning, in particular equality reasoning, into our method.

The WSOQE problem bridges the gap between SOQE and another important problem,
namely, \emph{solving formula equations (FEQ)}: Given an input formula $\Exists{\seq{X}} \varphi$, where~$\varphi$ is first-order, find a tuple $\seq{\witness}$ of first-order predicates such that
$$\models \varphi[\seq{X} \leftarrow \seq{\witness}].$$
One can check that the first-order predicates $\alpha = \lambda u. u \oeq a$ and $\beta = \lambda u. B(u)$ are FEQ-solutions for the formula $\varPhi' = \Exists{X}(B(a) \limp (X(a) \land \Forall{u} (X(u) \limp B(u))))$.
Also note that the earlier example $\varPhi = \Exists{X} (X(a) \land \Forall{u}(X(u) \limp B(u)))$ is not valid and therefore does not have an FEQ-solution.

Going back to studies in the 19th century by Schröder in \cite{Schroeder90Vorlesungen}, solving formula equations is one of the oldest and most central problems of logic.
The book~\cite{Rudeanu74Boolean} comprehensively considers the problem in the setting of Boolean algebra.
Solving Boolean equations is closely related to Boolean unification, a subject of thorough study in
computer science, see, e.g.,~\cite{Martin89Boolean} for a survey.
The generalization of this problem from propositional to first-order logic has been made explicit
as early as~\cite{Behmann50Aufloesungsproblem,Behmann51Aufloesungsproblem}.

Today, FEQ is the central problem underlying several areas of computational logic, even though
this is often not made explicit.
For example, solving constrained Horn clauses~\cite{Bjorner15Horn}, a popular formalism in verification,
is a restriction of FEQ, with the existentially quantified variables representing
unknown loop invariants in applications of software verification.
More generally, the problem of inductive theorem proving, e.g., in Peano arithmetic, can be
formulated as a restriction of FEQ, where the predicate variables stand for the unknown
induction formulas.
One can naively solve FEQ for~$\Exists{\seq{X}} \varphi$ by enumerating all first-order predicates $\seq{\witness}$ and checking whether $\varphi[\seq{X} \leftarrow \seq{\witness}]$ is valid by using, e.g., a first-order theorem prover.

An algorithm for WSOQE, like the one we introduce in this paper, can be used to solve both SOQE and FEQ:
A WSOQE-witness $\seq{\witness}$ satisfies \eqref{wsoqe-condition} and therefore
$\varphi[\seq{X} \leftarrow \seq{\witness}]$ solves SOQE for $\Exists{\seq{X}} \varphi$.
However, note that WSOQE is not complete for SOQE as there are formulas that have a SOQE-solution, but no WSOQE-witness (see \Cref{sec.limitations-for-finding-witnesses} for such an example).

Apart from solving SOQE, if one has a WSOQE-witness $\seq{\witness}$ for $\Exists{\seq{X}} \varphi$ one can determine whether it is an FEQ-solution for $\Exists{\seq{X}} \varphi$ by checking whether $\varphi[\seq{X} \leftarrow \seq{\witness}]$ is valid which can be done with a first-order theorem prover.

The use of solutions to WSOQE for solving SOQE or FEQ is very common in the literature:
It is the basis of Ackermann's Lemma~\cite{Ackermann35Untersuchungen}, which provides the foundation for the DLS
algorithm~\cite{Doherty97Computing} for SOQE.
It is central for the decidability of quantifier-free Boolean unification with predicates~\cite{Eberhard17Boolean}
and for the fixed-point theorem for Horn formula equations of~\cite{Hetzl21Fixed,Hetzl21Abstract,Hetzl25Abstract}.
Solutions to WSOQE are called ELIM-witnesses in~\cite{Wernhard17Boolean}.

The paper is organized as follows.
In \Cref{sec.preliminaries} we introduce some necessary notation.
\Cref{sec.scan} describes the SCAN algorithm.
We present our method of extracting witnesses from SCAN derivations in \Cref{sec.resolution-witnesses}.
In \Cref{sec.fixpoint-witnesses} we show that the witnesses produced in \Cref{sec.resolution-witnesses} have a fixpoint representation.
\Cref{sec.first-order-witnesses} provides conditions on SCAN derivations that guarantee first-order witnesses.
\Cref{sec.first-order-background-theories} shows how one can deal with first-order background theories in SCAN.
The prototype implementation of our method is outlined in \Cref{sec.implementation}.
\Cref{sec.discussion} discusses limitations of our method and outlines avenues for future work.

\section{Preliminaries}\label{sec.preliminaries}

We assume familiarity with classical first-order and second-order logic with equality, the usual Tarskian semantics and first-order clause logic.
We fix a language with equality $\oeq$ and countably many
\emph{first-order variables} $u,v,w,\dots$, \emph{predicate variables}~$X, Y, Z, \dots$, \emph{individual constant symbols} $a, b,
  c, \dots$, \emph{function symbols} $f, g, h, \dots$ and \emph{predicate symbols}
$A, B, \dots$ all possibly with subscripts.
Function symbols, predicate symbols and predicate variables have an associated arity.

\emph{Terms} are built from first-order variables, individual constant symbols and function symbols and are denoted by $t, r, s$, possibly with subscripts.
\emph{Atoms} are expressions of the form $A(t_1, \dots, t_k)$ where $A$ is a predicate symbol or predicate variable of arity $k$ and~$t_1, \dots, t_k$ are terms.
\emph{Formulas} are built from atoms, Boolean connectives~\mbox{$\neg$,$\land$,$\lor$,$\limp$} and $\liff$ and quantifiers $\Forall{u}$, $\Exists{u}$, $\Forall{X}$ and $\Exists{X}$ and are denoted by~\mbox{$\varphi$,$\psi$}, possibly with subscripts.
A formula is \emph{first-order} if all its quantifiers range over individual variables.
A formula is \emph{closed} if it does not have any free variables.

We write $\models$ for semantic entailment.
For formulas $\varphi$ and $\psi$ we write $\varphi \imp \psi$ as a shorthand for $\models \varphi \limp \psi$, i.e., the validity of $\varphi \limp \psi$.
We write $\varphi \iff \psi$ as a shorthand for $\models \varphi \liff \psi$, i.e., that $\varphi$ and $\psi$ are logically equivalent.
Note that $\varphi \iff \psi$ if and only if $\varphi \imp \psi$ and $\psi \imp \varphi$.

Let $X$ be a predicate symbol or predicate variable.
An occurrence of $X$ in $\varphi$ is \emph{positive} or \emph{has $+$-polarity} if it is in the scope of an even number of negation signs.
Otherwise it is \emph{negative} or \emph{has $-$-polarity}.
We say \emph{$X$ is positive (negative) in $\varphi$} if all free occurrences of $X$ in $\varphi$ are positive (negative).
In that case we also say \emph{$X$ has~$+$-polarity ($-$-polarity) in $\varphi$}.

A \emph{literal $L$} is a formula either of the form $A(t_1, \dots, t_k)$ or $\neg A(t_1, \dots, t_k)$ where~$A(t_1, \dots, t_k)$ is an atom.
In this case the predicate symbol or predicate variable~$A$ is called the \emph{predicate symbol of $L$} and $t_1, \dots, t_k$ are called the \emph{terms of $L$}.
If~$\seq{X}$ is a tuple of predicate variables then a literal $L$ is called an \emph{$\seq{X}$-literal} if the predicate symbol of $L$ is among the predicate variables in $\seq{X}$.
The dualization of a literal $L$ is denoted by $L^{\perp}$ and defined as $\neg A(t_1, \dots, t_k)$ if~$L = A(t_1, \dots, t_k)$ and $A(t_1, \dots, t_k)$, if~$L = \neg A(t_1, \dots, t_k)$.
If $L$ is an atom we call $L$ a \emph{positive literal}, otherwise a \emph{negative literal}.
We say \emph{$L'$ is an $L$-literal} if $L$ and $L'$ have the same predicate symbol and the same polarity.

A \emph{clause} is a finite set of literals, denoted by $C, C'$, possibly with subscripts.
We identify two clauses, if one results from the other by a renaming of variables.
We often write clauses as a disjunction of its literals, i.e., $L_1 \lor \dots \lor L_n$ and as a disjunction of subclauses, i.e., $C = C_1 \lor \dots \lor C_n$, where $C_1, \dots, C_n$ are subclauses of $C$.
A \emph{clause set} is a (potentially infinite) set of clauses, denoted by $N, N'$, possibly with subscripts.
When a clause set is used in the context of a formula we mean the
formula $\Land_{C \in N} \forall^\ast (\Lor_{L \in C} L)$ where $\forall^\ast$ denotes the universal closure of the formula to the right.
We will sometimes use infinite conjunctions and disjunctions, denoted by $\Land_{i \in I} \varphi_i$ and $\Lor_{i \in I} \varphi_i$ where $I$ is a corresponding infinite index set.

The set of \emph{basic expressions} is the union of the set of terms, formulas, clauses and clause sets.
The set of \emph{expressions} is the smallest set containing the basic expressions that is closed under
$\lambda$-abstraction, i.e., if $E$ is an expression and $u$ is a first-order variable, then $\lambda u. E$ is an expression.

Given expressions $E_1, \dots, E_n$ we write $\seq{E}$ as a shorthand for the tuple $(E_1, \dots, E_n)$.
If~\mbox{$\seq{u} = (u_1, \dots, u_n)$} is a tuple of first-order variables and $E$ an expression, we write~\mbox{$\lambda \seq{u}. E$} for the expression $\lambda u_1. \dots \lambda u_n. E$.
For tuples of terms $\seq{t} = (t_1, \dots, t_n)$ and~\mbox{$\seq{s} = (s_1, \dots, s_n)$} we write $\seq{t} \oeq \seq{s}$  for $t_1 \oeq s_1 \land \dots \land t_n \oeq s_n$ and $\seq{t} \noeq \seq{s}$ for~$t_1 \noeq s_1 \lor \dots \lor t_n \noeq s_n$.

If we consider an expression $E(\seq{u})$,
where $\seq{u}$ is a tuple of first-order variables (or constants),
and $\seq{t}$ is a tuple of terms with the same length as $\seq{u}$,
then $E(\seq{t})$ denotes the simultaneous substitution of the free variables (or constants) $\seq{u}$ in $E$ by $\seq{t}$.
We do the same for tuples of terms, i.e., if $\seq{E}(\seq{u})$ is a tuple of expressions, then $\seq{E}(\seq{t})$ denotes the simultaneous substitution of the free variables (or constants) $\seq{u}$ in all expressions of $\seq{E}$ by $\seq{t}$.
Additionally, using $E(\seq{u})$ implies that all free variables of~$E$ are among those in $\seq{u}$.

A \emph{predicate expression} $\alpha$ is an expression of the form~$\lambda \seq{u}. \varphi$ where $\varphi$ is a formula.
The \emph{arity} of $\alpha$ is the size of the tuple $\seq{u}$.
If $\varphi$ is first-order, we call $\alpha$ a \emph{first-order predicate}.
For predicate expressions $\alpha = \lambda \seq{u}. \varphi(\seq{u})$ and $\beta = \lambda \seq{u}. \psi(\seq{u})$ of the same arity we write $\alpha \imp \beta$ if $\models \Forall{\seq{X}}\Forall{\seq{u}}(\varphi(\seq{u}) \limp \psi(\seq{u}))$ where $\seq{X}$ is a tuple composed of all free predicate variables in $\varphi$ and $\psi$.
We say $\alpha$ and $\beta$ are \emph{equivalent (in symbols $\alpha \iff \beta$)} if~$\alpha \imp \beta$ and $\beta \imp \alpha$.
We also use boolean connectives $\neg, \land, \lor, \limp$ to compose predicate expression of the same arity point-wise, e.g., if $\alpha = \lambda \seq{u}. \varphi(\seq{u})$ and $\beta = \lambda \seq{u}. \psi(\seq{u})$ are first-order predicates then $\alpha \land \beta := \lambda \seq{u}. \alpha(\seq{u}) \land \beta(\seq{u})$.
If $\alpha = \lambda \seq{u}. \varphi(\seq{u})$ is a predicate expression of arity $k$ and $\seq{t}$ is a tuple of $k$ terms, then $\alpha(\seq{t}) := \varphi(\seq{t})$.

Let $X_1, \dots, X_n$ be predicate variables and let $\alpha_1, \dots, \alpha_n$ be predicate expressions such that $X_i$ and $\alpha_i$ have the same arity for all $i \in \mset{1, \dots, n}$.
Then denote by~\mbox{$[X_1 \leftarrow \alpha_1, \dots, X_n \leftarrow \alpha_n]$} (abbreviated as $[\seq{X} \leftarrow \seq{\alpha}]$) the substitution which simultaneously replaces all instances of $X_i$ by the predicate expression $\alpha_i$.
If $E$ is an expression and $\sigma$ is a substitution, then we denote by $E\sigma$ the result of applying the substitution $\sigma$ to $E$.
The identity substitution is denoted by $\mathrm{id}$.
The composition of two substitutions~$\sigma$ and $\tau$ is denoted by concatenating them, i.e., $\sigma\tau$, which denotes the substitution which first applies $\sigma$ and then $\tau$.
The \emph{domain} of a substitution $\sigma$ is the set of individual and predicate variables that $\sigma$ does not map to themselves.
We denote the domain of a substitution $\sigma$ by $\mathrm{dom}(\sigma)$.
We call $\sigma$ a \emph{predicate substitution} if the domain of $\sigma$ only consists of predicate variables and thus the range only consists of predicate expressions.
If $\sigma$ and $\tau$ are predicate substitutions we write $\sigma \imp \tau$ if and only if $\mathrm{dom}(\sigma) = \mathrm{dom}(\tau)$ and for all $X \in \mathrm{dom}(\sigma)$ we have $\sigma(X) \imp \tau(X)$.
We write~$\sigma \iff \tau$ if $\sigma \imp \tau$ and $\tau \imp \sigma$.

Let $\seq{X} = (X_1, \dots, X_n)$ be a tuple of predicate variables and let $Y$ be a predicate variable.
We write $Y \in \seq{X}$ if $Y \in \mset{X_1, \dots, X_n}$.
Furthermore for a set of predicate variables $A$ we write $A \subseteq \seq{X}$ if $Y \in \seq{X}$ for all $Y \in A$.

For terms $\seq{t} = (t_1, \dots, t_n)$ and $\seq{s} = (s_1, \dots, s_n)$ a substitution $\sigma$ is called a \emph{unifier} of $\seq{t}$ and $\seq{s}$, if $t_i\sigma = s_i\sigma$ for all $i \in \mset{1, \dots, n}$.
A unifier $\sigma$ of $\seq{t}$ and $\seq{s}$ is called a \emph{most general unifier} of $\seq{t}$ and $\seq{s}$, if for all unifiers $\tau$ of $\seq{t}$ and $\seq{s}$ there exists a substitution $\rho$ such that $\tau = \sigma\rho$.

Let $\mathcal{M}$ be a structure with domain $M$.
An environment $\theta$ is a function that assigns to every individual variable $u$ an element $m \in M$ and to every $k$-ary predicate variable a relation $R \subseteq M^k$.
Let $\varphi$ be a formula with possibly free individual and predicate variables.
Then we write $\mathcal{M}, \theta \models \varphi$, if $\varphi$ holds in $\mathcal{M}$ when interpreting free individual variables $u$ by $\theta(u)$ and free predicate variables $X$ by $\theta(X)$.
We write $\mathcal{M} \models \varphi$ if~\mbox{$\mathcal{M}, \theta \models \varphi$} for all environments $\theta$.
For a term $t$ we write $t^{\mathcal{M}, \theta}$ for the element in $M$ which results from interpreting $t$ in $\mathcal{M}$ in the environment $\theta$.
We extend this notation to tuple of terms $\seq{t} = (t_1, \dots, t_k)$ by $\seq{t}^{\mathcal{M}, \theta} := (t_1^{\mathcal{M}, \theta}, \dots, t_k^{\mathcal{M}, \theta})$.

\section{The SCAN Algorithm}
\label{sec.scan}

The SCAN algorithm\footnote{SCAN stands for ``Synthesizing Correspondence Axioms for Normal logics''. This stems from the fact that it was originally intended for a particular application and its more general applicability has been realized later on \cite{Gabbay92Quantifier}} takes as input a formula $\Exists{\seq{X}} \varphi$, where $\varphi$ is first-order
and then transforms $\varphi$ into its clausal normal form $N$, which might include a Skolemization step.
Next, $N$ is saturated by a \emph{purification} process:
SCAN picks a clause $C$ and a designated $\seq{X}$-literal $L \in C$.
All non-redundant constraint resolution inferences (defined below) between $C$ (on that literal $L$) and the rest of the clauses are performed.
In addition, constraint factors (defined below) of $C$ are added, and constraint elimination and redundancy criteria are used to simplify the resulting clauses.
If all non-redundant consequences have been generated, $C$ is deleted from the clause set and the process is repeated with a new clause and designated $\seq{X}$-literal.
This process might not terminate, but if we reach a point where no clause contains an $\seq{X}$-literal the result is a clause set~$N^\ast$ such that $\Exists{\seq{X}} N \iff N^\ast$.
SCAN attempts to reverse the Skolemization step from the beginning, if performed, and returns a first-order formula logically equivalent to $\Exists{\seq{X}} \varphi$.
As we present our results for clause sets, we do not deal with reverse Skolemization.

The calculus underlying the SCAN algorithm has three parts: inference steps, a notion of redundancy, and deletion steps that remove clauses that are no longer required.
We devote a subsection to each of these parts.
Then we introduce a notion of derivation for the SCAN algorithm and give some properties of the calculus.

Our description of the calculus underlying SCAN is more detailed than other descriptions in the literature, e.g., \cite{Gabbay92Quantifier,Gabbay08Second}.
We will note at the appropriate points how our description differs.
Most of the additional complexity comes from a more detailed analysis of the purification process, in particular the deletion of a clause after all non-redundant inferences have been performed.
This serves three purposes: 
\begin{enumerate}
\item It lays the groundwork for the construction of witnesses that we present in \Cref{sec.resolution-witnesses}.
\item It separates the redundancy criteria used for the removal of redundant clauses that are already part of the clause set (backward redundancy) from the redundancy criteria used to determine whether a clause generated during the purification process should be kept (forward redundancy).
This is useful as our witness construction depends on the forward redundancy notion, but not so much on the backward redundancy notion.
\item It allows us to give a new correctness proof of SCAN for a specific set of forward redundancy criteria.
\end{enumerate}
Furthermore, our description makes the notion of SCAN derivation explicit, which is important for the witness construction in the next section.

\subsection{Inference Steps}

The following inference steps are used to add clauses to the active clause set.
The goal is to generate sufficiently many first-order consequences from the given second-order clause set such that existing second-order clauses can be deleted in a later deletion step.
The inference calculus resembles an ordinary resolution calculus, with the difference that resolution and factoring introduce constraints between the terms of the resolved literals instead of applying unification.
Unification is introduced as a constraint elimination step.
We collect the inference steps in the inference system $\mathcal{I}$.
\begin{definition}
  The inference system $\mathcal{I}$ is given by the following rules.
\begin{flalign*}
                & \textbf{Constraint resolution:}
                &                                  & \AxiomC{$C \lor L(\seq{t})$}
  \AxiomC{$C' \lor L(\seq{t'})^\perp$}
  \RightLabel{$\mathrm{Res}$}
  \BinaryInfC{$\seq{t} \noeq \seq{t'} \lor C \lor C'$}
  \DisplayProof &                                                                                 \\
  \intertext{where $L$ is a literal and w.l.o.g. $C \lor L(\seq{t})$ and $C' \lor L(\seq{t'})^\perp$ are assumed to be variable-disjoint.
    The clause $\seq{t} \noeq \seq{t'} \lor C \lor C'$ is called the \emph{resolvent} of this inference.}
                & \textbf{Constraint factoring:}
                &                                  & \AxiomC{$C \lor L(\seq{t}) \lor L(\seq{t'})$}
  \RightLabel{$\mathrm{Fac}$}
  \UnaryInfC{$\seq{t} \noeq \seq{t'} \lor C \lor L(\seq{t})$}
  \DisplayProof &                                                                                 \\
  \intertext{where $L$ is a literal.
    The clause $\seq{t} \noeq \seq{t'} \lor C \lor L(\seq{t})$ is called the \emph{factor} of this inference.}
                & \textbf{Constraint elimination:}
                &                                  & \AxiomC{$\seq{t} \noeq \seq{t'} \lor C$}
  \RightLabel{$\mathrm{ConstrElim}$}
  \UnaryInfC{$C\sigma$}
  \DisplayProof &
\end{flalign*}
where $\sigma$ is a most general unifier of $\seq{t}$ and $\seq{t'}$.
\end{definition}

It is important to separate unification from the introduction of constraints since otherwise we would not be able to infer certain first-order consequences.
For example, the clause set $\mset{X(a), \neg X(c)}$ has the first-order consequence $a \noeq c$ (in fact, we have~$\Exists{X} (X(a) \land \neg X(c)) \iff a \noeq c$), but the ordinary resolution calculus cannot infer any new clauses from the clause set, as the constants $a$ and $c$ are not unifiable.

Compared to other descriptions of the SCAN algorithm, we add constraint elimination as an inference rule. This allows the calculus to simulate ordinary resolution and factoring by appending a constraint elimination inference after the corresponding constraint resolution or constraint factoring inference.
This also shows refutational completeness of our inference calculus.
\begin{lemma}
  \label{soundness-of-inferences}
  $\mathcal{I}$ is sound and refutationally complete for first-order logic without equality.
\end{lemma}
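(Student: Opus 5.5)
For soundness, I will check each rule of $\mathcal{I}$ separately and show that its conclusion is a logical consequence of its premises. The premises and conclusions are read as universally closed. Equality is interpreted as true identity, since constraint literals $\seq{t} \noeq \seq{t'}$ introduce $\oeq$ even when the input is equality-free.

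\begin{enumerate}
\item \textbf{Constraint resolution.} Fix a structure $\mathcal{M}$ satisfying both premises and an environment $\theta$ for the (variable-disjoint) free variables. If $\seq{t}^{\mathcal{M},\theta} \neq \seq{t'}^{\mathcal{M},\theta}$, the constraint disjunct holds. Otherwise $L(\seq{t})$ and $L(\seq{t'})^\perp$ receive opposite truth values under $\theta$. Hence one of $C$, $C'$ must be true, because the corresponding premise is true under $\theta$.
\item \textbf{Constraint factoring.} The same case split works: if the terms agree, then $L(\seq{t'})$ and $L(\seq{t})$ coincide semantically, so the premise yields $C \lor L(\seq{t})$.
\item \textbf{Constraint elimination.} The instance $(\seq{t} \noeq \seq{t'} \lor C)\sigma$ follows from the universally closed premise. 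Its constraint part $\seq{t}\sigma \noeq \seq{t'}\sigma$ is a disjunction of literals $s \noeq s$, each of which is false in every structure. So $C\sigma$ follows.
\end{enumerate}

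For refutational completeness, the plan is to simulate the standard resolution calculus, with binary resolution using most general unifiers plus factoring, which is refutationally complete for equality-free clause logic by Robinson's theorem.

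\begin{enumerate}
\item \textbf{Resolution.} An ordinary resolution step on $C \lor L(\seq{t})$ and $C' \lor L(\seq{t'})^\perp$ with mgu $\sigma$ produces $(C \lor C')\sigma$. This is obtained by a $\mathrm{Res}$ step followed by $\mathrm{ConstrElim}$ using the same $\sigma$.
\item \textbf{Factoring.} A factoring step producing $(C \lor L(\seq{t}))\sigma$ is obtained by $\mathrm{Fac}$ followed by $\mathrm{ConstrElim}$.
\item \textbf{Multi-literal steps.} If the reference calculus merges several literals at once, I reduce to binary factoring by iterating the simulation.
\end{enumerate}

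Consequently, if an equality-free clause set is unsatisfiable, standard resolution derives the empty clause, and the simulated derivation in $\mathcal{I}$ derives it as well.

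The delicate point is the statement ``for first-order logic without equality'' while the calculus itself introduces $\noeq$-literals. For completeness this causes no problem, because the simulated derivation never leaves $\noeq$-literals behind. For soundness, I will state that constraint literals are interpreted by genuine identity, which is exactly what makes $\mathrm{ConstrElim}$ sound.

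A minor technical point is the mgu of tuples $\seq{t}, \seq{t'}$ together with renaming apart. This is handled by the convention that clauses are identified up to variable renaming and that premises of $\mathrm{Res}$ are variable-disjoint.
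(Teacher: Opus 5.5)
Your proposal is correct and follows the paper's proof essentially step for step. Soundness is checked rule by rule, with the constraint becoming $s \noeq s$ under the unifier in $\mathrm{ConstrElim}$, and refutational completeness comes from simulating ordinary resolution and factoring as a $\mathrm{Res}$ or $\mathrm{Fac}$ step followed by $\mathrm{ConstrElim}$ with the same most general unifier (your per-environment case split and remark on multi-literal factoring are just slightly more careful than the paper's wording).
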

\begin{proof}
  To show soundness, consider a model $\mathcal{M}$ and let $L$ be a literal, $C(\seq{u}), C'(\seq{v})$ clauses and let $\seq{t}$, $\seq{t'}$ be tuples of terms.
  If $\mathcal{M} \models C \lor L(\seq{t})$, $\mathcal{M} \models C' \lor L(\seq{t'})^\perp$ and $\mathcal{M} \models \seq{t} \oeq \seq{t'}$, then~$\mathcal{M} \models L(\seq{t})^\perp$ and thus $\mathcal{M} \models C \lor C'$ which shows soundness of $\mathrm{Res}$.

  If $\mathcal{M} \models C \lor L(\seq{t}) \lor L(\seq{t'})$ and $\mathcal{M} \models \seq{t} \oeq \seq{t'}$, then also $\mathcal{M} \models C \lor L(\seq{t})$ which shows soundness of $\mathrm{Fac}$.

  If $\mathcal{M} \models \seq{t} \noeq \seq{t'} \lor C$ and $\sigma$ is any substitution, then $\mathcal{M} \models (\seq{t} \noeq \seq{t'} \lor C)\sigma$.
  Let~$\seq{r} := \seq{t}\sigma$ where $\sigma$ is the most general unifier of $\seq{t}$ and $\seq{t'}$.
  Since $\sigma$ is a unifier of $\seq{t}$ and $\seq{t'}$ we get~\mbox{$(\seq{t} \noeq \seq{t'} \lor C)\sigma = \seq{r} \noeq \seq{r} \lor C\sigma$} and thus $\mathcal{M} \models C\sigma$ which shows soundness of $\mathrm{ConstrElim}$.

  We show refutational completeness by showing that the rules of the usual resolution calculus can be simulated by $\mathcal{I}$.
  The usual resolution rule
  \begin{prooftree}
    \AxiomC{$C \lor L(\seq{t})$}
    \AxiomC{$C' \lor L(\seq{s})^\perp$}
    \BinaryInfC{$(C \lor C')\sigma$}
  \end{prooftree}
  where $\sigma$ is a most general unifier of $\seq{t}$ and $\seq{s}$ is simulated by
  \begin{prooftree}
    \AxiomC{$C \lor L(\seq{t})$}
    \AxiomC{$C' \lor L(\seq{s})^\perp$}
    \RightLabel{$\mathrm{Res}$}
    \BinaryInfC{$\seq{t} \noeq \seq{s} \lor C \lor C'$}
    \RightLabel{$\mathrm{ConstrElim}.$}
    \UnaryInfC{$(C \lor C')\sigma$}
  \end{prooftree}
  Similarly, the usual factoring rule
  \begin{prooftree}
    \AxiomC{$C \lor L(\seq{t}) \lor L(\seq{s})$}
    \UnaryInfC{$(C \lor L(\seq{t}))\sigma$}
  \end{prooftree}
  where $\sigma$ is a most general unifier of $\seq{t}$ and $\seq{s}$ is simulated by
  \begin{prooftree}
    \AxiomC{$C \lor L(\seq{t}) \lor L(\seq{s})$}
    \RightLabel{$\mathrm{Fac}$}
    \UnaryInfC{$\seq{t} \noeq \seq{s} \lor C \lor L(\seq{t})$}
    \RightLabel{$\mathrm{ConstrElim}.$}
    \UnaryInfC{$(C \lor L(\seq{t}))\sigma$}
  \end{prooftree}
\end{proof}

In practice constraint resolution and constraint factoring is often restricted to $\seq{X}$-literals, i.e., those literals whose symbols are to be eliminated.
While this reduces the search space for inferences, this strategy is not refutationally complete, e.g., if $A$ is a nullary predicate constant then $\mset{A, \neg A}$ is not refutable.
Particular search strategies to improve performance are discussed in \Cref{sec.implementation}.

\subsection{Redundancy}
\label{subsec.redundancy}
To remove unnecessary clauses from the active clause set we employ a notion of redundancy.
Some of the simplest examples are the removal of tautologies and the removal of clauses that are subsumed by a clause in the active clause set.

\begin{definition}
  \label{def:redundant}
  Let $C, C'$ be clauses.
  We say \emph{$C$ is a tautology} (or \emph{$C$ is tautological}) if there is a literal $L$ with $L \in C$ and $L^\perp \in C$.
  We say \emph{$C$ subsumes $C'$} (in symbols: $C \subsumes C'$) if there is a substitution $\sigma$ with $C\sigma \subseteq C'$.
  For a clause set $N$ we say \emph{$N$ subsumes $C'$} (in symbols:~$N \subsumes C'$) if there is a clause $C \in N$ with $C \subsumes C'$.
  We say \emph{$C$ is redundant in $N$} if $C$ is a tautology or~$N$ subsumes $C$.
\end{definition}

We use this to introduce a redundancy deletion rule in our calculus:
\begin{flalign*}
  & \text{\smallskip\noindent\emph{\textbf{Backward Redundancy deletion:}}} &&\AxiomC{$N \union \mset{C}$}
\RightLabel{$\mathrm{RedDel}$}
\UnaryInfC{$N$}
\DisplayProof & \\
\intertext{if $C$ is redundant in $N$.}
\end{flalign*}

Note that $\mathrm{RedDel}$ preserves logical equivalence.

We can also simplify clauses by eliminating variable constraints inside clauses.
To that end consider the following rule.
\begin{flalign*}
  & \text{\smallskip\noindent\emph{\textbf{Variable elimination:}}} &&
  \AxiomC{$N \union \mset{v \noeq t \lor C}$}
\RightLabel{$\mathrm{VarElim}$}
\UnaryInfC{$N \union \mset{C[v \leftarrow t]}$}
\DisplayProof & \\
\intertext{where $v$ is a variable which is not a proper subterm of $t$.}
\end{flalign*}
This is a special case of constraint elimination, but preserves logical equivalence, hence it is a simplification, not an inference.

One can add further redundancy elimination rules as long as they preserve logical equivalence between the clause sets.
We refer to the rules given here as \emph{backward} redundancy rules as they act on a clause set that already contains a redundant clause and remove it.
In the next subsection, besides discussing other deletion rules, we give a notion of \emph{forward} redundancy which is used during the purification process to determine whether a generated clause should be kept.

\subsection{Deletion Steps}

For a given clause set $N$ the goal of SCAN is to compute a set $N^\ast$ of first-order clauses such that $\Exists{\seq{X}} N \iff N^\ast$.
We discussed the inference rules that add new clauses to the clause set and a notion of redundancy.
Using the existentially quantified nature of the problem, we can also remove second-order clauses from the clause set under certain conditions.

One deletion rule is \emph{extended purity deletion}.
\begin{flalign*}
                & \text{\smallskip\noindent\emph{\textbf{Extended purity deletion:}}}
                &                                                                     &
                \AxiomC{$N$}
                \RightLabel{$\mathrm{ExtPurDel}_X^{p}$}
                \UnaryInfC{$N \setminus \mset{C \in N \suchthat \text{$C$ contains $X$}}$}
                \DisplayProof
                &                                                                                    \\
  \intertext{for $p\in \mset{+, -}$, where $X$ is a predicate variable and every clause in $N$ containing~$X$ contains an occurrence of $X$ with $p$-polarity.}
\end{flalign*}

The other deletion rule, the \emph{purified clause deletion rule}, is applicable once sufficiently many inferences have been performed with a given clause $P$ \emph{on one of its literals} and the rest of the clause set.
To give a name to this concept we introduce the notion of \emph{pointed clause} (in analogy to \emph{pointed sets} in mathematics):
\begin{definition}
  Let $L$ be a literal and $C$ a clause.
  A \emph{pointed clause} $P$ is an expression $\underline{L} \lor C$, i.e., the literal $L$ is underlined in the clause $L \lor C$.
  We call $L$ the \emph{designated literal of $P$}.
  For a given clause $C$ and a literal $L \in C$ we also write $C[\underline{L}]$ for the pointed clause $\underline{L} \lor (C \setminus \mset{L})$.
  Let $C'$ be a clause.
  The set of \emph{$P$-resolvable literals of $C'$} is the $L^\perp$-literals in $C'$.
  A pointed clause \emph{$Q$ is resolvable with $P$} (or \emph{$P$-resolvable}) if the designated literal of $Q$ is $P$-resolvable.
  If $P = \underline{L(\seq{t})} \lor C$ and $Q = \underline{L(\seq{s})^\perp} \lor C'$, then we set $\mathrm{Res}(P, Q) := \seq{t} \noeq \seq{s} \lor C \lor C'$, i.e., $\mathrm{Res}(P, Q)$ denotes the constraint resolvent of $P$ and $Q$.
\end{definition}

By performing sufficiently many inferences we mean that the clause set $N$ resulting from the deletion of the clause $P$ should still logically imply the resolution closure of~$N$ with respect to $P$. 
To that end we define the resolution closure of a clause set with respect to a pointed clause $P$.
\begin{definition}
  Let $N$ be a clause set and let $P$ be a pointed clause.
  Define $\mathrm{Res}_P^{\leq k}(N)$, i.e., the clauses reachable from clauses in $N$ by resolution with $P$ in at most $k$ steps, inductively by
  \begin{align*}
    \mathrm{Res}_P^{\leq 0}(N) &:= N \\
    \mathrm{Res}_P^{\leq k+1}(N) &:= \mathrm{Res}_P^{\leq k}(N) \union \mset{\mathrm{Res}(P, P') \suchthat \text{$P' \in \mathrm{Res}_P^{\leq k}(N)$ is resolvable with $P$}}.
  \end{align*}
  Furthermore define the resolution closure of $N$ with respect to $P$ by
  $$
  \mathrm{Res}_P^{< \omega}(N) := \Union_{k \in \omega} \mathrm{Res}_P^{\leq k}(N).
  $$
  If $C$ is a clause we define $\mathrm{Res}_P^{\leq k}(C) := \mathrm{Res}_P^{\leq k}(\mset{C})$ and $\mathrm{Res}_P^{< \omega}(C) := \mathrm{Res}_P^{<\omega}(\mset{C})$.
\end{definition}

To avoid having to derive the whole (potentially infinite) resolution closure we incorporate a forward notion of redundancy into the purification process.
This leads us to the notion of a pointed clause $P$ being purified in $N$.

\begin{definition}
  Let $C, C'$ be clauses.
  We write $C \velim C'$, if $C = v \noeq t \lor C_0$ for some clause~$C_0$, some variable $v$ and term $t$ that $v$ is not a proper subterm of and $C' = C_0[v \leftarrow t]$.
  For pointed clauses $P = \underline{L(\seq{s})} \lor C$ and $P' = \underline{L(\seq{s'})} \lor C'$ we write $P \velim P'$ if $C = v \noeq t \lor C_0$ for some clause $C_0$, variable $v$ and term $t$ such that $v$ is not a proper subterm of $t$, $C' = C_0[v \leftarrow t]$ and~$\seq{s'}=\seq{s}[v \leftarrow t]$, i.e., the designated literal is preserved across $\velim$ and the constraint literal is distinct from the designated literal.
  By $\velimtrans$ we denote the reflexive, transitive closure of~$\velim$.

  Let $S, C$ be clauses, let $L$ be a literal and let $\sigma$ be a substitution.
  The substitution $\sigma$ is called \emph{$L$-injective in $C$} if for all $L$-literals $L_1 \neq L_2$ in $C$ we have $L_1\sigma \neq L_2\sigma$.
  We write~$S \subsumesL{L} C$ if there is a substitution $\sigma$ which is $L$-injective in $S$ such that $S\sigma \subseteq C$.
  
  We write $S \subsumesLvelim{L} C$ if $S \subsumesL{L} C'$ for some $C'$ with $C \velimtrans C'$.
  Let $N$ be a clause set and~$C$ a clause.
  Then we write $N \subsumesLvelim{L}  C$ if there is a clause $S \in N$ with $S \subsumesLvelim{L}  C$.
  Let $N'$ be another clause set.
  Then we write $N \subsumesLvelim{L}  N'$ if for all $C \in N'$ we have $N \subsumesLvelim{L}  C$.
  We write $\subsumesLvelimtrans{L}$ for the transitive closure of $\subsumesLvelim{L}$.

  Let $P$ be a pointed clause with designated literal $L$ and let $N$ be a clause set.
  We say \emph{$P$ is purified in $N$} if $N \subsumesLvelim{L^\perp} \mathrm{Res}_P^{\leq 1}(N)$.
\end{definition}

We remark that the $\velim$-relation on pointed clauses is stronger than the~$\velim$-relation on the underlying clauses since the designated literals of the pointed clause also need to be preserved.
For example, we have~\mbox{$u \noeq a \lor X(u) \lor X(b) \velim X(a) \lor X(b)$}.
However, we do not want to have~\mbox{$u \noeq a \lor \underline{X(u)} \lor X(b) \velim X(a) \lor \underline{X(b)}$}, but instead only~\mbox{$u \noeq a \lor \underline{X(u)} \lor X(b) \velim \underline{X(a)} \lor X(b)$}.

Also note that the relation $\subsumesLvelim{L}$ is not transitive, 
but we use the transitive closure later on so we define it here. 
To see that $\subsumesLvelim{L}$ is not transitive 
consider~\mbox{$X(f(u)) \subsumesLvelim{X} v \noeq f(c) \lor X(v)$} 
and $v \noeq f(c) \lor X(v) \subsumesLvelim{X} g(c) \noeq f(c) \lor X(g(c))$.
Then $X(f(u)) \not\subsumesLvelim{X} g(c) \noeq f(c) \lor X(g(c))$ since the constraint $g(c) \noeq f(c)$ can't be removed by variable elimination.

Now we can introduce the purified clause deletion rule.
\begin{flalign*}
                & \text{\smallskip\noindent\emph{\textbf{Purified clause deletion:}}}
                &                                                                     &
                \AxiomC{$N \uplus \mset{P}$}
                \RightLabel{$\mathrm{PurDel}_P$}
                \UnaryInfC{$N$}
                \DisplayProof
                &                                                                                    \\
  \intertext{where $P$ is purified in $N$ and $\uplus$ denotes the disjoint union.}
\end{flalign*}

Note that for purified clause deletion to be applicable it is not necessary to add constraint resolvents of $P$ with itself since we are only interested in $P$ being purified in the remaining clause set $N$.

The condition $P$ being purified in $N$ ensures that the whole resolution closure~\mbox{$\mathrm{Res}_P^{< \omega}(N)$} is semantically entailed by $N$ which we show in this section.

\begin{restatable}{theorem}{purifiedImpliesResolutionClosure}
  \label{purified-implies-resolution-closure}
  Let $N$ be a clause set and let $P$ be a pointed clause such that $P$ is purified in~$N$. Then~\mbox{$N \imp \mathrm{Res}_P^{< \omega}(N)$}.
\end{restatable}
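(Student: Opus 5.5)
The plan is to show that every clause of $\mathrm{Res}_P^{<\omega}(N)$ is reachable from a clause of $N$ by a chain of $\subsumesLvelim{L^\perp}$-steps, and that each such step is semantically sound. Let $L$ be the designated literal of $P$, so that the $P$-resolvable literals are the $L^\perp$-literals.

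\emph{Semantic soundness of one step.} Suppose $S \subsumesLvelim{L^\perp} C$. Then there is a clause $C'$ with $C \velimtrans C'$ and $S\sigma \subseteq C'$ for some substitution $\sigma$. Since $S\sigma \subseteq C'$, the universal closure of $S$ implies that of $C'$. Each single step $v \noeq t \lor C_0 \velim C_0[v \leftarrow t]$ is an equivalence under universal closure (the usual destructive equality resolution argument). Hence $S \imp C$. By induction on the length of chains, $S \subsumesLvelimtrans{L^\perp} C$ also gives $S \imp C$.

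\emph{Lifting lemma.} Let $S \subsumesLvelim{L^\perp} C$, and let $Q$ be $C$ pointed at some $L^\perp$-literal $K$. Choose $C'$ with $C \velimtrans C'$ and $S\sigma \subseteq C'$, where $\sigma$ is $L^\perp$-injective in $S$.
\begin{enumerate}
\item First I would check that the $\velim$-steps on $C$ never touch the literal $K$. Constraint literals $v\noeq t$ are distinct from the $L^\perp$-literal $K$, so the same steps lift to the pointed clause: $Q \velimtrans Q'$, with designated literal $K' = K[\ldots]$. The constraint literals of $C$ are carried over verbatim into $\mathrm{Res}(P,Q)$. Therefore $\mathrm{Res}(P,Q) \velimtrans \mathrm{Res}(P,Q')$, after renaming $P$ apart.
\item Case 1: $K'$ is not in $S\sigma$. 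Then $S\sigma \subseteq C' \setminus \mset{K'} \subseteq \mathrm{Res}(P,Q')$, so $S \subsumesLvelim{L^\perp} \mathrm{Res}(P,Q)$.
\item Case 2: $K' = K_S\sigma$ for some $L^\perp$-literal $K_S \in S$. By $L^\perp$-injectivity, $K_S$ is the \emph{only} literal of $S$ mapped to $K'$. Let $S'$ be $S$ pointed at $K_S$, and extend $\sigma$ to be the identity on the (renamed) variables of $P$. Then $\mathrm{Res}(P,S')\sigma \subseteq \mathrm{Res}(P,Q')$.
\item In Case 2 I still need $L^\perp$-injectivity of $\sigma$ on $\mathrm{Res}(P,S')$. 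The $L^\perp$-literals of $\mathrm{Res}(P,S')$ come from $S \setminus \mset{K_S}$, where injectivity is inherited, and from $P$, where $\sigma$ is the identity. I expect images from the two sources might collide; if they do, I would weaken the goal to ordinary subsumption plus a factoring-free argument. This is where I expect the main obstacle: the precise bookkeeping of injectivity and of variable renaming together with $\velim$.
\end{enumerate}
In Case 2 this gives $\mathrm{Res}(P,S') \subsumesLvelim{L^\perp} \mathrm{Res}(P,Q)$, where $\mathrm{Res}(P,S')$ is the resolvent of a $P$-resolvable pointed version of $S$.

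\emph{Main induction.} I would prove by induction on $k$ that every $D \in \mathrm{Res}_P^{\leq k}(N)$ satisfies $S_0 \subsumesLvelimtrans{L^\perp} D$ for some $S_0 \in N$, or $D \in N$.
\begin{enumerate}
\item For $k=0$ the claim is trivial.
\item For the step, take $D = \mathrm{Res}(P,Q)$ with $Q$ a pointing of some $E \in \mathrm{Res}_P^{\leq k}(N)$. By the induction hypothesis there is a chain $S_0 \subsumesLvelim{L^\perp} E_1 \subsumesLvelim{L^\perp} \cdots \subsumesLvelim{L^\perp} E$.
\item Apply the lifting lemma to the last link. This yields either $E_{\text{prev}} \subsumesLvelim{L^\perp} D$, or $\mathrm{Res}(P,E_{\text{prev}}') \subsumesLvelim{L^\perp} D$.
\item In the second case, iterate the lifting lemma backwards along the chain. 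Eventually this reaches either a clause of $N$ directly, or $\mathrm{Res}(P,S_0')$ with $S_0 \in N$.
\item In the latter case, $\mathrm{Res}(P,S_0') \in \mathrm{Res}_P^{\leq 1}(N)$. By the hypothesis that $P$ is purified in $N$, it is $\subsumesLvelim{L^\perp}$-subsumed by some clause of $N$.
\end{enumerate}
Combining the chain with the semantic soundness of single steps gives $N \imp D$ for every $D \in \mathrm{Res}_P^{<\omega}(N)$, hence $N \imp \mathrm{Res}_P^{<\omega}(N)$.
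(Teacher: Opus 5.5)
\textbf{There is a genuine gap, at exactly the point you flag in step 4 and leave open.} Your architecture is the paper's: soundness of the relations (\Cref{subsumption-velim-implies-logical-implication}), a one-step lifting lemma (\Cref{res-subsumption-velim-commutation}), its iteration along chains (\Cref{res-subsumption-velim-trans-commutation}), closure under $P$-resolution via purification (\Cref{purified-implies-trans-subsume-velim-is-closed-under-resolution}), and induction on $k$.

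The injectivity problem is not a bookkeeping nuisance. The paper's proof of \Cref{res-subsumption-velim-commutation} also infers $R_S \subsumesLvelim{L^\perp} \mathrm{Res}(P,Q)$ from $R_S\sigma \subseteq R'$ without checking injectivity. The invariant you and the paper propagate is in fact false:
\begin{itemize}
\item Take $P = \underline{X(b)} \lor \neg X(a)$ and $N = \mset{\neg X(y) \lor \neg X(w)}$.
\item The two one-step resolvents reduce by variable elimination to $\neg X(a) \lor \neg X(w)$ and $\neg X(a) \lor \neg X(y)$. Both are injectively subsumed by the clause of $N$, so $P$ is purified in $N$.
\item Resolving $b \noeq y \lor \neg X(a) \lor \neg X(w)$ again on $\neg X(w)$ gives $b \noeq w \lor b \noeq y \lor \neg X(a)$, since the two copies of the ground literal $\neg X(a)$ merge.
\item This clause of $\mathrm{Res}_P^{\leq 2}(N)$ has one negative $X$-literal. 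Since $\subsumesLvelim{L^\perp}$ never decreases the number of $L^\perp$-literals, it is not $\subsumesLvelimtrans{L^\perp}$-reachable from $N$. It is of course implied by $N \iff \Forall{x}\neg X(x)$, so the theorem survives but your main induction does not.
\end{itemize}

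\textbf{Your proposed fallback does not work.} Weakening to ordinary subsumption breaks the next lift, which needs the designated literal to have a unique preimage. Otherwise $(S\setminus\mset{K_S})\sigma$ still contains $K'$ and is not contained in $\mathrm{Res}(P,Q')$.

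\textbf{The missing idea is that collisions are confined to a harmless place.} After renaming apart, $\sigma$ is the identity on the side clause $C$ of $P$, and its images contain only variables of $Q'$. So a literal of $S\setminus\mset{K_S}$ can collide only with a \emph{ground} $L^\perp$-literal of $C$. Let $G$ be the set of ground $L^\perp$-literals of $C$. Propagate the weaker relation $\subsumes^{G}$, in which $\sigma$ may identify $L^\perp$-literals only when their common image lies in $G$. This relation is still sound, and the strict links supplied by purification are special cases of it. Your lifting then has three cases:
\begin{itemize}
\item If $K' \notin S\sigma$, your Case~1 applies unchanged.
\item If $K' \in S\sigma$ and $K' \notin G$, the preimage $K_S$ is unique and your Case~2 goes through. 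Every new collision, whether between $C$ and $S\setminus\mset{K_S}$ or inherited within $S\setminus\mset{K_S}$, lands in $G$.
\item If $K' \in G$, the fresh copy of $C$ in $\mathrm{Res}(P,Q')$ contains $K'$ again. Hence $Q' \subseteq \mathrm{Res}(P,Q')$ and the same $\sigma$ works without resolving $S$ at all.
\end{itemize}
With $\subsumes^{G}$-chains in place of $\subsumesLvelimtrans{L^\perp}$, your backward iteration along the chain and the final appeal to purification go through as written.
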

This is an important property for our purposes as it facilitates the witness construction in \Cref{sec.resolution-witnesses}.

Often the side condition of purified clause deletion is given as: All non-redundant inferences between $P$ and the rest of the clause set $N$ have been performed \cite{Gabbay08Second}.
Our description of the purified clause deletion rule differs from this one in two aspects.
\begin{enumerate}
  \item We do not require that all constraint factors of $P$ have been generated.
        While in implementations we still want to generate the constraint factors to retain refutational completeness of the calculus the generation of the constraint factors is not necessary for the purified clause deletion rule to be correct, i.e., to preserve the logical equivalence $\Exists{\seq{X}} (N \union \mset{P}) \iff \Exists{\seq{X}} N$.
  \item We use a specific forward redundancy criterion to determine whether a given constraint resolvent $R$ needs to be added to the clause set, namely we check $N \subsumesLvelim{L^\perp} R$ where $L$ is the designated literal of the pointed clause $P$ to be deleted.
    This differs from the usual side condition in three ways:
    \begin{enumerate}
      \item We take variable elimination into account during subsumption. For example, the constraint resolvent between $P = \underline{X(a)}$ and $Q = \underline{\neg X(v)} \lor B(v) $ is~\mbox{$\mathrm{Res}(P, Q) = a \noeq v \lor B(v)$} which after variable elimination becomes $B(a)$.
      Assume $B(a)$ is already in the clause set.
      Then $B(a)$ does not subsume the constraint resolvent $a \noeq v \lor B(v)$ directly, but it does subsume the logically equivalent variable eliminated clause $B(a)$.
      Thus, taking variable elimination into account during subsumption avoids having to add resolvents to $N$ for which a logically equivalent clause already exists in $N$.
      \item This subsumption relation requires that the subsuming substitution $\sigma$ does not identify any $L^\perp$-literals in the subsuming clauses.
        This is important for our proof of \Cref{purified-implies-resolution-closure} above.
      \item We do not treat tautological clauses as forwardly redundant here.
        Note that the redundancy deletion rule can still be applied to remove tautological clauses, but doing it eagerly could remove tautological clauses that are necessary for $P$ being purified in $N$.
    \end{enumerate}
\end{enumerate}

The following example shows that it is necessary to not treat tautological clauses as forwardly redundant in the purified clause deletion rule as it would otherwise violate \Cref{purified-implies-resolution-closure}.
\begin{example}
  \label{ex:necessary-to-treat-tautological-clause-not-as-redundant}
  Let $P = \underline{\neg X(v)} \lor X(f(v)) \lor B(v)$ and let $N$ be the clause set consisting of the clauses
  \begin{align*}
    (1)\ & X(a) \lor \neg X(f(a)) \\
    (2)\ & X(b) \\
    (3)\ & \neg X(c) \\
    (4)\ & B(b)
  \end{align*}
  The only inferences to be performed between $P$ and $N$ are resolution inferences as no factors can be formed.
  We show that all constraint resolvents between $P$ and $N$ are either tautologies or are subsumed by $N$ after removing variable constraints.
  The resolvent between~$P$ and clause~$1$ is~\mbox{$v\noeq a \lor X(f(v)) \lor B(v) \lor \neg X(f(a))$}.
  After variable elimination this becomes~\mbox{$X(f(a)) \lor B(a) \lor \neg X(f(a))$} which is a tautology.
  The resolvent between $P$ and clause~$2$ is $v \noeq b \lor X(f(v)) \lor B(v)$.
  After variable elimination this becomes $X(f(b)) \lor B(b)$ which is subsumed by clause $4$.
  There are no other possible resolvents.
  In this sense all non-redundant inferences between $P$ and $N$ have been performed.
  
  Removing $P$ from $N \union \mset{P}$ would result in the clause set $N$.
  We now show that there is a model $\mathcal{M}$ and a relation $R$ such that $\mathcal{M} \models N[X \leftarrow R]$, but $\mathcal{M} \not\models \mathrm{Res}_P^{< \omega}(N)[X \leftarrow R]$.
  Let the domain of the model have four elements, i.e., $M = \mset{m_1, m_2, m_3, m_4}$ and consider the following interpretations of $a, b, c, B$
  \begin{align*}
    a^{\mathcal{M}} = m_1, \quad
    b^{\mathcal{M}} = m_2, \quad
    c^{\mathcal{M}} = m_3, \quad
    B^\mathcal{M} = \mset{m_2}
  \end{align*}
  and the following interpretation of $f$.
  \begin{align*}
    f^\mathcal{M}(m_1) = m_4, \quad
    f^\mathcal{M}(m_2) = m_2, \quad
    f^\mathcal{M}(m_3) = m_3, \quad
    f^\mathcal{M}(m_4) = m_3.
  \end{align*}
  Setting $R = \mset{m_1, m_2}$ we get
  $\mathcal{M} \models R(a) \lor \neg R(f(a))$, $\mathcal{M} \models R(b)$, $\mathcal{M} \models \neg R(c)$ and $\mathcal{M} \models B(b)$, i.e., $\mathcal{M} \models N[X \leftarrow R]$.

  Now consider
  \begin{align*}
    C_1 &:= \mathrm{Res}(P, \underline{X(a)} \lor \neg X(f(a))) \\
        &= v \noeq a \lor X(f(v)) \lor B(v) \lor \neg X(f(a)) \in \mathrm{Res}_P^{\leq 1}(N) \\
    \intertext{and}
    C_2 &:= \mathrm{Res}(P, C_1[\underline{X(f(v))}]) \\
        &= v' \noeq f(v) \lor X(f(v')) \lor B(v') \lor v \noeq a \lor B(v) \lor \neg X(f(a)) \in \mathrm{Res}_P^{\leq 2}(N)
  \end{align*}
  which after variable elimination is equivalent to $C_2' := X(f(f(a))) \lor B(f(a)) \lor B(a)$.
  We show~$\mathcal{M} \not\models C_2'[X \leftarrow R]$ which implies $\mathcal{M} \not\models \mathrm{Res}_P^{< \omega}(N)[X \leftarrow R]$.
  Since~\mbox{$f(f(a))^\mathcal{M}= m_3 \not\in R$}, $f(a)^\mathcal{M} = m_4 \not\in B^\mathcal{M}$ and $a^\mathcal{M} = m_1 \not\in B^\mathcal{M}$ we get~\mbox{$\mathcal{M} \not\models R(f(f(a))) \lor B(f(a)) \lor B(a)$}, i.e.,~\mbox{$\mathcal{M} \not\models C_2'[X \leftarrow R]$}.
\end{example}

To be able to test whether a given purified clause deletion step is correct we need to show that given a pointed clause $P$ and a finite clause set $N$ we can decide whether $P$ is purified in $N$.
\begin{lemma}
  We have
  \begin{enumerate}
    \item $\subsumesL{L}$ is decidable.
    \item $\subsumesLvelim{L}$ is decidable.
    \item The relation ``$P$ is purified in $N$'' is decidable for finite clause sets $N$.
  \end{enumerate}
\end{lemma}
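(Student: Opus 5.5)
The three parts are proved in order, each resting on the previous one.

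For (1), we decide $S \subsumesL{L} C$ by bounding the search for $\sigma$. Clauses are finite, so there are only finitely many maps $\pi$ from the literals of $S$ to the literals of $C$. For each such $\pi$ we check whether there is a substitution $\sigma$ with $M\sigma = \pi(M)$ for every $M \in S$. This is a syntactic matching problem (one-sided unification) on a finite set of term pairs. It is decidable, and when a matcher exists it is unique on the variables of $S$; first-order matching is the special case of unification in which the variables of $C$ are treated as constants. Given that matcher $\sigma$, we check $L$-injectivity by comparing $L_1\sigma$ and $L_2\sigma$ for the finitely many pairs of distinct $L$-literals $L_1,L_2$ of $S$. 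Any $\sigma$ with $S\sigma\subseteq C$ arises from some $\pi$, namely $\pi(M)=M\sigma$, and its restriction to the variables of $S$ is the matcher. So $S \subsumesL{L} C$ holds iff one of the finitely many candidates passes both tests.

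For (2), the point is that the set $\{C' \suchthat C \velimtrans C'\}$ is finite and computable up to variable renaming, which is how the paper identifies clauses. Each $\velim$ step removes one literal, a constraint $v \noeq t$, and applies a substitution. The number of literals strictly decreases, except where identical literals merge, and that only decreases the count further. Hence every $\velim$-chain from $C$ has length at most $|C|$. Each step also has only finitely many choices: a negative equality literal of $C$, an orientation with a variable side $v$, and a check that $v$ is not a proper subterm of $t$. So the reachable set is a finitely branching tree of bounded depth, which can be enumerated. We then decide $S \subsumesLvelim{L} C$ by running the test from (1) on each reachable $C'$. If $N$ is finite, $N \subsumesLvelim{L} C$ is decided by checking every $S\in N$.

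For (3), $N$ is finite, so $\mathrm{Res}_P^{\leq 1}(N)$ is finite and computable. It consists of $N$ together with $\mathrm{Res}(P,Q)$ for each $Q$ obtained by designating a $P$-resolvable literal in some clause of $N$, and there are finitely many such choices. We rename apart and form the constraint resolvent syntactically. Whether $P$ is purified in $N$ then reduces to finitely many instances of (2) with the literal $L^\perp$. The clauses of $N$ themselves pass trivially via the identity substitution, which is $L^\perp$-injective.

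The main obstacle is the termination argument in (2). One must confirm that substituting $t$ for $v$, together with the convention that clauses are sets, never increases the number of literals. One must also handle the case where $v$ equals $t$, that is, a literal $v \noeq v$. This case is allowed since $v$ is not a \emph{proper} subterm of itself, and it is still a strict decrease because the literal is simply deleted.
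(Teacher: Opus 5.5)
Your proposal is correct. Parts (2) and (3) follow the paper's argument: finitely many $\velimtrans$-successors because each $\velim$-step removes a literal, followed by exhaustive checking, and finiteness of $\mathrm{Res}_P^{\leq 1}(N)$. You also make explicit two points the paper leaves implicit, namely that $\velim$ is finitely branching and that the constraint $v \noeq v$ still gives a strict decrease.

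Part (1), however, takes a genuinely different route. The paper reduces $\subsumesL{L}$ to ordinary subsumption $\subsumes$, whose decidability it takes as known. It writes $C = C_0 \lor \Lor_{i=1}^p L(\seq{s_i})$ and $E = E_0 \lor \Lor_{i=1}^q L(\seq{t_i})$. It then replaces the $L$-literals of $E$ by literals over pairwise distinct fresh predicate symbols $X_1,\dots,X_q$. For each injective $f\colon \mset{1,\dots,p}\to\mset{1,\dots,q}$ it tests $C_f \subsumes E'$. Injectivity is thus enforced syntactically by the fresh symbols, and the work lies in proving that the encoding is faithful.

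You instead enumerate all literal assignments $\pi$. You observe that a matcher, if it exists, is unique on the variables of $S$, and that both $S\sigma \subseteq C$ and $L$-injectivity depend only on that restriction. You then check $L$-injectivity afterwards on this single candidate for each $\pi$. It is correct that you do not require $\pi$ itself to be injective, since distinct non-$L$-literals may legitimately be merged.

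The paper's version lets an off-the-shelf subsumption test be reused as a black box, which is convenient for an implementation. Yours is more elementary and self-contained, needing only decidability and uniqueness of first-order matching, with no encoding lemma to verify.
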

\begin{proof}
  \begin{enumerate}
    \item Since $\subsumes$ is decidable we can show that $\subsumesL{L}$ is decidable by reducing it to $\subsumes$.
  
    Consider clauses $C$ and $E$ and a literal $L$ which is an $X$-literal.
    We show how to decide~$C \subsumesL{L} E$.
    We have $C = C_0 \lor \Lor_{i=1}^p L(\seq{s_i})$ and $E = E_0 \lor \Lor_{i=1}^q L(\seq{t_i})$ for some clauses~$C_0$ and $E_0$ that do not contain $L$-literals.
    Now note that for any $\sigma$ which is $L$-injective in $C$ we have that $C\sigma$ also has $p$ many $L$-literals.
    Thus if $p > q$, then $C\sigma \not\subseteq E$, therefore $C \not\subsumesL{L} E$.
    
    So now consider the case $p \leq q$.
    Let $X_1, \dots, X_q$ be fresh predicate symbols with the same arity as $X$.
    For $1 \leq i \leq q$ denote by $L_i(\seq{t})$ the $X_i$-literal with terms $\seq{t}$ which has the same polarity as $L$.
    Furthermore for $f: \mset{1, \dots, p} \to \mset{1, \dots, q}$ injective define~
    \mbox{$C_f :=C_0 \lor \Lor_{i=1}^p L_{f(i)}(\seq{s_i})$} and $E' := E_0 \lor \Lor_{i=1}^q L_{i}(\seq{t_i})$.
    Now we show $C \subsumesL{L} E$ if and only if there exists an injective $f: \mset{1, \dots, p} \to \mset{1, \dots, q}$ such that $C_f \subsumes E'$.
    Then checking $C \subsumesL{L} E$ amounts to checking $C_f \subsumes E'$ for all injective $f: \mset{1, \dots, p} \to \mset{1, \dots, q}$ of which there are only finitely many.
    It remains to show the equivalence.

    $\implies$: We have a $\sigma$ which is $L$-injective in $C$ with $C\sigma \subseteq E$.
    Thus we have $C_0\sigma \subseteq E_0$ and $\seq{s_i}\sigma = \seq{t_i}$.
    Now let $f$ be the identity function.
    Then $C_f = C_0 \lor \Lor_{i=1}^p L_i(\seq{s_i})$ and we have $C_f\sigma = C_0\sigma \lor \Lor_{i=1}^p L_i(\seq{t_i}) \subseteq E'$, i.e., $C_f \subsumes E'$.

    $\impliedby$: Assume there is an injective $f: \mset{1, \dots, p} \to \mset{1, \dots, q}$ such that $C_f \subsumes E'$.
    Then there exists a substitution $\sigma$ such that $C_f\sigma \subseteq E'$.
    Since the $X_1, \dots, X_q$ were fresh we get $C_0\sigma \subseteq E_0$ and for all $1 \leq i \leq p$ we have $\seq{s_i}\sigma = \seq{t_{f(i)}}$.
    We now show that $\sigma$ is $L$-injective in $C$.
    Let $L', L''$ be $L$-literals in $C$ with $L'\sigma = L''\sigma$.
    Then $L' = L(\seq{s_i})$, $L'' = L(\seq{s_j})$ for some $i, j \in \mset{1, \dots, p}$ and $\seq{s_i}\sigma = \seq{s_j}\sigma$, i.e., $\seq{t_{f(i)}} = \seq{t_{f(j)}}$ which implies $f(i) = f(j)$.
    Since~$f$ is injective we get $i = j$ and thus $L' = L''$, i.e., $\sigma$ is $L$-injective in $C$.
    Furthermore we have $C\sigma = C_0\sigma \lor \Lor_{i=1}^p L(\seq{s_i}\sigma) \subseteq E_0 \lor \Lor_{i=1}^p L(\seq{t_{f(i)}}) \subseteq E$
    This means $C \subsumesL{L} E$.
  \item Let $C$ and $E$ be clauses and $L$ a literal.
    Since $\subsumes{L}$ is decidable it suffices to show that for any given clause $E$ there are only finitely many $E'$ with $E \velimtrans E'$.
    In every $\velim$-step the number of literals reduces by at least one since the constraint literal is removed and a substitution is applied to the rest of the clause.
    Thus there are only finitely many $E'$ with $E \velimtrans E'$ and exhaustively checking $C \subsumesLvelim{L} E'$ for all of them gives a decision procedure for $\subsumesLvelim{L}$.
  \item For a given finite $N$ and a pointed clause $P$ we have that $\mathrm{Res}_P^{\leq 1}(N)$ is finite.
    Since $\subsumesLvelim{L^\perp}$ is decidable we exhaustively check $N \subsumesLvelim{L^\perp} R$ for all $R \in \mathrm{Res}_P^{\leq 1}(N)$ and get a decision procedure for $P$ being purified in $N$. \qedhere
  \end{enumerate}
\end{proof}

The goal for the rest of this subsection is to prove \Cref{purified-implies-resolution-closure}.
To prove the theorem and to better understand the introduced relations we lay some groundwork.
The following lemma shows that the introduced relations are all sound and in the case of $\velim$ and $\velimtrans$ even equivalence-preserving.
\begin{lemma}
  \label{subsumption-velim-implies-logical-implication}
  Let $C, C'$ be clauses and $L$ a literal.
  Then 
  \begin{enumerate}
    \item \label{subsumption-velim-implies-logical-implication:velim} $C \velim C'$ implies $C \iff C'$,
    \item \label{subsumption-velim-implies-logical-implication:velim-trans} $C \velimtrans C'$ implies $C \iff C'$,
    \item \label{subsumption-velim-implies-logical-implication:subsumes} $C \subsumesL{L} C'$ implies $C \imp C'$,
    \item \label{subsumption-velim-implies-logical-implication:subsumes-velim} $C \subsumesLvelim{L} C'$ implies $C \imp C'$ and
    \item \label{subsumption-velim-implies-logical-implication:subsumes-velim-trans} $C \subsumesLvelimtrans{L} C'$ implies $C \imp C'$.
  \end{enumerate}
\end{lemma}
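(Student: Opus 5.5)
I will prove the five items in order, each building on the previous ones; clauses are read as their universal closures throughout.

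For item~\ref{subsumption-velim-implies-logical-implication:velim}, let $C = v \noeq t \lor C_0$ and $C' = C_0[v \leftarrow t]$, where $v$ is not a proper subterm of $t$. There are two subcases.
\begin{enumerate}
\item If $t = v$, the constraint $v \noeq v$ is unsatisfiable. Hence $C \iff C_0 = C'$ directly.
\item Otherwise $v$ does not occur in $t$.
\end{enumerate}
In the second subcase, for $C \imp C'$ I instantiate the universally quantified $v$ by $t$. This yields $t \noeq t \lor C_0[v\leftarrow t]$, and the constraint literal is false, so $C'$ follows.

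For $C' \imp C$ in the second subcase, fix a model $\mathcal{M}$ and environment $\theta$. If $\theta(v) \neq t^{\mathcal{M},\theta}$, the constraint literal holds. Otherwise $\theta(v) = t^{\mathcal{M},\theta}$, and since $v$ does not occur in $t$, the substitution lemma gives $\mathcal{M},\theta \models C_0$ iff $\mathcal{M},\theta \models C_0[v\leftarrow t]$. The latter holds because $\mathcal{M} \models C'$.

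The occurs-condition is exactly what makes the substitution lemma applicable here, and I expect this to be the only genuinely semantic point of the whole proof.

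Item~\ref{subsumption-velim-implies-logical-implication:velim-trans} follows from item~\ref{subsumption-velim-implies-logical-implication:velim} by induction on the length of the $\velim$-chain, using transitivity and reflexivity of $\iff$.

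For item~\ref{subsumption-velim-implies-logical-implication:subsumes}, $C \subsumesL{L} C'$ gives a substitution $\sigma$ with $C\sigma \subseteq C'$. The $L$-injectivity of $\sigma$ is irrelevant for soundness. Since $\forall^\ast C \imp C\sigma$ by instantiation, and a subclause implies any superclause, we get $C \imp C'$.

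For item~\ref{subsumption-velim-implies-logical-implication:subsumes-velim}, $C \subsumesLvelim{L} C'$ means $C \subsumesL{L} C''$ for some $C''$ with $C' \velimtrans C''$. By item~\ref{subsumption-velim-implies-logical-implication:subsumes}, $C \imp C''$. By item~\ref{subsumption-velim-implies-logical-implication:velim-trans}, $C'' \iff C'$. Hence $C \imp C'$.

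Item~\ref{subsumption-velim-implies-logical-implication:subsumes-velim-trans} follows by induction on the length of the chain witnessing the transitive closure, using item~\ref{subsumption-velim-implies-logical-implication:subsumes-velim} and transitivity of $\imp$.
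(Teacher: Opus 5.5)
Your proposal is correct and follows the paper's proof essentially step for step; the paper treats the cases $t = v$ and $v \notin t$ uniformly by noting $C[v \leftarrow t] = t \noeq t \lor C'$. One small correction to your closing remark: the substitution-lemma step in the right-to-left direction holds for arbitrary $t$, because $\theta(v) = t^{\mathcal{M},\theta}$ means updating $\theta$ at $v$ to $t^{\mathcal{M},\theta}$ changes nothing. The occurs-condition is really needed in the left-to-right direction, where it guarantees that instantiating $v$ by $t$ turns the constraint into $t \noeq t$. Without it, $v \noeq f(v) \lor P(v)$ does not imply $P(f(v))$.
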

\begin{proof}
  \begin{enumerate}
    \item For the left-to-right direction let $\mathcal{M}$ be a model of $C$.
    Then $\mathcal{M} \models C[v \leftarrow t]$. 
    Since~$C[v \leftarrow t] = t \noeq t \lor C'$ and $\mathcal{M} \not\models t \noeq t$ we get $\mathcal{M} \models C'$.
    For the right-to-left direction let $\mathcal{M}$ be a model of $C' = C_0[v \leftarrow t]$ and let $\theta$ be an environment.
    If $\mathcal{M}, \theta \models v \noeq t$, then~$\mathcal{M}, \theta \models C$ and we are done.
    Otherwise $\mathcal{M}, \theta \models v \oeq t$ in which case $\mathcal{M}, \theta \models C_0$.
    Again, $\mathcal{M}, \theta \models C$ and we are done.
    \item Follows by induction on the length of a $\velim$-path between $C$ and $C'$ using \ref{subsumption-velim-implies-logical-implication:velim}.
    \item Let $\mathcal{M}$ be a model of $C$ and let $\sigma$ be $L$-injective in $C$ such that $C\sigma \subseteq C'$.
      Then $\mathcal{M} \models C\sigma$ and since $C\sigma \subseteq C'$ we get $\mathcal{M} \models C'$.
    \item We have $C \subsumesL{L} C''$ for some $C''$ with $C' \velimtrans C''$.
    Now let $\mathcal{M}$ be a model of $C$.
    By \ref{subsumption-velim-implies-logical-implication:subsumes} we get $\mathcal{M} \models C''$ which then by \ref{subsumption-velim-implies-logical-implication:velim-trans} implies $\mathcal{M} \models C'$.
    \item Follows by induction on the length of a $\subsumesLvelim{L}$-path between $C$ and $C'$ using \ref{subsumption-velim-implies-logical-implication:subsumes-velim}. \qedhere
  \end{enumerate}
\end{proof}

Also note the following properties and relationships between the introduced subsumption relations.
\begin{lemma}
  \label{relationships-between-subsumption-relations}
  Let $C, E$ be clauses and $L$ a literal.
  Then
  \begin{enumerate}
    \item $C \subseteq E$ implies $C \subsumesL{L} E$.
    \item $C \subsumesL{L} E$ implies $C \subsumes E$.
    \item $C \subsumesL{L} E$ implies $C \subsumesLvelim{L} E$.
    \item $C \subsumesLvelim{L} E$ implies $C \subsumesLvelimtrans{L} E$.
  \end{enumerate}
\end{lemma}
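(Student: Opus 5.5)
The four items follow directly from the definitions, so I will take them one at a time in the order stated.

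For (1), assume $C \subseteq E$ and take $\sigma = \mathrm{id}$. Then $C\sigma = C \subseteq E$, so it remains to check that $\mathrm{id}$ is $L$-injective in $C$. This is immediate: if $L_1 \neq L_2$ are $L$-literals in $C$, then $L_1\mathrm{id} = L_1 \neq L_2 = L_2\mathrm{id}$.

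For (2), a witness for $C \subsumesL{L} E$ is a substitution $\sigma$ that is $L$-injective in $C$ and satisfies $C\sigma \subseteq E$. Forgetting the injectivity requirement, the same $\sigma$ witnesses $C \subsumes E$ by \Cref{def:redundant}.

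For (3), I use that $\velimtrans$ is defined as the reflexive, transitive closure of $\velim$, so $E \velimtrans E$ holds. Taking $C' := E$ in the definition of $\subsumesLvelim{L}$, the hypothesis $C \subsumesL{L} E$ gives $C \subsumesLvelim{L} E$.

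For (4), I note that the transitive closure of a relation contains the relation itself, since a single step is a path of length one. Hence $C \subsumesLvelim{L} E$ implies $C \subsumesLvelimtrans{L} E$.

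There is no real obstacle in any of these steps. The only points that need care are remembering reflexivity of $\velimtrans$ in (3) and checking the injectivity condition for the identity substitution in (1).
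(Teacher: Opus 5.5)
Your proposal is correct and follows the paper's proof essentially step for step: the identity substitution for (1), dropping the injectivity requirement for (2), reflexivity of $\velimtrans$ for (3), and containment of a relation in its transitive closure for (4). Your explicit check that the identity is $L$-injective in (1) is slightly more detailed than the paper's appeal to injectivity of $\mathrm{id}$, but the argument is the same.
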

\begin{proof}
  \begin{enumerate}
    \item We have $C\mathrm{id} \subseteq E$. Since the identity substitution $\mathrm{id}$ is injective it is in particular~$L$-injective, so we get $C \subsumesL{L} E$.
    \item Follows since the substitution $\sigma$ satisfies even stronger conditions in addition to $C\sigma \subseteq E$.
    \item Since $\velimtrans$ is reflexive by definition we get $E \velimtrans E$ so we get $C \subsumesLvelim{L} E$.
    \item Follows since $\subsumesLvelimtrans{L}$ is the reflexive, transitive closure of $\subsumesLvelim{L}$.
  \end{enumerate}
\end{proof}

Note the following result about $\subsumes$ and $\subsumesL{L}$.
It shows that $N \subsumes C$ implies $N' \subsumesL{L} C$ for a clause set $N'$ which results from $N$ by adding appropriate factors.
\begin{lemma}
  Let $C$ and $C'$ be clauses and let $L$ be a literal.
  If $C \subsumes C'$, then there exists a clause~$C_L$ which results from $C$ by a finite number of applications of $\mathrm{Fac}$- and $\mathrm{ConstrElim}$-inferences such that $C_L \subsumesL{L} C'$.
\end{lemma}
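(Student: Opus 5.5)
We argue by induction on the number of $L$-literals of $C$ that $\sigma$ identifies. Given $C \subsumes C'$, fix a substitution $\sigma$ with $C\sigma \subseteq C'$. Define the \emph{collision count} of $(C,\sigma)$ as the number of $L$-literals of $C$ minus the number of distinct $L$-literals of $C\sigma$. If the count is $0$, then $\sigma$ is $L$-injective in $C$, so $C \subsumesL{L} C'$ and we take $C_L := C$, using zero inferences.

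Otherwise there are two distinct $L$-literals $L(\seq{t}) \neq L(\seq{t'})$ in $C$ with $\seq{t}\sigma = \seq{t'}\sigma$. Write $C = C_0 \lor L(\seq{t}) \lor L(\seq{t'})$ and apply $\mathrm{Fac}$ to obtain $\seq{t} \noeq \seq{t'} \lor C_0 \lor L(\seq{t})$. Since $\sigma$ unifies $\seq{t}$ and $\seq{t'}$, a most general unifier $\mu$ exists, and $\mathrm{ConstrElim}$ yields $D := (C_0 \lor L(\seq{t}))\mu$. By most generality, $\sigma = \mu\rho$ for some $\rho$. Then
\begin{equation*}
D\rho = (C_0 \lor L(\seq{t}))\mu\rho = C_0\sigma \lor L(\seq{t})\sigma \subseteq C\sigma \subseteq C',
\end{equation*}
so $D \subsumes C'$ via $\rho$.

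Next we show the collision count of $(D,\rho)$ is strictly smaller than that of $(C,\sigma)$. The image $D\rho$ is contained in $C\sigma$. Moreover $D\rho$ contains every $L$-literal of $C\sigma$: these are images $L(\seq{s})\sigma$ of $L$-literals of $C$, and the only omitted one, $L(\seq{t'})$, has the same image as $L(\seq{t})$. Hence the number of distinct $L$-literals of $D\rho$ equals that of $C\sigma$.

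The number of $L$-literals of $D$ is at most that of $C$ minus one. Dropping $L(\seq{t'})$ removes one literal, and applying $\mu$ can only merge literals further, since clauses are sets. Merging does no harm, because it only lowers the count.

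One subtlety is that $L(\seq{t'})$ might also occur in $C_0$. It does not, since $C$ is a set and $L(\seq{t'})$ appears exactly once. The new $L$-literals of $D$ are all images under $\mu$ of $L$-literals of $C$. Non-$L$-literals stay non-$L$-literals under substitution, since substitution preserves predicate symbol and polarity. So the measure decreases strictly, and the induction hypothesis applied to $D$ gives a $C_L$ derived from $D$, hence from $C$, by finitely many $\mathrm{Fac}$- and $\mathrm{ConstrElim}$-inferences, with $C_L \subsumesL{L} C'$.

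The main obstacle is choosing the measure correctly: we must track the pair $(C,\sigma)$ and the factorisation $\sigma = \mu\rho$ given by the mgu, rather than just the clause, so that the decrease is guaranteed.
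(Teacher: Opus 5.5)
Your proof is correct and follows the same route as the paper: pick two distinct $L$-literals identified by $\sigma$, apply $\mathrm{Fac}$ and then $\mathrm{ConstrElim}$ with a most general unifier $\mu$, factor $\sigma = \mu\rho$, and induct on a collision measure. The only difference is the measure. The paper counts ordered pairs of distinct $L$-literals identified by the substitution and argues through the intermediate factor via a surjection on such pairs, whereas your measure $|C_L| - |C_L\sigma|$ compares $D$ with $C$ directly, so the strict decrease becomes a simple counting fact.
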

\begin{proof}
  Since $C \subsumes C'$ there is a substitution $\sigma$ such that $C\sigma \subseteq C'$.
  We say a pair of $L$-literals~$(L', L'')$ is $\sigma$-unified in $C$ if $L', L'' \in C$, $L' \neq L''$ and $L'\sigma = L''\sigma$.
  Now let $k(C, \sigma)$ be the number of $\sigma$-unified pairs in $C$, i.e., 
  $$
  k(C, \sigma):= \abs{\mset{(L', L'') \suchthat L', L'' \in C, L' \neq L'', L'\sigma = L''\sigma}}.
  $$
  We show that for all $l \in \N$, all clauses $C$ and substitutions $\sigma$ with $C\sigma \subseteq C'$ and $k(C, \sigma) \leq l$ there exists a clause $C_L$ which results from $C$ by a finite number of applications of $\mathrm{Fac}$- and $\mathrm{ConstrElim}$-inferences in $\mathcal{I}$ such that $C_L \subsumesL{L} C'$.

  We do this by induction on $l$.
  If $k(C, \sigma) \leq 0$, then no two distinct literals in $C$ are unified by $\sigma$, i.e., $\sigma$ is $L$-injective in $C$ so
  we have $C_L := C \subsumesL{L} C'$.
  
  Now to the induction step.
  Let $k(C, \sigma) \leq l + 1$.
  If $k(C, \sigma) \leq l$, then the statement follows by the induction hypothesis.
  So let $k(C, \sigma) = l+1$.
  Then there exists a $\sigma$-unified pair in $C$.
  Let $(L(\seq{t}), L(\seq{s}))$ be such a pair.
  Then $\seq{t}\sigma = \seq{s}\sigma$ and $C = L(\seq{t}) \lor L(\seq{s}) \lor C_0$.
  Applying constraint factoring to $C$ results in $C_f := \seq{t} \noeq \seq{s} \lor L(\seq{t}) \lor C_0$.
  Since $\sigma$ unifies $\seq{t}$ and $\seq{s}$ there exists a most general unifier $\tau$.
  Then applying constraint elimination to $C_f$ results in $C_{fc} := L(\seq{t}\tau) \lor C_0\tau$.
  Since $\tau$ is a most general unifier of $\seq{t}$ and $\seq{s}$ and $\sigma$ is a unifier of $\seq{t}$ and $\seq{s}$ there exists a substitution $\rho$ with $\sigma = \tau\rho$.
  Then we get $C_{fc}\rho = L(\seq{t}\tau\rho) \lor C_0\tau\rho = L(\seq{t}\sigma) \lor C_0\sigma = C\sigma \subseteq C'$.

  Now let $(L_1, L_2)$ be a $\rho$-unified pair in $C_{fc}$.
  Then $L_1 = L'\tau$ and $L_2 = L''\tau$ for some~$L', L'' \in C_f$.
  Since $L_1 \neq L_2$ we have $L' \neq L''$.
  Furthermore $(L', L'')$ is a $\sigma$-unified pair in $C_f$ since $L'\sigma = L'\tau\rho = L_1\rho = L_2\rho = L''\tau\rho = L''\sigma$.
  Therefore there is a surjective mapping from the $\sigma$-unified pairs in $C_f$ to the $\rho$-unified pairs in $C_{fc}$, thus $k(C_{fc}, \rho) \leq k(C_f, \sigma)$.

  Furthermore $k(C_f, \sigma) < k(C, \sigma)$ since $C_f$ does not contain the $\sigma$-unified pair $(L(\seq{s}), L(\seq{t}))$.
  Therefore $k(C_{fc}, \rho) < k(C, \sigma)$ and by induction hypothesis there exists a $C_L$ which results from $C_{fc}$ by a finite number of applications of $\mathrm{Fac}$- and $\mathrm{ConstrElim}$-inferences from $C_{fc}$ and which satisfies $C_L \subsumesL{L} C'$.
  Since $C_{fc}$ results from $C$ by a finite number of $\mathrm{Fac}$- and $\mathrm{ConstrElim}$-inferences this means $C_L$ results from $C$ by a finite number of $\mathrm{Fac}$- and $\mathrm{ConstrElim}$-inferences which concludes the proof.
\end{proof}

Next we show that $\velim$ and $\velim^\ast$ are compatible with $\mathrm{Res}(P, \cdot)$ and removing designated literals. 
\begin{lemma}
  \label{velim-is-congruent-wrt-res-p}
  Let $P$ be a pointed clause and let $Q, Q'$ be $P$-resolvable pointed clauses.
  Then
  \begin{enumerate}
    \item \label{velim-is-congruent-wrt-res-p:velim} if $Q \velim Q'$ then $\mathrm{Res}(P, Q) \velim \mathrm{Res}(P, Q')$,
    \item if $Q \velimtrans Q'$ then $\mathrm{Res}(P, Q) \velimtrans \mathrm{Res}(P, Q')$,
    \item \label{velim-is-congruent-wrt-res-p:remove-literal-velim} if $L$ and  $L'$ are the designated literals of $Q$ and $Q'$ respectively, then $Q \velim Q'$ implies $Q \setminus \mset{L} \velim Q' \setminus \mset{L'}$ and
    \item \label{velim-is-congruent-wrt-res-p:remove-literal-velim-trans} if $L$ and  $L'$ are the designated literals of $Q$ and $Q'$ respectively, then $Q \velimtrans Q'$ implies $Q \setminus \mset{L} \velimtrans Q' \setminus \mset{L'}$.
  \end{enumerate}
\end{lemma}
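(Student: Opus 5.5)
We argue directly from the definitions. Items (2) and (4) follow from (1) and (3) by induction on the length of the $\velim$-chain, so the substance lies in (1) and (3).

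For (1), write $P = \underline{L(\seq{t})} \lor C$, with $P$ variable-disjoint from $Q$ and $Q'$. Since $Q$ and $Q'$ are $P$-resolvable, $Q = \underline{L(\seq{s})^\perp} \lor v \noeq r \lor C_0$ and, by the definition of $\velim$ on pointed clauses, $Q' = \underline{L(\seq{s}[v \leftarrow r])^\perp} \lor C_0[v \leftarrow r]$. Here $v$ is a variable that is not a proper subterm of $r$, and the constraint literal differs from the designated literal. Then
$$\mathrm{Res}(P, Q) = \seq{t} \noeq \seq{s} \lor C \lor v \noeq r \lor C_0,$$
and eliminating the literal $v \noeq r$ yields $(\seq{t} \noeq \seq{s} \lor C \lor C_0)[v \leftarrow r]$. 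Since $v$ occurs only in $Q$ (variable-disjointness), and $v$ is not a variable of $P$, this equals $\seq{t} \noeq \seq{s}[v \leftarrow r] \lor C \lor C_0[v \leftarrow r] = \mathrm{Res}(P, Q')$.

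The main obstacle is bookkeeping rather than logic. First, $\mathrm{Res}$ requires renaming apart, so we choose the renaming of $P$ to avoid the variables of both $Q$ and $Q'$; this is possible because clauses are identified up to variable renaming. Second, clauses are sets, so the constraint literal $v \noeq r$ must not merge with or coincide with another literal of the resolvent, say a literal of $C$ or of $\seq{t} \noeq \seq{s}$. Because $v$ is a variable of $Q$ and not of $P$, the literal $v \noeq r$ cannot literally occur in $C$ or in $\seq{t}\noeq\seq{s}$. If it already occurs in $C_0$, then $Q = Q \setminus$ duplicates anyway, and the set identity still holds. We check that set union commutes with the substitution in each case.

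For (3), $Q \setminus \mset{L} = v \noeq r \lor C_0$ and $Q' \setminus \mset{L'} = C_0[v \leftarrow r]$. The pointed-clause $\velim$ stipulates that the constraint literal is distinct from the designated literal, so removing $L$ does not remove the constraint. This is immediately an instance of the clause-level $\velim$. There is one caveat: $L$ might also occur inside $C_0$ under the set reading, or $L'$ might coincide with some literal of $C_0[v \leftarrow r]$. We read $Q \setminus \mset{L}$ as the non-designated part $v \noeq r \lor C_0$ of the pointed clause, consistent with the notation $\underline{L} \lor C$. Under that reading the claim is exact.
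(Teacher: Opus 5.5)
Your proof is correct and follows the same route as the paper's. Items (1) and (3) are verified directly by applying variable elimination to the constraint literal $v \noeq r$, using that $v$ does not occur in the renamed-apart $P$, and items (2) and (4) follow by induction on the length of the $\velim$-chain. Your extra bookkeeping, reading $Q \setminus \mset{L}$ as the non-designated part of the pointed clause, is the reading the paper uses implicitly, for instance where it writes $E = Q \setminus \mset{L'}$ in the following lemma.
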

\begin{proof}
  \begin{enumerate}
    \item Let $P = \underline{L(\seq{t})} \lor C$ and let $Q \velim Q'$. Then $Q = \underline{L(\seq{s})^\perp} \lor v \noeq r \lor C'$ and~\mbox{$Q' = \underline{L(\seq{s}[v \leftarrow r])^\perp} \lor~C'[v \leftarrow r]$}.
      Without loss of generality, $P$ and $Q$ as well as $P$ and $Q'$ are variable-disjoint.
      Then 
      $$\mathrm{Res}(P, Q) = \seq{t} \noeq \seq{s} \lor v \noeq r \lor C' \velim \seq{t} \noeq \seq{s}[v \leftarrow r] \lor C \lor C'[v \leftarrow r] = \mathrm{Res}(P, Q').$$
    \item Follows by induction on the length of a $\velim$-path from $Q$ to $Q'$ using \ref{velim-is-congruent-wrt-res-p:velim}.
    \item Since $Q \velim Q'$ we have $Q = \underline{L(\seq{s})} \lor v \noeq t \lor C$ for some term $t$ and a variable $v$ which is not a proper subterm of $t$. Furthermore we have $Q' = \underline{L(\seq{s}[v \leftarrow t])} \lor C[v \leftarrow t]$.
      Now we have $Q \setminus \mset{L(\seq{s})} = v \noeq t \lor C \velim C[v \leftarrow t] = Q' \setminus \mset{L(\seq{s}[v \leftarrow t])}$.
    \item Follows by induction on the length of a $\velim$-path from $Q$ to $Q'$ using \ref{velim-is-congruent-wrt-res-p:remove-literal-velim}. \qedhere
  \end{enumerate}
\end{proof}

The next lemma shows how $\subsumesLvelim{L^\perp}$ is compatible with $\mathrm{Res}(P, \cdot)$.
\begin{lemma}
  \label{res-subsumption-velim-commutation}
  Let $P$ be a pointed clause with designated literal $L$ and let $Q$ be a $P$-resolvable pointed clause with designated $L^\perp$-literal $L'$.
  Furthermore let $S$ be a clause with~$S \subsumesLvelim{L^\perp}~Q$.
  Then either $S \subsumesLvelim{L^\perp} Q \setminus \mset{L'}$ or there exists an $L^\perp$-literal $L'' \in S$ such that $\mathrm{Res}(P, S[\underline{L'}]) \subsumesLvelim{L^\perp} \mathrm{Res}(P, Q)$.
\end{lemma}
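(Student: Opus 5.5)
The plan is to unfold $S \subsumesLvelim{L^\perp} Q$ and then split on whether the subsuming substitution uses the designated literal. (The statement presumably means $\mathrm{Res}(P, S[\underline{L''}])$ in the second alternative, and I read it that way.)

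\emph{Setup.} By definition there is a clause $Q_0$ with $Q \velimtrans Q_0$ and a substitution $\sigma$, $L^\perp$-injective in $S$, with $S\sigma \subseteq Q_0$. The designated literal $L'$ is an $L^\perp$-literal, hence not an equational constraint. So every $\velim$-step on the underlying clause of $Q$ removes a constraint literal other than $L'$. Each such step therefore lifts to a $\velim$-step on the pointed clause, giving $Q \velimtrans Q_0'$ with $Q_0' = \underline{L'_0} \lor C_0'$. Here $L'_0 = L'\theta$, $\theta$ is the composed elimination substitution, and the underlying clause of $Q_0'$ is $Q_0$.

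By \Cref{velim-is-congruent-wrt-res-p}, parts (2) and (4), we then get $\mathrm{Res}(P, Q) \velimtrans \mathrm{Res}(P, Q_0')$ and $Q \setminus \mset{L'} \velimtrans C_0'$. Here I read ``$Q' \setminus \mset{L'}$'' as the non-designated part of the pointed clause; if $L'_0$ happens to merge with a literal of $C_0'$ I keep $C_0'$ as it is.

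\emph{Case 1: no literal of $S$ is mapped by $\sigma$ onto $L'_0$.} Then $S\sigma \subseteq Q_0 \setminus \mset{L'_0}$. If $L'_0$ merged with a literal of $C_0'$, we even have $S\sigma \subseteq C_0'$ directly. In either case $S \subsumesL{L^\perp} C_0'$ with the same $\sigma$, and together with $Q \setminus \mset{L'} \velimtrans C_0'$ this gives $S \subsumesLvelim{L^\perp} Q \setminus \mset{L'}$, the first alternative.

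\emph{Case 2: some $L'' \in S$ has $L''\sigma = L'_0$.} Then $L''$ is an $L^\perp$-literal, and by $L^\perp$-injectivity it is unique. Write $P = \underline{L(\seq{t})} \lor C$, $L'' = L(\seq{s''})^\perp$, and $L'_0 = L(\seq{s_0})^\perp$, taking $P$ variable-disjoint from $S$ and $Q_0$. Then
$$\mathrm{Res}(P, S[\underline{L''}]) = \seq{t} \noeq \seq{s''} \lor C \lor (S \setminus \mset{L''}).$$
Since $\sigma$ does not touch the variables of $P$, applying $\sigma$ yields
$$\seq{t} \noeq \seq{s_0} \lor C \lor (S \setminus \mset{L''})\sigma \subseteq \seq{t} \noeq \seq{s_0} \lor C \lor C_0' = \mathrm{Res}(P, Q_0').$$
Combined with $\mathrm{Res}(P, Q) \velimtrans \mathrm{Res}(P, Q_0')$, this gives $\mathrm{Res}(P, S[\underline{L''}]) \subsumesLvelim{L^\perp} \mathrm{Res}(P, Q)$, provided $\sigma$ is $L^\perp$-injective in the resolvent.

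\emph{Main obstacle: $L^\perp$-injectivity of $\sigma$ in the resolvent.} The $L^\perp$-literals of the resolvent come from two sources. Those of $S \setminus \mset{L''}$ stay pairwise distinct under $\sigma$ by assumption, and those of $C$ are fixed by $\sigma$. The only possible collision is a literal $K \in C$ and a distinct $K' \in S \setminus \mset{L''}$ with $K'\sigma = K$. I would first rename $P$ apart from everything. Then any literal of $C$ containing a variable of $P$ cannot lie in $S\sigma \subseteq Q_0$, so such a $K$ must be ground. For ground $K$, I would modify $S$'s side rather than $\sigma$: the two literals coincide in $\mathrm{Res}(P, Q_0')$, so I would check whether one can drop $K'$ from the image, or apply a factoring-like identification, which is permitted because only injectivity of the \emph{subsuming} clause's literals is required. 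If this does not go through, the fallback is to strengthen the hypotheses, e.g.\ assume no ground $L^\perp$-literal of $C$ is an instance of a non-ground literal of $S$. This is the step I expect to need the most care.
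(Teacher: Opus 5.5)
Your route is the paper's route. You unfold $S \subsumesLvelim{L^\perp} Q$, lift the $\velim$-steps to the pointed clause, split on whether the designated literal $L'_0$ lies in $S\sigma$, and use \Cref{velim-is-congruent-wrt-res-p} in both cases. The paper's Case~2 concludes $\mathrm{Res}(P,S[\underline{L''}]) \subsumesLvelim{L^\perp} \mathrm{Res}(P,Q)$ directly from $\mathrm{Res}(P,S[\underline{L''}])\sigma \subseteq \mathrm{Res}(P,Q_0')$. It never checks that $\sigma$ is $L^\perp$-injective in that resolvent, which is exactly the step you isolate.

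You locate the only possible failure correctly: a ground $L^\perp$-literal $K$ of the side clause $C$ of $P$ with $K = K'\sigma$ for some $K' \neq K$ in $S \setminus \mset{L''}$. The repair you sketch cannot work, however. The subsuming clause is prescribed to be $\mathrm{Res}(P, S[\underline{L''}])$, so you may neither drop $K'$ nor factor it away. Choosing another substitution does not help either. If $C' \subsumesLvelim{L^\perp} C''$, then $C''$ has at least as many $L^\perp$-literals as $C'$; this is the counting argument inside the proof of \Cref{one-sided-implies-acyclically-purified}, which does not use one-sidedness.

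In fact the statement is false. Take $P = \underline{X(a)} \lor \neg X(b)$, $Q = \underline{\neg X(a)} \lor \neg X(b)$ and $S = \neg X(a) \lor \neg X(w)$. Then $\sigma = [w \leftarrow b]$ is $\neg X$-injective in $S$ with $S\sigma = Q$, so the hypothesis holds.
\begin{itemize}
\item $Q \setminus \mset{\neg X(a)} = \neg X(b)$ and $\mathrm{Res}(P,Q) = a \noeq a \lor \neg X(b)$ each contain exactly one negative $X$-literal. Neither admits a $\velim$-step.
\item $S$, $\mathrm{Res}(P, S[\underline{\neg X(a)}]) = a \noeq a \lor \neg X(b) \lor \neg X(w)$ and $\mathrm{Res}(P, S[\underline{\neg X(w)}]) = a \noeq w \lor \neg X(b) \lor \neg X(a)$ each contain two.
\end{itemize}
By the count, neither alternative can hold. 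This is precisely your collision, with $K = \neg X(b)$ and $K' = \neg X(w)$. The same example also refutes the transitive version, \Cref{res-subsumption-velim-trans-commutation}.

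So neither your proposal nor the paper's proof establishes the lemma as stated. What your argument does prove is the lemma under an extra hypothesis that rules out such collisions, for instance your fallback condition, or $P$ being one-sided. A genuine fix has to change the statement and then re-examine how it is used in the proof of \Cref{purified-implies-resolution-closure}.
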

\begin{proof}
  Let $P = \underline{L(\seq{t})} \lor C$ and $Q = \underline{L(\seq{s})^\perp} \lor E$ be variable-disjoint.
  With $R := \mathrm{Res}(P, Q)$ we have $R' = \seq{t} \noeq \seq{s} \lor C \lor E$.
  Furthermore there exists a pointed clause $Q' = \underline{L(\seq{s'})^\perp} \lor E'$ with~$Q \velimtrans Q'$ and $S \subsumesL{L^\perp} Q'$.
  We distinguish two cases.

  Case 1: $S \subsumesL{L^\perp} E'$. 
  Note that $E \velimtrans E'$ by \Cref{velim-is-congruent-wrt-res-p}.
  Therefore $S \subsumesLvelim{L^\perp} E = Q \setminus \mset{L'}$.
  
  Case 2: There exists a substitution $\sigma$ which is $L^\perp$-injective in $S$ with $S\sigma \subseteq Q'$ and~$L(\seq{s'})^\perp \in S\sigma$.
  Since $\sigma$ is $L^\perp$-injective we get $S = L(\seq{r})^\perp \lor S_0$ for some terms $\seq{r}$ with~$\seq{r}\sigma = \seq{s'}$ and $S_0\sigma \subseteq E'$.
  Without loss of generality $S$ and $P$ are variable-disjoint.
  Now consider the clause $R_S := \mathrm{Res}(P, S[\underline{L(\seq{r})^\perp}]) = \seq{t} \noeq \seq{r} \lor C \lor S_0$.
  Then $R_S\sigma \subseteq \seq{t} \noeq \seq{s'} \lor C \lor E' := R'$.
  By \Cref{velim-is-congruent-wrt-res-p} we have $R \velimtrans R'$.
  Thus we get $R_S \subsumesLvelim{L^\perp} \mathrm{Res}(P, Q)$ which finishes the proof.
\end{proof}

We now show the same property for $\subsumesLvelimtrans{L}$.
\begin{lemma}
  \label{res-subsumption-velim-trans-commutation}
  Let $P$ be a pointed clause with designated literal $L$ and let $Q$ be a $P$-resolvable pointed clause with designated $L^\perp$-literal $L'$.
  Furthermore let $S$ be a clause with $S \subsumesLvelimtrans{L^\perp} Q$.
  Then either $S \subsumesLvelimtrans{L^\perp} Q \setminus \mset{L'}$ or there exists a literal $L'' \in S$ such that $S[\underline{L''}]$ is $P$-resolvable and $\mathrm{Res}(P, S[\underline{L''}]) \subsumesLvelimtrans{L^\perp} \mathrm{Res}(P, Q)$.
\end{lemma}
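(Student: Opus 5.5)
We prove the statement by induction on the length $n \geq 1$ of a $\subsumesLvelim{L^\perp}$-chain $S = S_0 \subsumesLvelim{L^\perp} S_1 \subsumesLvelim{L^\perp} \dots \subsumesLvelim{L^\perp} S_n = Q$. The base case $n = 1$ is exactly \Cref{res-subsumption-velim-commutation}. In both alternatives of that lemma the one-step relation can be weakened to $\subsumesLvelimtrans{L^\perp}$ by \Cref{relationships-between-subsumption-relations}. In the second alternative the literal $L''$ produced there is an $L^\perp$-literal of $S$, so $S[\underline{L''}]$ is $P$-resolvable.

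For the induction step, write $S \subsumesLvelim{L^\perp} S_1 \subsumesLvelimtrans{L^\perp} Q$, where the second chain is shorter. Apply the induction hypothesis to $S_1$ and $Q$. This gives two cases.

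\emph{Case 1:} $S_1 \subsumesLvelimtrans{L^\perp} Q \setminus \mset{L'}$. Prepending the step $S \subsumesLvelim{L^\perp} S_1$ yields $S \subsumesLvelimtrans{L^\perp} Q \setminus \mset{L'}$, so the first alternative holds.

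\emph{Case 2:} there is an $L^\perp$-literal $L_1 \in S_1$ with $\mathrm{Res}(P, S_1[\underline{L_1}]) \subsumesLvelimtrans{L^\perp} \mathrm{Res}(P, Q)$. Now apply \Cref{res-subsumption-velim-commutation} to $S$ and the $P$-resolvable pointed clause $S_1[\underline{L_1}]$, using $S \subsumesLvelim{L^\perp} S_1$. This again splits into two subcases.

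\emph{Subcase 2a:} $S \subsumesLvelim{L^\perp} S_1 \setminus \mset{L_1}$. We want to conclude $S \subsumesLvelimtrans{L^\perp} Q \setminus \mset{L'}$, and this is the main obstacle, since we only know that the resolvent of $S_1$ is subsumed by the resolvent of $Q$. Note that $\mathrm{Res}(P, S_1[\underline{L_1}])$ contains $S_1 \setminus \mset{L_1}$ together with a constraint $\seq{t} \noeq \seq{r}$ and the remainder $C$ of $P$. So the subsumption chain from this resolvent to $\mathrm{Res}(P, Q)$ does not obviously restrict to one from $S_1 \setminus \mset{L_1}$ to $Q \setminus \mset{L'}$. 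The constraint and $C$ literals of the subsuming clauses might be mapped onto literals of $E$, and variable eliminations could consume constraints in a way that mixes the parts.

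To handle this, I would strengthen the induction hypothesis. Instead of the statement as given, I would carry along explicit invariants for each chain: the designated literals correspond step by step, and the relevant $\subsumesL{L^\perp}$-substitutions map the designated literal to the designated literal. That is exactly the situation in Case 2 of the proof of \Cref{res-subsumption-velim-commutation}. With such a tracked chain, Subcase 2a can be resolved by noticing that whenever the designated literal is not used, the whole chain can be taken to land in $Q \setminus \mset{L'}$.

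An alternative route is to relax the conclusion of Subcase 2a to $S \subsumesLvelimtrans{L^\perp} \mathrm{Res}(P,Q)$ and check whether later uses of this lemma tolerate that weaker form. I would try the tracked-chain strengthening first.

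\emph{Subcase 2b:} there is an $L^\perp$-literal $L'' \in S$ with $\mathrm{Res}(P, S[\underline{L''}]) \subsumesLvelim{L^\perp} \mathrm{Res}(P, S_1[\underline{L_1}])$. Composing this with the Case 2 chain gives $\mathrm{Res}(P, S[\underline{L''}]) \subsumesLvelimtrans{L^\perp} \mathrm{Res}(P, Q)$ by transitivity. Moreover $S[\underline{L''}]$ is $P$-resolvable, so the second alternative holds.
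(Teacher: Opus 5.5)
Your base case, Case 1 and Subcase 2b are fine. The proof is incomplete exactly where you say it is: Subcase 2a is never closed, and the relaxation route does not prove the stated lemma. Concluding only $S \subsumesLvelimtrans{L^\perp} \mathrm{Res}(P,Q)$ adds a third alternative. That weaker form is enough for \Cref{purified-implies-trans-subsume-velim-is-closed-under-resolution}. It is not enough for \Cref{subsumesLvelimtrans-if-not-containing-L-literals}, which rules out the second alternative because $S$ has no literals of the relevant kind and then concludes the first; a third alternative would not be excluded there.

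The paper's own proof delivers only this relaxed conclusion in the corresponding subcase. It peels off the last step $S_k \subsumesLvelim{L^\perp} Q$ and applies the induction hypothesis to $S$ and $S_k[\underline{L''}]$. When that yields $S \subsumesLvelimtrans{L^\perp} S_k\setminus\mset{L''}$, it goes through $S_k\setminus\mset{L''}\subseteq \mathrm{Res}(P,S_k[\underline{L''}])$ and ends at $S\subsumesLvelimtrans{L^\perp}\mathrm{Res}(P,Q)$. So the paper's proof has the same hole you found.

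The missing idea is a concrete invariant for your ``tracked chain''. In the second alternative, also record $S\setminus\mset{L''}\subsumesLvelimtrans{L^\perp} Q\setminus\mset{L'}$.
\begin{itemize}
\item For a single step this is already in Case 2 of the proof of \Cref{res-subsumption-velim-commutation}. There $S = L(\seq{r})^\perp\lor S_0$ and $\sigma$ is $L^\perp$-injective, so $S_0\sigma\subseteq E'$. Since $E\velimtrans E'$ by \Cref{velim-is-congruent-wrt-res-p}, you get $S\setminus\mset{L''}\subsumesLvelim{L^\perp} Q\setminus\mset{L'}$.
\item Carry this conjunct through your front-peeling induction. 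Subcase 2a then closes as $S\subsumesLvelim{L^\perp} S_1\setminus\mset{L_1}\subsumesLvelimtrans{L^\perp} Q\setminus\mset{L'}$, where the second relation is the strengthened induction hypothesis.
\item In Subcase 2b the extra conjunct composes the same way.
\item With the paper's back-peeling order, only the one-step lemma needs this strengthening. In the bad subcase you compose $S\subsumesLvelimtrans{L^\perp} S_k\setminus\mset{L''}$ with $S_k\setminus\mset{L''}\subsumesLvelim{L^\perp} Q\setminus\mset{L'}$.
\end{itemize}

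Your informal gloss, ``whenever the designated literal is not used, the chain lands in $Q\setminus\mset{L'}$'', misdescribes what is needed. In Subcase 2a the chain from $S_1$ to $Q$ does use $L_1$. What matters is that it maps the complement $S_1\setminus\mset{L_1}$ into the complement $Q\setminus\mset{L'}$.
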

\begin{proof}
  Since $S \subsumesLvelimtrans{L^\perp} Q$ there is a $k \in N$ and clauses $S_0, \dots, S_k$ with $S_0 = S$, $S_k = Q$ such that for all $0 \leq i < k$ we have $S_i \subsumesLvelim{L^\perp} S_{i+1}$. 
  We now prove the statement by induction on~$k$.
  
  If $k = 0$, then $S = S_0 = Q$, so the statement follows by picking $L'' \in S$ to be the designated literal of $Q$ and $\mathrm{Res}(P, S[\underline{L''}]) \subsumesLvelimtrans{L^\perp} \mathrm{Res}(P, Q)$ follows by reflexivity of $\subsumesLvelimtrans{L^\perp}$.

  Now to the induction step.
  Let $S_0, \dots, S_k, S_{k+1}$ be clauses with $S_0 = S$, $S_{k+1} = Q$ and for all $0 \leq i < k$ we have $S_i \subsumesLvelim{L^\perp} S_{k+1}$.
  So we have $S \subsumesLvelimtrans{L^\perp} S_k$ and $S_k \subsumesLvelim{L^\perp} Q$.
  By \Cref{res-subsumption-velim-commutation} we either have $S_k \subsumesLvelim{L^\perp} Q \setminus \mset{L'}$ or there exists an $L^\perp$-literal $L'' \in S_k$ such that $\mathrm{Res}(P, S_k[\underline{L''}]) \subsumesLvelim{L^\perp} \mathrm{Res}(P, Q)$.
  
  If $S_k \subsumesLvelim{L^\perp} Q \setminus \mset{L'}$ then we get $S \subsumesLvelimtrans{L^\perp} Q \setminus \mset{L'}$ by transitivity of $\subsumesLvelimtrans{L^\perp}$.
  
  Otherwise there is an $L^\perp$-literal $L'' \in S_k$ such that $\mathrm{Res}(P, S_k[\underline{L''}]) \subsumesLvelim{L^\perp} \mathrm{Res}(P, Q)$.
  By induction hypothesis applied to $S_0, \dots, S_k$ we either have $S \subsumesLvelimtrans{L^\perp} S_k \setminus \mset{L''}$ or there exists an $L^\perp$-literal $L''' \in S$ such that $\mathrm{Res}(P, S[\underline{L'''}]) \subsumesLvelimtrans{L^\perp} \mathrm{Res}(P, S_k[\underline{L''}])$.
  
  If we have $S \subsumesLvelimtrans{L^\perp} S_k \setminus \mset{L''}$, then note that $S_k \setminus \mset{L''} \subseteq \mathrm{Res}(P, S_k[\underline{L''}])$ and therefore we have~\mbox{$S_k \setminus \mset{L''}\subsumesLvelim{L^\perp} \mathrm{Res}(P, S_k[\underline{L''}])$}.
  Thus by transitivity of $\subsumesLvelimtrans{L^\perp}$ we get~\mbox{$S\subsumesLvelimtrans{L^\perp}\mathrm{Res}(P, Q)$}.
  
  Otherwise we have $\mathrm{Res}(P, S[\underline{L'''}]) \subsumesLvelimtrans{L^\perp} \mathrm{Res}(P, S_k[\underline{L''}])$ for some $L^\perp$-literal $L''' \in S$ and we get $\mathrm{Res}(P, S[\underline{L'''}]) \subsumesLvelimtrans{L^\perp} \mathrm{Res}(P, Q)$ by transitivity of $\subsumesLvelimtrans{L^\perp}$.
\end{proof}

The following lemma shows that if $P$ is purified in $N$ then the set of clauses $C$ with $N \subsumesLvelimtrans{L^\perp} C$ is closed under constraint resolution with $P$.
\begin{lemma}
  \label{purified-implies-trans-subsume-velim-is-closed-under-resolution}
  Let $P$ be a pointed clause with designated literal $L$, let $Q$ be a $P$-resolvable pointed clause with designated $L^\perp$-literal $L'$ and let $N$ be a clause set.
  If $P$ is purified in $N$ and \mbox{$N \subsumesLvelimtrans{L^\perp} Q$}, then \mbox{$N \subsumesLvelimtrans{L^\perp} \mathrm{Res}(P, Q)$}.
\end{lemma}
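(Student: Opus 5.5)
The plan is to reduce the statement directly to \Cref{res-subsumption-velim-trans-commutation} and then close the remaining case using the purification hypothesis. First I fix the reading of $N \subsumesLvelimtrans{L^\perp} C$ for a clause set $N$: there is a clause $S \in N$ with $S \subsumesLvelimtrans{L^\perp} C$, matching the convention used for $\subsumesLvelim{L^\perp}$. So from the hypothesis $N \subsumesLvelimtrans{L^\perp} Q$ I pick some $S \in N$ with $S \subsumesLvelimtrans{L^\perp} Q$. Applying \Cref{res-subsumption-velim-trans-commutation} to $P$, $Q$ and $S$ yields two cases.

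\emph{Case 1:} $S \subsumesLvelimtrans{L^\perp} Q \setminus \mset{L'}$. Write $P = \underline{L(\seq{t})} \lor C$ and $Q = \underline{L'} \lor E$, taken variable-disjoint. Then $\mathrm{Res}(P, Q) = \seq{t} \noeq \seq{s} \lor C \lor E$, so $Q \setminus \mset{L'} = E \subseteq \mathrm{Res}(P, Q)$. Here clauses are identified up to renaming, so the renaming needed for variable-disjointness is harmless. By \Cref{relationships-between-subsumption-relations}, items 1, 3 and 4, the inclusion gives $E \subsumesLvelimtrans{L^\perp} \mathrm{Res}(P,Q)$. Transitivity of $\subsumesLvelimtrans{L^\perp}$ then gives $S \subsumesLvelimtrans{L^\perp} \mathrm{Res}(P,Q)$, and hence $N \subsumesLvelimtrans{L^\perp} \mathrm{Res}(P,Q)$.

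\emph{Case 2:} there is a literal $L'' \in S$ with $S[\underline{L''}]$ $P$-resolvable and $\mathrm{Res}(P, S[\underline{L''}]) \subsumesLvelimtrans{L^\perp} \mathrm{Res}(P, Q)$. Because $S \in N = \mathrm{Res}_P^{\leq 0}(N)$ and $S[\underline{L''}]$ is $P$-resolvable, we get $\mathrm{Res}(P, S[\underline{L''}]) \in \mathrm{Res}_P^{\leq 1}(N)$. Since $P$ is purified in $N$, that is $N \subsumesLvelim{L^\perp} \mathrm{Res}_P^{\leq 1}(N)$, there is some $S' \in N$ with $S' \subsumesLvelim{L^\perp} \mathrm{Res}(P, S[\underline{L''}])$. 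Chaining this with the relation from the case hypothesis gives $S' \subsumesLvelimtrans{L^\perp} \mathrm{Res}(P, Q)$, and so $N \subsumesLvelimtrans{L^\perp} \mathrm{Res}(P,Q)$.

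Essentially all of the work is already contained in \Cref{res-subsumption-velim-trans-commutation}, so I expect no serious obstacle. The only points requiring care are bookkeeping. One is that the set-level relation is read existentially, via a single clause of $N$. The other is that in Case 1 the inclusion $Q \setminus \mset{L'} \subseteq \mathrm{Res}(P,Q)$ holds literally once the variable-disjoint representatives are fixed; the designated literal of $Q$ is exactly what is removed in forming the resolvent, so no $L^\perp$-injectivity issue arises from the identity substitution.
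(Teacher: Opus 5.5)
Your proposal is correct and follows the paper's proof essentially step for step. You pick $S \in N$ with $S \subsumesLvelimtrans{L^\perp} Q$ and split on the two cases of \Cref{res-subsumption-velim-trans-commutation}. The first case closes via $Q \setminus \mset{L'} \subseteq \mathrm{Res}(P,Q)$ and transitivity; the second uses $\mathrm{Res}(P, S[\underline{L''}]) \in \mathrm{Res}_P^{\leq 1}(N)$ together with purification and transitivity. The only difference is that you spell out the inclusion and renaming bookkeeping in the first case, which the paper leaves implicit.
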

\begin{proof}
  By assumption there is an $S \in N$ such that $S \subsumesLvelimtrans{L^\perp} Q$.
  Then by \Cref{res-subsumption-velim-trans-commutation} we either get $S \subsumesLvelimtrans{L^\perp} Q \setminus \mset{L'} \subsumesLvelim{L^\perp} \mathrm{Res}(P, Q)$ and we are done.
  Otherwise there is an~$L^\perp$-literal~$L'' \in S$ such that $\mathrm{Res}(P, S[\underline{L''}]) \subsumesLvelimtrans{L^\perp} \mathrm{Res}(P, Q)$.
  Since $P$ is purified in $N$ we have~$N \subsumesLvelim{L^\perp} \mathrm{Res}_P^{\leq 1}(N)$.
  Since $\mathrm{Res}(P, S[\underline{L''}]) \in \mathrm{Res}_P^{\leq 1}(N)$ we then get $N \subsumesLvelimtrans{L^\perp} \mathrm{Res}(P, Q)$ by transitivity of $\subsumesLvelimtrans{L^\perp}$.
\end{proof}

Finally, we show that $N$ implies the resolution closure with respect to $P$ if $P$ is purified in $N$.
\begin{proof}[Proof of \Cref{purified-implies-resolution-closure}]
  We first show $N \subsumesLvelimtrans{L^\perp} \mathrm{Res}_P^{\leq k}(N)$ by induction on $k$.
  Then the statement follows by \Cref{subsumption-velim-implies-logical-implication}.
  If $k=0$, then $\mathrm{Res}_P^{\leq k}(N) = N$ so the statement follows by reflexivity of $\subsumesLvelimtrans{L^\perp}$.

  Now to the induction step.
  Let $R \in \mathrm{Res}_P^{\leq k+1}(N)$.
  If $R \in \mathrm{Res}_P^{\leq k}(N)$, then the statement follows by induction hypothesis.
  Otherwise $R = \mathrm{Res}(P, Q)$ for some $Q \in \mathrm{Res}_P^{\leq k}(N)$.
  Then by induction hypothesis we have $N \subsumesLvelimtrans{L^\perp} Q$.
  By \Cref{purified-implies-trans-subsume-velim-is-closed-under-resolution} and since $P$ is purified in $N$ we then get $N\subsumesLvelimtrans{L^\perp}\mathrm{Res}(P, Q)=R$ which concludes the proof.
\end{proof}

\subsection{Derivations}

Let us now define the calculus underlying SCAN and analyze derivations in the calculus.

  \begin{definition}
  We denote by $\mathcal{C}$ the calculus operating on clause sets with the following derivation steps:
  \begin{flalign*}
                  & \text{\smallskip\noindent\emph{\textbf{Inference:}}}
                  &                                                                     & \AxiomC{$N$}
    \RightLabel{$I$}
    \UnaryInfC{$N \union \mset{C}$}
    \DisplayProof &                                                                                    \\
    \intertext{for $I \in \mset{\mathrm{Res},\mathrm{Fac},\mathrm{ConstrElim}}$ if \ $\AxiomC{$\ C_1\cdots\ C_n$}\RightLabel{$I$}\UnaryInfC{$C$}\DisplayProof$
      is an $\mathcal{I}$-inference and $C_1,\dots,C_n\in~N$.}
                  & \text{\medskip\noindent\emph{\textbf{Backward Redundancy deletion:}}}
                  &                                                                     &
    \AxiomC{$N \union \mset{C}$}
    \RightLabel{$\mathrm{RedDel}$}
    \UnaryInfC{$N$}
    \DisplayProof &                                                                                    \\
    \intertext{if $C$ is redundant in $N$.}
    & \text{\medskip\noindent\emph{\textbf{Variable elimination:}}}
                  &                                                                     &
    \AxiomC{$N \union \mset{v \noeq t \lor C}$}
    \RightLabel{$\mathrm{VarElim}$}
    \UnaryInfC{$N \union \mset{C[v \leftarrow t]}$}
    \DisplayProof &                                                                                    \\
    \intertext{where $v$ is not a proper subterm of $t$.}
                  & \text{\smallskip\noindent\emph{\textbf{Extended purity deletion:}}}
                  &                                                                     &
                  \AxiomC{$N$}
                  \RightLabel{$\mathrm{ExtPurDel}_X^{p}$}
                  \UnaryInfC{$N \setminus \mset{C \in N \suchthat \text{$C$ contains $X$}}$}
                  \DisplayProof
                  &                                                                                    \\
    \intertext{for $p\in \mset{+, -}$, where $X$ is a predicate variable and every clause in $N$ containing~$X$ contains~$X$ with polarity $p$.}
                  & \text{\medskip\noindent\emph{\textbf{Purified clause deletion:}}}
                  &                                                                     &
    \AxiomC{$N \uplus \mset{P}$}
    \RightLabel{$\mathrm{PurDel}_P$}
    \UnaryInfC{$N$}
    \DisplayProof
    ,             &
    \intertext{if $P$ is purified in $N$.}
  \end{flalign*}
\end{definition}
All derivation steps are sound.
\begin{lemma}
  \label{derivation-steps-are-sound}
  If $N/N'$ is a $\mathcal{C}$-derivation step, then $N \imp N'$.
\end{lemma}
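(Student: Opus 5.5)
The plan is a case distinction on the derivation step $N/N'$. In each case we show that every clause of $N'$ is a logical consequence of $N$. Here $N$ and $N'$ are read as universally closed conjunctions, with the predicate variables $\seq{X}$ free, so that $\imp$ means validity over all interpretations of $\seq{X}$.

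\emph{Inference steps.} Here $N' = N \union \mset{C}$, where $C$ is obtained by an $\mathcal{I}$-inference from premises $C_1, \dots, C_n \in N$. By the soundness part of \Cref{soundness-of-inferences} we have $C_1 \land \dots \land C_n \imp C$. Hence $N \imp N \union \mset{C}$. One small point needs care: that proof is phrased for models $\mathcal{M}$ of the premises. Since predicate variables are just free symbols, the same argument goes through for every structure together with every environment interpreting $\seq{X}$.

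\emph{Deletion steps.} For $\mathrm{RedDel}$, $\mathrm{ExtPurDel}_X^p$ and $\mathrm{PurDel}_P$, the conclusion $N'$ is a subset of $N$. A conjunction implies every subconjunction, so $N \imp N'$ is immediate. Notably, no side conditions are needed for soundness in this direction. The purity and purification conditions matter only for the converse $\Exists{\seq{X}} N' \imp \Exists{\seq{X}} N$, which is not claimed here.

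\emph{Variable elimination.} Here $N = N_0 \union \mset{v \noeq t \lor C}$ and $N' = N_0 \union \mset{C[v \leftarrow t]}$. We have $v \noeq t \lor C \velim C[v \leftarrow t]$, with the side condition that $v$ is not a proper subterm of $t$. Part \ref{subsumption-velim-implies-logical-implication:velim} of \Cref{subsumption-velim-implies-logical-implication} then gives $v \noeq t \lor C \iff C[v \leftarrow t]$, so $N \imp N'$.

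There is no real obstacle. The only point deserving attention is the inference case: the earlier soundness lemma must be applied to clauses containing free predicate variables. This is justified by treating those variables as additional uninterpreted predicate symbols, or equivalently by quantifying over all environments $\theta$ in the semantic argument.
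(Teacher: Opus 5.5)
Your proof is correct and follows the paper's argument exactly: the inference steps reduce to the soundness part of \Cref{soundness-of-inferences}, the three deletion steps are immediate because the conclusion is a subset of the premise, and variable elimination follows from \Cref{subsumption-velim-implies-logical-implication}. Your remark that free predicate variables are handled by quantifying over all environments is a harmless clarification that the paper leaves implicit.
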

\begin{proof}
  For the inference steps $\mathrm{Res}, \mathrm{Fac}$ and $\mathrm{ConstrElim}$, this follows from the soundness of $\mathcal{I}$ (\Cref{soundness-of-inferences}).
  For redundancy deletion, extended purity deletion and purified clause deletion it follows from the fact that the conclusion of the derivation step is a subset of its premise.
  For variable elimination this follows from \Cref{subsumption-velim-implies-logical-implication}.
\end{proof}

Furthermore, one can show that $\mathcal{C}$ preserves equivalence with respect to existential quantification which shows that SCAN is correct. In \Cref{sec.resolution-witnesses} we give a proof of this result.
\begin{restatable}{theorem}{calculusIsSoundAndExistentialEquivalencePreserving}
  \label{correctness-of-scan}
  If $N/N'$ is a $\mathcal{C}$-derivation step, then $\Exists{\seq{X}} N \iff \Exists{\seq{X}} N'$.
\end{restatable}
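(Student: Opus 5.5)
For each step $N/N'$, \Cref{derivation-steps-are-sound} gives $N \imp N'$, and hence $\Exists{\seq{X}} N \imp \Exists{\seq{X}} N'$. So only the converse needs work: given a structure $\mathcal{M}$, an environment $\theta$ and relations for $\seq{X}$ with $\mathcal{M},\theta \models N'$, we must modify the interpretation of $\seq{X}$ so that $N$ holds. I will go through the rules one by one.

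\textbf{The easy cases.} For the inference steps we have $N' = N \union \mset{C}$, so $N' \imp N$ trivially. For $\mathrm{RedDel}$ the removed clause $C$ is either a tautology or subsumed by a clause of $N'$, so $N' \imp C$. For $\mathrm{VarElim}$ we use \Cref{subsumption-velim-implies-logical-implication}, which says the step preserves equivalence. For $\mathrm{ExtPurDel}_X^{p}$ with $p = +$, we reinterpret $X$ as the full relation $M^k$, and for $p = -$ as $\emptyset$; the other variables keep their values. Every deleted clause contains an $X$-literal of polarity $p$, which then becomes true. The clauses of $N'$ do not contain $X$, so they are unaffected.

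\textbf{The main case, $\mathrm{PurDel}_P$.} Here $N = N' \uplus \mset{P}$, and we want a new interpretation of $X$ satisfying both $N'$ and $P$. By \Cref{purified-implies-resolution-closure}, $N' \imp \mathrm{Res}_P^{<\omega}(N')$, so $\mathcal{M},\theta$ with the given relation $R$ for $X$ satisfies every clause of $K := \mathrm{Res}_P^{<\omega}(N')$.

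Say $P = \underline{X(\seq{t})} \lor C$; the negative case is dual, shrinking $R$ instead of enlarging it. Define
\[
R_0 := R, \qquad R_{k+1} := R_k \union \mset{\seq{t}^{\mathcal{M},\eta} \suchthat \mathcal{M},\eta \not\models C \text{ with } X := R_k},
\]
and put $R_\omega := \Union_k R_k$.

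\emph{$P$ holds under $R_\omega$.} Suppose $C$ is false at some $\eta$ under $R_\omega$. Each positive $X$-literal of $C$ is then false under every $R_k$, since the $R_k$ are contained in $R_\omega$. Each negative $X$-literal of $C$ has its atom true in some $R_j$. Since $C$ has finitely many literals, $C$ is already false under a single $R_k$, and therefore $\seq{t}^{\mathcal{M},\eta} \in R_{k+1} \subseteq R_\omega$.

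\emph{$K$ holds under every $R_k$.} This is the crux, and I prove it by induction on $k$. Suppose $D \in K$ is falsified at $\eta$ under $R_{k+1}$. Every literal of $D$ that is true under $R_k$ but false under $R_{k+1}$ is a negative literal $\neg X(\seq{s})$ with $\seq{s}^{\mathcal{M},\eta} = \seq{t}^{\mathcal{M},\eta'}$ for some $\eta'$ falsifying $C$ under $R_k$.

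Resolve $P$, renamed apart and instantiated by $\eta'$, against $D$ on each such literal in turn. Each resolvent lies in $K$, because $K$ is closed under resolution with $P$. Every literal of the final resolvent is false under $R_k$ at the combined environment:
\begin{itemize}
\item the constraints $\seq{t} \noeq \seq{s}$ are false by the choice of $\eta'$;
\item the copies of $C$ are false by the choice of $\eta'$;
\item the remaining literals of $D$ are false under $R_{k+1}$ and were not among the resolved-away literals, so they are false under $R_k$ as well.
\end{itemize}
This contradicts the induction hypothesis that $K$ holds under $R_k$. The only point to check is that one such clause exists for the simultaneous replacement of all these literals, which holds because iterated $\mathrm{Res}_P^{\leq j}$ handles them one at a time.

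\emph{$N'$ holds under $R_\omega$.} A clause of $N'$ false under $R_\omega$ is false under some finite $R_k$, by the same finiteness argument as for $P$. This contradicts $N' \subseteq K$.

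I expect this last induction, that newly falsified negative literals can always be resolved away inside $K$, to be the main obstacle. It is precisely the place where \Cref{purified-implies-resolution-closure}, rather than mere one-step purification, is needed.
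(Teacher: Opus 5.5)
Your proof is correct, but it takes a different route from the paper.

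The paper gets the theorem almost for free from the witness machinery of \Cref{sec.resolution-witnesses}. For every step $S$ it has a predicate substitution $\tau_S$ with domain in $\seq{X}$ and $N' \imp N\tau_S$ (\Cref{witness-preservation-for-steps-without-purified-clause-deletion}, \Cref{local-resolution-closure-is-witness-transforming-across-purified-clause-deletion}), whence $\Exists{\seq{X}} N' \imp N\tau_S \imp \Exists{\seq{X}} N$. For $\mathrm{PurDel}_P$ this is a single, model-independent but infinitary substitution: $[X \leftarrow \neg\mathrm{\ell Res}_P]$ for a positive designated literal, resp.\ $[X \leftarrow \mathrm{\ell Res}_P]$ for a negative one. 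The real work lies in \Cref{lres-satisfies-pointed-clause} and \Cref{N-implies-N-with-lres}. These rest on the decomposition of $\mathrm{Res}_P^{<\omega}(C)$ into $C_0 \lor \Lor_i \mathrm{\ell Res}_P(\seq{s_i})$ from \Cref{local-resolution-closure-describes-clause-resolution-closure}, and on \Cref{monotonicity-applying-substitutions}.

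You instead argue inside a fixed model. You enlarge the given relation by an inflationary $\omega$-iteration and show by induction on stages that $K = \mathrm{Res}_P^{<\omega}(N')$ survives each stage. The key step is that a clause of $K$ falsified at stage $k+1$ yields a clause of $K$ falsified at stage $k$, by resolving away the newly falsified literals against suitably evaluated copies of $P$. You then pass to $R_\omega$ using finiteness of clauses.

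This needs only \Cref{purified-implies-resolution-closure} and closure of $K$ under resolution with $P$. It avoids infinitary formulas, and the decomposition lemma survives only as your one-literal-at-a-time resolution. What it does not directly provide is the paper's actual goal, a uniform witness. Still, each $R_{k+1}$ is first-order definable from $R_k$, so your construction implicitly yields one as the countable disjunction of these stages.

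Your iteration also evaluates the non-designated positive $X$-literals of $C$ at the current stage rather than under the original relation. So $R_\omega$ need not coincide with the relation defined by $\neg\mathrm{\ell Res}_P$, though both choices work.
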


We record the derivation steps performed during the saturation process using the following definition:
\begin{definition}
  Let $D := (S_i)_{1 \leq i \leq m}$ be a finite sequence of $\mathcal{C}$-derivation steps and let~$N,N'$ be clause sets.
  $D$ is a \emph{$\mathcal{C}$-derivation from $N$ with conclusion $N'$} if $N'$ results from $N$ by successively applying the derivation steps $S_1, \dots S_m$.
  We say \emph{$D$ is a $\mathcal{C}$-derivation from $N$}, if~$D$ is a $\mathcal{C}$-derivation from $N$ with conclusion $N'$ for some clause set $N'$.
  We assume the $S_i$ contain enough information so that $N'$ is uniquely determined by $D$ and $N$ so we call $N'$ \emph{the conclusion of $D$ from $N$}.
  For a $\mathcal{C}$-derivation $D$ from $N$ we define the sequence~\mbox{$N(D) :=(N_i(D))_{0 \leq i \leq m}$} of intermediate clause sets in which
  $N_i(D)$ is defined to be the conclusion of the $\mathcal{C}$-derivation $(S_j)_{1 \leq j \leq i}$ of length $i$ from $N$.

  For a tuple of predicate variables $\seq{X}$ we say \emph{$D$ is $\seq{X}$-eliminating from $N$} if the conclusion of $D$ from $N$ contains no predicate variables from $\seq{X}$.
\end{definition}

\begin{example}
  \label{ex.main-example.derivation}
  Consider the clause set $N$ with the clauses
  \begin{align*}
    (1)\ B(a,v) \qquad (2)\ X(a) \qquad (3)\ B(u,v) \lor \neg X(u) \lor X(v) \qquad  (4)\ \neg X(c)
  \end{align*}
  and suppose we want to eliminate $X$.
  We denote clauses by their numbers, e.g.,~$1$ refers to clause $B(a,v)$.
  Pointed clauses are referred to by dot notation, e.g., $3.2$ refers to the pointed clause $B(u,v) \lor \underline{\neg X(u)} \lor X(v)$.
  Applying SCAN, some resolvents from $N$ are
  \begin{align*}
    (5)\ a \noeq u \lor B(u,v) \lor X(v) \quad \text{($2$ with $3$)} \quad \text{and} \quad
    (6)\  & a \noeq c & \text{($2$ with $4$)}
  \end{align*}
  Note that clause~$5$ is redundant because of clause~$1$,  as $1$ subsumes $5$ after variable elimination.

  Let $D_1$ be this derivation:
  $$
    \Axiom$\mset{1,2,3,4}\fCenter$
    \RightLabel{$\mathrm{Res}_{2.1,4.1}$}
    \UnaryInf$\mset{1,2,3,4,6}\fCenter$
    \RightLabel{$\mathrm{PurDel}_{2.1}$}
    \UnaryInf$\mset{1,3,4,6}\fCenter$
    \RightLabel{$\mathrm{ExtPurDel}_X^{-}$}
    \UnaryInf$\mset{1,6}\fCenter$
    \DisplayProof
  $$
  The subscripts in $\mathrm{Res}$ indicate the corresponding resolution premises.
  Note that the pointed clause~$2.1$ is purified in $\mset{1,3,4,6}$ since $5$, the resolvent of $2$ with $4$, is redundant in $\mset{1,3,4,6}$.
  Thus the $\mathrm{PurDel}_{2.1}$ step is applicable to $\mset{1,2,3,4,6}$.
  As $\mset{1,6}$ does not contain $X$, we have that $D$ is an $X$-eliminating derivation.

  There is another $X$-eliminating derivation from $N$, given by
  $$
    D_2 = 
    \Axiom$\mset{1,2,3,4}\fCenter$
    \RightLabel{$\mathrm{PurDel}_{3.2}$}
    \UnaryInf$\mset{1, 2, 4}\fCenter$
    \RightLabel{$\mathrm{Res}_{2.1,4.1}$}
    \UnaryInf$\mset{1,2,4,6}\fCenter$
    \RightLabel{$\mathrm{PurDel}_{2.1}$}
    \UnaryInf$\mset{1,4,6}\fCenter$
    \RightLabel{$\mathrm{ExtPurDel}_X^{-}$}
    \UnaryInf$\mset{1,6}\fCenter$
    \DisplayProof
  $$
  Note that $3.2$ is purified in $\mset{1, 2, 4}$ since the only resolvent between $2$ and $3$ is subsumed by~$1$ as already pointed out, so clause $3$ can be deleted.
  We then derive $a \noeq c$ which also ensures that $2.1$ is purified in $\mset{1, 4, 6}$.
  Finally, extended purity deletion deletes the remaining clauses containing $X$.
\end{example}

\section{Resolution Witnesses}
\label{sec.resolution-witnesses}

In this section we introduce an algorithm that constructs witnesses for a given $\mathcal{C}$-derivation.
We first make our notion of witness precise:

\begin{definition}
  Let $\varphi$ be a first-order formula.
  A predicate substitution $\sigma$ is a \emph{witness of $\Exists{\seq{X}} \varphi$} if $\dom(\sigma) \subseteq \seq{X}$ and $\Exists{\seq{X}} \varphi \iff \varphi\sigma$.
\end{definition}

For all $\sigma$ we have $\varphi\sigma \imp \Exists{\seq{X}} \varphi$, thus it suffices to prove $\Exists{\seq{X}} \varphi \imp \varphi\sigma$ to show that~$\sigma$ is a witness for $\Exists{\seq{X}} \varphi$.

Given a $\mathcal{C}$-derivation, the idea of our algorithm is to construct a witness back-to-front: Let $D=(S_1, \dots, S_m)$ be an $\seq{X}$-eliminating $\mathcal{C}$-derivation from $N$.
Then the final clause set $N_m(D)$ does not contain any variables from $\seq{X}$ and thus any substitution~$\sigma_m$ with $\dom(\sigma_m) \subseteq \seq{X}$ is a witness for $\Exists{\seq{X}} N_m(D)$.

For all $1 < i \leq m$ and a witness $\sigma_i$ for $\Exists{\seq{X}} N_i(D)$ we now construct a witness~$\sigma_{i-1}$ for $\Exists{\seq{X}} N_{i-1}(D)$ by prepending a substitution $\tau_{S_i}$ that depends on the derivation step~$S_i$, i.e., $\sigma_{i-1} = \tau_{S_i} \sigma_i$.
Thus $\sigma_0 = \tau_{S_1} \dots \tau_{S_m}\sigma_m$ is a witness for $\Exists{\seq{X}} N$.
The algorithm is illustrated in the following figure.
$$
  \begin{array}{rcccccccl}
  N = N(D)_0
               & \xrightarrow{S_1}         & N(D)_1
               & \xrightarrow{S_2}         & \quad \dots \quad         &
               N(D)_{m-1} & \xrightarrow{S_m}         & \ N(D)_m                                                                       \\[0.3em]
               \sigma_{0}
               & \xleftarrow{\tau_{S_1}}      & \sigma_{1}
               & \xleftarrow{\tau_{S_2}}      & \quad \dots \quad         & \sigma_{m-1} & \xleftarrow{\tau_{S_m}}
               & \ \sigma_m
               \end{array}
$$

For this to work the substitutions $\tau_{S_i}$ need to have the following property.
\begin{definition}
  Let $S$ be a $\mathcal{C}$-derivation step from $N$ to $N'$.
  A predicate substitution $\tau$ is called \emph{witness-preserving across $S$} if for all witnesses $\sigma$ of $N'$ the substitution $\tau\sigma$ is a witness of~$\Exists{\seq{X}} N$.
  If $N' \imp N\tau$, then we call $\tau$ \emph{witness-transforming} across $S$.
\end{definition}

The substitutions involved in our witness construction are all witness-transforming.
We show that this also makes them witness-preserving.
\begin{lemma}
  \label{witness-preservation-sufficient-condition}
  Let $S$ be a $\mathcal{C}$-derivation step from $N$ to $N'$.
  If $\tau$ is witness-transforming across~$S$, then $\tau$ is witness-preserving across $S$.
\end{lemma}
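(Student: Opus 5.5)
Assume $\tau$ is witness-transforming across $S$, i.e.\ $N' \imp N\tau$, and let $\sigma$ be a witness of $\Exists{\seq{X}} N'$, i.e.\ $\dom(\sigma) \subseteq \seq{X}$ and $\Exists{\seq{X}} N' \iff N'\sigma$. We must show that $\tau\sigma$ is a witness of $\Exists{\seq{X}} N$. Two things are required: $\dom(\tau\sigma) \subseteq \seq{X}$, and $\Exists{\seq{X}} N \iff N\tau\sigma$. As remarked after the definition of witness, the implication $N\tau\sigma \imp \Exists{\seq{X}} N$ always holds, so only the direction $\Exists{\seq{X}} N \imp N\tau\sigma$ needs proof.

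For the domain condition, $\tau$ is a predicate substitution with $\dom(\tau) \subseteq \seq{X}$; this is implicit in "predicate substitution" as used in the witness construction, and I will state it as a standing assumption. Since $\sigma$ also has domain in $\seq{X}$, the composition $\tau\sigma$ can only move variables in $\seq{X}$.

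For the main implication I would chain three steps:
\begin{enumerate}
\item $\Exists{\seq{X}} N \imp \Exists{\seq{X}} N'$. This follows from \Cref{derivation-steps-are-sound}: $N \imp N'$ holds for every environment, including every interpretation of $\seq{X}$, so it lifts under the existential quantifier. Using the soundness lemma rather than \Cref{correctness-of-scan} matters, because that theorem is proved later, presumably from this very lemma.
\item $\Exists{\seq{X}} N' \imp N'\sigma$, since $\sigma$ is a witness of $\Exists{\seq{X}} N'$.
\item $N'\sigma \imp N\tau\sigma$. From $\models N' \limp N\tau$, where validity quantifies over all interpretations of the free predicate variables, applying the predicate substitution $\sigma$ preserves validity: $\models (N' \limp N\tau)\sigma$, i.e.\ $\models N'\sigma \limp (N\tau)\sigma$.
\end{enumerate}
Finally, $(N\tau)\sigma = N(\tau\sigma)$ by the paper's convention that composition applies $\tau$ first.

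The only delicate point is the substitution lemma in step 3: instantiating free predicate variables in a valid formula by predicate expressions preserves validity. This holds by the usual semantic substitution lemma, interpreting each $X$ by the relation defined by $\sigma(X)$ in the given structure and environment. For clause sets that are infinite conjunctions the argument is the same, done clause by clause. I would also note that the witness condition $\Exists{\seq{X}} N' \iff N'\sigma$ is stated with free individual variables read under the universal closure, consistent with the clause-set convention, so the chain goes through uniformly.
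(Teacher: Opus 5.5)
Your proposal is correct and follows the paper's proof essentially verbatim. It uses the same chain $\Exists{\seq{X}} N \imp \Exists{\seq{X}} N' \imp N'\sigma \imp N\tau\sigma$, justified by \Cref{derivation-steps-are-sound}, the witness property of $\sigma$, and instantiating $N' \imp N\tau$ with $\sigma$. Your extra remarks on $\dom(\tau\sigma) \subseteq \seq{X}$ and on the substitution lemma only make explicit details the paper leaves implicit.
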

\begin{proof}
  Let $\sigma$ be a witness of $\Exists{\seq{X}} N'$.
  We need to show that $\tau\sigma$ is a witness of $\Exists{\seq{X}} N$, i.e., that~\mbox{$\Exists{\seq{X}} N \imp N\tau\sigma$}.
  Since all $\mathcal{C}$-derivation steps are sound by \Cref{derivation-steps-are-sound} we get (i)~\mbox{$\Exists{\seq{X}} N \imp \Exists{\seq{X}} N'$}.
  Since $\sigma$ is a witness of $\Exists{\seq{X}} N'$ we get (ii) $\Exists{\seq{X}} N' \imp N'\sigma$.
  By assumption~$\tau$ satisfies $N' \imp N\tau$ so we get (iii) $N'\sigma \imp N\tau\sigma$.
  Taking together (i), (ii) and (iii) we get~$\Exists{\seq{X}} N \imp N\tau\sigma$, i.e., $\tau\sigma$ is a witness of $\Exists{\seq{X}} N$.
\end{proof}

For all $\mathcal{C}$-derivation steps except $\mathrm{PurDel}_P$ we can already define a witness-transforming substitution.
\begin{definition}
  We define  
  \begin{align*}
    \tau_S &:= \mathrm{id} \quad \text{for $S \in \mset{\mathrm{Res}, \mathrm{ConstrElim}, \mathrm{Fac}, \mathrm{RedDel}, \mathrm{VarElim}}$} \\
    \tau_{\mathrm{ExtPurDel}_X^{-}} &:= [X \leftarrow \lambda \seq{u}. \bot] \\
    \tau_{\mathrm{ExtPurDel}_X^{+}} &:= [X \leftarrow \lambda \seq{u}. \top]
  \end{align*} 
\end{definition}

\begin{lemma}
  \label{witness-preservation-for-steps-without-purified-clause-deletion}
  For all $\mathcal{C}$-derivation steps $S$ other than $\mathrm{PurDel}_P$ we have that $\tau_S$ is witness-transforming across $S$.
\end{lemma}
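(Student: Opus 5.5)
The plan is to verify the defining condition of witness-transformation, namely $N' \imp N\tau_S$, by a case distinction on the kind of derivation step $S$ from $N$ to $N'$.

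\emph{Identity cases.} For $S \in \mset{\mathrm{Res}, \mathrm{Fac}, \mathrm{ConstrElim}, \mathrm{RedDel}, \mathrm{VarElim}}$ we have $\tau_S = \mathrm{id}$, so we must show $N' \imp N$.
\begin{itemize}
\item For the inference steps, $N' = N \union \mset{C} \supseteq N$, and the implication is trivial.
\item For $\mathrm{RedDel}$, $N = N' \union \mset{C}$ with $C$ redundant in $N'$. If $C$ is a tautology, it is valid. Otherwise some $C' \in N'$ subsumes $C$, i.e.\ $C'\sigma \subseteq C$ for some $\sigma$. Then $\forall^\ast C' \imp \forall^\ast C$, because universally closed clauses are closed under instantiation and weakening. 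Hence $N' \imp N$.
\item For $\mathrm{VarElim}$, $N = N_0 \union \mset{v \noeq t \lor C}$ and $N' = N_0 \union \mset{C[v \leftarrow t]}$. By \Cref{subsumption-velim-implies-logical-implication} the two replaced clauses are logically equivalent, since $v \noeq t \lor C \velim C[v \leftarrow t]$. This gives $N' \imp N$.
\end{itemize}

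\emph{Extended purity deletion.} Take $S = \mathrm{ExtPurDel}_X^{-}$. Then every clause of $N$ containing $X$ contains a negative occurrence $\neg X(\seq{t})$, and $N'$ consists of exactly the clauses of $N$ not containing $X$.
\begin{itemize}
\item Clauses of $N'$ do not contain $X$, so they are unchanged by $\tau_S = [X \leftarrow \lambda \seq{u}. \bot]$ and are implied by $N'$.
\item For a clause $C \in N$ containing $X$, the literal $\neg X(\seq{t})$ becomes $\neg \bot$ under $\tau_S$. Hence $C\tau_S$ is valid.
\end{itemize}
Therefore $N' \imp N\tau_S$. The case $\mathrm{ExtPurDel}_X^{+}$ is symmetric: some positive literal $X(\seq{t})$ becomes $\top$ under $[X \leftarrow \lambda \seq{u}. \top]$, which again makes the clause valid.

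The only mildly delicate point is that $\tau_S$ may also affect clauses in which $X$ occurs with the opposite polarity. This does not matter, because one true disjunct already makes the whole clause true after substitution.

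I expect no real obstacle here. The nearest thing to one is being careful that subsumption with universal closures, and the soundness of $\velim$, are invoked in the correct direction.
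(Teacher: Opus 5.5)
Your proof is correct and follows the paper's argument essentially step by step. Both verify $N' \imp N\tau_S$ by case distinction: for the identity cases via $N \subseteq N'$, redundancy of the deleted clause, and the equivalence given by $\velim$; for the extended purity cases by noting that the substituted literal becomes $\top$. Your version is only slightly more explicit, splitting redundancy into its tautology and subsumption subcases and observing that a single occurrence of the right polarity suffices.
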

\begin{proof}
  Let $S$ be a derivation step from $N$ to $N'$.

  For $S \in \mset{\mathrm{Res}, \mathrm{Fac}, \mathrm{ConstrElim}}$ this follows from $N \subseteq N'$. For $S = \mathrm{RedDel}$ we have that~\mbox{$N = N' \union \mset{C}$} where $C$ is redundant in $N'$ and thus $N' \imp C$.
  Therefore we get $N' \imp N$.
  For $S = \mathrm{VarElim}$ we have $N \iff N'$ by \Cref{subsumption-velim-implies-logical-implication} and thus $N' \imp N$.

  For $S = \mathrm{ExtPurDel}_X^{-}$ we need to show $N' \imp N[X \leftarrow \lambda \seq{u}. \bot]$.
  Since $N'$ consists of all the clauses in $N$ which do not contain $X$ it suffices to show $N' \imp C[X \leftarrow \lambda \seq{u}. \bot]$ for all clauses~$C \in N$ that do contain $X$.
  By the side condition of $\mathrm{ExtPurDel}_X^{-}$ we also know that all clauses in $N$ containing $X$ contain $X$ negatively.
  Thus $C$ contains $X$ negatively and therefore~$C[X \leftarrow \lambda \seq{u}. \bot] \iff \top$ which shows $N' \imp C[X \leftarrow \lambda \seq{u}. \bot]$.
  For $S = \mathrm{ExtPurDel}_X^{+}$ the argument is analogous to the negative case.
\end{proof}

The hard part is finding a witness-transforming substitution across purified clause deletion.
Recall that the purified clause deletion rule is $N \uplus \mset{P} / N$ under the condition that $P$ is purified in $N$ which is a condition that ensures $N \imp \mathrm{Res}_P^{< \omega}(N)$.
We make use of this side condition by building a witness candidate based on a resolution closure.
\begin{definition}
  Let $P$ be a pointed clause with designated $X$-literal $L(\seq{t})$ where $\seq{t}$ is a $k$-tuple of terms and let $\seq{c}$ be a $k$-tuple of fresh constants.
  We define the \emph{local resolution closure of $P$} by the predicate expression 
  $$\mathrm{\ell Res}_P := \lambda \seq{c}. \Land_{R(\seq{c}, \seq{v}) \in \mathrm{Res}_P^{< \omega}(L(\seq{c})^\perp)} \Forall{\seq{v}} R(\seq{c}, \seq{v}).$$
\end{definition}

Note that the predicate expression $\mathrm{\ell Res}_P$ can be infinite, if the underlying clause set~\mbox{$\mathrm{Res}_P^{< \omega}(L(\seq{u})^\perp)$} is infinite.

\begin{example}
  \label{lres-p-example}
  For $P = \underline{X(a)}$ we have $\mathrm{Res}_P^{< \omega}(\neg X(c)) = \mset{\neg X(c), a \noeq c}$ and therefore we get~\mbox{$\mathrm{\ell Res}_P = \lambda c. \neg X(c) \land a \noeq c$}.

  For $P = \underline{\neg X(u)} \lor B(u,v) \lor X(v)$ we get 
  \begin{align*}
    \mathrm{Res}_P^{< \omega}(X(c)) = \{&X(c), \\
        &u \noeq c \lor B(u,v) \lor X(v), \\
        &u \noeq c \lor B(u,v) \lor v \noeq u' \lor B(u', v') \lor X(v'), \\
        &u \noeq c \lor B(u, v) \lor v \noeq u' \lor B(u', v') \lor v' \noeq u'' \lor B(u'', v'') \lor X(v''), \\
        &\dots \},
  \end{align*}
  i.e., $\mathrm{Res}_P^{< \omega}(X(c))$ is infinite and thus $\mathrm{\ell Res}_P$ is the infinite predicate expression
  \begin{align*}
    \lambda c. &X(c) \\
      \land &\Forall{v}\Forall{v}(u \noeq c \lor B(u,v) \lor X(v)) \\
      \land &\Forall{u}\Forall{v}\Forall{u'}\Forall{v'}(u \noeq c \lor B(u,v) \lor v \noeq u' \lor B(u', v') \lor X(v')) \\
      \land &\Forall{u}\Forall{v}\Forall{u'}\Forall{v'}\Forall{u''}\Forall{v''}(u \noeq c \lor B(u, v) \lor v \noeq u' \lor B(u', v') \lor v' \noeq u'' \lor B(u'', v'') \lor X(v'')) \\
      \land &\dots
  \end{align*}
\end{example}

Based on the sign of the designated literal of $P$ we define a witness-transforming substitution.
\begin{definition}
  \label{def:witness-transforming-purified-clause-deletion}
  Let $P$ be a pointed clause whose designated literal $L$ is an $X$-literal.
  Set $$
  \tau_{\mathrm{PurDel}_P} := \begin{cases}
    [X \leftarrow \mathrm{\ell Res}_P] & \text{if $L$ is negative}, \\
    [X \leftarrow \neg \mathrm{\ell Res}_P] & \text{if $L$ is positive}.
  \end{cases}
  $$
\end{definition}

We now define the witness corresponding to a $\mathcal{C}$-derivation.
\begin{definition}
  For a $\mathcal{C}$-derivation $D = (S_1, \dots, S_m)$ we set $\sigma(D) := \tau_{S_1} \cdots \tau_{S_m}$.
\end{definition}

Before we show that $\tau_{\mathrm{PurDel}_P}$ is witness-preserving across $\mathrm{PurDel}_P$ we demonstrate our method on an example.
\begin{example}
  \label{ex.main-example.first-witness}
  Recall the clause set $N$ from \Cref{ex.main-example.derivation}
  \begin{align*}
    (1)\ B(a,v) \qquad (2)\ X(a) \qquad (3)\ B(u,v) \lor \neg X(u) \lor X(v) \qquad  (4)\ \neg X(c)
  \end{align*}
  and one of its resolvents
  $$
  (6)\ a \noeq c
  $$
  In \Cref{ex.main-example.derivation} we showed that the following derivation is $X$-eliminating from $N:= \mset{1, 2,3, 4}$.
  \begin{align*}
    D &=
    \Axiom$\mset{1,2,3,4}\fCenter$
    \RightLabel{$\mathrm{Res}_{2.1,4.1}$}
    \UnaryInf$\mset{1,2,3,4,6}\fCenter$
    \RightLabel{$\mathrm{PurDel}_{2.1}$}
    \UnaryInf$\mset{1,3,4,6}\fCenter$
    \RightLabel{$\mathrm{ExtPurDel}_X^{-}$}
    \UnaryInf$\mset{1,6}\fCenter$
    \DisplayProof
  \end{align*}
  We now compute the corresponding witness given by $\sigma(D) = \tau_{\mathrm{Res}}\tau_{\mathrm{PurDel}_{2.1}}\tau_{\mathrm{ExtPurDel}_X^{-}}$, where~$\tau_{\mathrm{Res}} = \mathrm{id}$, $\tau_{\mathrm{PurDel}_{2.1}} = [X \leftarrow \neg \mathrm{\ell Res}_{2.1}]$ and $\tau_{\mathrm{ExtPurDel}_X^{-}} = [X \leftarrow \lambda u. \bot]$.
  Note that~\mbox{$\mathrm{\ell Res}_{2.1} \iff \lambda u. \neg X(u) \land u \noeq a$} since~$\mathrm{Res}(\underline{X(a)}, \neg X(u)) = u \noeq a$.
  Therefore 
  \begin{align*}
    \sigma(D) &\iff [X \leftarrow \neg \mathrm{\ell Res}_{2.1}[X \leftarrow \lambda u. \bot]] \\
    &\iff [X \leftarrow \lambda u. \neg (u \noeq a)] \\ 
    &\iff [X \leftarrow \lambda u. u \oeq a]
  \end{align*}
  To certify that $\sigma(D)$ is a witness it suffices to show $\Exists{X} N \imp N\sigma(D)$.
  We show later in \Cref{correctness-of-scan} that all $\mathcal{C}$-derivation steps are equivalence-preserving with respect to existential quantification.
  Assuming this for now we get $\Exists{X} N \iff \mset{1,6}$.
  Thus it suffices to show~\mbox{$\mset{1,6} \imp N\sigma(D)$}.
  Note that $N\sigma(D)$ is equivalent to the formula
  \begin{align*}
    \Forall{v} B(a,v) 
    \quad \land \quad a \oeq a 
    \quad \land \quad \Forall{u}\Forall{v} (B(u,v) \lor u \noeq a \lor v \oeq a) 
    \quad \land \quad c \noeq a.
  \end{align*}
  The first conjunct is implied by $1$.
  The second conjunct is an instance of reflexivity.
  The third conjunct is equivalent to $\Forall{v}(B(a,v) \lor v \oeq a)$ and is thus implied by $1$.
  The third conjunct is implied by $6$.
  In total we get $\mset{1,6} \imp N\sigma(D)$, i.e., $\sigma(D)$ is a witness for $\Exists{X} N$.  
\end{example}

We now turn to showing that $\tau_{\mathrm{PurDel}_P}$ is witness-transforming across $\mathrm{PurDel}_P$, i.e., we need to show that for all clause sets $N$ and pointed clauses $P$ purified in~$N$ we have~\mbox{$N \imp (N \union \mset{P})\tau_{\mathrm{PurDel}_P}$}.
We break this problem into two parts: $N \imp N\tau_{\mathrm{PurDel}_P}$ and $\models P\tau_{\mathrm{PurDel}_P}$.
We now give a series of lemmas necessary to prove these two statements.
The first step is showing how the finite resolution closure iterations $\mathrm{Res}_P^{\leq k}(C)$ of a clause $C$ relate to the finite resolution closure iterations $\mathrm{Res}_P^{\leq k}(L(\seq{c})^\perp)$ of a single~$P$-resolvable literal.
\begin{lemma}
  \label{finite-local-resolution-closure-describes-clause-resolution-closure}
  Let $P$ be a pointed clause, let $L(\seq{t}(\seq{v}))$ be the designated literal of $P$
  and let~\mbox{$C = C_0(\seq{w}) \lor \Lor_{i=1}^q L(\seq{s_i}(\seq{w}))^\perp$} be a clause 
  where $C_0(\seq{w})$ contains no $L^\perp$-literals.
  Furthermore let $\seq{c}$ be a tuple of fresh constants and let $k \in \N$.
  Then $R \in \mathrm{Res}_P^{\leq k}(C)$ if and only if there exist $k_1, \dots, k_q \in \N$ with $\sum_{i=1}^q k_i \leq k$ and $R_i(\seq{c}, \seq{z_i}) \in \mathrm{Res}_P^{\leq k_i}(L(\seq{c})^\perp)$ such that~\mbox{$R = C_0(\seq{w}) \lor \Lor_{i=1}^q R_i(\seq{s_i}(\seq{w}), \seq{z_i})$}.
\end{lemma}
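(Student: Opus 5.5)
We prove both directions simultaneously by induction on $k$. The key observation is structural. Resolution with $P$ always acts on a single $L^\perp$-literal of the current clause. It replaces that literal $L(\seq{s})^\perp$ by $\seq{t} \noeq \seq{s} \lor C_P$, where $P = \underline{L(\seq{t})} \lor C_P$ is taken with fresh variables. The literals of $C_0(\seq{w})$ are never touched. Moreover, since $C_0$ contains no $L^\perp$-literals, and every new $L^\perp$-literal introduced comes from $C_P$, each $L^\perp$-literal of a clause in $\mathrm{Res}_P^{\leq k}(C)$ can be traced back to exactly one original literal $L(\seq{s_i})^\perp$. Thus a clause in $\mathrm{Res}_P^{\leq k}(C)$ decomposes as $C_0$ together with $q$ ``branches''. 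Branch $i$ looks exactly like a clause of $\mathrm{Res}_P^{\leq k_i}(L(\seq{c})^\perp)$ with $\seq{c}$ replaced by $\seq{s_i}(\seq{w})$. Here $\seq{z_i}$ collects the fresh variables introduced by the copies of $P$ used in that branch.

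\textbf{Base case $k=0$.} Then $\mathrm{Res}_P^{\leq 0}(C) = \mset{C}$, and all $k_i = 0$. So $R_i = L(\seq{c})^\perp$, and $R_i(\seq{s_i}(\seq{w})) = L(\seq{s_i}(\seq{w}))^\perp$ reproduces $C$.

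\textbf{Induction step, ``only if''.} Let $R \in \mathrm{Res}_P^{\leq k+1}(C)$. If $R \in \mathrm{Res}_P^{\leq k}(C)$, the induction hypothesis applies directly. Otherwise $R = \mathrm{Res}(P, Q)$ with $Q \in \mathrm{Res}_P^{\leq k}(C)$. By the induction hypothesis, $Q = C_0 \lor \Lor_i R_i(\seq{s_i}(\seq{w}), \seq{z_i})$ with $\sum_i k_i \leq k$. The designated literal of $Q$ is an $L^\perp$-literal, so it is not in $C_0$ and lies in some $R_j(\seq{s_j}(\seq{w}), \seq{z_j})$. It is therefore the image under $\seq{c} \mapsto \seq{s_j}(\seq{w})$ of an $L^\perp$-literal $M$ of $R_j(\seq{c}, \seq{z_j})$. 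We set $R_j' := \mathrm{Res}(P, R_j[\underline{M}]) \in \mathrm{Res}_P^{\leq k_j+1}(L(\seq{c})^\perp)$ and leave the other $R_i$ unchanged. Because $P$ is renamed apart and the constants $\seq{c}$ are fresh, hence do not occur in $P$, resolution commutes with the substitution $\seq{c} \mapsto \seq{s_j}(\seq{w})$. This gives $R = C_0 \lor \Lor_i R_i'(\seq{s_i}(\seq{w}), \seq{z_i}')$ with $\sum_i k_i' \leq k+1$.

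\textbf{Induction step, ``if''.} Suppose $R = C_0 \lor \Lor_i R_i(\seq{s_i}(\seq{w}), \seq{z_i})$ with $\sum_i k_i \leq k+1$. If every $R_i$ already lies in $\mathrm{Res}_P^{\leq k_i'}$ with $\sum_i k_i' \leq k$, the induction hypothesis applies. Otherwise some $R_j$ has $k_j \geq 1$ and is not in $\mathrm{Res}_P^{\leq k_j - 1}(L(\seq{c})^\perp)$. Then $R_j = \mathrm{Res}(P, R_j'[\underline{M}])$ for some $R_j' \in \mathrm{Res}_P^{\leq k_j-1}(L(\seq{c})^\perp)$. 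Replacing $R_j$ by $R_j'$ gives a clause $Q$ that, by the induction hypothesis, lies in $\mathrm{Res}_P^{\leq k}(C)$. Resolving $P$ with $Q$ on the image of $M$ gives back $R$, so $R \in \mathrm{Res}_P^{\leq k+1}(C)$. This uses the same commutation fact as before.

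\textbf{Main obstacle.} The main difficulty is bookkeeping rather than substance. First, we must state and verify carefully the commutation $\mathrm{Res}(P, Q)(\seq{c} \mapsto \seq{s}) = \mathrm{Res}(P, Q(\seq{c} \mapsto \seq{s}))$ for designated literals that correspond under the substitution. This relies on freshness of $\seq{c}$ and variable-disjointness. Second, we must handle the identification of clauses up to renaming, and the fact that clauses are sets. If two branches produce syntactically equal literals they would merge in a set, but keeping the $\seq{z_i}$ pairwise disjoint and disjoint from $\seq{w}$ avoids this.
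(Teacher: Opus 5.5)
Your proof is correct and follows the paper's argument. The paper also inducts on $k$ with the same base case. In the ``only if'' step it locates the designated literal in a single branch $R_j$ and resolves there, and in the ``if'' step it peels off the last resolution step of some $R_j$ and applies the induction hypothesis.

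One correction to your closing remark: keeping the $\seq{z_i}$ pairwise disjoint does not prevent merging. A ground $L^\perp$-literal of $P$ can occur in several branches at once; for example, $P=\underline{\neg X(v)}\lor X(a)$ resolved against both literals of $X(b)\lor X(c)$ gives a single shared $X(a)$. This is harmless, because when such a literal is resolved upon, the fresh copy of $P$ re-introduces it, so the set equation $R = C_0 \lor \Lor_{i} R_i'(\seq{s_i}(\seq{w}),\seq{z_i})$ still holds. The paper's proof passes over this point silently.
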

\begin{proof}
  We prove the equivalence by using proof by induction on $k$.
  For $k = 0$ we have~\mbox{$\mathrm{Res}_P^{\leq k}(C) = \mset{C}$}.
  With~\mbox{$k_1 = \dots = k_q = 0$} and $R_i(\seq{c}) = L(\seq{c})^\perp \in \mathrm{Res}_P^{\leq 0}(L(\seq{c})^\perp)$ we get~\mbox{$C = C_0(\seq{w}) \lor \Lor_{i=1}^q R_i(\seq{s_i}(\seq{w}))$}.
  Now we have $R \in \mathrm{Res}_P^{\leq k}(C)$ if and only if $R = C$ which shows the base case of the induction statement.

  For the induction step we show both implications separately.
  For the left-to-right direction let $R \in \mathrm{Res}_P^{\leq k+1}(C)$.
  If $R \in \mathrm{Res}_P^{\leq k}(C)$ the statement follows from the induction hypothesis.
  Otherwise $R = \mathrm{Res}(P, P')$ for a pointed clause $P' \in \mathrm{Res}_P^{\leq k}(C)$.
  Then by induction hypothesis, there are $k_1, \dots k_q \in \N$ with $\sum_{i=1}^q k_i \leq k$ and for $1 \leq i \leq q$ there is an~\mbox{$R_i(\seq{c}, \seq{z_i}) \in \mathrm{Res}_P^{\leq k_i}(L(\seq{c})^\perp)$} such that $P' = C_0(\seq{w}) \lor \Lor_{i=1}^q R_i(\seq{s_i}(\seq{w}), \seq{z_i})$.
  Since $C_0$ does not contain $L^\perp$-literals the designated literal of $P'$, call it $L'$, must be in one of the $R_i(\seq{s_i}(\seq{w}), \seq{z_i})$, say in the $j$-th one.
  Now 
  $$R = \mathrm{Res}(P, P') = C_0(\seq{w}) \lor \Lor_{\substack{i=1 \\ i\neq j}}^q R_i(\seq{s_i}(\seq{w}), \seq{z_i}) \lor \mathrm{Res}(P, R_i(\seq{s_i}(\seq{w}), \seq{z_i})[\underline{L'}])$$
  and note that we have $\mathrm{Res}(P, R_i(\seq{s_i}(\seq{w}), \seq{z_i})[\underline{L'}]) \in \mathrm{Res}_P^{\leq k_j+1}(L(\seq{c})^\perp)$.
  Set $k_i':= k_i$ if~\mbox{$i \neq j$} and~\mbox{$k_j':= k_j+1$}.
  Then we have $\sum_{i=1}^q k_i' = (\sum_{i=1}^q k_i) + 1 \leq k+1$.
  Furthermore set~\mbox{$R_i'(\seq{c}, \seq{z_i}) := R_i(\seq{c}, \seq{z_i})$} if $i \neq j$ and $R_j'(\seq{c}, \seq{z_j}) := \mathrm{Res}(P, R_i(\seq{c}, \seq{z_i})[\underline{L'}]) \in \mathrm{Res}_P^{\leq k_j'}(L(\seq{c})^\perp)$.
  Then for all $1 \leq i \leq q$ we have $R_i' \in \mathrm{Res}_P^{\leq k_i'}(L(\seq{c})^\perp)$ and
  $$
    R = C_0(\seq{w}) \lor \Lor_{\substack{i=1}}^q R_i'(\seq{s_i}(\seq{w}), \seq{z_i}).
  $$

  Now to the right-to-left direction.
  Let $R$ be such that there exist $k_1, \dots, k_q \in \N$ with~\mbox{$\sum_{i=1}^q k_i \leq k+1$} and for $1 \leq i \leq q$ there is an $R_i(\seq{c}, \seq{z_i}) \in \mathrm{Res}_P^{\leq k_i}(L(\seq{c})^\perp)$ with~\mbox{$R = C_0(\seq{w}) \lor \Lor_{i=1}^q R_i(\seq{s_i}(\seq{w}))$}.
  If for all $1 \leq i \leq q$ we have $R_i(\seq{c}, \seq{z_i}) \in \mathrm{Res}_P^{\leq 0}(L(\seq{c})^\perp)$, then $R = C \in \mathrm{Res}_P^{\leq 0}(C)$.
  Otherwise there exists a $1 \leq j \leq q$ such that $k_j \geq 1$ and~\mbox{$R_j(\seq{c}, \seq{z_j}) = \mathrm{Res}(P, P')$} for some pointed clause $P'$ with $P' \in \mathrm{Res}_P^{\leq k_j - 1}(L(\seq{c})^\perp)$.
  By induction hypothesis we then have $$R':=C_0(\seq{w}) \lor \Lor_{\substack{i=1 \\ i\neq j}}^q R_i(\seq{s_i}(\seq{w}), \seq{z_j}) \lor P'(\seq{c}, \seq{z_j}) \in \mathrm{Res}_P^{\leq k}(C).$$
  But then we have $R = \mathrm{Res}(P, R') \in \mathrm{Res}_P^{\leq k+1}(C)$ which concludes the proof.
\end{proof}

Using this we can show a key property of $\mathrm{\ell Res}_P$:
The $P$-resolution closure of a clause $C$ can be constructed by substituting $\mathrm{\ell Res}_P$ for the $P$-resolvable literals in $C$.
\begin{lemma}
  \label{local-resolution-closure-describes-clause-resolution-closure}
  Let $P$ be a pointed clause, let $L(\seq{t}(\seq{v}))$ be the designated literal of $P$ and let~\mbox{$C = C_0(\seq{w}) \lor \Lor_{i=1}^q L(\seq{s_i}(\seq{w}))^\perp$} be a clause where $C_0(\seq{w})$ contains no $L^\perp$-literals.
  Furthermore let $\seq{c}$ be a tuple of fresh constants.
  Then 
  \begin{align*}
  \mathrm{Res}_P^{< \omega}(C) 
  &= \mset{C_0(\seq{w}) \lor \Lor_{i=1}^q R_i(\seq{s_i}(\seq{w}), \seq{z_i}) \suchthat R_1(\seq{c}, \seq{z_1}), \dots, R_q(\seq{c}, \seq{z_q}) \in \mathrm{Res}_P^{< \omega}(L(\seq{c})^\perp)} \\
  &\iff \Forall{w}(C_0(\seq{w}) \lor \Lor_{i=1}^q \mathrm{\ell Res}_P(\seq{s_i}(\seq{w}))).
  \end{align*}
\end{lemma}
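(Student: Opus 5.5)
The statement has two parts: a set equality and a logical equivalence. I would derive the first from \Cref{finite-local-resolution-closure-describes-clause-resolution-closure} and the second by distributing the universal quantifiers and infinitary connectives.

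For the set equality I would take unions over $k$. By definition $\mathrm{Res}_P^{<\omega}(C) = \Union_{k} \mathrm{Res}_P^{\leq k}(C)$. \Cref{finite-local-resolution-closure-describes-clause-resolution-closure} characterises membership in $\mathrm{Res}_P^{\leq k}(C)$ by the existence of $k_1,\dots,k_q$ with $\sum k_i \leq k$ and $R_i \in \mathrm{Res}_P^{\leq k_i}(L(\seq{c})^\perp)$.

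For ``$\subseteq$'': a member of $\mathrm{Res}_P^{\leq k}(C)$ gives such $R_i$. Each $R_i$ then lies in $\mathrm{Res}_P^{<\omega}(L(\seq{c})^\perp)$, since the stages are monotone in $k$.

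For ``$\supseteq$'': given $R_i \in \mathrm{Res}_P^{<\omega}(L(\seq{c})^\perp)$, pick $k_i$ with $R_i \in \mathrm{Res}_P^{\leq k_i}$ and set $k := \sum k_i$. The lemma then puts $C_0 \lor \Lor_i R_i(\seq{s_i}(\seq{w}),\seq{z_i})$ into $\mathrm{Res}_P^{\leq k}(C)$. Throughout I would silently use that clauses are identified up to variable renaming, so the $\seq{z_i}$ may be taken pairwise disjoint and disjoint from $\seq{w}$.

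For the equivalence, read the clause set as the (possibly infinite) conjunction of the universal closures of its members. By the set equality this is
\[\Land_{R_1,\dots,R_q} \Forall{\seq{w}}\Forall{\seq{z_1}}\cdots\Forall{\seq{z_q}}\Bigl(C_0(\seq{w}) \lor \Lor_{i=1}^q R_i(\seq{s_i}(\seq{w}),\seq{z_i})\Bigr).\]
Since the $\seq{z_i}$ are pairwise disjoint and occur only in the $i$-th disjunct, the quantifier $\Forall{\seq{z_i}}$ can be pushed inside onto $R_i$. This yields $\Land_{R_1,\dots,R_q}\Forall{\seq{w}}(C_0 \lor \Lor_i \Forall{\seq{z_i}} R_i(\seq{s_i}(\seq{w}),\seq{z_i}))$.

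Next I would commute the infinitary conjunction with $\Forall{\seq{w}}$, which is harmless. Then I would distribute $\lor$ over the infinitary conjunction: a disjunction of $q$ conjunctions is semantically equivalent to the conjunction, over all choice tuples $(R_1,\dots,R_q)$, of the disjunctions. This holds in any structure and environment by a simple case analysis: if some disjunct fails for some $R_i$, choose those failing witnesses.

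Finally, note that $\mathrm{\ell Res}_P(\seq{s_i}(\seq{w}))$ is by definition $\Land_{R_i}\Forall{\seq{z_i}} R_i(\seq{s_i}(\seq{w}),\seq{z_i})$. This gives the target $\Forall{\seq{w}}(C_0(\seq{w}) \lor \Lor_{i=1}^q \mathrm{\ell Res}_P(\seq{s_i}(\seq{w})))$.

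The main obstacle is bookkeeping rather than ideas. Two things need care. First, the substitution of the terms $\seq{s_i}(\seq{w})$ for the fresh constants $\seq{c}$ must commute with the quantifier $\Forall{\seq{z_i}}$; this requires the $\seq{z_i}$ to be disjoint from $\seq{w}$, which renaming guarantees. Second, the distributivity step for infinitary formulas must be justified semantically, since the usual syntactic laws are only stated for finite formulas. I would handle it directly via the satisfaction relation as sketched above.
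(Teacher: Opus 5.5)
Your proposal is correct and follows the paper's proof essentially step for step. The set equality comes from the two inclusions obtained via \Cref{finite-local-resolution-closure-describes-clause-resolution-closure}, with $k := \sum_i k_i$ for ``$\supseteq$''. The equivalence comes from pushing each $\forall \seq{z_i}$ onto its disjunct, commuting $\forall \seq{w}$ with the infinitary conjunction, and distributing $\lor$ over $\land$. Your semantic justification of the infinitary distributivity and your remarks on the variable-disjointness of the $\seq{z_i}$ and $\seq{w}$ only make explicit what the paper leaves implicit.
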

\begin{proof}
  Let 
  $$R_P(C) := \mset{C_0(\seq{w}) \lor \Lor_{i=1}^q R_i(\seq{s_i}(\seq{w}), \seq{z_i}) \suchthat R_1(\seq{c}, \seq{z_1}), \dots, R_q(\seq{c}, \seq{z_q}) \in \mathrm{Res}_P^{< \omega}(L(\seq{c})^\perp)}.$$

  We first show $\mathrm{Res}_P^{< \omega}(C) = R_P(C)$.
  For the first inclusion $\mathrm{Res}_P^{< \omega}(C) \subseteq R_P(C)$ consider $R \in \mathrm{Res}_P^{< \omega}(C)$. Then we have $R \in \mathrm{Res}_P^{\leq k}(C)$ for some $k \in \N$.
  Then by \Cref{finite-local-resolution-closure-describes-clause-resolution-closure} there are $k_1, \dots, k_q \in \N$ and~\mbox{$R_i(\seq{c}, \seq{z_i}) \in \mathrm{Res}_P^{\leq k_i}(L(\seq{c})^\perp) \subseteq \mathrm{Res}_P^{< \omega}(L(\seq{c})^\perp)$} such that~\mbox{$R = C_0(\seq{w}) \lor \Lor_{i=1}^q R_i(\seq{s_i}(\seq{w}), \seq{z_i}) \in R_P(C)$}.

  To show the other inclusion inclusion $R_P(C) \subseteq \mathrm{Res}_P^{< \omega}(C)$ let $R \in R_P(C)$.
  Then we have~\mbox{$R = C_0(\seq{w}) \lor \Lor_{i=1}^q R_i(\seq{s_i}(\seq{w}), \seq{z_i})$} for some $R_1(\seq{c}, \seq{z_1}), \dots, R_q(\seq{c}, \seq{z_q}) \in \mathrm{Res}_P^{< \omega}(L(\seq{c})^\perp)$.
  By definition of $\mathrm{\ell Res}_P$ this means for~\mbox{$1 \leq i \leq q$} we have $R_i(\seq{c}, \seq{z_q}) \in \mathrm{Res}_P^{< \omega}(L(\seq{c})^\perp)$, i.e., there are~\mbox{$k_1, \dots, k_q \in \N$} with~\mbox{$R_i(\seq{c}, \seq{z_i}) \in \mathrm{Res}_P^{\leq k_i}(L(\seq{c})^\perp)$}.
  Then by \Cref{finite-local-resolution-closure-describes-clause-resolution-closure} for $k := \sum_{i=1}^q k_i$ we have~\mbox{$R = C_0(\seq{w}) \lor \Lor_{i=1}^q R_i(\seq{s_i}(\seq{w}), \seq{z_i}) \in \mathrm{Res}_P^{k}(C) \subseteq \mathrm{Res}_P^{< \omega}(C)$} which concludes the proof of the first equality.

  It remains to show $R_P(C) \iff C_0(\seq{w}) \lor \Lor_{i=1}^q \mathrm{\ell Res}_P(\seq{s_i}(\seq{w}))$.
  Note that
  \begin{align*}
    R_P(C) &\iff \Land_{R_1(\seq{c}, \seq{z_1}), \dots, R_q(\seq{c}, \seq{z_q}) \in \mathrm{Res}_P^{< \omega}(L(\seq{c})^\perp)}\Forall{\seq{w}}\Forall{\seq{z_1}}\dots\Forall{\seq{z_q}}(C_0(\seq{w}) \lor \Lor_{i=1}^q R_i(\seq{s_i}(\seq{w}), \seq{z_i})). \\
    \intertext{Moving the $\Forall{\seq{z_i}}$ quantifiers inside as far as possible gives}
    &\iff \Land_{R_1(\seq{c}, \seq{z_1}), \dots, R_q(\seq{c}, \seq{z_q}) \in \mathrm{Res}_P^{< \omega}(L(\seq{c})^\perp)}\Forall{\seq{w}}(C_0(\seq{w}) \lor \Lor_{i=1}^q \Forall{\seq{z_i}} R_i(\seq{s_i}(\seq{w}), \seq{z_i})). \\
    \intertext{Moving the $\Forall{\seq{w}}$ quantifiers outside results in}
    &\iff \Forall{\seq{w}}\Land_{R_1(\seq{c}, \seq{z_1}), \dots, R_q(\seq{c}, \seq{z_q}) \in \mathrm{Res}_P^{< \omega}(L(\seq{c})^\perp)}(C_0(\seq{w}) \lor \Lor_{i=1}^q \Forall{\seq{z_i}} R_i(\seq{s_i}(\seq{w}), \seq{z_i})). \\
    \intertext{Splitting the conjunction into $q$ separate conjunctions over each of the $R_i$ and using distributivity of $\land$ and $\lor$ we can move the conjunctions inside to get}
    &\iff \Forall{\seq{w}}\left(C_0(\seq{w}) \lor \Lor_{i=1}^q \Land_{R_i(\seq{c}, \seq{z_i}) \in \mathrm{Res}_P^{< \omega}(L(\seq{c})^\perp)} \Forall{\seq{z_i}} R_i(\seq{s_i}(\seq{w}), \seq{z_i})\right) \\
    \intertext{which means}
    &\iff \Forall{\seq{w}}(C_0(\seq{w}) \lor \Lor_{i=1}^q \mathrm{\ell Res}_P(\seq{s_i}(\seq{w}))).
  \end{align*}
\end{proof}

The following is a monotonicity result for applying predicate substitutions to clauses.
\begin{lemma}
  \label{monotonicity-applying-substitutions}
  Let $L(\seq{u})$ be an $X$-literal, let $\alpha$ be a predicate expression of the same arity as $X$ such that $\alpha \imp \lambda \seq{u}. L(\seq{u})^\perp$ and set
  $$
  \tau := \begin{cases}
    [X \leftarrow \alpha] & \text{if $L$ is negative} \\
    [X \leftarrow \neg \alpha] & \text{if $L$ is positive.}
  \end{cases}
  $$
  Then for all clauses $C$ that do not contain any $L^\perp$-literals we have $C \imp C\tau$.
\end{lemma}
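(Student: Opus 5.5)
The plan is to reduce the claim to a literal-by-literal comparison. Write the clause as $C = C_0 \lor \Lor_{j} L_j$, where $C_0$ contains no $X$-literals, so that every $X$-literal $L_j$ in $C$ is an $L$-literal (same polarity as $L$). This holds because $C$ contains no $L^\perp$-literals, and every $X$-literal is either an $L$-literal or an $L^\perp$-literal. Since $\tau$ only replaces $X$, we have $C_0\tau = C_0$. It therefore suffices to show $L(\seq{t}) \imp L(\seq{t})\tau$ for every $L$-literal $L(\seq{t})$. Monotonicity of disjunction and of the universal closure then gives $C \imp C\tau$. One might worry that the universal closure interacts with the implication, but since we prove the implication pointwise for every environment, the closure is harmless.

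The central step is a case distinction on the polarity of $L$. Throughout, read the hypothesis $\alpha \imp \lambda \seq{u}. L(\seq{u})^\perp$ as holding for all interpretations of the free predicate variables, which include $X$ itself, since $\alpha$ may mention $X$ (as $\mathrm{\ell Res}_P$ does).

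\emph{$L$ negative.} Here $L(\seq{u}) = \neg X(\seq{u})$ and $L(\seq{u})^\perp = X(\seq{u})$, so the hypothesis reads $\alpha(\seq{u}) \imp X(\seq{u})$. We have $L(\seq{t})\tau = \neg \alpha(\seq{t})$. Contraposing the hypothesis and instantiating $\seq{u}$ by $\seq{t}$ yields $\neg X(\seq{t}) \imp \neg\alpha(\seq{t})$, which is exactly what we need.

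\emph{$L$ positive.} Here $L(\seq{u}) = X(\seq{u})$ and $L(\seq{u})^\perp = \neg X(\seq{u})$, so the hypothesis reads $\alpha(\seq{u}) \imp \neg X(\seq{u})$, i.e.\ $X(\seq{u}) \imp \neg \alpha(\seq{u})$. This is precisely $L(\seq{t}) \imp L(\seq{t})\tau$ after instantiation, because $L(\seq{t})\tau = (\neg\alpha)(\seq{t})$.

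The only subtle point is the instantiation of $\seq{u}$ by $\seq{t}$: the terms $\seq{t}$ may contain the clause variables, so the substitution must be capture-free. This follows from the definition of $\alpha \imp \beta$, which universally quantifies over $\seq{u}$ and over all free predicate variables. With that checked, the argument is otherwise routine.
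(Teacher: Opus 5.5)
Your proof is correct and follows the paper's argument: split $C$ into an $X$-free part $C_0$ and $L$-literals, then contrapose $\alpha \imp \lambda \seq{u}. L(\seq{u})^\perp$ to get $L(\seq{s}) \imp \neg\alpha(\seq{s})$ for each such literal. The only cosmetic difference is that the paper avoids your case split by observing that $L(\seq{s})\tau \iff \neg\alpha(\seq{s})$ holds for both polarities of $L$.
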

\begin{proof}
  We have $C = C_0 \lor \Lor_{i=1}^q L(\seq{s_i})$ for some $C_0$ which does not contain any $X$-literals.
  Then~\mbox{$C\tau = C_0 \lor \Lor_{i=1}^q L(\seq{s_i})\tau$} and we have $L(\seq{s_i})\tau \iff \neg \alpha(\seq{s_i})$ by definition of $\tau$.
  By contraposition of the assumption $\models \Forall{\seq{u}}(\alpha(\seq{u}) \limp L(\seq{u})^\perp)$ we get $L(\seq{s_i}) \imp \neg \alpha(\seq{s_i})$
  which concludes the proof.
\end{proof}

The next result shows how $\tau_{\mathrm{PurDel}_P}$ relates to the resolution closure of a single clause.
\begin{lemma}
  \label{local-resolution-closure-application-is-implied-by-clause-resolution-clsoure}
  Let $P$ be a pointed clause and $C$ a clause.
  Then $\mathrm{Res}_P^{<\omega}(C) \imp C\tau_{\mathrm{PurDel}_P}$. 
\end{lemma}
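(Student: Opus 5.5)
Write $L$ for the designated literal of $P$ (an $X$-literal) and split $C = C_0(\seq{w}) \lor \Lor_{i=1}^q L(\seq{s_i}(\seq{w}))^\perp$, where $C_0$ contains no $L^\perp$-literals. By \Cref{local-resolution-closure-describes-clause-resolution-closure} we have
\[
\mathrm{Res}_P^{<\omega}(C) \iff \Forall{\seq{w}}\Bigl(C_0(\seq{w}) \lor \Lor_{i=1}^q \mathrm{\ell Res}_P(\seq{s_i}(\seq{w}))\Bigr).
\]
Applying $\tau := \tau_{\mathrm{PurDel}_P}$ to $C$ gives $C\tau = \Forall{\seq{w}}(C_0\tau \lor \Lor_{i=1}^q L(\seq{s_i})^\perp\tau)$. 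The plan is to compare the two sides disjunct by disjunct, using monotonicity under universal closure.

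For the $L^\perp$-literals, unfold $\tau$. If $L$ is negative, $L^\perp = X(\cdot)$ is positive and $\tau = [X \leftarrow \mathrm{\ell Res}_P]$. Hence $L(\seq{s_i})^\perp\tau = \mathrm{\ell Res}_P(\seq{s_i})$ exactly. If $L$ is positive, $L^\perp = \neg X(\cdot)$ and $\tau = [X \leftarrow \neg\mathrm{\ell Res}_P]$. Hence $L(\seq{s_i})^\perp\tau = \neg\neg\mathrm{\ell Res}_P(\seq{s_i}) \iff \mathrm{\ell Res}_P(\seq{s_i})$. In both cases these disjuncts coincide up to equivalence.

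For $C_0$, apply \Cref{monotonicity-applying-substitutions} with $\alpha = \mathrm{\ell Res}_P$, which yields $C_0 \imp C_0\tau$. The required hypothesis is $\mathrm{\ell Res}_P \imp \lambda\seq{u}.L(\seq{u})^\perp$. This holds because $L(\seq{c})^\perp \in \mathrm{Res}_P^{\leq 0}(L(\seq{c})^\perp)$, so $L(\seq{c})^\perp$ is one of the conjuncts of $\mathrm{\ell Res}_P(\seq{c})$. Combining the two parts,
\[
C_0 \lor \Lor_i \mathrm{\ell Res}_P(\seq{s_i}) \imp C_0\tau \lor \Lor_i L(\seq{s_i})^\perp\tau
\]
holds for each $\seq{w}$, and therefore under $\Forall{\seq{w}}$.

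The main obstacle is a bookkeeping subtlety in the expression $\mathrm{\ell Res}_P(\seq{s_i})$ inside $C\tau$. The clauses in $\mathrm{\ell Res}_P$ still contain $X$ (they contain $L^\perp$-literals), and substituting $\seq{s_i}$ for the fresh constants $\seq{c}$ requires the bound variables $\seq{v}$ to be renamed apart from $\seq{w}$. Also, $\tau$ is applied once, simultaneously, so $X$ inside $\mathrm{\ell Res}_P$ is not rewritten again. This matches the right-hand side of \Cref{local-resolution-closure-describes-clause-resolution-closure}, where $\mathrm{\ell Res}_P$ likewise appears unsubstituted, so the two sides agree. A second point to check is that infinite conjunctions behave correctly under instantiation. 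This reduces to the pointwise semantics already used in the proof of that lemma.
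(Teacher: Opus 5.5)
Your proposal is correct and is essentially the paper's proof. You decompose $C$ into $C_0$ and its $L^\perp$-literals, rewrite $\mathrm{Res}_P^{<\omega}(C)$ via \Cref{local-resolution-closure-describes-clause-resolution-closure}, note that the $L^\perp$-literals become $\mathrm{\ell Res}_P(\seq{s_i})$ under $\tau_{\mathrm{PurDel}_P}$ (after cancelling a double negation in the positive case), and handle $C_0$ with \Cref{monotonicity-applying-substitutions}; your explicit check of $\mathrm{\ell Res}_P \imp \lambda\seq{u}.L(\seq{u})^\perp$ via $L(\seq{c})^\perp \in \mathrm{Res}_P^{\leq 0}(L(\seq{c})^\perp)$ simply spells out what the paper calls ``by definition''.
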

\begin{proof}
  Let $L(\seq{t}(\seq{v}))$ be the designated literal of $P$.
  Then $C = C_0(\seq{w}) \lor \Lor_{i=1}^q L(\seq{s_i}(\seq{w}))^\perp$ for a clause $C_0(\seq{w})$ which does not contain $L^\perp$-literals.
  By \Cref{local-resolution-closure-describes-clause-resolution-closure} we get 
  $$\mathrm{Res}_P^{< \omega}(C) \iff \Forall{\seq{w}}(C_0(\seq{w}) \lor \Lor_{i=1}^q \mathrm{\ell Res}_P(\seq{s_i}(\seq{w}))).$$
  Also note that 
  $$C\tau_{\mathrm{PurDel}_P} \iff C_0(\seq{w})\tau_{\mathrm{PurDel}_P} \lor \Lor_{i=1}^q \mathrm{\ell Res}_P(\seq{s_i}(\seq{w}))$$
  by the definition of $\tau_{\mathrm{PurDel}_P}$ after the potential elimination of double negation (in case $L$ is a positive literal).
  Since $C_0(\seq{w})$ contains no $L^\perp$-literals and by definition of~\mbox{$\mathrm{\ell Res}_P$} we have~\mbox{$\mathrm{\ell Res}_P(\seq{u}) \imp \lambda \seq{u}. L(\seq{u})^\perp$} we can apply \Cref{monotonicity-applying-substitutions} for $\alpha = \mathrm{\ell Res}_P$ and get~\mbox{$C_0(\seq{w}) \imp C_0(\seq{w})\tau_{\mathrm{PurDel}_P}$} which concludes the proof.
\end{proof}

This allows us to show $\models P\tau_{\mathrm{PurDel}_P}$.
\begin{lemma}
  \label{lres-satisfies-pointed-clause}
  Let $P$ be a pointed clause.
  Then $\models P\tau_{\mathrm{PurDel}_P}$.
\end{lemma}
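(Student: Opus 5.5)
The plan is to reduce $\models P\tau_{\mathrm{PurDel}_P}$ to \Cref{local-resolution-closure-application-is-implied-by-clause-resolution-clsoure} applied with $C := P$ itself (viewed as a clause). That lemma gives $\mathrm{Res}_P^{<\omega}(P) \imp P\tau_{\mathrm{PurDel}_P}$. It therefore suffices to show that $\mathrm{Res}_P^{<\omega}(P)$ is valid, i.e., that every clause in the $P$-resolution closure of $P$ is a tautology (in the semantic sense, allowing equality constraints).

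Write $P = \underline{L(\seq{t})} \lor C$. Every clause $R \in \mathrm{Res}_P^{<\omega}(P)$ still contains the literal $L(\seq{t})$ of the original $P$. Only $L^\perp$-literals are resolved upon when resolving with $P$, so $L$-literals from the starting clause are never removed. The natural approach is an induction on $k$ showing that each $R \in \mathrm{Res}_P^{\leq k}(P)$ is valid. However, validity is not preserved by resolution in general. So I would instead argue directly with the characterization from \Cref{local-resolution-closure-describes-clause-resolution-closure}, avoiding an induction over arbitrary clauses.

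A cleaner route is to work at the level of $\mathrm{\ell Res}_P$. By \Cref{monotonicity-applying-substitutions}-style unfolding, $P\tau_{\mathrm{PurDel}_P}$ is equivalent to
\[
\Forall{\seq{v}}\bigl(\neg\mathrm{\ell Res}_P(\seq{t}) \lor C\tau_{\mathrm{PurDel}_P}\bigr).
\]
This holds because the designated literal $L(\seq{t})$ becomes $\neg\mathrm{\ell Res}_P(\seq{t})$ in both polarity cases. By \Cref{local-resolution-closure-application-is-implied-by-clause-resolution-clsoure} applied to $C$, it suffices to show $\mathrm{\ell Res}_P(\seq{t}) \imp \mathrm{Res}_P^{<\omega}(C)$ (with free variables $\seq{v}$ fixed). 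Take any $R'(\seq{c},\seq{z}) \in \mathrm{Res}_P^{<\omega}(L(\seq{c})^\perp)$ with $R' = L(\seq{c})^\perp$, or more generally any $R'$ whose designated resolvable literal is exactly $L(\seq{c})^\perp$ at depth 0. Resolving it with $P$ produces $\seq{t}' \noeq \seq{c} \lor C' \lor \ldots$ for a renamed copy of $P$. Hence for every $Q \in \mathrm{Res}_P^{<\omega}(C)$, the clause $\seq{t}'\noeq\seq{c} \lor Q'$ (with $Q'$ a renamed copy of $Q$) belongs to $\mathrm{Res}_P^{<\omega}(L(\seq{c})^\perp)$. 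This follows from \Cref{finite-local-resolution-closure-describes-clause-resolution-closure}, since the first resolution step turns $L(\seq{c})^\perp$ into the constraint plus a copy of $C$, and further steps act on the $L^\perp$-literals of $C$.

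Instantiating $\seq{c} := \seq{t}$ and the universally quantified renamed variables of $P$ by $\seq{v}$ makes the constraint $\seq{t}\noeq\seq{t}$ false. So $\mathrm{\ell Res}_P(\seq{t})$ implies each such $Q$, as required.

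The main obstacle is the bookkeeping in this embedding step. One must check that the resolution sequence starting from $L(\seq{c})^\perp$ contains, after its first step, exactly the $P$-closure of a variable-renamed copy of $C$ prefixed by the constraint. One must also check that the universally quantified variables $\seq{z}$ in $\mathrm{\ell Res}_P$ may be instantiated to recover the original variables $\seq{v}$. I would prove this as a small auxiliary claim by induction on $k$, mirroring the proof of \Cref{finite-local-resolution-closure-describes-clause-resolution-closure}.
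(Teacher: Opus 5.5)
Your final (``cleaner'') route is correct and is essentially the paper's own proof. You rewrite $P\tau_{\mathrm{PurDel}_P}$ as $\neg\mathrm{\ell Res}_P(\seq{t}(\seq{v})) \lor C(\seq{v})\tau_{\mathrm{PurDel}_P}$, note that after one resolution step with a renamed copy of $P$ the conjuncts of $\mathrm{\ell Res}_P(\seq{t}(\seq{v}))$ include $\seq{t}(\seq{v}) \noeq \seq{t}(\seq{v'}) \lor R$ for every $R \in \mathrm{Res}_P^{<\omega}(C(\seq{v}))$, discharge the constraint by instantiating $\seq{v'} := \seq{v}$, and finish with \Cref{local-resolution-closure-application-is-implied-by-clause-resolution-clsoure}. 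You were right to drop your first idea, but the real reason is that $P$ itself lies in $\mathrm{Res}_P^{<\omega}(P)$ and is not valid in general.
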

\begin{proof}
  Let $P = \underline{L(\seq{t}(\seq{v}))} \lor C_0(\seq{v})$.
  Then we have 
  $$P\tau_{\mathrm{PurDel}_P} \iff \neg \mathrm{\ell Res}_P(\seq{t}(\seq{v})) \lor C_0(\seq{v})\tau_{\mathrm{PurDel}_P}.$$
  Therefore it suffices to show the implication~\mbox{$\mathrm{\ell Res}_P(\seq{t}(\seq{v})) \imp C_0(\seq{v})\tau_{\mathrm{PurDel}_P}$}.
  Remember that~\mbox{$\mathrm{\ell Res}_P(\seq{t}(\seq{v})) \iff \mathrm{Res}_P^{< \omega}(L(\seq{t}(\seq{v}))^\perp)$}.
  Thus
  $$\mathrm{\ell Res}_P(\seq{t}(\seq{v})) \iff \mset{L(\seq{t}(\seq{v}))^\perp} \union \mathrm{Res}_P^{< \omega}(\mathrm{Res}(P, L(\seq{t}(\seq{v}))^\perp)).$$
  Before we apply $\mathrm{Res}$ we have to rename variables so that $P$ and $L(\seq{t}(\seq{v}))^\perp$ are variable-disjoint.
  So let $\seq{v'}$ be some fresh variables.
  Then 
  $\mathrm{Res}(P, L(\seq{t}(\seq{v'}))^\perp) = \seq{t}(\seq{v}) \noeq \seq{t}(\seq{v'}) \lor C_0(\seq{v}).$
  Thus we also get 
  $$\mathrm{Res}_P^{< \omega}(\mathrm{Res}(P, L(\seq{t}(\seq{v'}))^\perp)) = \mset{\seq{t}(\seq{v}) \noeq \seq{t}(\seq{v'}) \lor R \suchthat R \in \mathrm{Res}_P^{< \omega}(C_0(\seq{v}))}$$
  which implies $\mathrm{Res}_P^{< \omega}(\mathrm{Res}(P, L(\seq{t}(\seq{v'}))^\perp)) \iff \mathrm{Res}_P^{< \omega}(C_0(\seq{v}))$ after elimination of~\mbox{$\seq{t}(\seq{v}) \noeq \seq{t}(\seq{v'})$} with the unifier $[\seq{v'} \leftarrow \seq{v}]$ and using the fact that $\seq{v'}$ are fresh variables.
  This gives us~\mbox{$\mathrm{\ell Res}_P(\seq{t}(\seq{v})) \imp \mathrm{Res}_P^{< \omega}(C_0(\seq{v}))$}
  and by \Cref{local-resolution-closure-application-is-implied-by-clause-resolution-clsoure} we get~\mbox{$\mathrm{Res}_P^{< \omega}(C_0(\seq{v})) \imp C_0(\seq{v})\tau_{\mathrm{PurDel}_P}$} which concludes the proof.
\end{proof}

Furthermore we can show $N \imp N\tau_{\mathrm{PurDel}_P}$ exploiting the property of $P$ being purified in $N$.
\begin{lemma}
  \label{N-implies-N-with-lres}
  Let $N$ be a clause set and $P$ a pointed clause such that $P$ is purified in $N$.
  Then~\mbox{$N \imp N\tau_{\mathrm{PurDel}_P}$}.
\end{lemma}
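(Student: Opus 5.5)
The plan is to combine \Cref{purified-implies-resolution-closure} with \Cref{local-resolution-closure-application-is-implied-by-clause-resolution-clsoure}, working clause by clause. Since $N\tau_{\mathrm{PurDel}_P}$ is the conjunction of the formulas $C\tau_{\mathrm{PurDel}_P}$ for $C \in N$, it suffices to show $N \imp C\tau_{\mathrm{PurDel}_P}$ for every single clause $C \in N$.

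Fix $C \in N$. The first step is to observe that $\mathrm{Res}_P^{<\omega}(C) \subseteq \mathrm{Res}_P^{<\omega}(N)$. This holds by a straightforward induction on $k$ showing $\mathrm{Res}_P^{\leq k}(\mset{C}) \subseteq \mathrm{Res}_P^{\leq k}(N)$. The operator $M \mapsto \mathrm{Res}_P^{\leq k}(M)$ is monotone in $M$, because each stage only adds resolvents of clauses already present at the previous stage. Since $P$ is purified in $N$, \Cref{purified-implies-resolution-closure} yields $N \imp \mathrm{Res}_P^{<\omega}(N)$, and hence $N \imp \mathrm{Res}_P^{<\omega}(C)$; here an implication to an infinite conjunction means implication of each conjunct. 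Then \Cref{local-resolution-closure-application-is-implied-by-clause-resolution-clsoure} gives $\mathrm{Res}_P^{<\omega}(C) \imp C\tau_{\mathrm{PurDel}_P}$. Chaining the two gives $N \imp C\tau_{\mathrm{PurDel}_P}$, as required.

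A small point to check is the free predicate variables. $N\tau_{\mathrm{PurDel}_P}$ still contains $X$, because $\mathrm{\ell Res}_P$ mentions $X$, so the claim is meant as validity with $X$ free (universally). Both cited results are stated in that same sense, so the chaining is legitimate.

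The main (mild) obstacle is bookkeeping rather than substance. We must make sure that the variable renamings in forming resolvents do not affect the monotonicity inclusion, since clauses are identified up to renaming. We must also make sure the infinitary formulas $\mathrm{\ell Res}_P$ are handled consistently with the semantics used in \Cref{local-resolution-closure-application-is-implied-by-clause-resolution-clsoure}. Otherwise the heavy lifting has already been done in \Cref{purified-implies-resolution-closure} and \Cref{local-resolution-closure-describes-clause-resolution-closure}.
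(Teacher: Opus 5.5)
Your proposal is correct and matches the paper's proof: obtain $N \imp \mathrm{Res}_P^{<\omega}(N)$ from \Cref{purified-implies-resolution-closure}, then for each $C \in N$ combine the inclusion $\mathrm{Res}_P^{<\omega}(C) \subseteq \mathrm{Res}_P^{<\omega}(N)$ with \Cref{local-resolution-closure-application-is-implied-by-clause-resolution-clsoure} to get $N \imp C\tau_{\mathrm{PurDel}_P}$. The only addition is that you spell out, by induction on $k$, the monotonicity inclusion that the paper simply asserts.
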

\begin{proof}
  Using \Cref{purified-implies-resolution-closure} we get $N \imp \mathrm{Res}_P^{< \omega}(N)$ so it suffices to show $\mathrm{Res}_P^{< \omega}(N) \imp C\tau_{\mathrm{PurDel}_P}$ for all $C \in N$ which follows from \Cref{local-resolution-closure-application-is-implied-by-clause-resolution-clsoure} and the fact that $\mathrm{Res}_P^{< \omega}(C) \subseteq \mathrm{Res}_P^{< \omega}(N)$.
\end{proof}

This shows that $\tau_{\mathrm{PurDel}_P}$ is witness-transforming.
\begin{lemma}
  \label{local-resolution-closure-is-witness-transforming-across-purified-clause-deletion}
  Let $P$ be a pointed clause.
  Then the substitution $\tau_{\mathrm{PurDel}_P}$ is witness-transforming across~\mbox{$\mathrm{PurDel}_P$}.
\end{lemma}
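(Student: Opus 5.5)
By definition, $\tau_{\mathrm{PurDel}_P}$ is witness-transforming across $\mathrm{PurDel}_P$ if, for the step $N \uplus \mset{P} / N$, we have $N \imp (N \union \mset{P})\tau_{\mathrm{PurDel}_P}$. The plan is to split the right-hand side into its two parts. Since the clause set $(N \union \mset{P})\tau_{\mathrm{PurDel}_P}$ is read as the conjunction of the universal closures of its clauses, it is equivalent to $N\tau_{\mathrm{PurDel}_P} \land P\tau_{\mathrm{PurDel}_P}$. It therefore suffices to prove $N \imp N\tau_{\mathrm{PurDel}_P}$ and $N \imp P\tau_{\mathrm{PurDel}_P}$ separately.

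The side condition of $\mathrm{PurDel}_P$ tells us that $P$ is purified in $N$. This is exactly the hypothesis of \Cref{N-implies-N-with-lres}, which gives $N \imp N\tau_{\mathrm{PurDel}_P}$ directly. For the second part, \Cref{lres-satisfies-pointed-clause} gives $\models P\tau_{\mathrm{PurDel}_P}$ unconditionally, so in particular $N \imp P\tau_{\mathrm{PurDel}_P}$. Putting the two implications together yields $N \imp (N \union \mset{P})\tau_{\mathrm{PurDel}_P}$, which is the definition of witness-transforming.

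The only point needing care is bookkeeping. We must check that the $N$ in the definition of witness-transforming (the premise of the step) corresponds to $N \uplus \mset{P}$, and that the conclusion of the step is $N$. We must also note that $\tau_{\mathrm{PurDel}_P}$ is a predicate substitution with domain $\mset{X} \subseteq \seq{X}$, so it applies clausewise. The infinitary nature of $\mathrm{\ell Res}_P$ causes no additional difficulty, because both cited lemmas already handle it. I expect no real obstacle: the substantive work lies in \Cref{purified-implies-resolution-closure} and the lemmas that feed into \Cref{N-implies-N-with-lres}.
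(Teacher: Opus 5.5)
Your proposal is correct and matches the paper's proof: it splits $N \imp (N \union \mset{P})\tau_{\mathrm{PurDel}_P}$ into $N \imp N\tau_{\mathrm{PurDel}_P}$, which uses that $P$ is purified in $N$, and $\models P\tau_{\mathrm{PurDel}_P}$. These two facts are exactly \Cref{N-implies-N-with-lres} and \Cref{lres-satisfies-pointed-clause}, which you cite, and combining them gives the claim.
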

\begin{proof}
  Let the purified clause deletion step be $N \uplus \mset{P} / N$.
  Then we get $N \imp N\tau_{\mathrm{PurDel}_P}$ by \Cref{N-implies-N-with-lres} and $\models P\tau_{\mathrm{PurDel}_P}$ by \Cref{lres-satisfies-pointed-clause}.
  Thus $N \imp (N \union \mset{P})\tau_{\mathrm{PurDel}_P}$, i.e., $\tau_{\mathrm{PurDel}_P}$ is witness-transforming.
\end{proof}

Finally, this concludes the proof that the constructed substitution $\sigma(D)$ is actually a witness.
\begin{theorem}
  \label{eliminating-derivations-admit-infinite-witness}
  If $D$ is an $\seq{X}$-eliminating derivation from $N$, then $\sigma(D)$ is a witness for $\Exists{\seq{X}} N$.
\end{theorem}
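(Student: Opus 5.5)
The plan is a backward induction along the derivation, assembling the per-step results already established. Let $D = (S_1, \dots, S_m)$ be $\seq{X}$-eliminating from $N$ with intermediate clause sets $N_0(D) = N, \dots, N_m(D)$. Set $\sigma_m := \mathrm{id}$ and, for $0 \leq i < m$, $\sigma_i := \tau_{S_{i+1}} \cdots \tau_{S_m}$, so that $\sigma_0 = \sigma(D)$ and $\sigma_{i-1} = \tau_{S_i}\sigma_i$. We prove by downward induction on $i$ that $\sigma_i$ is a witness of $\Exists{\seq{X}} N_i(D)$.

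\emph{Base case $i = m$.} Since $D$ is $\seq{X}$-eliminating, $N_m(D)$ contains no predicate variable from $\seq{X}$. Hence $\Exists{\seq{X}} N_m(D) \iff N_m(D) = N_m(D)\mathrm{id}$. Moreover $\dom(\mathrm{id}) = \emptyset \subseteq \seq{X}$, so $\sigma_m$ is a witness.

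\emph{Induction step.} Assume $\sigma_i$ is a witness of $\Exists{\seq{X}} N_i(D)$, and let $S_i$ be the step from $N_{i-1}(D)$ to $N_i(D)$. We need $\tau_{S_i}$ to be witness-transforming across $S_i$:
\begin{itemize}
\item if $S_i$ is not a $\mathrm{PurDel}_P$ step, this is \Cref{witness-preservation-for-steps-without-purified-clause-deletion};
\item if $S_i = \mathrm{PurDel}_P$, this is \Cref{local-resolution-closure-is-witness-transforming-across-purified-clause-deletion}.
\end{itemize}
By \Cref{witness-preservation-sufficient-condition}, $\tau_{S_i}$ is then witness-preserving, so $\tau_{S_i}\sigma_i = \sigma_{i-1}$ is a witness of $\Exists{\seq{X}} N_{i-1}(D)$. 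At $i = 0$ this gives the theorem.

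\emph{Side conditions.} Two points need care rather than real work.
\begin{enumerate}
\item The domain condition $\dom(\sigma_{i-1}) \subseteq \seq{X}$. Each $\tau_{S}$ only substitutes for a predicate variable $X$ that is being eliminated: in $\mathrm{ExtPurDel}_X^p$ the variable is by definition a predicate variable of the problem, and in $\mathrm{PurDel}_P$ the designated literal is an $\seq{X}$-literal, as in SCAN. I would state explicitly that the designated literals and the variables in $\mathrm{ExtPurDel}$ are assumed to range over $\seq{X}$. The domain of a composition is contained in the union of the domains, so the condition is preserved.
\item Infinitary expressions. $\mathrm{\ell Res}_P$ may be an infinite conjunction, so the substitutions, the composition $\tau_{S_i}\sigma_i$, and the notion of witness are interpreted in the infinitary logic with the standard Tarskian semantics. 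The earlier lemmas were stated in that setting, so nothing new is required.
\end{enumerate}

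The main obstacle, proving that $\tau_{\mathrm{PurDel}_P}$ is witness-transforming, has already been handled by \Cref{local-resolution-closure-is-witness-transforming-across-purified-clause-deletion}. The only remaining subtlety is checking that composing substitutions matches the semantic chaining used in \Cref{witness-preservation-sufficient-condition}, namely $(N\tau)\sigma = N(\tau\sigma)$. This holds by the stated convention that $\tau\sigma$ applies $\tau$ first and then $\sigma$.
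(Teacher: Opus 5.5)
Your proposal is correct and follows essentially the same argument as the paper, which inducts on the length of $D$ by peeling off the first step $S_1$ and applying the inductive hypothesis to the tail derivation from $N_1$; this is equivalent to your downward induction on the index $i$, using the same three lemmas. Your explicit remarks on the domain condition $\dom(\sigma(D)) \subseteq \seq{X}$ and on designated literals (and the variables in $\mathrm{ExtPurDel}$) ranging over $\seq{X}$ spell out assumptions the paper leaves implicit.
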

\begin{proof}
  Let $D = (S_1, \dots, S_m)$.
  We proceed by induction on $m$.
  If $m = 0$, we have $\sigma(D) = \mathrm{id}$ and $N$ does not contain any variables from $\seq{X}$.
  Thus $\Exists{\seq{X}} N \imp N \imp N\sigma(D)$, i.e., $\sigma(D)$ is a witness for $\Exists{\seq{X}} N$.

  Now to the induction step.
  Let $D = (S_1, \dots, S_{m+1})$ be an $\seq{X}$-eliminating derivation from~$N$.
  Denote by $N_1$ the conclusion of $S_1$ with premise $N$.
  Then $D_1 = (S_2, \dots, S_{m+1})$ is an~$\seq{X}$-eliminating derivation from $N_1$ and by induction hypothesis $\sigma(D_1)$ is a witness for~$\Exists{\seq{X}} N_1$.
  Now $\sigma(D) = \tau_{S_1}\sigma(D_1)$ and since $\tau_{S_1}$ is witness-transforming by \Cref{witness-preservation-for-steps-without-purified-clause-deletion} and \Cref{local-resolution-closure-is-witness-transforming-across-purified-clause-deletion} it follows from \Cref{witness-preservation-sufficient-condition} it follows that $\sigma(D)$ is a witness for $\Exists{\seq{X}} N$.
\end{proof}

The results from this section allow us to show the correctness of SCAN by using the witness-transforming substitutions $\tau_S$.
\calculusIsSoundAndExistentialEquivalencePreserving
\begin{proof}
  Let $S$ be the $\mathcal{C}$-derivation step from $N$ to $N'$.
  Then $\Exists{\seq{X}} N \imp \Exists{\seq{X}} N'$ follows from soundness of $\mathcal{C}$ (\Cref{derivation-steps-are-sound}).
  For $\Exists{\seq{X}} N' \imp \Exists{\seq{X}} N$ it suffices to show $N' \imp N\tau$ for some substitution $\tau$.
  This follows from \Cref{witness-preservation-for-steps-without-purified-clause-deletion} and \Cref{local-resolution-closure-is-witness-transforming-across-purified-clause-deletion} by picking $\tau = \tau_S$.
\end{proof}

We end this section by giving an example of a derivation where the current method does not produce a finite witness.
\begin{example}
  \label{ex.main-example.second-witness}
  Recall the clause set $N$ from \Cref{ex.main-example.derivation}
  \begin{align*}
    (1)\ B(a,v) \qquad (2)\ X(a) \qquad (3)\ B(u,v) \lor \neg X(u) \lor X(v) \qquad  (4)\ \neg X(c)
  \end{align*}
  and one of its resolvents
  $$
  (6)\ a \noeq c
  $$
  In \Cref{ex.main-example.derivation} we showed that the following derivation is $X$-eliminating from $N:= \mset{1, 2,3, 4}$.
  \begin{align*}
    D_2 = 
    \Axiom$\mset{1,2,3,4}\fCenter$
    \RightLabel{$\mathrm{PurDel}_{3.2}$}
    \UnaryInf$\mset{1, 2, 4}\fCenter$
    \RightLabel{$\mathrm{Res}_{2.1,4.1}$}
    \UnaryInf$\mset{1,2,4,6}\fCenter$
    \RightLabel{$\mathrm{PurDel}_{2.1}$}
    \UnaryInf$\mset{1,4,6}\fCenter$
    \RightLabel{$\mathrm{ExtPurDel}_X^{-}$}
    \UnaryInf$\mset{1,6}\fCenter$
    \DisplayProof
  \end{align*}
  We now compute the corresponding witness as 
  \begin{align*}
    \sigma(D_2) &= \tau_{\mathrm{PurDel}_{3.2}}\tau_{\mathrm{Res}}\tau_{\mathrm{PurDel}_{2.1}}\tau_{\mathrm{ExtPurDel}_X^{-}}.
  \end{align*}
  In \Cref{ex.main-example.first-witness} we already computed $$\sigma(D_1) = \tau_{\mathrm{Res}}\tau_{\mathrm{PurDel}_{2.1}}\tau_{\mathrm{ExtPurDel}_X^{-}} \iff [X \leftarrow \lambda u. u \oeq a].$$
  Thus $\sigma(D_2) \iff [X \leftarrow \mathrm{\ell Res}_{3.2}[X \leftarrow \lambda u. u \oeq a]]$.
  We now compute $\mathrm{\ell Res}_{3.2}(u)$.
  We have~\mbox{$R_0 := X(u) \in \mathrm{\ell Res}_{3.2}(u)$}, then $R_1 := \mathrm{Res}(3.2, R_0[\underline{X(u)}]) \in \mathrm{\ell Res}_{3.2}(u)$ with 
  $$R_1 \iff B(u,v) \lor X(v),$$
  then $R_2 := \mathrm{Res}(3.2, R_1[\underline{X(v)}]) \in \mathrm{\ell Res}_{3.2}(u)$ with 
  $$R_2 \iff B(u, v) \lor B(v, v') \lor X(v').$$
  More generally, with $R_{i+1} := \mathrm{Res}(P, R_i)$ we get
  $$R_n \iff \Lor_{i=1}^n B(v_i, v_{i+1}) \lor X(v_{n+1})$$
  where $v_1 = u$.
  Now we have that $\mathrm{\ell Res}_{3.2}(u)$ is infinite and logically equivalent to the clause set
  $\mset{R_n \suchthat n \in \N}$.
  Furthermore
  $$
  R_n[X \leftarrow \lambda u. u \oeq a] \iff \Lor_{i=1}^n B(v_i, v_{i+1}) \lor v_{n+1} \oeq a.
  $$
  Then $\mathrm{\ell Res}_{3.2}[X \leftarrow \lambda u. u \oeq a]$ is equivalent to the infinite predicate expression
  $$
  \lambda v_1. \Land_{n \in \N} \Forall{v_2}\dots \Forall{v_n} (\Lor_{i=1}^n B(v_i, v_{i+1}) \lor v_{n+1} \oeq a)
  $$
  which shows that the produced witness $\sigma(D_2)$ is infinite.
\end{example}

In the next section (\Cref{sec.fixpoint-witnesses}) we show how to describe $\mathrm{\ell Res}_P$ via a fixpoint and in the section after that (\Cref{sec.first-order-witnesses}) we go further and introduce a first-order witness construction which can associate a first-order witness to a class of $\mathcal{C}$-derivations including $D_2$.

\section{Fixpoint Witnesses}
\label{sec.fixpoint-witnesses}

In this section we show that the witnesses defined in the previous section can be expressed in fixpoint logic.
We only focus on greatest fixpoints as our results are naturally expressed in terms of these.
\begin{definition}
  We define the set of \emph{fixpoint formulas} as the smallest sets such that
  \begin{enumerate}
    \item every first-order formula is a fixpoint formula,
    \item the set of fixpoint formulas is closed under Boolean connectives and first-order quantification and
    \item if $\varphi(\seq{u})$ is a fixpoint formula where $\seq{u}$ is a tuple of $k$ variables, $Y$ is a predicate variable of arity $k$, $Y$ is positive in $\varphi$ and $\seq{t}$ is a tuple of $k$ terms then $(\mathrm{gfp}_{Y,\seq{u}} \varphi(\seq{u}))(\seq{t})$ is a fixpoint formula. The predicate variable $Y$ is considered bound inside $\mathrm{gfp}_Y \alpha$.
  \end{enumerate}
  If $\varphi(\seq{u})$ is a fixpoint formula then we call $\lambda \seq{u}. \varphi(\seq{u})$ a \emph{fixpoint predicate}.
  If $\alpha$ is a fixpoint predicate of arity $k$ and $Y$ is a predicate variable of arity $k$ such that $Y$ is positive in $\alpha$ then we define $\mathrm{gfp}_Y \alpha := \lambda \seq{u}. (\mathrm{gfp}_{Y, \seq{v}} \alpha(\seq{v}))(\seq{u})$.
  A witness $\sigma$ is called a \emph{fixpoint witness} if $\sigma(X)$ is a fixpoint predicate for all $X \in \mathrm{dom}(\sigma)$.
\end{definition}

A well-known result due to Knaster and Tarski is the existence of least and greatest fixpoints for monotone functions.
\begin{theorem}[Knaster-Tarski theorem]
  \label{knaster-tarski-theorem}
  Let $M$ be a set, denote by $\mathcal{P}(M)$ the powerset of $M$ and let $f: \mathcal{P}(M) \to \mathcal{P}(M)$ be monotone, i.e., $f(M_1) \subseteq f(M_2)$ for all $M_1, M_2 \subseteq M$.
  Then there exists a unique set $\mathrm{GFP}(f) \subseteq M$ such that $f(\mathrm{GFP}(f)) = \mathrm{GFP}(f)$ and for all $M' \subseteq M$ with $f(M') = M'$ we have $M' \subseteq \mathrm{GFP}(f)$.
\end{theorem}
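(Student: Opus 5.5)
The plan is the standard Knaster--Tarski argument: take the union of all post-fixpoints of $f$ and show that it is the greatest fixpoint. Set
\[
  G := \Union \mset{M' \subseteq M \suchthat M' \subseteq f(M')},
\]
and define $\mathrm{GFP}(f) := G$. The monotonicity hypothesis in the statement must be read as $M_1 \subseteq M_2$ implying $f(M_1) \subseteq f(M_2)$; this is the only property of $f$ used.

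First I show that $G$ is itself a post-fixpoint, i.e., $G \subseteq f(G)$. Let $M'$ be any set with $M' \subseteq f(M')$. Since $M' \subseteq G$, monotonicity gives $f(M') \subseteq f(G)$, so $M' \subseteq f(G)$. Taking the union over all such $M'$ yields $G \subseteq f(G)$.

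Second, I obtain the reverse inclusion. Applying monotonicity to $G \subseteq f(G)$ gives $f(G) \subseteq f(f(G))$, so $f(G)$ is a post-fixpoint. Hence $f(G)$ is one of the sets in the union defining $G$, and therefore $f(G) \subseteq G$. Together with the first step this gives $f(G) = G$.

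Maximality is immediate: any fixpoint $M'$ with $f(M') = M'$ satisfies $M' \subseteq f(M')$, so $M' \subseteq G$ by the definition of $G$. Uniqueness of a fixpoint with this maximality property is also immediate: if $G_1$ and $G_2$ both have it, then $G_1 \subseteq G_2$ and $G_2 \subseteq G_1$.

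The only step needing real care is the second one, where the fixpoint property is obtained by feeding $f(G)$ back into the definition of $G$. The rest of the argument consists of direct unfoldings of the definitions.
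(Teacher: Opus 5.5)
Your proof is correct. Taking $\mathrm{GFP}(f)$ to be the union of all post-fixpoints and then feeding $f(G)$ back into that union is exactly the standard Knaster--Tarski argument, and your reading of monotonicity as ``$M_1 \subseteq M_2$ implies $f(M_1) \subseteq f(M_2)$'' correctly repairs the hypothesis missing from the statement. The paper gives no proof of its own here and only cites a reference for this classical result, so there is nothing further to compare against.
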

\begin{proof}
  See, e.g., \cite{Fritz2002}.
\end{proof}

This allows us to define the semantics of greatest fixpoint formulas.
\begin{definition}
  Let $\mathcal{M}$ be a structure with domain $M$, let $\theta$ be an environment and let~\mbox{$\varphi = (\mathrm{gfp}_{Y, \seq{u}} \psi(\seq{u}))(\seq{t})$} where $\psi(\seq{u})$ is a fixpoint formula, $Y$ is a predicate variable of arity $k$ which is positive in $\psi$.
  We extend the semantics for first-order formulas so we can assume that $\mathcal{M}, \theta \models \psi$ is already defined.
  Let $f_{\varphi}: \mathcal{P}(M^k) \to \mathcal{P}(M^k)$ be the function defined by
  $$
  f_{\varphi}(M') := \mset{\seq{m} \in M^k \suchthat \mathcal{M}, \theta\{Y \leftarrow M'\} \models \psi(\seq{m})}
  $$
  Note that $f_{\varphi}$ is monotone since $Y$ is positive in $\psi$, so by \Cref{knaster-tarski-theorem} the set $\mathrm{GFP}(f)$ exists.
  We write $\mathcal{M}, \theta \models  (\mathrm{gfp}_{Y, \seq{u}} \psi(\seq{u}))(\seq{t})$ if and only if $\seq{t}^{\mathcal{M}, \theta} \in \mathrm{GFP}(f_{\varphi})$.
  We also naturally extend the notations $\varphi_1 \imp \varphi_2$, $\varphi_1 \iff \varphi_2$ to fixpoint formulas and fixpoint predicates.
\end{definition}

We collect a few properties about fixpoint predicates that follow from their semantics.
\begin{proposition}
  \label{fixpoint-properties}
  Let $\alpha$ and $\beta$ be fixpoint predicates and let $Y$ be a predicate variable that is positive in $\alpha$.
  Then
  \begin{enumerate}
    \item $\mathrm{gfp}_Y \alpha \iff \alpha[Y \leftarrow \mathrm{gfp}_Y \alpha]$.
    \item If $\beta \iff \alpha[Y \leftarrow \beta]$, then $\beta \imp \mathrm{gfp}_Y \alpha$.
  \end{enumerate}
\end{proposition}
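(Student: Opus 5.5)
Both items follow by unfolding the semantics of $\mathrm{gfp}$ pointwise in an arbitrary structure $\mathcal{M}$ with domain $M$ and environment $\theta$, and then applying the Knaster--Tarski theorem (\Cref{knaster-tarski-theorem}). Let $\alpha = \lambda \seq{u}. \psi(\seq{u})$ have arity $k$. Fix $\mathcal{M}$ and $\theta$, and let $f : \mathcal{P}(M^k) \to \mathcal{P}(M^k)$ be given by $f(M') = \mset{\seq{m} \suchthat \mathcal{M}, \theta\{Y \leftarrow M'\} \models \psi(\seq{m})}$. This is exactly the function $f_\varphi$ from the definition of the semantics. Since $Y$ is bound in $\mathrm{gfp}_Y \alpha$, the function $f$ does not depend on $\theta(Y)$, so neither does the extension $G := \mathrm{GFP}(f)$ of $\mathrm{gfp}_Y \alpha$ in $(\mathcal{M},\theta)$. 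By definition, $\mathcal{M},\theta \models (\mathrm{gfp}_Y\alpha)(\seq{t})$ iff $\seq{t}^{\mathcal{M},\theta} \in G$.

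The key auxiliary fact is a substitution lemma. If $\gamma$ is a fixpoint predicate with extension $E_\gamma = \mset{\seq{m} \suchthat \mathcal{M},\theta \models \gamma(\seq{m})}$ and no free variable of $\gamma$ is captured, then $\mathcal{M},\theta \models \alpha[Y \leftarrow \gamma](\seq{m})$ iff $\mathcal{M},\theta\{Y \leftarrow E_\gamma\} \models \alpha(\seq{m})$. I would prove this by a routine induction on fixpoint formulas. The atomic, Boolean and quantifier cases are standard. In the $\mathrm{gfp}$ case, the inner operator is the same function under both readings, so the two greatest fixpoints coincide. The side condition is ensured by renaming bound variables.

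For (1), apply the substitution lemma with $\gamma = \mathrm{gfp}_Y\alpha$, so that $E_\gamma = G$. This gives $\mathcal{M},\theta \models \alpha[Y \leftarrow \mathrm{gfp}_Y\alpha](\seq{m})$ iff $\seq{m} \in f(G)$. Knaster--Tarski gives $f(G) = G$, so this holds iff $\seq{m} \in G$, iff $\mathcal{M},\theta \models (\mathrm{gfp}_Y\alpha)(\seq{m})$. Since $\mathcal{M},\theta$ were arbitrary, the two predicates are equivalent.

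For (2), assume $\beta \iff \alpha[Y \leftarrow \beta]$ and let $B = E_\beta$ in $(\mathcal{M},\theta)$. By the substitution lemma, $f(B) = B$, so $B$ is a fixpoint of $f$. The maximality clause of \Cref{knaster-tarski-theorem} then gives $B \subseteq G$, i.e.\ $\beta \imp \mathrm{gfp}_Y\alpha$.

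One subtlety arises when $Y$ occurs free in $\beta$. The equivalences $\imp$ and $\iff$ quantify over all interpretations of free predicate variables, including $Y$. I would still work with one fixed $\theta$ throughout, and the argument goes through because $G$ does not depend on $\theta(Y)$. Apart from this point, the only genuine work is the substitution lemma, which I expect to be the main (though routine) obstacle, mainly the bookkeeping for nested $\mathrm{gfp}$ binders and variable capture.
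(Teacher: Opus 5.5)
Your proof is correct and is essentially the argument the paper has in mind. The paper gives no explicit proof and only says these properties follow from the semantics. Your argument spells that out: a substitution lemma together with the fixpoint and maximality clauses of \Cref{knaster-tarski-theorem} applied to $f_\varphi$. You also correctly rely on $f(B)=B$ rather than $B \subseteq f(B)$, which matches the paper's version of Knaster--Tarski, where maximality is stated only among fixpoints.
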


We now show that $\mathrm{\ell Res}_P$ can be expressed as a greatest fixpoint predicate.
\begin{definition}
  Let $P = \underline{L(\seq{t}(\seq{v}))} \lor C(\seq{v}) \lor \Lor_{i=1}^p L(\seq{t_i}(\seq{v}))^\perp$ be a pointed clause where $L$ is an $X$-literal and $C$ does not contain $L^\perp$-literals.
  Let $Y$ be a fresh predicate variable with the same arity as $X$.
  We set 
  \begin{align*}
    \alpha_{P, Y} &:= \lambda \seq{u}. L(\seq{u})^\perp \land \Forall{\seq{v}}(\seq{u} \noeq \seq{t}(\seq{v}) \lor C(\seq{v}) \lor \Lor_{i=1}^p Y(\seq{t_i}(\seq{v}))).\\
  \end{align*}
\end{definition}

\begin{lemma}
  \label{local-resolution-closure-iff-fixpoint-resolution-closure}
  Let $P$ be a pointed clause where the designated literal is an $X$-literal and let $Y$ be a fresh predicate variable with the same arity as $X$.
  Then $\mathrm{\ell Res}_P \iff \mathrm{gfp}_Y \alpha_{P, Y}$.
\end{lemma}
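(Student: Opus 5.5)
We show the two implications $\mathrm{\ell Res}_P \imp \mathrm{gfp}_Y \alpha_{P,Y}$ and $\mathrm{gfp}_Y \alpha_{P,Y} \imp \mathrm{\ell Res}_P$ separately. Write $P = \underline{L(\seq{t}(\seq{v}))} \lor C(\seq{v}) \lor \Lor_{i=1}^p L(\seq{t_i}(\seq{v}))^\perp$.

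The first step is a one-step unfolding of the local resolution closure. The only resolution step possible from $L(\seq{c})^\perp$ yields $\seq{c} \noeq \seq{t}(\seq{v}) \lor C(\seq{v}) \lor \Lor_{i=1}^p L(\seq{t_i}(\seq{v}))^\perp$. As in the proof of \Cref{lres-satisfies-pointed-clause}, its further $P$-resolution closure is obtained by prefixing $\seq{c} \noeq \seq{t}(\seq{v})$ to $\mathrm{Res}_P^{<\omega}(C(\seq{v}) \lor \Lor_i L(\seq{t_i}(\seq{v}))^\perp)$. By \Cref{local-resolution-closure-describes-clause-resolution-closure} this closure is equivalent to $\Forall{\seq{v}}(C(\seq{v}) \lor \Lor_i \mathrm{\ell Res}_P(\seq{t_i}(\seq{v})))$, since $C$ contains no $L^\perp$-literals. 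Combining these gives
\[
\mathrm{\ell Res}_P \iff \lambda \seq{u}.\, L(\seq{u})^\perp \land \Forall{\seq{v}}\bigl(\seq{u} \noeq \seq{t}(\seq{v}) \lor C(\seq{v}) \lor \Lor_{i=1}^p \mathrm{\ell Res}_P(\seq{t_i}(\seq{v}))\bigr),
\]
that is, $\mathrm{\ell Res}_P \iff \alpha_{P,Y}[Y \leftarrow \mathrm{\ell Res}_P]$. Here I have to be careful with variable renaming and with the fact that the $\seq{u}$-constraint sits in every clause of the closure. Quantifiers can be distributed across the infinite conjunction exactly as in the proof of \Cref{local-resolution-closure-describes-clause-resolution-closure}. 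Once this identity holds, \Cref{fixpoint-properties}(2) immediately gives $\mathrm{\ell Res}_P \imp \mathrm{gfp}_Y \alpha_{P,Y}$.

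For the converse, let $G := \mathrm{gfp}_Y \alpha_{P,Y}$. By \Cref{fixpoint-properties}(1), $G \iff \alpha_{P,Y}[Y \leftarrow G]$. I then prove, by induction on $k$, that $G(\seq{c}) \imp \Forall{\seq{z}} R(\seq{c},\seq{z})$ for every $R(\seq{c},\seq{z}) \in \mathrm{Res}_P^{\leq k}(L(\seq{c})^\perp)$. The stronger statement I actually prove is this: for every clause $R$ in the closure, the formula $R'$ obtained from $R$ by replacing each $L^\perp$-literal $L(\seq{s})^\perp$ by $G(\seq{s})$ is implied by $G(\seq{c})$, and moreover $R' \imp R$. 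The implication $R' \imp R$ holds because $G(\seq{s}) \imp L(\seq{s})^\perp$, which follows from the first conjunct of $\alpha_{P,Y}$ via unfolding. The base case $R = L(\seq{c})^\perp$ gives $R' = G(\seq{c})$. In the step, resolving $P$ on a literal $L(\seq{s})^\perp$ of $R$ replaces $G(\seq{s})$ in $R'$ by $\seq{s} \noeq \seq{t}(\seq{v}) \lor C(\seq{v}) \lor \Lor_i G(\seq{t_i}(\seq{v}))$. Unfolding $G(\seq{s})$ once shows that this replacement is implied by $G(\seq{s})$, so the new $R'$ is still implied by $G(\seq{c})$. Taking the conjunction over all $k$ yields $G \imp \mathrm{\ell Res}_P$.

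The main obstacle is the bookkeeping in the first step: justifying the unfolding identity with infinite conjunctions and fresh variables. The inductive argument for the converse is routine once the strengthened invariant, replacing $L^\perp$-literals by $G$, is chosen.
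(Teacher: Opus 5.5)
Your proof is correct. The first half matches the paper: $\mathrm{\ell Res}_P$ is shown to be a fixpoint of $\alpha_{P,Y}$ by one unfolding plus \Cref{local-resolution-closure-describes-clause-resolution-closure}, and then \Cref{fixpoint-properties}(2) is invoked.

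The paper avoids the bookkeeping you flag there. It applies that lemma directly to the first resolvent $\seq{c} \noeq \seq{t}(\seq{v}) \lor C(\seq{v}) \lor \Lor_{i=1}^p L(\seq{t_i}(\seq{v}))^\perp$, simply absorbing the constraint into the $L^\perp$-free part $C_0$. You instead strip the constraint and reattach it afterwards. That route needs the lemma's equivalence pointwise in $\seq{v}$, since an equivalence of universal closures does not survive disjoining a constraint under the quantifier.

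For the converse, you and the paper both induct on $k$, and both use only $G \imp \alpha_{P,Y}[Y \leftarrow G]$. The decompositions differ:
\begin{itemize}
\item The paper peels off the \emph{first} resolution step. Via \Cref{finite-local-resolution-closure-describes-clause-resolution-closure}, a level-$(k+1)$ clause is the first resolvent with clauses $R_i$ of levels $k_i \le k$ plugged in for its $L^\perp$-literals. The induction hypothesis is then applied to each $R_i$, and one unfolding of $\beta$ finishes the step.
\item You peel off the \emph{last} step. Your invariant is that $G(\seq{c})$ implies the formula obtained by replacing every remaining $L^\perp$-literal by $G$. Each resolution step is then a single unfolding of $G$ at one position, and $G(\seq{s}) \imp L(\seq{s})^\perp$ recovers the clause itself.
\end{itemize}
Your route is self-contained and avoids the decomposition lemma, including its accounting of $\sum_i k_i \le k$, at the cost of a strengthened invariant. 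The paper's route reuses machinery already proved and mirrors the shape of $\alpha_{P,Y}$ directly.
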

\begin{proof}
  Let $P = \underline{L(\seq{t}(\seq{v}))} \lor C(\seq{v}) \lor \Lor_{i=1}^p L(\seq{t_i}(\seq{v}))^\perp$ be a pointed clause where $L$ is an $X$-literal and $C$ does not contain $L^\perp$-literals.
  Furthermore let $\seq{c}$ be a tuple of fresh constants. 
  We show both directions of the equivalence.

  $\rimp$: Let $\beta$ be a fixpoint of $\alpha_{P,Y}$, i.e., $\beta \iff \alpha_{P,Y}[Y \leftarrow \beta]$.
  We show $\beta \imp \lambda \seq{c}. \mathrm{Res}_P^{\leq k}(L(\seq{c})^\perp)$ by induction on $k$.
  We have
  \begin{align*}
    \beta &\iff \alpha_{P, Y}[Y \leftarrow \mathrm{gfp_Y}\beta] \\
    &= \lambda \seq{u}. L(\seq{u})^\perp \land \Forall{\seq{v}}(\seq{u} \noeq \seq{t}(\seq{v}) \lor C(\seq{v}) \lor \Lor_{i=1}^p \beta(\seq{t_i}(\seq{v}))).
  \end{align*}
  Thus $\beta \imp \lambda \seq{c}. L(\seq{c})^\perp$ which proves the statement for $k=0$.
  For the induction step we show~\mbox{$\beta \imp \lambda \seq{c}. \Forall{\seq{w}} R(\seq{c}, \seq{w})$} for all $R(\seq{c}, \seq{w}) \in \mathrm{Res}_P^{\leq k+1}(L(\seq{c})^\perp)$.
  If $R(\seq{c}, \seq{w}) \in \mathrm{Res}_P^{\leq k}(L(\seq{c})^\perp)$ the statement follows by the induction hypothesis.
  Otherwise $R(\seq{c}, \seq{w}) \in \mathrm{Res}_P^{\leq k}(\mathrm{Res}(P, L(\seq{c})^\perp))$.
  Note that 
  $$\mathrm{Res}(P, L(\seq{c})^\perp) = \seq{c} \noeq \seq{t}(\seq{v}) \lor C(\seq{v}) \lor \Lor_{i=1}^p L(\seq{t_i}(\seq{v}))^\perp.$$
  Thus by \Cref{finite-local-resolution-closure-describes-clause-resolution-closure} there are $k_1, \dots, k_p \in \N$ with $\sum_{i=1}^p k_i \leq k$ and for all $1 \leq i \leq p$ there is an $R_i(\seq{c}, \seq{z_i}) \in \mathrm{Res}_P^{\leq k_i}(L(\seq{c})^\perp)$ such that $R(\seq{c}, \seq{w}) = \seq{c} \noeq \seq{t}(\seq{v}) \lor C(\seq{v}) \lor \Lor_{i=1}^p R_i(\seq{t_i}(\seq{v}), \seq{z_i})$.
  Since $\sum_{i=1}^p k_i \leq k$ we get $k_i \leq k$ for all $1 \leq i \leq p$ and thus by induction hypothesis we have~\mbox{$\beta \imp \lambda \seq{c}. \Forall{\seq{z_i}} R_i(\seq{c}, \seq{z_i})$} and in particular $\beta(\seq{t_i}(\seq{v})) \imp R_i(\seq{t_i}(\seq{v}), \seq{z_i})$.
  Therefore
  \begin{align*}
    \beta &\imp \lambda \seq{c}. \Forall{\seq{v}}(\seq{c} \noeq \seq{t}(\seq{v}) \lor C(\seq{v}) \lor \Lor_{i=1}^p \beta(\seq{t_i}(\seq{v}))) \\
    &\imp \lambda \seq{c}. \Forall{\seq{v}}(\seq{c} \noeq \seq{t}(\seq{v}) \lor C(\seq{v}) \lor \Lor_{i=1}^p R_i(\seq{t_i}(\seq{v}), \seq{z_i})) \\
    &\iff \lambda \seq{c}. \Forall{\seq{w}} R(\seq{c}, \seq{w}).
  \end{align*}
  This means we have 
  $$\beta \imp \lambda \seq{c}. \mathrm{Res}_P^{< \omega}(L(\seq{c})^\perp) \iff \mathrm{\ell Res}_P.$$
  By \Cref{fixpoint-properties} we get~\mbox{$\mathrm{gfp}_Y \alpha_{P,Y} \iff \alpha_{P,Y}[Y \leftarrow \mathrm{gfp}_Y \alpha_{P,Y}]$} so the statement follows by setting $\beta = \mathrm{gfp}_Y \alpha_{P,Y}$.

  $\imp$: 
  We show that $\mathrm{\ell Res}_P$ is a fixpoint of $\alpha_{P, Y}$ with respect to~$Y$, i.e., we show that ~\mbox{$\mathrm{\ell Res}_P \iff \alpha_{P, Y}[Y \leftarrow \mathrm{\ell Res}_P]$}.
  Applying \Cref{fixpoint-properties} we then get~\mbox{$\mathrm{\ell Res}_P \imp \mathrm{gfp}_Y \alpha_{P, Y}$}.
  Note that
  \begin{align*}
    \mathrm{\ell Res}_P(\seq{c}) &= \mathrm{Res}_P^{< \omega}(L(\seq{c})^\perp) = \mset{L(\seq{c})^\perp} \union \mathrm{Res}_P^{< \omega}(\mathrm{Res}(P, L(\seq{c})^\perp)).\\
    \intertext{So we get}
    \mathrm{\ell Res}_P(\seq{c}) &\iff L(\seq{c})^\perp \land \mathrm{Res}_P^{< \omega}(\seq{c} \noeq \seq{t}(\seq{v}) \lor C(\seq{v}) \lor \Lor_{i=1}^{p} L(\seq{t_i}(\seq{v}))^\perp). \\
    \intertext{Then by \Cref{local-resolution-closure-describes-clause-resolution-closure} applied to $\seq{c} \noeq \seq{t}(\seq{v}) \lor C(\seq{v}) \lor \Lor_{i=1}^{p} L(\seq{t_i}(\seq{v}))^\perp$ we get}
    \mathrm{\ell Res}_P(\seq{c}) &\iff L(\seq{c})^\perp \land \Forall{\seq{v}}(\seq{c} \noeq \seq{t}(\seq{v}) \lor C(\seq{v}) \lor \Lor_{i=1}^{p} \mathrm{\ell Res}_P(\seq{t_i}(\seq{v}))) = \alpha_{P, Y}[Y \leftarrow \mathrm{\ell Res}_P],
  \end{align*}
  i.e., $\mathrm{\ell Res}_P$ is a fixpoint of $\alpha_{P, Y}$.
\end{proof}

We use this to define an alternative witness by replacing all instances of $\mathrm{\ell Res}_P$ by~$\mathrm{gfp}_Y \alpha_{P, Y}$.
\begin{definition}
  Let $P$ be a pointed clause with designated literal $L$ whose predicate symbol is~$X$.
  Also let $Y$ be a fresh predicate variable with the same arity as $X$.
  Set $$
  \tau_{\mathrm{PurDel}_P}^\ast := \begin{cases}
    [X \leftarrow \mathrm{gfp}_Y \alpha_{P, Y}] & \text{if $L$ is negative}, \\
    [X \leftarrow \neg \mathrm{gfp}_Y \alpha_{P, Y}] & \text{if $L$ is positive}.
  \end{cases}
  $$
  Furthermore, for  $\mathcal{C}$-derivation steps $S$ other than $\mathrm{PurDel}_P$ set $\tau_S^\ast := \tau_S$.
  For a $\mathcal{C}$-derivation~\mbox{$D = (S_1, \dots, S_m)$}, we define $\sigma_{\mathrm{fp}}(D) := \tau_{S_1}^\ast \dots\tau_{S_m}^\ast$.
\end{definition}
This shows that we can express the infinite witnesses from the previous section via fixpoints.
\begin{theorem}
  \label{eliminating-derivations-admit-fixpoint-witness}
  If $D$ is an $\seq{X}$-eliminating $\mathcal{C}$-derivation from $N$, then $\sigma_{\mathrm{fp}}(D)$ is a fixpoint witness for $\Exists{\seq{X}} N$.
\end{theorem}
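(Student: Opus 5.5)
The plan is to reduce the statement to \Cref{eliminating-derivations-admit-infinite-witness} by showing that $\sigma_{\mathrm{fp}}(D)$ and $\sigma(D)$ are equivalent substitutions. Two facts then finish the proof. First, the witness property $\Exists{\seq{X}} N \iff N\sigma$ depends on $\sigma$ only up to pointwise equivalence of the substituted predicates. Second, every $\tau^\ast_{S}$ maps variables to fixpoint predicates or to $\lambda\seq{u}.\bot$, $\lambda\seq{u}.\top$. The composition of such substitutions again yields fixpoint predicates, because the fixpoint formulas are closed under Boolean connectives and first-order quantification, and substituting a fixpoint predicate into a fixpoint formula yields a fixpoint formula. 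For the last point, the bound variable $Y$ is fresh, and the substituted $X$ occurs in $\alpha_{P,Y}$ only inside $L(\seq{u})^\perp$ and $C(\seq{v})$, so the positivity of $Y$ is untouched.

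For the equivalence, proceed by induction on the length $m$ of $D=(S_1,\dots,S_m)$, exactly as in the proof of \Cref{eliminating-derivations-admit-infinite-witness}. The case $m=0$ is trivial, since both substitutions are $\mathrm{id}$. For the step, write $\sigma(D)=\tau_{S_1}\sigma(D_1)$ and $\sigma_{\mathrm{fp}}(D)=\tau^\ast_{S_1}\sigma_{\mathrm{fp}}(D_1)$. By the induction hypothesis, $\sigma(D_1)\iff\sigma_{\mathrm{fp}}(D_1)$. If $S_1$ is not a $\mathrm{PurDel}$ step, then $\tau_{S_1}^\ast=\tau_{S_1}$. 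If $S_1=\mathrm{PurDel}_P$, then \Cref{local-resolution-closure-iff-fixpoint-resolution-closure} gives $\mathrm{\ell Res}_P\iff\mathrm{gfp}_Y\alpha_{P,Y}$, and hence $\tau_{S_1}\iff\tau^\ast_{S_1}$ (negating in the positive case).

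It then remains to show a congruence property. If $\tau\iff\tau'$ and $\rho\iff\rho'$, then $\tau\rho\iff\tau'\rho'$, and more generally $E\rho\iff E\rho'$ for any (possibly infinitary or fixpoint) predicate expression $E$. Here $\alpha\iff\beta$ is understood as validity for all values of the free predicate variables, as defined in the paper. With this reading, $\tau(X)\iff\tau'(X)$ survives the later substitution $\rho$, because $\rho$ instantiates free predicate variables and universal validity is preserved under instantiation. The remaining task is the replacement of $\rho$ by the equivalent $\rho'$. This is a semantic substitution lemma: a formula's truth depends only on the extensions of its predicate variables, and equivalent predicates have the same extension in every structure and environment.

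I expect this substitution-of-equivalents step to be the main obstacle. It must be checked for the infinitary conjunctions in $\mathrm{\ell Res}_P$ and for $\mathrm{gfp}$ formulas, where one argues that $f_\varphi$ is unchanged when parameters are replaced by coextensive relations, so $\mathrm{GFP}(f_\varphi)$ is unchanged. The check is routine by induction on formulas, with the $\mathrm{gfp}$ case handled by this observation. Given the equivalence, \Cref{eliminating-derivations-admit-infinite-witness} yields $\Exists{\seq{X}}N\iff N\sigma(D)\iff N\sigma_{\mathrm{fp}}(D)$. Since $\dom(\sigma_{\mathrm{fp}}(D))\subseteq\seq{X}$, $\sigma_{\mathrm{fp}}(D)$ is a fixpoint witness.
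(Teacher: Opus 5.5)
Your proposal is correct and follows the paper's own argument: \Cref{local-resolution-closure-iff-fixpoint-resolution-closure} gives $\tau_{\mathrm{PurDel}_P} \iff \tau_{\mathrm{PurDel}_P}^\ast$, hence $\sigma(D) \iff \sigma_{\mathrm{fp}}(D)$, and the witness property then carries over from \Cref{eliminating-derivations-admit-infinite-witness}. You additionally make explicit two points the paper leaves implicit: that composition of substitutions respects equivalence (via the semantic substitution lemma, including the $\mathrm{gfp}$ case), and that the composed substitution has only fixpoint predicates in its range.
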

\begin{proof}
  The proof is analogous to the proof of \Cref{eliminating-derivations-admit-infinite-witness}, but every instance of $\tau_{\mathrm{PurDel}_P}$ is replaced by $\tau_{\mathrm{PurDel}_P}^\ast$. \Cref{local-resolution-closure-iff-fixpoint-resolution-closure} ensures $\tau_{\mathrm{PurDel}_P} \iff \tau_{\mathrm{PurDel}_P}^\ast$ and thus $\sigma(D) \iff \sigma_{\mathrm{fp}}(D)$.
\end{proof}

We show how this construction produces a fixpoint witness for derivation $D_2$ from \Cref{ex.main-example.second-witness} where the previous construction only found an infinite witness.
\begin{example}
  \label{ex.main-example.fixpoint-witness}
  Recall the clause set $N$ from \Cref{ex.main-example.second-witness}
  \begin{align*}
    (1)\ B(a,v) \qquad (2)\ X(a) \qquad (3)\ B(u,v) \lor \neg X(u) \lor X(v) \qquad  (4)\ \neg X(c)
  \end{align*}
  and one of its resolvents
  $$
  (6)\ a \noeq c
  $$
  and the $X$-eliminating derivation $D_2$ from $N:= \mset{1, 2,3, 4}$ with.
  \begin{align*}
    D_2 = 
    \Axiom$\mset{1,2,3,4}\fCenter$
    \RightLabel{$\mathrm{PurDel}_{3.2}$}
    \UnaryInf$\mset{1, 2, 4}\fCenter$
    \RightLabel{$\mathrm{Res}_{2.1,4.1}$}
    \UnaryInf$\mset{1,2,4,6}\fCenter$
    \RightLabel{$\mathrm{PurDel}_{2.1}$}
    \UnaryInf$\mset{1,4,6}\fCenter$
    \RightLabel{$\mathrm{ExtPurDel}_X^{-}$}
    \UnaryInf$\mset{1,6}\fCenter$
    \DisplayProof
  \end{align*}
  We now compute the corresponding fixpoint witness as 
  \begin{align*}
    \sigma_{\mathrm{fp}}(D_2) &= \tau_{\mathrm{PurDel}_{3.2}}^\ast\tau_{\mathrm{Res}}^\ast\tau_{\mathrm{PurDel}_{2.1}}^\ast\tau_{\mathrm{ExtPurDel}_X^{-}}^\ast \\
    &= \tau_{\mathrm{PurDel}_{3.2}}^\ast\tau_{\mathrm{PurDel}_{2.1}}^\ast[X \leftarrow \lambda u. \bot] \\
    &= [X \leftarrow \mathrm{gfp}_Y \alpha_{3.2, Y}][X \leftarrow \neg \mathrm{gfp}_Y \alpha_{2.1, Y}][X \leftarrow \lambda u. \bot].
  \end{align*}
  Note that clause $2$ only has a positive $X$-literal so we get $\alpha_{2.1, Y} = \lambda u. \neg X(u) \land u \noeq a$ which does not contain $Y$ at all.
  Therefore by \Cref{fixpoint-properties} we get 
  $$\mathrm{gfp}_Y \alpha_{2.1, Y} \iff \alpha_{2.1, Y} = \lambda u. \neg X(u) \land u \noeq a$$
  and $\neg \mathrm{gfp}_Y \alpha_{2.1, Y} \iff \lambda u. X(u) \lor u \oeq a$.
  Furthermore we have 
  \begin{align*}
    \alpha_{3.2, Y} 
  &= \lambda u. X(u) \land \Forall{u'}\Forall{v'}(u \noeq u' \lor B(u,v) \lor Y(v'))  \\
  &\iff \lambda u. X(u) \land \Forall{v}(B(u,v) \lor Y(v)).
  \end{align*}
  In total we get 
  $$\sigma_{\mathrm{fp}}(D_2) = \mathrm{gfp}_Y(\lambda u. u \oeq a \land \Forall{v}(B(u, v) \lor Y(v))).$$
  
  Let $\alpha := \lambda u. u \oeq a \land \Forall{v}(B(u, v) \lor Y(v))$.
  Let us take a look at the greatest fixpoint iterations $\mathrm{gfp}_Y^n\alpha$ defined inductively by
  $\mathrm{gfp}_Y^0\alpha := \lambda u. \top$ and $\mathrm{gfp}_Y^{n+1}\alpha := \alpha[Y \leftarrow \mathrm{gfp}_Y^n(\alpha)].$
  Computing these we get
  \begin{align*}
    \mathrm{gfp}_Y^0\alpha &= \lambda u. \top \\
    \mathrm{gfp}_Y^1 \alpha &\iff \lambda u. u \oeq a \\
    \mathrm{gfp}_Y^2 \alpha &\iff \lambda u. u \oeq a \land \Forall{v}(B(u,v) \lor v \oeq a) \\
    \mathrm{gfp}_Y^3 \alpha &\iff \lambda u. u \oeq a \land \Forall{v_2}(B(u,v) \lor v_2 \oeq a) \land \Forall{v_2}\Forall{v_3}(B(u, v_2) \lor B(v_2, v_3) \lor v_3 \oeq a) \\
    &\vdots \\
    \mathrm{gfp}_Y^n \alpha &\iff \lambda v_1. \Land_{i < n} \Forall{v_2}\dots\Forall{v_i}(\Lor_{j< i} B(v_j, v_{j+1}) \lor v_{i+1} \oeq a)
  \end{align*}
  which are approximations of the witness $\sigma(D_2)$ computed in \Cref{ex.main-example.second-witness}.
\end{example}

\section{First-Order Witnesses}
\label{sec.first-order-witnesses}

In this section we introduce a witness construction method that can produce first-order witnesses.
\begin{definition}
  A witness $\sigma$ is called \emph{first-order}, if its range consists only of first-order predicates.
\end{definition}
For all $\mathcal{C}$-derivation steps except $\mathrm{PurDel}_P$, $\tau_S$ is first-order. Thus, it remains to consider purified clause deletion: we have a clause set $N$ and a pointed clause $P$ which is purified in $N$, and we need to find a predicate substitution $\tau$ such that~\mbox{$N \imp (N \union \mset{P})\tau$} and the range of $\tau$ only consists of first-order predicates.
There is a class of pointed clauses $P$, introduced in \cite{AchammerHetzlSchmidt25}, for which the method from \Cref{sec.resolution-witnesses} already provides first-order witnesses.
\begin{definition}
  Let $P$ be a pointed clause with designated literal $L$ whose predicate symbol is~$X$.
  Then $P$ is called \emph{one-sided} if all $X$-literals in $P$ have the same polarity as $L$.
\end{definition}

\begin{lemma}
  Let $P$ be a one-sided pointed clause.
  Then the range of $\tau_{\mathrm{PurDel}_P}$ only contains first-order predicates.
\end{lemma}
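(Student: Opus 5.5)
The plan is to compute $\mathrm{Res}_P^{<\omega}(L(\seq{c})^\perp)$ explicitly when $P$ is one-sided, and to show that it contains only two clauses. Then $\mathrm{\ell Res}_P$ is a finite conjunction of first-order formulas, and so is $\neg\mathrm{\ell Res}_P$. By \Cref{def:witness-transforming-purified-clause-deletion} the range of $\tau_{\mathrm{PurDel}_P}$ is $\{\mathrm{\ell Res}_P\}$ or $\{\neg\mathrm{\ell Res}_P\}$, which gives the claim.

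Write $P = \underline{L(\seq{t}(\seq{v}))} \lor C(\seq{v})$, where $L$ is an $X$-literal. Since $P$ is one-sided, $C$ contains no $L^\perp$-literals. Any other $X$-literals of $C$ have the polarity of $L$, and its remaining literals are not $X$-literals. Start from $\mathrm{Res}_P^{\leq 0}(L(\seq{c})^\perp) = \{L(\seq{c})^\perp\}$. The only $P$-resolvable pointed clause here is $L(\seq{c})^\perp$ itself, with its single literal designated. Resolving it with $P$ (after renaming) gives
\[R_1 := \seq{c} \noeq \seq{t}(\seq{v}) \lor C(\seq{v}).\]
The constraint literals of $R_1$ are equality literals, and the other literals come from $C$. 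So $R_1$ contains no $L^\perp$-literal, and no pointed version of $R_1$ is $P$-resolvable. By induction on $k$ we get $\mathrm{Res}_P^{\leq k}(L(\seq{c})^\perp) = \{L(\seq{c})^\perp, R_1\}$ for all $k \geq 1$, hence also for the union over $k \in \omega$. The same conclusion follows from \Cref{finite-local-resolution-closure-describes-clause-resolution-closure} applied with $q = 0$.

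It follows that
\[\mathrm{\ell Res}_P = \lambda \seq{c}.\, L(\seq{c})^\perp \land \Forall{\seq{v}}(\seq{c} \noeq \seq{t}(\seq{v}) \lor C(\seq{v})),\]
which is a first-order predicate because $C$ is a finite first-order clause. Its negation is then first-order as well. Equivalently, in the language of \Cref{sec.fixpoint-witnesses}, $\alpha_{P,Y}$ contains no $Y$ since $p = 0$, so $\mathrm{gfp}_Y \alpha_{P,Y} \iff \alpha_{P,Y}$ by \Cref{fixpoint-properties}.

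The only step needing care is the claim that no resolvent can be resolved further. This rests on two facts: constraint literals are equalities and never $X$-literals, and one-sidedness rules out $L^\perp$-literals in $C$. Neither fact is hard to check, so I do not expect a real obstacle. The one point to watch is the case where $L$ is positive, which is handled by the negation in the definition of $\tau_{\mathrm{PurDel}_P}$.
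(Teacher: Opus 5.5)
Your proposal is correct and follows the paper's own proof. One-sidedness rules out $L^\perp$-literals in $C$, so $\mathrm{Res}_P^{<\omega}(L(\seq{c})^\perp)$ consists only of $L(\seq{c})^\perp$ and the single resolvent $\seq{c} \noeq \seq{t} \lor C$, which makes $\mathrm{\ell Res}_P$ and its negation finite first-order predicates; your induction on $k$ and the remark on the positive case just spell out details the paper leaves implicit.
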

\begin{proof}
  We have $P = \underline{L(\seq{t})} \lor C$ for some literal $L(\seq{t})$ and clause $C$.
  Since $P$ is one-sided, $C$ contains no $L^\perp$-literals so we get~\mbox{$\mathrm{\ell Res}_P(\seq{c}) = \mset{L(\seq{c})^\perp, \seq{c} \noeq \seq{t} \lor C}$}, i.e., $\mathrm{\ell Res}_P(\seq{c})$ is finite which means that the range of~$\tau_{\mathrm{PurDel}_P}$ only consists of first-order predicates.
\end{proof}

In this section we define a new class for which we can produce first-order witnesses that encompasses one-sided pointed clauses. 
It is defined based on a graph that is induced by $N$ and $P$.
To define this graph we lay some groundwork.
\begin{definition}
  Let $C$ be a clause and let $L$ be a literal.
  Then denote by $C_L$ the subclause of~$C$ that consists exactly of the $L$-literals of $C$, i.e., $C_L := \mset{L' \in C \suchthat \text{$L'$ is an $L$-literal}}$.
\end{definition}

The following definition introduces a concept that is necessary for the definition of the graph.
\begin{definition}
  Let $N$ be a clause set, let $P$ be a pointed clause and let $L$ be the designated literal of $P$.
  Denote by $R_P(N)$ the set of $P$-resolvable pointed clauses in $N$, i.e.,~\mbox{$R_P(N) := \mset{C[\underline{L'}] \suchthat \text{$C \in N, L' \in C_{L^\perp}$}}$}.
  Let $s: R_P(N) \to N$ be a function.
  We call $s$ a \emph{$P$-purification subsumption for $N$} if for all $P' \in R_P(N)$ we have $s(P') \subsumesLvelim{L^\perp} \mathrm{Res}(P, P')$.
\end{definition}

Purification subsumptions need not be unique for given $N$ and $P$ as the following example shows.
\begin{example}
  \label{ex.purification-subsumption}
  Let $P = \underline{\neg X(v_1, v_2)} \lor X(v_2, v_1) \lor A(v_1, v_2)$ and 
  $$N = \mset{X(a,b), \quad X(b,a), \quad A(b,a)}.$$
  Then we have 
  $$R_P(N) = \mset{\underline{X(a,b)}, \quad \underline{X(b,a)}}.$$
  Note that
  \begin{align*}
    \mathrm{Res}(P, \underline{X(a,b)}) &= v_1 \noeq a \lor v_2 \noeq b \lor X(v_2, v_1) \lor A(v_1, v_2) \velimtrans X(b, a) \lor A(a,b) \text{ and} \\
    \mathrm{Res}(P, \underline{X(b,a)}) &= v_1 \noeq b \lor v_2 \noeq a \lor X(v_2, v_1) \lor A(v_1, v_2) \velimtrans X(a,b) \lor A(b,a)
  \end{align*}
  Then there are two $P$-purification subsumptions for $N$, namely, $s_1$ given by
  \begin{align*}
    s_1(\underline{X(a,b)}) &= X(b,a) \subsumesLvelim{X} \mathrm{Res}(P, \underline{X(a,b)}) \\
    s_1(\underline{X(b,a)}) &= A(b,a) \subsumesLvelim{X} \mathrm{Res}(P, \underline{X(b,a)}) \\
  \end{align*}
  and $s_2$ given by
  \begin{align*}
    s_2(\underline{X(a,b)}) &= X(b,a) \subsumesLvelim{X} \mathrm{Res}(P, \underline{X(a,b)})\\
    s_2(\underline{X(b,a)}) &= X(a,b) \subsumesLvelim{X} \mathrm{Res}(P, \underline{X(b,a)}). \\
  \end{align*}
\end{example}
The existence of $P$-purification subsumptions for $N$ is equivalent to $P$ being purified in $N$.
\begin{lemma}
  \label{purified-iff-exists-purification-subsumption}
  Let $N$ be a clause set and $P$ a pointed clause.
  Then $P$ is purified in $N$ if and only if there exists a $P$-purification subsumption for $N$.
\end{lemma}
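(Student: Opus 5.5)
The plan is to unfold both notions and compare them. By definition, $P$ is purified in $N$ iff $N \subsumesLvelim{L^\perp} \mathrm{Res}_P^{\leq 1}(N)$, where $\mathrm{Res}_P^{\leq 1}(N) = N \union \mset{\mathrm{Res}(P, P') \suchthat P' \in R_P(N)}$. The $P$-resolvable pointed clauses arising from $N$ are exactly the elements of $R_P(N)$: a pointed clause $C[\underline{L'}]$ with $C \in N$ is $P$-resolvable precisely when $L'$ is an $L^\perp$-literal, i.e., $L' \in C_{L^\perp}$.

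For the direction ``purified $\Rightarrow$ subsumption exists'', take any $P' \in R_P(N)$. Then $\mathrm{Res}(P, P') \in \mathrm{Res}_P^{\leq 1}(N)$, so purification gives some $S \in N$ with $S \subsumesLvelim{L^\perp} \mathrm{Res}(P, P')$. Using choice (or a fixed well-ordering of $N$ if one wants to avoid it), we set $s(P') := S$. This yields a function $s : R_P(N) \to N$ with the required property.

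For the converse, let $s$ be a $P$-purification subsumption, and let $R \in \mathrm{Res}_P^{\leq 1}(N)$. There are two cases.
\begin{enumerate}
  \item If $R \in N$, then $R \subseteq R$ gives $R \subsumesL{L^\perp} R$ by \Cref{relationships-between-subsumption-relations}. Hence $R \subsumesLvelim{L^\perp} R$, with $R \in N$ as the witness clause.
  \item Otherwise $R = \mathrm{Res}(P, P')$ for some $P' \in R_P(N)$, and $s(P') \in N$ satisfies $s(P') \subsumesLvelim{L^\perp} R$.
\end{enumerate}
Thus $N \subsumesLvelim{L^\perp} R$ for every such $R$, i.e., $P$ is purified in $N$.

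The only point needing care is the identification of the $P$-resolvable pointed clauses from $N$ with $R_P(N)$, including the bookkeeping that clauses are taken up to variable renaming so that $\mathrm{Res}(P,P')$ is well defined. Beyond this, the argument is a direct unfolding of the definitions, and I do not expect a real obstacle.
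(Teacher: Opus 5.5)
Your proposal is correct and follows the paper's proof essentially step for step: you choose a subsuming clause $s(P')$ for each $P' \in R_P(N)$ in the forward direction, and for the converse you split into the cases $R \in N$ (reflexivity of $\subsumesLvelim{L^\perp}$) and $R = \mathrm{Res}(P,P')$ with $P' \in R_P(N)$. The only minor difference is that you derive reflexivity explicitly from \Cref{relationships-between-subsumption-relations}, whereas the paper simply invokes it.
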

\begin{proof}
  Let $P$ have designated literal $L$.
  
  $\implies$:
  Since $P$ is purified in $N$ we have~\mbox{$N \subsumesLvelim{L^\perp} \mathrm{Res}_P^{\leq 1}(N)$}.
  Thus for every clause~\mbox{$R \in \mathrm{Res}_P^{\leq 1}(N)$} there exists an $S \in N$ such that $S \subsumesLvelim{L^\perp} R$.
  Now for all~\mbox{$P' \in R_P(N)$} pick an $S_{P'} \in N$ such that~\mbox{$S_{P'} \subsumesLvelim{L^\perp}\mathrm{Res}(P, P')$}.
  Then setting $s(P') := S_{P'}$ for all~\mbox{$P' \in R_P(N)$} results in a $P$-purification subsumption for $N$.

  $\impliedby$: 
  Let $s: R_P(N) \to N$ be a $P$-purification subsumption for $N$ and let $R \in \mathrm{Res}_P^{\leq 1}(N)$.
  We need to show that there exists an $S \in N$ with $S \subsumesLvelim{L^\perp} R$.
  If $R \in N$, then with $S = R$ we get $S \subsumesLvelim{L^\perp} R$ by reflexivity of $\subsumesLvelim{L^\perp}$.
  Otherwise $R = \mathrm{Res}(P, P')$ for some $P$-resolvable pointed clause $P' \in N$.
  This means $P' \in R_P(N)$.
  Since $s$ is a $P$-purification subsumption we get $s(P') \in N$ and $s(P') \subsumesLvelim{L^\perp} \mathrm{Res}(P, P') = R$ which concludes the proof.
\end{proof}

We now introduce the notion of \emph{purification subsumption graphs}.
\begin{definition}
  Let $N$ be a clause set, $P$ be a pointed clause and $s$ be a $P$-purification subsumption for $N$.
  Define the \emph{purification subsumption graph $G(N,P,s) = (V,E)$} as the edge-labelled graph with $V = N$ and 
  $$
  E = \mset{P' \overset{L'}{\to} s(P') \suchthat \text{$P' \in R_P(N)$ and $L'$ is the designated literal of $P'$}}.
  $$
\end{definition}

Note that different choices of purification subsumptions can lead to different graphs~$G(N,P,s)$ and can also affect properties such as being cyclic or acyclic as the following example shows.
\begin{example}
  Let $N, P, s_1$ and $s_2$ as in \Cref{ex.purification-subsumption}.
Then $G(N,P,s_1)$ is given by
\begin{center}
  \begin{tikzpicture}[>=stealth, node distance=2.5cm,
  every node/.style={draw=none},
  every edge/.style={->, thick},
  every edge quotes/.style={draw=none, fill=none, font=\footnotesize, inner sep=0pt}]

    \node (1) {$X(a,b)$};
    \node (2) [right of=1] {$X(b,a)$};
    \node (3) [right of=2] {$A(b,a)$};

    \draw[->] (1) -- node[above, draw=none, inner sep=0pt,font=\footnotesize] {$X(a,b)$} (2);
    \draw[->] (2) -- node[above, draw=none, inner sep=0pt,font=\footnotesize] {$X(b,a)$} (3);
  \end{tikzpicture}
\end{center}
which is acyclic whereas $G(N,P,s_2)$ is given by
\begin{center}
  \begin{tikzpicture}[>=stealth, node distance=2.5cm,
  every node/.style={draw=none},
  every edge/.style={->, thick},
  every edge quotes/.style={draw=none, fill=none, font=\footnotesize, inner sep=0pt}]

    \node (1) {$X(a,b)$};
    \node (2) [right of=1] {$X(b,a)$};
    \node (3) [right of=2] {$A(b,a)$};

    \draw[->] (1) to[bend left=15] node[above=0.1cm, draw=none, inner sep=0pt,font=\footnotesize] {$X(a,b)$} (2);
    \draw[->] (2) to[bend left=15] node[below=0.1cm, draw=none, inner sep=0pt,font=\footnotesize] {$X(b,a)$} (1);
  \end{tikzpicture}
\end{center}
and contains a cycle.
\end{example}

We now introduce the condition on purified clause deletions under which our witness construction will work.
\begin{definition}
  Let $N$ be a clause set, $P$ a pointed clause and $k \in \N$.
  We say $P$ is \emph{$k$-acyclically purified in $N$} if there exists a $P$-purification subsumption $s$ for $N$ such that $G(N,P,s)$ is acyclic and has longest path length $k$.
\end{definition}

Analogous to $\mathrm{\ell Res}_P$ in \Cref{sec.resolution-witnesses} we now introduce a building block for creating witness-transforming substitutions across purified clause deletion.
\begin{definition}
  Let $P = \underline{L(\seq{t}(\seq{v}))} \lor C(\seq{v}) \lor \Lor_{i=1}^p L(\seq{t_i}(\seq{v}))^\perp$ be a pointed clause where $L$ is an~$X$-literal and $C$ does not contain $L^\perp$-literals.
  Let $X$ have arity $k$ and let $\seq{c}$ be a $k$-tuple of fresh constants.
  We inductively define the predicate expression $B_P^k$ by
  \begin{align*}
    B_P^0 &:= \lambda \seq{c}. \mset{\bot} \\
    B_P^{k+1} &:= \lambda \seq{c}. \mset{L(\seq{c})^\perp} \union \mset{\seq{c} \noeq \seq{t}(\seq{v}) \lor C(\seq{v}) \lor \Lor_{i=1}^p R_i(\seq{t_i}(\seq{v}), \seq{z_i}) \suchthat R_i(\seq{c}, \seq{z_1}) \in B_P^k(\seq{c})}
  \end{align*}
  Furthermore set
  $$
  \tau_{\mathrm{PurDel}_P}^k := \begin{cases}
    [X \leftarrow B_P^k] & \text{if the designated literal of $P$ is negative,} \\
    [X \leftarrow \neg B_P^k] & \text{if the designated literal of $P$ is positive.}
  \end{cases}
  $$
\end{definition}

Note that the definition of $B_P^k$ is basically the $k$-fold iteration of $\alpha_{P,Y}$ in a least fixpoint sequence, but it is convenient for this section to have $B_P^k(\seq{c})$ be a clause set, so we give this definition separately.
The following lemma makes the connection to~$\alpha_{P,Y}$ precise.
\begin{lemma}
  Let $P = \underline{L(\seq{t}(\seq{v}))} \lor C(\seq{v}) \lor \Lor_{i=1}^p L(\seq{t_i}(\seq{v}))^\perp$ be a pointed clause where $L$ is an $X$-literal and $C$ does not contain $L^\perp$-literals.
  Furthermore let $Y$ be a fresh predicate variable with the same arity as $X$.
  Then for all $k \in \N$: 
  $$B_P^{k+1} \iff \alpha_{P, Y}[Y \leftarrow B_P^k].$$
\end{lemma}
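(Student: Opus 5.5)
We prove the equivalence by unfolding both sides pointwise at a tuple of fresh constants $\seq{c}$ and comparing them conjunct by conjunct. As for $\mathrm{\ell Res}_P$, we read $B_P^k(\seq{c})$ as the universally closed conjunction of its clauses, with the variables $\seq{c}$ abstracted by the $\lambda$. The case $k+1$ then has two parts. The first is the unit clause $L(\seq{c})^\perp$, which matches the first conjunct $L(\seq{u})^\perp$ of $\alpha_{P,Y}$ at $\seq{u}=\seq{c}$. What remains is to show
\begin{align*}
  &\Land_{R_1,\dots,R_p \in B_P^k(\seq{c})} \Forall{\seq{v}}\Forall{\seq{z_1}}\dots\Forall{\seq{z_p}}\Bigl(\seq{c} \noeq \seq{t}(\seq{v}) \lor C(\seq{v}) \lor \Lor_{i=1}^p R_i(\seq{t_i}(\seq{v}), \seq{z_i})\Bigr) \\
  &\qquad\iff \Forall{\seq{v}}\Bigl(\seq{c} \noeq \seq{t}(\seq{v}) \lor C(\seq{v}) \lor \Lor_{i=1}^p B_P^k(\seq{t_i}(\seq{v}))\Bigr),
\end{align*}
where $B_P^k(\seq{t_i}(\seq{v}))$ denotes $\Land_{R(\seq{c},\seq{z}) \in B_P^k(\seq{c})} \Forall{\seq{z}} R(\seq{t_i}(\seq{v}),\seq{z})$.

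We follow the quantifier and distributivity computation in the proof of \Cref{local-resolution-closure-describes-clause-resolution-closure}. In the left-hand side, each $\Forall{\seq{z_i}}$ is pushed inward onto its disjunct $R_i(\seq{t_i}(\seq{v}),\seq{z_i})$; this is allowed because the variables $\seq{z_i}$ are chosen fresh and occur in no other disjunct. Next, $\Forall{\seq{v}}$ is pulled out of the conjunction. The conjunction over tuples $(R_1,\dots,R_p)$ is then split into $p$ independent conjunctions. By distributivity of $\lor$ over (possibly infinite) $\land$, these conjunctions move inside the disjunction, one per disjunct. For each $i$, the conjunction $\Land_{R_i\in B_P^k(\seq{c})}\Forall{\seq{z_i}}R_i(\seq{t_i}(\seq{v}),\seq{z_i})$ is exactly $B_P^k(\seq{t_i}(\seq{v}))$, which is the same as $Y(\seq{t_i}(\seq{v}))[Y\leftarrow B_P^k]$. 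This gives the right-hand side. Together with the first conjunct, we obtain $\alpha_{P,Y}[Y\leftarrow B_P^k]$ at $\seq{c}$. No induction on $k$ is needed, since the argument treats $B_P^k$ as an arbitrary clause-set-valued predicate.

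The main obstacle is the degenerate case $p=0$ together with the $\bot$ base case. When $p=0$, the left-hand side indexes an empty tuple, so it is a single conjunct, and the right-hand side has an empty disjunction. Both sides therefore reduce to $\Forall{\seq{v}}(\seq{c}\noeq\seq{t}(\seq{v})\lor C(\seq{v}))$, and the equivalence holds. When $k=0$, $B_P^0=\lambda\seq{c}.\mset{\bot}$, and substituting the single clause $\bot$ for $R_i$ yields the same clause as substituting $Y\leftarrow\lambda\seq{u}.\bot$, so this case is covered by the general argument. One detail must be kept in view throughout: the distributivity step needs every choice of one clause per position $i$ to appear in $B_P^{k+1}$. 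The definition guarantees this, because the $R_i$ are drawn from $B_P^k(\seq{c})$ independently for each $i$.
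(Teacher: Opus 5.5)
Your proposal is correct and follows the paper's proof. Both unfold $B_P^{k+1}(\seq{c})$ into the unit clause $L(\seq{c})^\perp$ plus the conjunction over all tuples $R_1,\dots,R_p \in B_P^k(\seq{c})$. Both then collapse that conjunction into $\Forall{\seq{v}}(\seq{c} \noeq \seq{t}(\seq{v}) \lor C(\seq{v}) \lor \Lor_{i=1}^p B_P^k(\seq{t_i}(\seq{v})))$, i.e.\ $\alpha_{P,Y}[Y \leftarrow B_P^k]$, using the same quantifier-shifting and distributivity computation as in \Cref{local-resolution-closure-describes-clause-resolution-closure}. The paper compresses this into a three-line chain of equivalences without induction, so your explicit handling of variable freshness and of the degenerate cases $p=0$ and $k=0$ only spells out what it leaves implicit.
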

\begin{proof}
  Let $\seq{c}$ be a tuple of fresh constants. 
  Then we have 
  \begin{align*}
    B_P^{k+1}
    &\iff \lambda \seq{c}. L(\seq{c})^\perp \land \Land_{R_1, \dots, R_p \in B_P^k(\seq{c})} \Forall{\seq{v}}(\seq{c} \noeq \seq{t}(\seq{v}) \lor C(\seq{v}) \lor \Lor_{i=1}^p R_i(\seq{t_i}(\seq{v}), \seq{z_i})) \\
    &\iff \lambda \seq{c}. L(\seq{c})^\perp \land \Forall{\seq{v}}(\seq{c} \noeq \seq{t}(\seq{v}) \lor C(\seq{v}) \lor \Lor_{i=1}^p B_P^k(\seq{t_i}(\seq{v}))) \\
    &\iff \alpha_{P, Y}[Y \leftarrow B_P^k].
\end{align*}
\end{proof}

The following example computes $B_P^k$ for a few iterations for a pointed clause $P$ from \Cref{ex.main-example.fixpoint-witness}.
\begin{example}
  Let $P = B(v_1,v_2) \lor \underline{\neg X(v_1)} \lor X(v_2)$ and let $c$ be a fresh constant.
  Then we have
  \begin{align*}
    B_P^0(c) &= \mset{\bot} \iff \bot \\
    B_P^1(c) &\iff \mset{X(c), c \noeq v_1 \lor B(v_1,v_2)} \iff X(c) \land \Forall{v_1} B(c,v_1) \\
    B_P^2(c) &\iff \mset{X(c), B(c, v_1) \lor X(v_1), B(c, v_1) \lor B(v_1, v_2)} \\
    &\iff X(c) \land \Forall{v_1} (B(c,v_1) \lor X(v_1)) \land \Forall{v_1}\Forall{v_2}(B(c,v_1) \lor B(v_1, v_2)). \\
    \intertext{For arbitrary $k$ we get}
    B_P^k(c) &\iff \biggl(\Land_{0 \leq i < k}\Forall{v_1}\dots\Forall{v_{i}}\left(\Lor_{0 \leq j < i}B(v_j, v_{j+1}) \lor X(v_{i})\right) \\
    &\qquad\quad\  \land \Forall{v_1}\dots\Forall{v_{k}}\Lor_{0 \leq j < k}B(v_j, v_{j+1})\biggr)[v_0 \leftarrow c].
  \end{align*}
\end{example}

Since the initial value for the fixpoint iteration is $\bot$, the iterations $B_P^k(\seq{u})$ correspond to a \emph{least} fixpoint sequence of $\alpha_{P, Y}$.
This also means that the $B_P^k$ are monotone in $k$.
\begin{lemma}
  \label{b-resolvents-monotonicity}
  For all $k \in \N$ we have $B_P^k \imp B_P^{k+1}$.
\end{lemma}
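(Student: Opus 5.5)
The plan is induction on $k$, using the characterisation $B_P^{k+1} \iff \alpha_{P,Y}[Y \leftarrow B_P^k]$ from the preceding lemma together with monotonicity of $\alpha_{P,Y}$ in $Y$; the latter holds because $Y$ occurs only positively in $\alpha_{P,Y}$.

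\emph{Base case $k=0$.} We have $B_P^0 = \lambda \seq{c}. \mset{\bot}$, which as a clause set is equivalent to $\lambda \seq{c}. \bot$. Since $\bot$ implies anything, $B_P^0 \imp B_P^1$ holds trivially.

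\emph{Inductive step.} Assume $B_P^k \imp B_P^{k+1}$. We first record a monotonicity fact: if $\beta \imp \gamma$, then $\alpha_{P,Y}[Y \leftarrow \beta] \imp \alpha_{P,Y}[Y \leftarrow \gamma]$. In $\alpha_{P,Y}$ the variable $Y$ occurs only as the atoms $Y(\seq{t_i}(\seq{v}))$, which are disjuncts inside a universally quantified disjunction, itself conjoined with $L(\seq{u})^\perp$. So we fix a structure and an environment, replace each $\beta(\seq{t_i}(\seq{v}))$ by $\gamma(\seq{t_i}(\seq{v}))$, and note that each disjunction can only become truer; hence so do the $\forall$ and the $\land$. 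Applying this with $\beta = B_P^k$ and $\gamma = B_P^{k+1}$, and chaining with the preceding lemma on both sides, gives
$$B_P^{k+1} \iff \alpha_{P,Y}[Y \leftarrow B_P^k] \imp \alpha_{P,Y}[Y \leftarrow B_P^{k+1}] \iff B_P^{k+2}.$$

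\emph{Main obstacle.} The only delicate point is that $B_P^k$ is a clause set read as a possibly infinite conjunction of universally closed clauses. The substitution $[Y \leftarrow B_P^k]$ must therefore be interpreted through the equivalence already used in the proof of the preceding lemma. Since that lemma is stated as a logical equivalence, we can invoke it as a black box. The positivity argument works verbatim for infinitary conjunctions, because it is a pointwise semantic argument.

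Alternatively, one could argue syntactically by induction that every clause of $B_P^{k+1}(\seq{c})$ is either $L(\seq{c})^\perp$ or is built from clauses of $B_P^k$ that are themselves implied by $B_P^{k}$. However, the semantic route above is shorter and is the one I would use.
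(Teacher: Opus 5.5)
Your proof is correct and matches the paper's approach. The paper states this lemma without a separate proof and justifies it only by remarking that the $B_P^k$ form the least fixpoint iteration of $\alpha_{P,Y}$ starting from $\bot$; your induction (base case from $B_P^0 \iff \lambda \seq{c}.\bot$, then positivity of $Y$ in $\alpha_{P,Y}$ chained with $B_P^{k+1} \iff \alpha_{P,Y}[Y \leftarrow B_P^k]$) spells out exactly that remark. Your concern about infinitary conjunctions does not arise, since each $B_P^k(\seq{c})$ is a finite clause set by induction on $k$, although your pointwise semantic argument would handle the infinite case too.
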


The goal is now to prove that $\tau_{\mathrm{PurDel}_P}^k$ is witness-transforming across purified clause deletion steps $N \uplus \mset{P} / N$ given that $P$ is $k$-acyclically purified in $N$.
We follow a similar strategy as in the case for resolution witnesses in \Cref{sec.resolution-witnesses}.
We show two results analogous to \Cref{lres-satisfies-pointed-clause} and \Cref{N-implies-N-with-lres}, i.e., we show $\models P\tau_{\mathrm{PurDel}_P}^k$ and~\mbox{$N \imp N\tau_{\mathrm{PurDel}_P}^k$} if $P$ is $k$-acyclically purified in $N$.
\begin{lemma}
  \label{finite-witness-satisfies-pointed-clause}
  Let $P$ be a pointed clause and $k \in \N$.
  Then $\models P\tau_{\mathrm{PurDel}_P}^k$.
\end{lemma}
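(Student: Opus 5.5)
The plan is to unfold $P\tau_{\mathrm{PurDel}_P}^k$ literal by literal. I will reduce the claim to an implication between $B_P^k$ at the designated position and $B_P^k$ at the other $L^\perp$-positions. That implication then follows from the recursive definition of $B_P^k$ (in the form $B_P^{m+1} \iff \alpha_{P,Y}[Y \leftarrow B_P^m]$), together with the monotonicity \Cref{b-resolvents-monotonicity}.

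Write $P = \underline{L(\seq{t}(\seq{v}))} \lor C(\seq{v}) \lor \Lor_{i=1}^p L(\seq{t_i}(\seq{v}))^\perp$, where $C$ contains no $L^\perp$-literals. In both cases of the definition of $\tau_{\mathrm{PurDel}_P}^k$, after removing double negations, every $L$-literal $L(\seq{s})$ becomes $\neg B_P^k(\seq{s})$ and every $L^\perp$-literal $L(\seq{s})^\perp$ becomes $B_P^k(\seq{s})$. Hence
$$P\tau_{\mathrm{PurDel}_P}^k \iff \neg B_P^k(\seq{t}(\seq{v})) \lor C(\seq{v})\tau_{\mathrm{PurDel}_P}^k \lor \Lor_{i=1}^p B_P^k(\seq{t_i}(\seq{v})).$$
So it suffices to show that $B_P^k(\seq{t}(\seq{v}))$ implies the remaining disjunction. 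I split on $k$.

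\emph{Case $k=0$.} Here $B_P^0 \iff \bot$, so $\neg B_P^0(\seq{t}(\seq{v}))$ is valid and there is nothing to prove.

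\emph{Case $k = m+1$.} By the lemma relating $B_P^{m+1}$ to $\alpha_{P,Y}$,
$$B_P^{m+1}(\seq{t}(\seq{v})) \imp \Forall{\seq{v'}}\Bigl(\seq{t}(\seq{v}) \noeq \seq{t}(\seq{v'}) \lor C(\seq{v'}) \lor \Lor_{i=1}^p B_P^m(\seq{t_i}(\seq{v'}))\Bigr).$$
Instantiating $\seq{v'} := \seq{v}$ kills the constraint and yields $C(\seq{v}) \lor \Lor_{i} B_P^m(\seq{t_i}(\seq{v}))$. Two gaps remain.

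First, the $B_P^m$ must be upgraded to $B_P^{m+1}$. This is exactly \Cref{b-resolvents-monotonicity}.

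Second, and this is the step needing care, the clause $C(\seq{v})$ obtained from $B_P^{m+1}$ is the unsubstituted one, because $B_P^k$ itself still contains $X$. The target, however, has $C(\seq{v})\tau_{\mathrm{PurDel}_P}^k$, where the $L$-literals of $C$ (which may exist, since only $L^\perp$-literals are excluded) have been replaced. To handle this I will apply \Cref{monotonicity-applying-substitutions} with $\alpha = B_P^{m+1}$. Its hypothesis $B_P^{m+1} \imp \lambda\seq{u}.L(\seq{u})^\perp$ holds because $L(\seq{c})^\perp \in B_P^{m+1}(\seq{c})$ for $k \geq 1$. Since $C$ has no $L^\perp$-literals, the lemma gives $C(\seq{v}) \imp C(\seq{v})\tau_{\mathrm{PurDel}_P}^k$.

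Chaining these implications gives $B_P^k(\seq{t}(\seq{v})) \imp C(\seq{v})\tau_{\mathrm{PurDel}_P}^k \lor \Lor_i B_P^k(\seq{t_i}(\seq{v}))$, which is the claim. The only delicate point is keeping track of free occurrences of $X$ inside $B_P^k$ versus in $P\tau_{\mathrm{PurDel}_P}^k$, and this is what the appeal to \Cref{monotonicity-applying-substitutions} resolves.
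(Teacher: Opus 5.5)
Your proof is correct and follows the paper's argument essentially step for step. Both handle the trivial case $k=0$, unfold $B_P^{k}$ via its one-step recursion, upgrade $B_P^{k-1}$ to $B_P^{k}$ with \Cref{b-resolvents-monotonicity}, and treat the remaining $L$-literals of $C$ with \Cref{monotonicity-applying-substitutions}. Your explicit renaming of the bound variables to $\seq{v'}$ and instantiation $\seq{v'} := \seq{v}$ only spells out a step the paper leaves implicit.
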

\begin{proof}
  We have $P = \underline{L(\seq{t}(\seq{v}))} \lor C(\seq{v}) \lor \Lor_{i=1}^p L(\seq{t_i}(\seq{v}))$ where $C$ does not contain $L^\perp$-literals and~$L(\seq{t}(\seq{v}))$ is the designated literal of $P$.
  Also let $\seq{c}$ be a tuple of fresh constants.
  If $k=0$ we have $B_P^k \iff \lambda \seq{c}. \bot$, thus~\mbox{$L(\seq{t}(\seq{v}))\tau_{\mathrm{PurDel}_P}^k \iff \neg B_P^k(\seq{u}) \iff \top$}, so we get $\models P\tau_{\mathrm{PurDel}_P}^k$.
  For $k > 0$ we have
  \begin{equation}
    \label{b-definition}
    B_P^{k} \iff \lambda \seq{c}. L(\seq{c})^\perp \land \Forall{\seq{v}}(\seq{c} \noeq \seq{t}(\seq{v}) \lor C(\seq{v}) \lor \Lor_{i=1}^p B_P^{k-1}(\seq{t_i}(\seq{v}))).
  \end{equation}
  Then we get
  $$P\tau_{\mathrm{PurDel}_P}^k \iff \neg B_P^k(\seq{t}(\seq{v})) \lor C(\seq{v})\tau_{\mathrm{PurDel}_P}^k \lor \Lor_{i=1}^p B_P^{k}(\seq{t_i}(\seq{v})).$$
  By \Cref{b-resolvents-monotonicity} we have $B_P^{k-1}(\seq{t_i}(\seq{v})) \imp B_P^{k}(\seq{t_i}(\seq{v}))$ so it suffices to show
  $$\models \neg B_P^k(\seq{t}(\seq{v})) \lor C(\seq{v})\tau_{\mathrm{PurDel}_P}^k \lor \Lor_{i=1}^p B_P^{k-1}(\seq{t_i}(\seq{v})).$$
  Since $C$ does not contain~$L^\perp$-literals and $B_P^k \imp \lambda \seq{c}. L(\seq{c})^\perp$ we get $C \imp C\tau_{\mathrm{PurDel}_P}^k$ by \Cref{monotonicity-applying-substitutions}.
  Thus it suffices to show $\models \neg B_P^k(\seq{t}(\seq{v})) \lor C(\seq{v}) \lor \Lor_{i=1}^p B_P^{k-1}(\seq{t_i}(\seq{v}))$.
  From \eqref{b-definition} we get~\mbox{$B_P^k(\seq{t}(\seq{v})) \imp C(\seq{v}) \lor \Lor_{i=1}^p B_P^{k-1}(\seq{t_i}(\seq{v}))$} which finishes the proof.
\end{proof}

Before we turn to proving $N \imp N\tau_{\mathrm{PurDel}_P}^k$ we show a property that $\subsumesLvelimtrans{L}$ inherits from ordinary subsumption, namely if $S \subsumes C$ and $S$ does not contain any $L$-literals, then $S$ subsumes $C \setminus C_L$.
\begin{lemma}
  \label{subsumesLvelimtrans-if-not-containing-L-literals}
  Let $S$ and $C$ be clauses and let $L$ be a literal.
  If $S \subsumesLvelimtrans{L} C$ and $S$ contains no $L$-literals, then $S \subsumesLvelimtrans{L} C \setminus C_L$.
\end{lemma}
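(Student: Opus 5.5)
We argue by induction on the length $n$ of a chain $S = S_0 \subsumesLvelim{L} S_1 \subsumesLvelim{L} \dots \subsumesLvelim{L} S_n = C$ witnessing $S \subsumesLvelimtrans{L} C$. It is convenient to prove a slightly stronger statement that can be iterated: if $S \subsumesLvelimtrans{L} C$, then $S \subsumesLvelimtrans{L} C'$ where $C'$ is obtained from $C$ by deleting the $L$-literals of $C$ that are not ``hit'' by $S$. When $S$ has no $L$-literals no literal of $C$ is hit, so this yields $C \setminus C_L$. Rather than formalise ``hit'', it is simpler to induct directly on the chain and track a clause $T_i$ with $S \subsumesLvelimtrans{L} T_i$ and $T_i \subseteq S_i$ that contains no $L$-literals. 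The base case uses $T_0 := S$.

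For the single step, suppose $T \subseteq S_i$ with $T$ free of $L$-literals and $S_i \subsumesLvelim{L} S_{i+1}$. Then there are $E$ and $\sigma$ such that $S_{i+1} \velimtrans E$, the substitution $\sigma$ is $L$-injective in $S_i$, and $S_i\sigma \subseteq E$. Hence $T\sigma \subseteq E$, and $T\sigma$ contains no $L$-literals because substitution does not change predicate symbols or polarity. Thus $T\sigma \subseteq E \setminus E_L$, and $\sigma$ is trivially $L$-injective in $T$. It remains to relate $E \setminus E_L$ to $S_{i+1} \setminus (S_{i+1})_L$. A $\velim$-step only removes a constraint literal $v \noeq t$, which is an equality literal and hence never an $L$-literal when $L$ is an $X$-literal, and then applies a substitution. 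So the argument of \Cref{velim-is-congruent-wrt-res-p} carries over: if $S_{i+1} \velimtrans E$, then $S_{i+1} \setminus (S_{i+1})_L \velimtrans E''$ for some $E''$ with $E \setminus E_L \subseteq E''$. Substitution may identify non-$L$-literals, but it never turns a non-$L$-literal into an $L$-literal, so this inclusion holds. Therefore $T \subsumesLvelim{L} S_{i+1} \setminus (S_{i+1})_L$, and we set $T_{i+1} := S_{i+1} \setminus (S_{i+1})_L$. We then have $T_{i+1} \subseteq S_{i+1}$, $T_{i+1}$ has no $L$-literals, and $S \subsumesLvelimtrans{L} T_i \subsumesLvelim{L} T_{i+1}$ by transitivity. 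At $i = n$ this gives $S \subsumesLvelimtrans{L} C \setminus C_L$.

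To start the induction from $T_0 = S$ we need $S \subsumesLvelimtrans{L} S$, which holds by reflexivity (\Cref{relationships-between-subsumption-relations}). In the step we also use that $T \subseteq S_i$ and $S_i \subsumesLvelim{L} S_{i+1}$ give $T \subsumesLvelim{L} S_{i+1}$ through the same $\sigma$, which is what we exploited above.

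The main obstacle is the commutation of $\velimtrans$ with removing $L$-literals, i.e.\ showing that $(S_{i+1}\setminus (S_{i+1})_L) \velimtrans E''$ with $E''$ matching $E \setminus E_L$ exactly as a set, not just up to inclusion. I would first prove a one-step version: if $D \velim D'$, then $D \setminus D_L \velim D' \setminus D'_L$. This holds because the constraint literal survives in $D \setminus D_L$, and $(D_0 \setminus (D_0)_L)[v\leftarrow t] = D_0[v\leftarrow t] \setminus (D_0[v\leftarrow t])_L$, since substitution preserves the predicate symbol and polarity of each literal. If $L$ could itself be an equality literal this fails, so I would note the implicit assumption that $L$ is an $X$-literal, as it is throughout the section.
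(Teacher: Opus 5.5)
Your proof is correct, and it takes a genuinely different route from the paper's. The paper inducts on $|C_L|$ and removes one $L$-literal $L'$ at a time. It treats $L'$ as a designated literal and invokes \Cref{res-subsumption-velim-trans-commutation}, relative to some pointed clause whose designated literal is $L^\perp$. The second alternative of that lemma needs an $L$-literal in $S$, so the paper concludes $S \subsumesLvelimtrans{L} C \setminus \mset{L'}$. You instead induct along the $\subsumesLvelim{L}$-chain itself and carry the $L$-free part of each intermediate clause forward. The only auxiliary fact you need is that $\velimtrans$ commutes with deleting all $L$-literals. Your one-step argument for that fact is right, and the inclusion $E \setminus E_L \subseteq E''$ would already have sufficed.

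The paper's version is shorter, given the machinery already developed. Yours is self-contained, strips all $L$-literals at once, and never touches resolution. That independence has real value here. In the written proof of \Cref{res-subsumption-velim-trans-commutation}, consider the sub-case where the induction hypothesis yields $S \subsumesLvelimtrans{L^\perp} S_k \setminus \mset{L''}$. The proof closes it with $S \subsumesLvelimtrans{L^\perp} \mathrm{Res}(P,Q)$, which is neither disjunct of that lemma. This sub-case does arise in the present application, because intermediate clauses $S_k$ of the chain may contain $L$-literals even though $S$ does not. It can be repaired by observing that Case~2 in the proof of \Cref{res-subsumption-velim-commutation} also yields $S_k \setminus \mset{L''} \subsumesLvelim{L^\perp} Q \setminus \mset{L'}$. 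Your argument avoids the issue entirely.

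Your caveat about $L$ is also warranted; the restriction is needed for the statement itself, not only for your method. Suppose $L$ is a negative equality literal, and take $S = B(a)$ and $C = v \noeq a \lor B(v)$. Then $C \velim B(a)$, so $S \subsumesLvelim{L} C$, and $C \setminus C_L = B(v)$. But $S \subsumesLvelimtrans{L} B(v)$ would give $B(a) \imp B(v)$ by \Cref{subsumption-velim-implies-logical-implication}, which is false. Only this case is problematic: for a positive equality literal $L$, the constraint literals are not $L$-literals. Since the lemma is only applied with $L$ dual to a designated $X$-literal, nothing downstream is affected.
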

\begin{proof}
    We show this by induction on the size of $C_L$.
    If this set is empty, then $C \setminus C_L = C$ and the statement follows directly from the assumption.
    Now let $C_L$ have size $n+1$.
    Thus there is at least one $L$-literal in $C_L$, call it $L'$.
    Since $S$ does not contain any $L$-literals we get that the first case of \Cref{res-subsumption-velim-trans-commutation} must hold, i.e., $S \subsumesLvelimtrans{L} C \setminus \mset{L'}$.
    Now the size of~\mbox{$(C \setminus \mset{L'})_L$} is $n$.
    So by induction hypothesis we get $S \subsumesLvelimtrans{L} (C \setminus \mset{L'}) \setminus (C \setminus \mset{L'})_L$.
    Since $(C \setminus \mset{L'}) \setminus (C \setminus \mset{L'})_L = C \setminus C_L$ the statement follows.
\end{proof}

We now turn to proving $N \imp N\tau_{\mathrm{PurDel}_P}^k$.
To do that we show the following lemma which closely corresponds to \Cref{purified-implies-trans-subsume-velim-is-closed-under-resolution}.
Whereas in \Cref{purified-implies-trans-subsume-velim-is-closed-under-resolution} we showed how $P$ being purified in $N$ implies $N~\subsumesLvelimtrans{L^\perp}~\mathrm{Res}_P^{<\omega}(N)$ we now show that $P$ being $k$-acyclically purified in $N$ implies $N \subsumesLvelimtrans{L^\perp} R$ for all clauses $R$ that result from $C \in N$ by replacing $L^\perp$-literals by clauses from $B_P^{k}$.
We first give a shorthand notation for this replacement operation.
\begin{definition}
  Let $C$ be a clause, $L$ a literal with predicate symbol $X$ of arity $k$, $A \subseteq C_L$, $\seq{c}$ a $k$-tuple of fresh constants, $B(\seq{c})$ a clause set and $E: A \to B(\seq{c}): L' \mapsto E_{L'}(\seq{c}, \seq{z_{L'}})$.
  Then set
  $$
  C^E := (C \setminus A) \lor \Lor_{L(\seq{t}) \in A} E_{L(\seq{t})}(\seq{t}, \seq{z_{L(\seq{t})}})
  $$
\end{definition}
The subset $A$ defines which literals are to be replaced and $E$ defines which clauses from $B(\seq{c})$ they are replaced with.
The fresh constants $\seq{c}$ serve as placeholders for substituting the terms of the literals $L' \in A$ into the clause $E_{L'}(\seq{c}, \seq{z_{L'}})$.

\begin{lemma}
  \label{acyclically-purified-implies-subsumesLvelimtrans-B-closure}
  Let $N$ be a clause set, $P$ a pointed clause with designated literal $L$ such that $P$ is $k$-acyclically purified in $N$.
  Also let $m \geq k$, let $C$ be a clause, $A \subseteq C_{L^\perp}$ and $E: A \to B_P^m(\seq{c})$ where $\seq{c}$ is a tuple of fresh constants.
  Then $N \subsumesLvelimtrans{L^\perp} C$ implies $N \subsumesLvelimtrans{L^\perp} C^E$.
\end{lemma}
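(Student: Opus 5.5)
We argue by induction along the acyclic purification subsumption graph, using a strengthened statement. Fix a $P$-purification subsumption $s$ for $N$ with $G(N,P,s)$ acyclic and longest path length $k$. For a vertex $S \in N$ let $h(S)$ be the length of the longest path in $G(N,P,s)$ starting at $S$, so $h(S) \leq k$.

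The strengthened claim is: for every $S \in N$, every clause $C$ with $S \subsumesLvelimtrans{L^\perp} C$, every $A \subseteq C_{L^\perp}$, every $m \geq h(S)$ and every $E : A \to B_P^m(\seq{c})$, we have $N \subsumesLvelimtrans{L^\perp} C^E$. The lemma follows by taking $S$ with $S \subsumesLvelimtrans{L^\perp} C$ and noting $m \geq k \geq h(S)$. The proof is by induction on $|A|$, and within that on $h(S)$; more precisely we use the lexicographic order on $(h(S), |A|)$, or simply process the literals of $A$ one at a time.

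If $A = \emptyset$, then $C^E = C$ and there is nothing to show. Otherwise pick $L' \in A$ with $E(L') = R(\seq{c},\seq{z}) \in B_P^m(\seq{c})$. We apply \Cref{res-subsumption-velim-trans-commutation} to $S \subsumesLvelimtrans{L^\perp} C[\underline{L'}]$.

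\emph{First case:} $S \subsumesLvelimtrans{L^\perp} C \setminus \mset{L'}$. By induction on $|A|$ we get $N \subsumesLvelimtrans{L^\perp} (C\setminus\mset{L'})^{E|_{A\setminus\{L'\}}}$. This clause is a subclause of $C^E$ up to renaming, hence it $\subsumesLvelim{L^\perp}$-subsumes $C^E$ by \Cref{relationships-between-subsumption-relations}, and we conclude by transitivity.

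\emph{Second case:} there is $L'' \in S_{L^\perp}$ with $\mathrm{Res}(P, S[\underline{L''}]) \subsumesLvelimtrans{L^\perp} \mathrm{Res}(P, C[\underline{L'}])$. The edge $S[\underline{L''}] \overset{L''}{\to} s(S[\underline{L''}])$ gives $S' := s(S[\underline{L''}])$ with $h(S') < h(S)$ and $S' \subsumesLvelimtrans{L^\perp} \mathrm{Res}(P, C[\underline{L'}])$. In particular $h(S) \geq 1$, so $m \geq 1$. Now distinguish the form of $R$. If $R = L(\seq{c})^\perp$, then $C^E$ is just a renaming of $C$ on that literal and we reduce to a smaller $A$. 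Otherwise
\[
R = \seq{c} \noeq \seq{t}(\seq{v}) \lor C_P(\seq{v}) \lor \Lor_i R_i(\seq{t_i}(\seq{v}),\seq{z_i})
\]
with $R_i \in B_P^{m-1}(\seq{c})$. Then $C^E$ is, up to variable elimination of the constraint $\seq{s}\noeq\seq{t}(\seq{v})$ (where $L' = L(\seq{s})^\perp$), exactly $\mathrm{Res}(P, C[\underline{L'}])^{E'}$. Here $A'$ consists of $A\setminus\{L'\}$ together with the $L^\perp$-literals $L(\seq{t_i}(\seq{v}))^\perp$ of $P$, and $E'$ maps the old literals as $E$ does and the new ones to $R_i \in B_P^{m-1} \subseteq B_P^{m}$, the inclusion holding as sets by the obvious induction. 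Since $m-1 \geq h(S')$, we use $B_P^{m-1}$ for the new literals and note that the old ones lie in $B_P^m$. Handling this mismatch of levels is the main obstacle. To resolve it, I would state the strengthened claim as ``$E$ maps into $B_P^{m}$ with $m \geq h(S)$'' and use the fact that $B_P^{j}(\seq{c}) \subseteq B_P^{j+1}(\seq{c})$ as clause sets, so that all replacements in $C^E$ are drawn from $B_P^{m'}$ for one suitable $m'$.

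This gives a measure problem: $h$ decreases, but the old literals keep level $m$ while we need level $\leq h(S') + (\text{slack})$. The fix is to order the induction by the multiset of levels of the replacement clauses $E(L')$, which are finite derivation depths. Each step either removes a literal (first case) or replaces one literal of depth $d$ by finitely many of depth $< d$ while descending to $S'$. The real invariant is \emph{depth of $E(L') \leq h(\text{current subsuming clause})$ for all $L'$}, which is not preserved literally. I would therefore instead induct primarily on the multiset of depths (Dershowitz--Manna well-founded), carrying along that the current subsuming clause may be any clause of $N$. I expect that verifying the second case is compatible with this invariant (the $C_P$ part and the $\velimtrans$ reconciliation via \Cref{velim-is-congruent-wrt-res-p}) will be the fiddly core.
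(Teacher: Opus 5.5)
\paragraph{Gap: the level mismatch is closed by a false inclusion.} Your skeleton is the right one and matches the paper's: lexicographic induction on $(h(S),|A|)$, the case split of \Cref{res-subsumption-velim-trans-commutation}, and descent along the edge to $S' = s(S[\underline{L''}])$. But the step you call the main obstacle is closed with a false claim. The inclusion $B_P^{j}(\seq{c}) \subseteq B_P^{j+1}(\seq{c})$ does \emph{not} hold as clause sets. Take $P = \underline{\neg X(v)} \lor X(f(v))$:
\begin{itemize}
\item $B_P^0(c) = \mset{\bot}$;
\item $B_P^1(c) = \mset{X(c),\ c \noeq v}$;
\item $B_P^2(c) = \mset{X(c),\ c \noeq v \lor X(f(v)),\ c \noeq v \lor f(v) \noeq v'}$.
\end{itemize}
So $\bot \in B_P^0(c) \setminus B_P^1(c)$ and $c \noeq v \in B_P^1(c) \setminus B_P^2(c)$. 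Only the semantic monotonicity $B_P^j \imp B_P^{j+1}$ of \Cref{b-resolvents-monotonicity} holds, and that is of no use for the syntactic relation $\subsumesLvelimtrans{L^\perp}$.

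With a single level $m$, your second case is therefore stuck. The old literals have replacements in $B_P^m$, the new literals $L(\seq{t_i}(\seq{v}))^\perp$ have replacements in $B_P^{m-1}$, and neither set contains the other. The ``measure problem'' you then describe is a red herring. The side condition $m \geq h(S)$ is a lower bound, so passing to $S'$ with $h(S') < h(S)$ can never violate it.

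The Dershowitz--Manna fallback does not help either. Your invariant has the inequality the wrong way round: one needs level $\geq$ height. Moreover, if the current subsumer may be an arbitrary clause of $N$, acyclicity is no longer used at all. Yet the lemma fails without it: in the example at the end of \Cref{sec.first-order-witnesses}, $B_{P_1}^k \iff \lambda c.\bot$ for all $k$.

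\paragraph{Two ways to close it.} (a) Keep your order, but let each pending literal carry its own level. Require $E(L') \in B_P^{m_{L'}}(\seq{c})$ with $m_{L'} \geq h(S)$ for every $L' \in A$.
\begin{itemize}
\item In your second case, the old literals keep levels $\geq h(S) > h(S')$.
\item The new literals get level $m_{L'} - 1 \geq h(S) - 1 \geq h(S')$.
\item So the induction hypothesis at $(h(S'), |A'|)$ applies, and $\mathrm{Res}(P, C[\underline{L'}])^{E'}$ is $C^E$ up to renaming.
\item No variable elimination is needed, since the constraint is already part of $E(L')$. If ground literals happen to coincide, you get a subclause of $C^E$, which suffices.
\end{itemize}

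(b) Follow the paper's order.
\begin{itemize}
\item Pick one literal $L^\ast$ with a non-trivial replacement.
\item First replace all of $A \setminus \mset{L^\ast}$ using the hypothesis at the same height.
\item Only then apply \Cref{res-subsumption-velim-trans-commutation} to the resulting subsumer $S_1$ and the still-unreplaced $L^\ast$.
\end{itemize}
After resolution, the already-replaced clauses are simply part of the clause. Only the $p$ literals coming from $P$ remain pending, all with replacements in $B_P^{m-1}$ and $m-1 \geq l-1$, so a single level suffices. This route, however, needs the conclusion strengthened to say that the new subsumer again has height at most $l$ (the paper's $N_l$). Otherwise you cannot conclude $h(s(S_1[\underline{L''}])) \leq m-1$, so your weaker conclusion $N \subsumesLvelimtrans{L^\perp} C^E$ would not carry the induction.
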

\begin{proof}
  Since $P$ is $k$-acyclically purified in $N$ there exists a $P$-purification subsumption $s$ for~$N$ such that $G(N,P,s)$ is acyclic and has maximum path length $k$.
  For $S \in N$ denote by $k(S)$ the maximum length of a path in $G(N,P,s)$ starting from $S$.
  For $l \in \N$, denote by $N_l$ the set of clauses $S \in N$ with $k(S) \leq l$.
  Then note that $N_l = N$ for all $l \geq k$.

  We now show a slightly stronger statement than the statement of the lemma which we denote by $Q(l, n)$ for $l, n \in \N$:
  For all $m \geq l$, clauses $C$, $A \subseteq C_{L^\perp}$ with $\abs{A} \leq n$ and~\mbox{$E: A \to B_P^m(\seq{c})$} we have that $N_l \subsumesLvelimtrans{L^\perp} C$ implies $N_l \subsumesLvelimtrans{L^\perp} C^E$.
  We then get the statement of the lemma from $Q(k, \abs{A})$.

  We now show $Q(l,n)$ for all $l, n \in \N$ by lexicographic induction on $l$ and $n$, i.e., we show the statement for given $l$ and $n$ by using the statement for all $l', n' \in \N$ with either $l' < l$ or~$l' = l$ and $n' < n$.
  We first show two simpler cases, namely $n=0$ and $l=0$.

  First consider $n=0$.
  By assumption we have $N_l \subsumesLvelimtrans{L^\perp} C$.
  Since $n=0$ we also have~\mbox{$\abs{A} = 0$}, i.e., $A$ is empty and therefore $C^E = C$.
  This means $N_l \subsumesLvelimtrans{L^\perp} C^E$.

  Now consider $l=0$. 
  Since $N \subsumesLvelimtrans{L^\perp} C$ there is an $S \in N_l$ with $S \subsumesLvelimtrans{L^\perp} C$.
  We need to show there is an $S' \in N_l$ with $S' \subsumesLvelimtrans{L^\perp} C^E$.
  Since $k(S) \leq l = 0$ we have that $S$ has no outgoing edges in $G(N,P,s)$.
  Therefore $S$ contains no $L^\perp$-literals and so $S \subsumesLvelimtrans{L^\perp} C \setminus C_{L^\perp}$ by \Cref{subsumesLvelimtrans-if-not-containing-L-literals}.
  Since $A \subseteq C_{L^\perp}$ we get $C \setminus C_{L^\perp} \subseteq C \setminus A$ and thus by \Cref{relationships-between-subsumption-relations} and transitivity of $\subsumesLvelimtrans{L^\perp}$ we get
  $$S \subsumesLvelimtrans{L^\perp} (C \setminus A) \lor \Lor_{L(\seq{t})^\perp \in A} E_{L(\seq{t})^\perp}(\seq{t}, \seq{z_{L(\seq{t})^\perp}}) = C^E.$$
  Thus picking $S' = S$ finishes this case.

  Now to the induction step.
  Let $m \geq l$, let $C$ be a clause, $A \subseteq C_{L^\perp}$ with $\abs{A} \leq n$ and let $E: A \to B_P^m(\seq{c}): L' \mapsto E_{L'}(\seq{c}, \seq{z_{L'}})$.
  Furthermore let $N_l \subsumesLvelimtrans{L^\perp} C$.
  We need to show~\mbox{$N_l \subsumesLvelimtrans{L^\perp} C^E$}, i.e., there exists an $S' \in N_l$ with $S \subsumesLvelimtrans{L^\perp} C^E$.
  By induction hypothesis we can use $Q(l', n')$ for $l', n' \in \N$ with either $l' < l$ or $l' = l$ and $n' < n$.
  Since we already showed the cases $l=0$ and $n=0$ we can further assume $l>0$ and $n>0$.

  Since $N_l \subsumesLvelimtrans{L^\perp} C$ there is an $S \in N_l$ such that $S \subsumesLvelimtrans{L^\perp} C$.
  Say $P$ has the form~\mbox{$P = \underline{L(\seq{t}(\seq{v}))} \lor C_P(\seq{v}) \lor \Lor_{i=1}^p L(\seq{t_i}(\seq{v}))^\perp$} where $C_P$ does not contain any $L^\perp$-literals.
  Since $m \geq l > 0$ and using the definition of $B_P^m(\seq{c})$ we have for all $L' \in A$ that either~\mbox{$E(L') = L(\seq{c})^\perp$} or $E(L') = \seq{c} \noeq \seq{t}(\seq{v}) \lor C_P(\seq{v}) \lor \Lor_{i=1}^p R_i(\seq{t_i}(\seq{v}), \seq{z_i})$ for some~\mbox{$R_i(\seq{c}, \seq{z_i}) \in B_P^{m-1}(\seq{c})$}.
  If for all $L' \in A$ we have $E(L') = L(\seq{c})^\perp$, then $C^E = C$ and we are done by picking $S' = S$.
  
  Otherwise there is a $1 \leq j \leq p$ with $L^\ast := L(\seq{s_j}(\seq{w}))^\perp \in A$ such that 
  $$E(L^\ast) = \seq{c} \noeq \seq{t}(\seq{v}) \lor C_P(\seq{v}) \lor \Lor_{i=1}^p R_i(\seq{t_i}(\seq{v}), \seq{z_i})$$ for some $R_1(\seq{c}, \seq{z_1'}), \dots, R_p(\seq{c}, \seq{z_p'}) \in B_P^{m-1}(\seq{c})$.
  Now consider $A^\ast := A \setminus \mset{L^\ast}$ and the function~\mbox{$E^\ast : A^\ast \to B_P^{m-1}(\seq{c}): L' \mapsto E(L')$}, i.e., the restriction of $E$ to $A^\ast$ and set
  $$
  P' := C^{E^\ast}[\underline{L^\ast}] = \underline{L^\ast} \lor (C \setminus A) \lor \Lor_{L(\seq{s})^\perp \in A^\ast} E_{L(\seq{s})^\perp}(\seq{s}, \seq{z_{L(\seq{s})}}).
  $$ 
  Then $\abs{A^\ast} = \abs{A} - 1 < n$.
  Now we use $Q(l, \abs{A^\ast})$ to obtain an $S_1 \in N_l$ with $S_1 \subsumesLvelimtrans{L^\perp} P'$.
  By \Cref{res-subsumption-velim-trans-commutation} there are two cases.

  Case 1: $S_1 \subsumesLvelimtrans{L^\perp} P' \setminus \mset{L^\ast}$.
  Then we have 
  $$
  S_1 \subsumesLvelimtrans{L^\perp} (C \setminus A) \lor \Lor_{L(\seq{s})^\perp \in A^\ast} E_{L(\seq{s})^\perp}(\seq{s}, \seq{z_{L(\seq{s})}}) \subseteq C^E.
  $$
  Thus, picking $S' = S_1$ finishes the proof.

  Case 2: There exists an $L^\perp$-literal $L' \in S_1$ such that~\mbox{$\mathrm{Res}(P, S_1[\underline{L'}]) \subsumesLvelimtrans{L^\perp} \mathrm{Res}(P, P')$}.
  Then $G(N,P,s)$ has an edge $S_1 \overset{L'}{\longrightarrow} s(S_1)$ and we have~\mbox{$S_2 := s(S_1) \subsumesLvelim{L^\perp} \mathrm{Res}(P, S_1[\underline{L'}])$}.
  Since we have $k(S_1) \leq l$ we get $k(S_2) < l$ and
  since $\subsumesLvelimtrans{L^\perp}$ is transitive we further get~\mbox{$S_2 \subsumesLvelimtrans{L^\perp} \mathrm{Res}(P, P') := C_2$}.
  Note that
  $$
  C_2 = (C \setminus A) \lor \Lor_{L(\seq{s})^\perp \in A^\ast} E_{L(\seq{s})^\perp}(\seq{s}, \seq{z_{L(\seq{s})}}) \lor \seq{s_j}(\seq{w}) \noeq \seq{t}(\seq{v}) \lor C_P(\seq{v}) \lor \Lor_{i=1}^p L(\seq{t_i}(\seq{v}))^\perp.
  $$

  Now consider the set of $L^\perp$-literals $A_2 := \mset{L(\seq{t_i}(\seq{v}))^\perp \suchthat 1 \leq i \leq p} \subseteq (C_2)_{L^\perp}$ and the function~\mbox{$E_2 : A_2 \to B_P^{m-1}(\seq{c}): L(\seq{t_i}(\seq{v}))^\perp \mapsto R_i(\seq{t_i}(\seq{v}, \seq{z_i}))$}.
  Since $R_i(\seq{c}, \seq{z_i}) \in B_P^{m-1}(\seq{c})$ and~\mbox{$m-1 \geq l - 1$} we can use $Q(l-1, \abs{A_2})$ to get an $S_2' \in N_{l-1}$ such that $S_2' \subsumesLvelimtrans{L^\perp} C_2^{E_2}$.
  Note that
  \begin{align*}
    C_2^{E_2}
    &= (C \setminus A) \lor \Lor_{L(\seq{s})^\perp \in A^\ast} E_{L(\seq{s})^\perp}(\seq{s}, \seq{z_{L(\seq{s})}}) \lor \seq{s_j}(\seq{w}) \noeq \seq{t}(\seq{v}) \lor C_P(\seq{v}) \lor \Lor_{i=1}^p R_i(\seq{t_i}(\seq{v}), \seq{z_i}) \\
    &= (C \setminus A) \lor \Lor_{L(\seq{s})^\perp \in A^\ast} E_{L(\seq{s})^\perp}(\seq{s}, \seq{z_{L(\seq{s})}}) \lor E_{L^\ast}(\seq{s_j}(\seq{w})) \\
    &= (C \setminus A) \lor \Lor_{L(\seq{s})^\perp \in A} E_{L(\seq{s})^\perp}(\seq{s}, \seq{z_{L(\seq{s})}})\\
    &= C^E.
  \end{align*}
  Also note that $N_{l-1} \subseteq N_l$, so we have $S_2' \in N_l$.
  Therefore picking $S' = S_2'$ concludes the proof.
\end{proof}

The following lemma is an analog to \Cref{purified-implies-resolution-closure}, but instead of the resolution closure we consider the set of clauses resulting from clauses in $N$ by replacing $L^\perp$-literals by $B_P^m$.
\begin{lemma}
  \label{acyclically-purified-implies-B-closure}
  Let $N$ be a clause set and let $P$ be a pointed clause with designated literal $L$ such that $P$ is $k$-acyclically purified in $N$.
  Furthermore let $m \geq k$ and $C \in N$.
  Then~\mbox{$C = C_0(\seq{w}) \lor \Lor_{i=1}^q L(\seq{s_i}(\seq{w}))^\perp$}
  where $C_0$ contains no $L^\perp$-literals and
  we have
  $$N \imp \Forall{\seq{w}}(C_0(\seq{w}) \lor \Lor_{i=1}^q B_P^m(\seq{s_i}(\seq{w})))$$.
\end{lemma}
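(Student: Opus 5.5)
The plan is to reduce the statement to \Cref{acyclically-purified-implies-subsumesLvelimtrans-B-closure} and then convert the resulting subsumption facts into a single semantic implication, in the same way the formula $\Forall{\seq{w}}(C_0(\seq{w}) \lor \Lor_{i=1}^q \mathrm{\ell Res}_P(\seq{s_i}(\seq{w})))$ was obtained from a set of clauses in \Cref{local-resolution-closure-describes-clause-resolution-closure}.

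The decomposition of $C$ is immediate: take $C_0$ to be $C \setminus C_{L^\perp}$, and let $L(\seq{s_1}(\seq{w}))^\perp, \dots, L(\seq{s_q}(\seq{w}))^\perp$ enumerate $C_{L^\perp}$.

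First I unfold $B_P^m(\seq{c})$ as the conjunction of the universal closures of its clauses. Exactly as in the proof of \Cref{local-resolution-closure-describes-clause-resolution-closure} (pull $\Forall{\seq{z_i}}$ inward, $\Forall{\seq{w}}$ outward, and distribute the $q$ independent conjunctions over the disjunction), the target formula is equivalent to the clause set
\[
\mset{C_0(\seq{w}) \lor \Lor_{i=1}^q R_i(\seq{s_i}(\seq{w}), \seq{z_i}) \suchthat R_1(\seq{c},\seq{z_1}), \dots, R_q(\seq{c},\seq{z_q}) \in B_P^m(\seq{c})}.
\]
Here the variables $\seq{z_i}$ are renamed apart from each other and from $\seq{w}$. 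Each such clause is precisely $C^E$ for $A := C_{L^\perp}$ and $E(L(\seq{s_i}(\seq{w}))^\perp) := R_i$. If two indices give the same literal, I choose $R_i$ per literal; repeated disjuncts are idempotent, so it suffices to range over functions $E: A \to B_P^m(\seq{c})$.

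Hence it suffices to show $N \imp C^E$ for every such $E$. Since $C \in N$, we have $N \subsumesLvelimtrans{L^\perp} C$ by reflexivity (\Cref{relationships-between-subsumption-relations}). \Cref{acyclically-purified-implies-subsumesLvelimtrans-B-closure}, applicable since $m \geq k$, then yields $N \subsumesLvelimtrans{L^\perp} C^E$. This means some $S \in N$ satisfies $S \subsumesLvelimtrans{L^\perp} C^E$, and \Cref{subsumption-velim-implies-logical-implication} gives $S \imp C^E$, so $N \imp C^E$. Taking the conjunction over all $E$, which is possibly infinite but harmless because $N \imp$ each conjunct, gives the claim.

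The main obstacle is bookkeeping rather than mathematics. It consists of checking that the quantifier manipulation for $B_P^m$ goes through verbatim, which needs variable-disjointness of the $\seq{z_i}$ and the fresh-constant placeholder convention, and of making the correspondence between the clauses above and the $C^E$ notation precise when literals in $C_{L^\perp}$ coincide or share variables. Once that identification is settled, the rest is a direct application of the preceding lemma.
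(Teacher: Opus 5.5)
Your proposal is correct and follows the paper's route. For every choice of $R_1,\dots,R_q \in B_P^m(\seq{c})$ you apply the preceding lemma, turn the resulting subsumption into a semantic implication, then take the (possibly infinite) conjunction and distribute it over the disjunction. You also make explicit two points the paper leaves implicit: the premise $N \subsumesLvelimtrans{L^\perp} C$, obtained from $C \in N$ and reflexivity, and the identification of each resulting clause with $C^E$ for $A = C_{L^\perp}$.
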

\begin{proof}
  Note that $C = C_0(\seq{w}) \lor \Lor_{i=1}^q L(\seq{s_i}(\seq{w}))^\perp$ for a clause $C_0$ which does not contain $L^\perp$-literals and let $\seq{c}$ be a tuple of fresh constants.
  By \Cref{acyclically-purified-implies-subsumesLvelimtrans-B-closure} we have for all $m \geq k$ and~\mbox{$R_1(\seq{c}, \seq{z_1}), \dots, R_q(\seq{c}, \seq{z_q}) \in B_P^m(\seq{c})$} that
  \begin{equation*}
  N \subsumesLvelimtrans{L^\perp} C_0(\seq{w}) \lor \Lor_{i=1}^q R_i(\seq{s_i}(\seq{w}), \seq{z_i}).
  \end{equation*}
  By \Cref{subsumption-velim-implies-logical-implication} we get $N \imp C_0(\seq{w}) \lor \Lor_{i=1}^q R_i(\seq{s_i}(\seq{w}), \seq{z_i})$.
  Taking the conjunction over all~\mbox{$R_1(\seq{c}, \seq{z_1}), \dots, R_q(\seq{c}, \seq{z_q}) \in B_P^m(\seq{c})$} we get
  $$N \imp \Land_{R_1(\seq{c}, \seq{z_1}) \in B_P^m(\seq{c})} \dots \Land_{R_q(\seq{c}, \seq{z_q}) \in B_P^m(\seq{c})} (C_0(\seq{w}) \lor \Lor_{i=1}^q R_i(\seq{s_i}(\seq{w}), \seq{z_i})).$$
  Using the distributivity of $\land$ and $\lor$ we move the conjunctions into the disjunction to get
  $$N \imp C_0(\seq{w}) \lor \Lor_{i=1}^q \Land_{R_i(\seq{c}, \seq{z_1}) \in B_P^m(\seq{c})} R_i(\seq{s_i}(\seq{w}), \seq{z_i})$$
  which is equivalent to $N \imp C_0(\seq{w}) \lor \Lor_{i=1}^q B_P^m(\seq{s_i}(\seq{w}))$.
  This implies 
  $$N \imp \Forall{\seq{w}}( C_0(\seq{w}) \lor \Lor_{i=1}^q B_P^m(\seq{s_i}(\seq{w})) ).$$
\end{proof}

Using this lemma we can now show the implication $N \imp N\tau_{\mathrm{PurDel}_P}^{k}$.
\begin{lemma}
  \label{N-implies-N-with-finite-witness}
  Let $N$ be a clause set, $P$ a pointed clause such that $P$ is $k$-acyclically purified in $N$.
  Then for all $m \geq k$ we have $N \imp N\tau_{\mathrm{PurDel}_P}^m$.
\end{lemma}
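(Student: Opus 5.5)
The plan is to reduce the statement, clause by clause, to \Cref{acyclically-purified-implies-B-closure}, in the same way that \Cref{N-implies-N-with-lres} was reduced to \Cref{local-resolution-closure-application-is-implied-by-clause-resolution-clsoure}. Fix $m \geq k$ and a clause $C \in N$. It suffices to show $N \imp C\tau_{\mathrm{PurDel}_P}^m$, since $N\tau_{\mathrm{PurDel}_P}^m$ is the conjunction of these instances.

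Let $L$ be the designated literal of $P$, with predicate symbol $X$. Write $C = C_0(\seq{w}) \lor \Lor_{i=1}^q L(\seq{s_i}(\seq{w}))^\perp$, where $C_0$ contains no $L^\perp$-literals. Applying $\tau_{\mathrm{PurDel}_P}^m$ affects each $L^\perp$-literal as follows.
\begin{itemize}
\item If $L$ is negative, then $L^\perp$ is an atom $X(\cdot)$, so it becomes $B_P^m(\cdot)$.
\item If $L$ is positive, then $L^\perp = \neg X(\cdot)$ becomes $\neg\neg B_P^m(\cdot) \iff B_P^m(\cdot)$.
\end{itemize}
In both cases we get
$$C\tau_{\mathrm{PurDel}_P}^m \iff \Forall{\seq{w}}\bigl(C_0(\seq{w})\tau_{\mathrm{PurDel}_P}^m \lor \Lor_{i=1}^q B_P^m(\seq{s_i}(\seq{w}))\bigr).$$

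Next I handle the remaining $X$-literals in $C_0$, all of which are $L$-literals. I apply \Cref{monotonicity-applying-substitutions} with $\alpha = B_P^m$. Its hypothesis $B_P^m \imp \lambda\seq{u}. L(\seq{u})^\perp$ holds for $m \geq 1$, because $L(\seq{c})^\perp \in B_P^m(\seq{c})$ is one of the conjuncts. For $m = 0$ it holds trivially, because $B_P^0 \iff \lambda\seq{c}.\bot$. This gives $C_0 \imp C_0\tau_{\mathrm{PurDel}_P}^m$, and hence
$$\Forall{\seq{w}}\bigl(C_0(\seq{w}) \lor \Lor_i B_P^m(\seq{s_i}(\seq{w}))\bigr) \imp C\tau_{\mathrm{PurDel}_P}^m.$$

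Finally, \Cref{acyclically-purified-implies-B-closure}, applied with $m \geq k$, gives that $N$ implies the left-hand side of this implication. Chaining the two implications yields $N \imp C\tau_{\mathrm{PurDel}_P}^m$.

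The only delicate points are bookkeeping. The first is checking the polarity cases so that the double negation cancels correctly when $L$ is positive. The second is confirming that \Cref{monotonicity-applying-substitutions} applies even in the degenerate case $m = 0$. Note that if $m = 0$ is allowed at all, then $k = 0$, so $N$ has no edges in the purification subsumption graph; the general argument covers this case anyway.
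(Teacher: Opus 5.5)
Your proposal is correct and is essentially the paper's own proof. You reduce to a single clause $C \in N$, obtain $N \imp \Forall{\seq{w}}(C_0(\seq{w}) \lor \Lor_{i=1}^q B_P^m(\seq{s_i}(\seq{w})))$ from \Cref{acyclically-purified-implies-B-closure}, apply \Cref{monotonicity-applying-substitutions} with $\alpha = B_P^m$ to get $C_0 \imp C_0\tau_{\mathrm{PurDel}_P}^m$, and identify $B_P^m(\seq{s_i}(\seq{w}))$ with $L(\seq{s_i}(\seq{w}))^\perp\tau_{\mathrm{PurDel}_P}^m$, exactly as the paper does; your extra bookkeeping on the polarity cases and the degenerate case $m=0$ is correct and only makes explicit what the paper leaves implicit.
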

\begin{proof}
  Let $P$ have designated literal $L$ and let $C \in N$.
  Then $C = C_0(\seq{w}) \lor \Lor_{i=1}^q L(\seq{s_i}(\seq{w}))^\perp \in N$ where $C_0$ does not contain $L^\perp$-literals.
  By \Cref{acyclically-purified-implies-B-closure} we get
  $$N \imp \Forall{\seq{w}} (C_0(\seq{w}) \lor \Lor_{i=1}^q B_P^m(\seq{s_i}(\seq{w}))).$$
  From the definition of $B_P^m$ we get $B_P^m \imp \lambda \seq{u}. L(\seq{u})^\perp$.
  Thus we can apply \Cref{monotonicity-applying-substitutions} to get~\mbox{$C_0(\seq{w}) \imp C_0(\seq{w})\tau_{\mathrm{PurDel}_P}^m$}.
  Therefore 
  $$N \imp C_0(\seq{w})\seq\tau_{\mathrm{PurDel}_P}^m \lor \Lor_{i=1}^q B_P^m(\seq{s_i}(\seq{w})).$$
  Note that $B_P^m(\seq{s_i}(\seq{w})) \iff L(\seq{s_i}(\seq{w}))^\perp\tau_{\mathrm{PurDel}_P}^m$.
  Thus we get 
  $$N \imp C_0(\seq{w})\seq\tau_{\mathrm{PurDel}_P}^m \lor \Lor_{i=1}^q L(\seq{s_i}(\seq{w})^\perp)\tau_{\mathrm{PurDel}_P}^m \iff C\tau_{\mathrm{PurDel}_P}^m.$$
  Since $C$ was arbitrary we get $N \imp N\tau_{\mathrm{PurDel}_P}^m$.
\end{proof}

Finally, this shows that $\tau_{\mathrm{PurDel}_P}^k$ is witness-transforming.
\begin{lemma}
  \label{first-order-witness-transforming-across-purified-clause-deletion}
  Let $N$ be a clause set, $P$ a pointed clause such that $P$ is $k$-acyclically purified in $N$.
  Then for all $m \geq k$ the predicate substitution $\tau_{\mathrm{PurDel}_P}^m$ is witness-transforming across the purified clause deletion step $N \uplus \mset{P} / N$.
\end{lemma}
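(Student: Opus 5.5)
The proof will mirror that of \Cref{local-resolution-closure-is-witness-transforming-across-purified-clause-deletion}, with the finite approximants $B_P^m$ in place of $\mathrm{\ell Res}_P$. By definition, $\tau_{\mathrm{PurDel}_P}^m$ is witness-transforming across $N \uplus \mset{P} / N$ exactly when
$$N \imp (N \union \mset{P})\tau_{\mathrm{PurDel}_P}^m.$$
Since a clause set denotes the conjunction of its clauses and predicate substitution commutes with conjunction, I will split this requirement into two parts:
\begin{enumerate}
  \item $N \imp N\tau_{\mathrm{PurDel}_P}^m$;
  \item $N \imp P\tau_{\mathrm{PurDel}_P}^m$.
\end{enumerate}

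For the first part, fix $m \geq k$. Because $P$ is $k$-acyclically purified in $N$, \Cref{N-implies-N-with-finite-witness} applies directly and gives $N \imp N\tau_{\mathrm{PurDel}_P}^m$.

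For the second part, \Cref{finite-witness-satisfies-pointed-clause} holds for every $k \in \N$, so in particular for $m$. It gives $\models P\tau_{\mathrm{PurDel}_P}^m$. A valid formula is implied by anything, so $N \imp P\tau_{\mathrm{PurDel}_P}^m$.

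Combining the two parts yields $N \imp (N \union \mset{P})\tau_{\mathrm{PurDel}_P}^m$, which is the witness-transforming condition.

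There is no real obstacle here, since all the substantive work sits in the two cited lemmas. The only points to check are bookkeeping:
\begin{itemize}
  \item The hypotheses of \Cref{N-implies-N-with-finite-witness} ($k$-acyclic purification and $m \geq k$) match those of the statement.
  \item The premise of the deletion step really is $N \uplus \mset{P}$, so $(N \uplus \mset{P})\tau = N\tau \union \mset{P\tau}$.
\end{itemize}
As a remark after the proof, I would add that $B_P^m$ is a finite clause set, so $\tau_{\mathrm{PurDel}_P}^m$ is a first-order predicate substitution. This is what makes the lemma useful for building first-order witnesses.
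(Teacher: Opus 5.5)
Your proposal is correct and matches the paper's proof: it splits the witness-transforming condition into $N \imp N\tau_{\mathrm{PurDel}_P}^m$ (from \Cref{N-implies-N-with-finite-witness}) and $\models P\tau_{\mathrm{PurDel}_P}^m$ (from \Cref{finite-witness-satisfies-pointed-clause}), then combines them. Your closing remark that $B_P^m$ is a finite clause set, so the substitution is first-order, is also right; it follows by induction on $m$.
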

\begin{proof}
  By \Cref{N-implies-N-with-finite-witness} we get $N \imp N\tau_{\mathrm{PurDel}_P}^m$ and by \Cref{finite-witness-satisfies-pointed-clause} we get $\models P\tau_{\mathrm{PurDel}_P}^m$ which means $N \imp (N \union \mset{P})\tau_{\mathrm{PurDel}_P}^m$, i.e., $\tau_{\mathrm{PurDel}_P}^m$ is witness-transforming across $N \uplus \mset{P} / N$.
\end{proof}

We update the witness constructed from a given $\mathcal{C}$-derivation where all pointed clauses $P$ in purified clause deletion steps are acyclically purified in the respective clause set.
\begin{definition}
  \label{def:first-order-witness}
  Let $D = (S_1, \dots, S_m)$ be a $\mathcal{C}$-derivation from $N$ and let $N = N_0, N_1, \dots, N_m$ be the intermediate clause sets of the derivation.
  Let $1 \leq i_1 < \dots < i_l \leq m$ such that~\mbox{$S_{i_1}, \dots, S_{i_l}$} are exactly the purified clause deletion steps in $D$.
  Denote by $I(D)$ the set of purified clause deletion indices in $D$, i.e., $I(D) := \mset{i_1, \dots, i_l}$.
  For $i \in I(D)$ let $P_i$ be the pointed clause such that $S_{i} = \mathrm{PurDel}_{P_i}$.
  Furthermore let $f: I(D) \to \N$.
  We call $f$ an \emph{acyclic purification annotation for $D$} if for all $i \in D$ there is a $k \leq f(i)$ such that $P_i$ is $k$-acyclically purified in $N_i$.
  For $1 \leq i \leq m$ define
  $$
  \tau_i:= \begin{cases}
    \tau_{S_i} & \text{if $i \not\in I(D)$,} \\
    \tau_{\mathrm{PurDel}_{P_i}}^{f(i)} & \text{if $i \in I(D)$.}
  \end{cases}
  $$
  and set $\sigma_{\mathrm{fo}}(D, f) := \tau_1 \cdots \tau_m$.
\end{definition}

From this we can derive our main result of the paper.
\begin{theorem}
  Let $D$ be an $\seq{X}$-eliminating $\mathcal{C}$-derivation from $N$ and let $f: I(D) \to \N$ be an acyclic purification annotation for $D$.
  Then $\sigma_{\mathrm{fo}}(D, f)$ is a first-order witness for $\Exists{\seq{X}} N$.
\end{theorem}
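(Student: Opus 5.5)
We argue exactly as in the proof of \Cref{eliminating-derivations-admit-infinite-witness}, replacing the infinite local resolution closures by the finite iterations $B_P^{f(i)}$. The proof splits into two parts: first we check that $\sigma_{\mathrm{fo}}(D, f)$ is a witness, and then that it is first-order.

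\emph{Witness property.} We induct on the length $m$ of $D = (S_1, \dots, S_m)$.

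For $m = 0$ we have $\sigma_{\mathrm{fo}}(D,f) = \mathrm{id}$. Since $N$ contains no variables from $\seq{X}$, we get $\Exists{\seq{X}} N \imp N$.

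For the induction step, let $D_1 = (S_2, \dots, S_m)$. It is an $\seq{X}$-eliminating derivation from $N_1$. We equip it with the shifted annotation $f_1(i) := f(i+1)$ for $i+1 \in I(D)$. This is an acyclic purification annotation for $D_1$, because the intermediate clause sets of $D_1$ are exactly $N_1, \dots, N_m$. We must take care with the indexing here: the step $S_i = \mathrm{PurDel}_{P_i}$ goes from $N_{i-1} = N_i \uplus \mset{P_i}$ to $N_i$, and the definition requires $P_i$ to be $k$-acyclically purified in $N_i$, the clause set after deletion. This matches the hypothesis of \Cref{first-order-witness-transforming-across-purified-clause-deletion}. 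By construction $\sigma_{\mathrm{fo}}(D,f) = \tau_1\,\sigma_{\mathrm{fo}}(D_1,f_1)$, and the induction hypothesis says $\sigma_{\mathrm{fo}}(D_1,f_1)$ is a witness for $\Exists{\seq{X}} N_1$. It then suffices to show that $\tau_1$ is witness-transforming across $S_1$ and to apply \Cref{witness-preservation-sufficient-condition}. We split into two cases.
\begin{enumerate}
\item If $1 \notin I(D)$, then $\tau_1 = \tau_{S_1}$, which is witness-transforming by \Cref{witness-preservation-for-steps-without-purified-clause-deletion}.
\item If $1 \in I(D)$, the annotation gives some $k \leq f(1)$ such that $P_1$ is $k$-acyclically purified in $N_1$. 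Then \Cref{first-order-witness-transforming-across-purified-clause-deletion}, applied with $m := f(1) \geq k$, shows that $\tau^{f(1)}_{\mathrm{PurDel}_{P_1}}$ is witness-transforming.
\end{enumerate}
Finally, $\dom(\sigma_{\mathrm{fo}}(D,f)) \subseteq \seq{X}$. This holds because every $\tau_i$ substitutes only for a predicate variable occurring in a designated literal or in an extended purity deletion step, and these are $\seq{X}$-variables in the intended setting.

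\emph{First-order property.} This part needs its own argument. We show by induction on $k$ that each $B_P^k(\seq{c})$ is a finite clause set whose clauses are first-order.
\begin{enumerate}
\item For $k = 0$, $B_P^0(\seq{c}) = \mset{\bot}$.
\item For the step, suppose $B_P^k(\seq{c})$ is finite. Then $B_P^{k+1}(\seq{c})$ consists of $L(\seq{c})^\perp$ together with one clause for each $p$-tuple $(R_1, \dots, R_p)$ drawn from $B_P^k(\seq{c})$. There are finitely many such tuples, so $B_P^{k+1}(\seq{c})$ is again finite.
\end{enumerate}
Hence $B_P^{k}$, read as $\lambda \seq{c}. \Land_{R \in B_P^k(\seq{c})} \forall^\ast R$, is a first-order predicate, and so is $\neg B_P^k$. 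The substitutions $\lambda \seq{u}.\top$, $\lambda \seq{u}.\bot$ and $\mathrm{id}$ are trivially first-order. Composing substitutions whose ranges are first-order predicates yields first-order predicates, since substituting a first-order predicate for a predicate variable in a first-order formula gives a first-order formula. Therefore $\tau_1 \cdots \tau_m$ has only first-order predicates in its range.

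The only delicate points are the index bookkeeping for the annotation under the shift from $D$ to $D_1$, and making precise that the infinitary-looking notation $B_P^k$ is in fact a finite conjunction. Neither should pose a real obstacle.
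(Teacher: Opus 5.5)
Your proposal is correct and follows the paper's proof essentially step for step. Like the paper, you induct on the length of $D$ with a shifted annotation, split on whether $S_1$ is a purified clause deletion step (via \Cref{witness-preservation-for-steps-without-purified-clause-deletion}, \Cref{first-order-witness-transforming-across-purified-clause-deletion} and \Cref{witness-preservation-sufficient-condition}), and conclude first-orderness from $B_P^k$ being first-order. You are slightly more careful than the paper in three places: you spell out why $B_P^k(\seq{c})$ is finite, you address the domain condition $\dom(\sigma_{\mathrm{fo}}(D,f)) \subseteq \seq{X}$, and you use the correct shift $f_1(i) = f(i+1)$, where the paper writes $f(i-1)$.
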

\begin{proof}
  The proof is similar to the proof of \Cref{eliminating-derivations-admit-infinite-witness}, but uses some different lemmas.

  Let $D = (S_1, \dots, S_m)$.
  We proceed by induction on $m$.
  If $m = 0$, we have $\sigma_{\mathrm{fo}}(D, f) = \mathrm{id}$ and $N$ does not contain any variables from $\seq{X}$.
  Thus $\Exists{\seq{X}} N \imp N \imp N\sigma_{\mathrm{fo}}(D, f)$, i.e., $\sigma_{\mathrm{fo}}(D, f)$ is a witness for $\Exists{\seq{X}} N$.

  Now to the induction step.
  Let $D = (S_1, \dots, S_{m+1})$ be $\seq{X}$-eliminating from $N$ and let~\mbox{$f: I(D) \to \N$} be an acyclic purification annotation for $D$.
  Denote by $N_1$ the conclusion of $S_1$ with premise $N$.
  Then $D_1 = (S_2, \dots, S_{m+1})$ is an $\seq{X}$-eliminating derivation from $N_1$.
  Then consider $f_1: I(D_1) \to \N: i \mapsto f(i - 1)$, i.e., the indices get shifted by $1$.
  Then $f_1$ is an acyclic purification annotation for $D_1$.
  Thus by induction hypothesis we have that $\sigma_{\mathrm{fo}}(D_1, f_1)$ is a witness for $\Exists{\seq{X}} N_1$.
  We have $\sigma_{\mathrm{fo}}(D, f) = \tau_1\sigma_{\mathrm{fo}}(D_1, f_1)$.
  Now there are two cases.
  
  Case 1: $S_1$ is not a purified clause deletion step.
  Then $\tau_1$ is witness-preserving by \Cref{witness-preservation-for-steps-without-purified-clause-deletion} and \Cref{witness-preservation-sufficient-condition} we get that $\sigma_{\mathrm{fo}}(D, f)$ is a witness for $\Exists{\seq{X}} N$.

  Case 2: $S_1$ is a purified clause deletion step.
  We have $S_1 = \mathrm{PurDel}_P$ for some pointed clause $P$.
  Since $f$ is an acyclic purification annotation for $D$ we have that $P$ is $k$-acyclically purified in $N_1$ for some $k \leq f(1)$.
  By \Cref{first-order-witness-transforming-across-purified-clause-deletion} we have that $\tau_1 = \tau_{\mathrm{PurDel}_P}^{f(1)}$ is witness-transforming which by \Cref{witness-preservation-sufficient-condition} means that it is witness-preserving.
  Thus $\sigma_{\mathrm{fo}}(D, f)$ is a witness for $\Exists{\seq{X}} N$.
  
  Since for all $k \in \N$ the predicate expression $B_P^k$ is first-order it follows that $\tau_{\mathrm{PurDel}_P}^k$ is first-order.
  For $\mathcal{C}$-derivation steps $S$ other than purified clause deletion we already know that~\mbox{$\tau_S$} is first-order by construction.
\end{proof}

This theorem gives some freedom in choosing the produced witnesses by varying~$f$.
However, most of the time we are interested in small witnesses so we will choose $f(i)$ to be the smallest $k$ such that $P_i$ is $k$-acyclically purified in $N_i$.
We now show how this construction allows us to find a first-order witness for derivation $D_2$ from \Cref{ex.main-example.second-witness} where our previous methods could not.
\begin{example}
  \label{main-example}
  Recall the clause set $N$ from \Cref{ex.main-example.second-witness}
  \begin{align*}
    (1)\ B(a,v) \qquad (2)\ X(a) \qquad (3)\ B(u,v) \lor \neg X(u) \lor X(v) \qquad  (4)\ \neg X(c)
  \end{align*}
  and one of its resolvents
  $$
  (6)\ a \noeq c
  $$
  and the $X$-eliminating derivation $D_2$ from $N:= \mset{1, 2,3, 4}$.
  \begin{align*}
    D_2 = 
    \Axiom$\mset{1,2,3,4}\fCenter$
    \RightLabel{$\mathrm{PurDel}_{3.2}$}
    \UnaryInf$\mset{1, 2, 4}\fCenter$
    \RightLabel{$\mathrm{Res}_{2.1,4.1}$}
    \UnaryInf$\mset{1,2,4,6}\fCenter$
    \RightLabel{$\mathrm{PurDel}_{2.1}$}
    \UnaryInf$\mset{1,4,6}\fCenter$
    \RightLabel{$\mathrm{ExtPurDel}_X^{-}$}
    \UnaryInf$\mset{1,6}\fCenter$
    \DisplayProof
  \end{align*}
  We check that there is an acyclic purification annotation $f$ for $D_2$.
  Note that \mbox{$I(D_2) = \mset{1, 3}$}.
  Now we need to find $P$-purification subsumptions for all purified clause deletion steps in $D_2$ such that the resulting purification subsumption graph is acyclic and determine its maximal length.

  We first consider $\mathrm{PurDel}_{3.2}$ with conclusion~\mbox{$N_1 := \mset{1,2,4}$}.
  Then the only $3.2$-resolvable pointed clause in $N_1$ is $\underline{X(a)}$.
  Its resolvent with $3.2$ is~\mbox{$a \noeq u \lor B(u,v) \lor X(v)$} which under variable elimination reduces to $B(a, v) \lor X(v)$ which is subsumed by $1$.
  Thus a $3.2$-purification subsumption for $N_1$ is given by $s_1(\underline{X(a)}) = B(a,v)$.
  The corresponding purification subsumption graph $G(N_1, 3.2, s_1)$ is
  \begin{center}
    \begin{tikzpicture}[>=stealth, node distance=2.25cm,
    every node/.style={draw=none},
    every edge/.style={->, thick},
    every edge quotes/.style={draw=none, fill=none, font=\footnotesize, inner sep=0pt}]

      \node (1) {$B(a,v)$};
      \node (2) [right of=1] {$X(a)$};
      \node (3) [right of=2] {$\neg X(c)$};

      \draw[->] (2) -- node[above, draw=none, inner sep=0pt] {$X(a)$} (1);
    \end{tikzpicture}
  \end{center}
  which is acyclic and has longest path length $1$.
  Thus we set $f(1) = 1$.

  Now consider $\mathrm{PurDel}_{2.1}$ with conclusion~\mbox{$N_2 := \mset{1,4,6}$}.
  Then the only $2.1$-resolvable pointed clause in $N_2$ is $\underline{\neg X(c)}$.
  Its resolvent with $2.1$ is $a \noeq c$ which is subsumed by $6$.
  Thus a $2.1$-purification subsumption for $N_2$ is given by $s_2(\underline{\neg X(c)}) = a \noeq c$.
  The corresponding purification subsumption graph $G(N_2, 2.1, s_2)$ is
  \begin{center}
    \begin{tikzpicture}[>=stealth, node distance=2.25cm,
    every node/.style={draw=none},
    every edge/.style={->, thick},
    every edge quotes/.style={draw=none, fill=none, font=\footnotesize, inner sep=0pt}]

      \node (1) {$B(a,v)$};
      \node (2) [right of=1] {$\neg X(c)$};
      \node (3) [right of=2] {$a \noeq c$};

      \draw[->] (2) -- node[above, draw=none, inner sep=0pt] {$\neg X(c)$} (3);
    \end{tikzpicture}
  \end{center}
  which is acyclic and has longest path length $1$.
  Thus we set $f(3) = 1$.

  This shows that $f$ is an acyclic purification annotation for $D_2$.
  Thus we can compute the corresponding witness by
  $\sigma_{\mathrm{fo}}(D_2, f) = \tau_{\mathrm{PurDel}_{3.2}}^1\mathrm{id}\tau_{\mathrm{PurDel}_{2.1}}^1[X \leftarrow \lambda u. \bot]$.
  We have~\mbox{$B_{3.1}^1 \iff \lambda d. \mset{X(d), B(d, v)}$} and thus $\tau_{3.2}^1 \iff [X \leftarrow \lambda u. X(u) \land \Forall{v} B(u,v)]$.
  Furthermore~\mbox{$B_{2.1}^1 \iff \lambda d. \mset{\neg X(d), d \noeq c}$} and thus $\tau_{2.1}^1 \iff [X \leftarrow \lambda u. X(u) \lor u \oeq a]$.
  In total we get 
  $$
  \sigma_{\mathrm{fo}}(D_2, f) \iff \tau_{3.2}^1[X \leftarrow \lambda u. u \oeq c] \iff [X \leftarrow \lambda u. u \oeq a \land \Forall{v} B(u,v)].
  $$
  We verify that $\sigma_{\mathrm{fo}}(D_2, f)$ is a witness for $\Exists{X} N$.
  It suffices to show $\Exists{X} N \imp N\sigma_{\mathrm{fo}}(D_2, f)$.
  Since SCAN is correct by \Cref{correctness-of-scan} we have $\Exists{X} N \iff \mset{1,6}$ thus it suffices to show~\mbox{$\mset{B(a,v), a \noeq c} \imp C\sigma_{\mathrm{fo}}(D_2, f)$} for all $C \in N$.
  For $C = B(a,v)$ this follows immediately.
  For $C = X(a)$ we have $C\sigma_{\mathrm{fo}}(D_2, f) \iff a \oeq a \land \Forall{v} B(a, v)$ which is implied by $\mset{B(a,v)}$.
  For $C = \neg X(c)$ we have $C\sigma_{\mathrm{fo}}(D_2, f) \iff c \noeq a \lor \neg\Forall{v} B(c,v)$ which is implied by $a \noeq c$.
  Thus~\mbox{$\sigma_{\mathrm{fo}}(D_2, f)$} is a first-order witness for $\Exists{X} N$.
\end{example}

We now show that the class of $k$-acyclically purified pointed clauses $P$ in $N$ includes all one-sided pointed clauses and thus extends the witness construction method from \cite{AchammerHetzlSchmidt25}.
\begin{lemma}
  \label{one-sided-implies-acyclically-purified}
  Let $N$ be a clause set and let $P$ be a one-sided pointed clause which is purified in $N$.
  Then $P$ is $k$-acyclically purified in $N$ for some $k \in \N$.
\end{lemma}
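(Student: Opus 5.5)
The plan is to take \emph{any} $P$-purification subsumption $s$ for $N$ and show that every edge of $G(N,P,s)$ strictly decreases the number of $L^\perp$-literals, where $L$ is the designated literal of $P$. Such an $s$ exists by \Cref{purified-iff-exists-purification-subsumption}, since $P$ is purified in $N$. Acyclicity then follows immediately, and the length of a path is bounded by the number of $L^\perp$-literals of its first node.

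\textbf{Step 1: the resolvent loses an $L^\perp$-literal.} Write $P = \underline{L(\seq{t})} \lor C$. Because $P$ is one-sided, $C$ contains no $L^\perp$-literals. Let $P' = \underline{L(\seq{s})^\perp} \lor E \in R_P(N)$ come from a clause $C' \in N$. Then
$$\mathrm{Res}(P,P') = \seq{t} \noeq \seq{s} \lor C \lor E .$$
The constraint literals and $C$ contribute no $L^\perp$-literals, so $\abs{\mathrm{Res}(P,P')_{L^\perp}} \le \abs{E_{L^\perp}} = \abs{C'_{L^\perp}} - 1$.

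\textbf{Step 2: the subsumption relation does not increase this count.}
\begin{enumerate}
\item A $\velim$-step deletes one constraint literal and applies a substitution to the rest. This can identify literals but never creates new $L^\perp$-literals, so $D \velimtrans D'$ gives $\abs{D'_{L^\perp}} \le \abs{D_{L^\perp}}$.
\item If $S \subsumesL{L^\perp} D'$, then $\sigma$ is $L^\perp$-injective in $S$. Hence $S\sigma$ has exactly $\abs{S_{L^\perp}}$ distinct $L^\perp$-literals, all lying in $D'$. This gives $\abs{S_{L^\perp}} \le \abs{D'_{L^\perp}}$.
\item Combining the two points, $S \subsumesLvelim{L^\perp} D$ implies $\abs{S_{L^\perp}} \le \abs{D_{L^\perp}}$.
\end{enumerate}
Applying this with $S = s(P')$ and $D = \mathrm{Res}(P,P')$ gives
$$\abs{s(P')_{L^\perp}} \le \abs{C'_{L^\perp}} - 1$$
for every edge $C' \to s(P')$. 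This is precisely where the $L^\perp$-injectivity built into $\subsumesL{L^\perp}$ is used.

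\textbf{Step 3: acyclicity and the bound $k$.} The map $C' \mapsto \abs{C'_{L^\perp}}$ strictly decreases along every edge, so $G(N,P,s)$ has no cycles. A path starting at $C'$ has length at most $\abs{C'_{L^\perp}}$. We therefore take
$$k := \max_{C' \in N} \abs{C'_{L^\perp}} ,$$
or more precisely the actual longest path length, which is at most this maximum.

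The main obstacle is that this maximum must be finite. That holds whenever $N$ is finite, which is the case for the intermediate clause sets of any derivation starting from a finite clause set. For an arbitrary infinite $N$ in which clauses carry unboundedly many $L^\perp$-literals, the longest path could be unbounded even though the graph is acyclic. I would therefore state the argument for finite $N$, or for $N$ with a uniform bound on the number of $L^\perp$-literals per clause. I expect the lemma is intended in this setting.

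The other delicate point is the monotonicity claim in Step 2 for $\velimtrans$. It needs a short induction on the number of $\velim$-steps. Each step maps literals of $D_0$ onto literals of $D_0[v \leftarrow t]$, and $L^\perp$-literals go to $L^\perp$-literals, so the count can only drop.
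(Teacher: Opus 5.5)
Your proof is correct and is essentially the paper's own argument. The paper also fixes an arbitrary $P$-purification subsumption $s$ and shows that the number of $L^\perp$-literals strictly decreases along every edge of $G(N,P,s)$, using the same two facts: the resolvent with a one-sided $P$ loses exactly one $L^\perp$-literal, and $S \subsumesLvelim{L^\perp} D$ implies $\abs{S_{L^\perp}} \le \abs{D_{L^\perp}}$ by $L^\perp$-injectivity together with monotonicity under $\velimtrans$.

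Your finiteness caveat is well taken and sharper than the paper. Its step ``it suffices to show that $G(N,P,s)$ is acyclic'' tacitly assumes that acyclicity yields a finite longest path. Your bound $\max_{C' \in N}\abs{C'_{L^\perp}}$ settles this for finite $N$, which is the case arising in derivations. For infinite $N$ the statement can genuinely fail. For example, take $P=\underline{X(v)}\lor A(v)$ and let $N$ consist of all clauses $\Lor_{i\in I}\neg X(c_i)\lor\Lor_{i\le n,\,i\notin I}A(c_i)\lor H_n$ with $n\ge 1$, $I\subseteq\mset{1,\dots,n}$, and fresh nullary symbols $H_n$. Here the purification subsumption is forced, and $G(N,P,s)$ has paths of every length.
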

\begin{proof}
  Let $P = \underline{L({\seq{t}})} \lor C$.
  Since $P$ is purified in $N$ we get by \Cref{purified-iff-exists-purification-subsumption} that there exists a~$P$-purification subsumption $s$ for $N$.
  Now it suffices to show that $G(N, P, s)$ is acyclic.
  Let 
  $$C_0 \overset{L_0}{\longrightarrow} \dots \overset{L_{n-1}}{\longrightarrow} C_n$$ 
  be a path in $G(N,P,s)$.
  For a clause $C'$ denote by $\#_{L^\perp}(C')$ the number of $L^\perp$-literals in $C'$.
  We need to show that $i \neq j$ implies $C_i \neq C_j$.
  We do this by showing $\#_{L^\perp}(C_{k+1}) < \#_{L^\perp}(C_{k})$ for all $0 \leq k < n$.
  This means all $C_i$ along the path must be pairwise distinct since they have a distinct number of $L^\perp$-literals.

  For $0 \leq k < n$ we have $C_{k+1} = s(C_{k}[\underline{L_{k}}])$.
  Since $s$ is a $P$-purification subsumption for $N$ we have $C_{k+1} \subsumesLvelim{L^\perp} \mathrm{Res}(P, C_{k}[\underline{L_{k}}])$.
  Thus it suffices to show the following two statements:
  \begin{enumerate}
    \item For all clauses $C'$ and $L^\perp$-literals $L' \in C'$ we have $\#_{L^\perp}(\mathrm{Res}(P, C'[\underline{L'}])) < \#_{L^\perp}(C')$.
    \item For all clauses $C', C''$ with $C' \subsumesLvelim{L^\perp} C''$ we have $\#_{L^\perp}(C') \leq \#_{L^\perp}(C'')$.
  \end{enumerate}
  We now prove the two statements to finish the proof.
  \begin{enumerate}
    \item Let $L' = L(\seq{s})^\perp$, then 
    $$\mathrm{Res}(P, C'[\underline{L'}]) = \seq{t} \noeq \seq{s} \lor C \lor (C' \setminus \mset{L'}).$$
    Since $C$ contains no $L^\perp$-literals and $C' \setminus \mset{L'}$ contains one less $L^\perp$-literal than $C'$ we get~\mbox{$\#_{L^\perp}(\mathrm{Res}(P, C')) = \#_{L^\perp}(C') - 1 < \#_{L^\perp}(C')$}.
    \item 
    We first show the property for $\subsumesL{L^\perp}$, $\velim$ and $\velimtrans$ which can then be used to show it for~$\subsumesLvelim{L^\perp}$.

    If $C' \subsumesL{L^\perp} C''$, then $C'\sigma \subseteq C''$ for some substitution $\sigma$ which is $L^\perp$-injective in~$C'$.
    Since $\sigma$ is $L^\perp$-injective we have
    $\#_{L^\perp}(C') = \#_{L^\perp}(C'\sigma)$ and since $C'\sigma \subseteq C''$ we get~\mbox{$\#_{L^\perp}(C'\sigma) \leq \#_{L^\perp}(C'')$}.
    This means $\#_{L^\perp}(C') \leq \#_{L^\perp}(C'')$.
    
    If $C' \velim C''$, then $C' = v \noeq t \lor C_0'$ and $C'' = C_0'[v \leftarrow t]$. Thus we get~\mbox{$\#_{L^\perp}(C'') \leq \#_{L^\perp}(C')$}.
    This also shows if $C' \velimtrans C''$, then $\#_{L^\perp}(C'') \leq \#_{L^\perp}(C')$ by induction on the length of a $\velim$-path between $C'$ and $C''$.

    If $C' \subsumesLvelim{L^\perp} C''$, then there is a clause $C'''$ with $C'' \velimtrans C'''$ and $C' \subsumesL{L^\perp} C'''$.
    Then by the previous statements we get $\#_{L^\perp}(C') \leq \#_{L^\perp}(C''') \leq \#_{L^\perp}(C'')$.
  \end{enumerate}
\end{proof}

The following lemma shows that the witness produced by this new method is the same as the one produced using $\mathrm{\ell Res}_P$, if $P$ is one-sided.
\begin{lemma}
  \label{one-sided-B}
  If $P$ is a one-sided pointed clause, then $B_P^k \iff \mathrm{\ell Res}_P$ for all $k \geq 1$.
\end{lemma}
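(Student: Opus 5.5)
For a one-sided pointed clause $P = \underline{L(\seq{t}(\seq{v}))} \lor C(\seq{v})$, the $L^\perp$-part of $P$ is empty, so $p = 0$ in the notation of the definition of $B_P^k$. The plan is to compute both sides explicitly as finite clause sets and compare them directly.

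First, the right-hand side. As already observed in the proof that $\tau_{\mathrm{PurDel}_P}$ is first-order for one-sided $P$, we have
$$\mathrm{Res}_P^{<\omega}(L(\seq{c})^\perp) = \mset{L(\seq{c})^\perp,\ \seq{c} \noeq \seq{t}(\seq{v}) \lor C(\seq{v})}.$$
The reason is that the only $P$-resolvable literal of $L(\seq{c})^\perp$ is the literal itself. The resolvent $\seq{c} \noeq \seq{t}(\seq{v}) \lor C(\seq{v})$ contains no $L^\perp$-literals, because $C$ has none by one-sidedness and constraint literals are equality literals. So the closure stops after one step, and
$$\mathrm{\ell Res}_P \iff \lambda \seq{c}.\, L(\seq{c})^\perp \land \Forall{\seq{v}}(\seq{c} \noeq \seq{t}(\seq{v}) \lor C(\seq{v})).$$

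Second, the left-hand side. I prove by a trivial induction on $k \geq 1$ that
$$B_P^k(\seq{c}) = \mset{L(\seq{c})^\perp,\ \seq{c} \noeq \seq{t}(\seq{v}) \lor C(\seq{v})}.$$
Since $p=0$, the defining clause of $B_P^{k+1}$ collapses to $\mset{L(\seq{c})^\perp} \union \mset{\seq{c} \noeq \seq{t}(\seq{v}) \lor C(\seq{v})}$. Here the set-builder ranges over the empty tuple of choices $R_1,\dots,R_p$, so it contributes exactly one clause. In particular, the result does not depend on $B_P^k$ at all, and this already holds for the step from $k=0$ to $k=1$. Alternatively, the lemma relating $B_P^{k+1}$ to $\alpha_{P,Y}[Y \leftarrow B_P^k]$ gives the same conclusion, since $\alpha_{P,Y}$ contains no $Y$ when $p=0$.

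Comparing the two computations gives the same clause set, hence $B_P^k \iff \mathrm{\ell Res}_P$ for all $k \geq 1$.

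The only point needing care is the degenerate case $p=0$ in the set-builder: an empty family of $R_i$ must still yield the single clause $\seq{c} \noeq \seq{t}(\seq{v}) \lor C(\seq{v})$ rather than no clause. I will state this reading explicitly, or route the argument through $\alpha_{P,Y}$ to avoid the issue. The case $k=0$ is excluded, since $B_P^0 \iff \lambda\seq{c}.\bot$.
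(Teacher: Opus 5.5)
Your proposal is correct and follows the paper's proof. Both arguments use one-sidedness to get $p=0$, and then observe that for $k \geq 1$ the sets $B_P^k(\seq{c})$ and $\mathrm{Res}_P^{<\omega}(L(\seq{c})^\perp)$ are literally the same two-clause set $\mset{L(\seq{c})^\perp, \seq{c} \noeq \seq{t}(\seq{v}) \lor C(\seq{v})}$. Your explicit point that an empty family of $R_i$ contributes exactly one clause is a reading the paper relies on without stating it, and your alternative route via $\alpha_{P,Y}$ (which contains no $Y$ when $p=0$) is a valid cross-check.
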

\begin{proof}
  Let $P = \underline{L(\seq{t})} \lor C$.
  Since $P$ is one-sided we have that $C$ does not contain any $L^\perp$-literals.
  Thus $B_P^k =\lambda \seq{c}. \mset{L(\seq{c})^\perp, \seq{c} \noeq \seq{t} \lor C} \iff \mathrm{\ell Res}_P$.
\end{proof}

The following example shows that the existence of an acyclic purification annotation is necessary for the constructed predicate substitution to be a witness.

\begin{example}
    Consider the $\mathcal{C}$-derivation
    $$
    D = \AxiomC{$\mset{\neg X(v) \lor X(f(v)), X(f(f(v)))}$}
      \RightLabel{$\mathrm{PurDel}_{\underline{\neg X(v)} \lor X(f(v))}$}
      \UnaryInfC{$\mset{X(f(f(v)))}$}
      \RightLabel{$\mathrm{PurDel}_{\underline{X(f(f(v)))}}$}
      \UnaryInfC{$\emptyset$}
      \DisplayProof
    $$
    Note that $P_1 := \underline{\neg X(v)} \lor X(f(v))$ is purified in $N_1 := \mset{X(f(f(v)))}$ since
    the only resolvent between $P_1$ and $X(f(f(v)))$ is $v \noeq f(f(v')) \lor X(f(v))$
    which under variable elimination reduces to $X(f(f(f(v'))))$ and we have $X(f(f(v))) \subsumesL{X(v)} X(f(f(f(v'))))$.
    Also~\mbox{$P_2 := \underline{X(f(f(v)))}$} is purified in $N_2 := \emptyset$ since there are no $P_2$-resolvable clauses in $N_2$.
    Thus $D$ is a $\mathcal{C}$-derivation and it is $X$-eliminating from $N := \mset{\neg X(v) \lor X(f(v)), X(f(f(v)))}$.
    By the correctness of SCAN we also get $\Exists{X} N \iff \top$.

    We now prove that there exists no acyclic purification annotation for $D$
    by showing for all~\mbox{$k \in \N$} that $P_1$ is not $k$-acyclically purified in $N_1$.
    Note that $s_1$ defined by~\mbox{$s_1(\underline{X(f(f(v)))}) = X(f(f(v)))$} is the only $P_1$-purification subsumption for $N_1$.
    The corresponding purification subsumption graph $G(N_1, P_1, s_1)$ is
    \begin{center}
    \begin{tikzpicture}[>=stealth, node distance=3cm,
    every node/.style={draw=none},
    ]
      \node (1) {$X(f(f(v))),$};

      \draw (1) edge[loop above] node[font=\footnotesize] {$X(f(f(v)))$} (1);
    \end{tikzpicture}
  \end{center}
  which contains a cycle.
  Thus $P_1$ is not $k$-acyclically purified in $N_1$ for any $k \in \N$.
  
  Still, one might suspect that $\tau_{P_1}^{k}$ is witness-preserving across $\mathrm{PurDel}_{P_1}$ for some~\mbox{$k \in \N$}.
  This is not the case though as we show now.
  We first show $B_{P_1}^k \iff \lambda c. \bot$ for all $k \in \N$ by induction.
  For $k=0$ this follows by definition of $B_{P_1}^0$.
  For the induction step note by definition we get $B_{P_1}^{k+1} \iff \lambda c. X(c) \land \Forall{v}(c \noeq v \lor B_{P_1}^k(v))$.
  By induction hypothesis we get~\mbox{$B_{P_1}^{k+1} \iff \lambda c. X(u) \land \Forall{v}(c \noeq v) \iff \bot$}.
  Now we have $\tau_{P_1}^k \iff [X \leftarrow \lambda u. \bot]$.
  Let $\sigma$ be any witness for $\Exists{X} N_1$.
  Then $\tau_{P_1}^k\sigma \iff [X \leftarrow \lambda u. \bot]$ and $N[X \leftarrow \lambda u. \bot] \iff \bot$.
  However, since $\Exists{X} N \iff \top$ this shows that $[X \leftarrow \lambda u. \bot]$ is not a witness for $\Exists{X} N$.

  Note that there are other $X$-eliminating $\mathcal{C}$-derivations from $N$ for which our method can produce a witness.
  For example,
  $$
    D_1 = \AxiomC{$\mset{\neg X(v) \lor X(f(v)), X(f(f(v)))}$}
      \RightLabel{$\mathrm{ExtPurDel}_{X}^+$}
      \UnaryInfC{$\emptyset$}
      \DisplayProof
    $$
  for which the empty function $\emptyset$ is an acyclic purification annotation.
  The corresponding witness is $\sigma_{\mathrm{fo}}(D, \emptyset) = [X \leftarrow \lambda u. \top]$.
\end{example}

\section{Equality and First-Order Background Theories}
\label{sec.first-order-background-theories}

In this section we show how to use equality reasoning and treat first-order background theories with respect to WSCAN.

\subsection{Equality Reasoning}
So far the inference calculus $\mathcal{I}$ uses equality only for introducing constraints and eliminating them via constraint elimination.
However, our method can also be augmented to construct witnesses in the presence of full equality by adding a paramodulation rule.
$$
  \AxiomC{$C \lor s \oeq t$}
  \AxiomC{$C'(r)$}
  \RightLabel{$\mathrm{ParMod}$}
  \BinaryInfC{$(C \lor C'(t))\sigma$}
  \DisplayProof
$$
where $s, t, r$ are terms and $\sigma$ is a most general unifier of $s$ and $r$.
This makes the inference system refutationally complete for first-order logic with equality.
\begin{proposition}
  $\mathcal{I} + \mathrm{ParMod}$ is sound and refutationally complete for first-order logic with equality.
\end{proposition}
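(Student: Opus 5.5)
Soundness and refutational completeness are two separate claims, and I would prove them separately.

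\emph{Soundness.} \Cref{soundness-of-inferences} already covers the rules of $\mathcal{I}$. In that proof the equality symbol $\oeq$ is read as true identity in every model, so the arguments for $\mathrm{Res}$, $\mathrm{Fac}$ and $\mathrm{ConstrElim}$ go through unchanged under equality semantics. It therefore remains to check $\mathrm{ParMod}$.

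Let $\mathcal{M} \models C \lor s \oeq t$ and $\mathcal{M} \models C'(r)$, and fix an environment $\theta$. Both premises are universally closed, so $\mathcal{M}, \theta \models (C \lor s \oeq t)\sigma$ and $\mathcal{M}, \theta \models C'(r)\sigma$.
\begin{itemize}
\item If $\mathcal{M}, \theta \models C\sigma$, the conclusion holds.
\item Otherwise $s\sigma^{\mathcal{M},\theta} = t\sigma^{\mathcal{M},\theta}$. Since $\sigma$ unifies $s$ and $r$, also $r\sigma^{\mathcal{M},\theta} = t\sigma^{\mathcal{M},\theta}$.
\end{itemize}
In the second case, the substitutivity of equal elements in structures with true equality gives $\mathcal{M}, \theta \models C'(t)\sigma$. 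This is a routine induction on the term and literal containing the replaced occurrence of $r$.

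\emph{Refutational completeness.} I would reduce to a known result rather than prove completeness from scratch. The calculus consisting of ordinary resolution, ordinary factoring and paramodulation is refutationally complete for first-order clause logic with equality in the presence of the equality reflexivity axiom $u \oeq u$, or with the reflexivity-resolution rule that deletes $t \noeq t'$ after unifying $t$ and $t'$. This is the classical result of Robinson and Wos and of Brand, and I would cite it.

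The proof of \Cref{soundness-of-inferences} already shows that $\mathcal{I}$ simulates ordinary resolution and factoring by $\mathrm{Res}$ followed by $\mathrm{ConstrElim}$. $\mathrm{ParMod}$ is available directly. For equality resolution, a clause $t \noeq t' \lor C$ is exactly a constraint clause in the sense of $\mathcal{I}$, so $\mathrm{ConstrElim}$ yields $C\sigma$ with $\sigma$ a most general unifier of $t$ and $t'$. That is precisely reflexivity resolution.

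This leaves two technical points, which I would settle by checking the exact formulation of the cited completeness theorem:
\begin{itemize}
\item \emph{Functional reflexivity axioms.} Paramodulation into variables is not needed if the cited version of the theorem does not require it. If the cited version uses the axiom $u \oeq u$ as an input clause instead of reflexivity resolution, it suffices to observe that adding $u \oeq u$ preserves unsatisfiability, or to note the equivalence between using $u \oeq u$ and using reflexivity resolution.
\item \emph{Paramodulation into constraint literals.} Paramodulation from or into negative equality literals must be covered. $\mathrm{ParMod}$ as stated applies to any term $r$ in $C'$, including terms in $\noeq$-literals, so this case is covered.
\end{itemize}

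The step I expect to need the most care is matching $\mathcal{I} + \mathrm{ParMod}$ precisely to a standard complete equality calculus, in particular confirming that reflexivity is recovered through $\mathrm{ConstrElim}$. Everything else is a direct transfer of the simulation argument already used in \Cref{soundness-of-inferences}.
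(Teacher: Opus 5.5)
Your proposal is correct and follows the paper's route. Like the paper, you cite the refutational completeness of resolution, factoring, paramodulation and reflexivity resolution, reuse the simulation of ordinary resolution and factoring from \Cref{soundness-of-inferences}, and observe that reflexivity (equality) resolution is an instance of $\mathrm{ConstrElim}$. Your explicit soundness check for $\mathrm{ParMod}$ and your use of the general most-general-unifier form of equality resolution fill in details the paper leaves implicit, since the paper states only the special case $t \noeq t \lor C \,/\, C$.
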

\begin{proof}
  Resolution with factoring together with paramodulation and reflexivity resolution is refutationally complete, see, e.g., \cite{Nieuwenhuis2001}. 
  So it suffices to show that we can simulate the reflexivity resolution rule
  $$
  \AxiomC{$t \noeq t \lor C$}
  \UnaryInfC{$C$}
  \DisplayProof
  $$
  Note that this is an instance of the constraint elimination rule since the identity is a most general unifier of $t$ and $t$.
\end{proof}

For any sound inference rule $S$ added to $\mathcal{I}$ we can show that the identity substitution is witness-transforming across $S$ like in the case of constraint resolution, constraint factoring and constraint elimination.
Thus, by setting $\tau_{\mathrm{ParMod}} := \mathrm{id}$ the witness construction proceeds like before.
However, adding a new inference rule does increase the search space for finding $\mathcal{C}$-derivations so this affects performance.

We illustrate this on an example.
\begin{example}
  Let $N$ be the clause set consisting of the clauses
  \begin{align*}
  (1)&\ a \oeq b, \\
  (2)&\ \neg E(a,v), \\
  (3)&\ \neg X(u) \lor \neg E(u,v) \lor X(v), \\
  (4)&\ X(b).
  \end{align*}
  Note that $\underline{X(b)}$ is not purified in $N \setminus \mset{X(b)}$ since the resolvent of $\underline{X(b)}$ with pointed clause~$3.1$ is $\neg E(b, v) \lor X(v)$ which does not satisfy $N \setminus \mset{X(b)} \subsumesLvelim{\neg X(b)} \neg E(b,v) \lor X(v)$.
  However, since $a\oeq b, \neg E(a,v) \in N$ we can derive clause $(5)$ $\neg E(b,v)$ by paramodulation:
  $$
    \AxiomC{$a = b$}
    \AxiomC{$\neg E(a,v)$}
    \RightLabel{$\mathrm{ParMod}.$}
    \BinaryInfC{$\neg E(b,v)$}
    \DisplayProof
  $$
  Then $\underline{X(b)}$ is purified in $\mset{1, 2, 3, 5}$.
  Now consider the following $\mathcal{C}$-derivation (also allowing $\mathrm{ParMod}$-inferences)
  $$
    D=
    \AxiomC{$\mset{1, 2, 3, 4}$}
    \RightLabel{$\mathrm{ParMod}$}
    \UnaryInfC{$\mset{1, 2, 3, 4, 5}$}
    \RightLabel{$\mathrm{PurDel}_{\underline{X(b)}}$}
    \UnaryInfC{$\mset{1, 2, 4, 5}$}
    \RightLabel{$\mathrm{ExtPurDel}_X^{-}.$}
    \UnaryInfC{$\mset{1, 2, 5}$}
    \DisplayProof
  $$
  The only $\underline{X(b)}$-purification subsumption for $\mset{1, 2, 3, 5}$ is given by 
  $$s(\underline{\neg X(u)} \lor \neg E(u,v) \lor X(v)) = \neg E(b,v).$$
  Thus $G(\mset{1,2,3,5}, \underline{X(b)}, s)$ is acyclic and has maximum path length $1$.
  Furthermore we have~\mbox{$B_{\underline{X(b)}}^1 \iff \lambda u. \neg X(u) \land u \noeq b$}.
  Note that $I(D) = \mset{2}$ and the function $f: I(D) \to \N$ with~\mbox{$f(2) = 1$} is an acyclic purification annotation for $D$.
  We then have
  \begin{align*}
    \sigma_{\mathrm{fo}}(D, f) &= \tau_{\mathrm{ParMod}}\tau_{\mathrm{PurDel}_{\underline{X(b)}}}^1\tau_{\mathrm{ExtPurDel}_X^{-}}  \\
    &= \mathrm{id}[X \leftarrow \lambda u. \neg B_{\underline{X(b)}}^1(u)][X \leftarrow \lambda u. \bot] \\
    &\iff [X \leftarrow \lambda u. X(u) \lor u \oeq b][X \leftarrow \lambda u. \bot] \\
    &\iff [X \leftarrow \lambda u. u \oeq b].
  \end{align*}

  In this example SCAN would not have been able to produce this derivation, if we did not have equality reasoning since purification of $\underline{X(b)}$ would have added the resolvent~\mbox{$\neg E(b,v) \lor X(v)$} to the clause set which would then have to be resolved further with clause $(3)$ resulting in clauses that have an increasing number of $\neg E$-literals.
\end{example}

\subsection{First-Order Background Theories}

Let $T$ be a first-order theory, i.e., a set of closed first-order formulas.
We consider WSOQE with respect to the background theory $T$, i.e., 
the \emph{WSOQE problem with respect to $T$} is the problem of, given a first-order formula $\varphi$ and a tuple of predicate variables $\seq{X}$, finding a substitution $\sigma$ with $\mathrm{dom}(\sigma) \subseteq \seq{X}$ such that
$$
T \models \Exists{\seq{X}} \varphi \liff \varphi\sigma.
$$
In that case we call $\sigma$ a \emph{WSOQE-witness for $\Exists{\seq{X}} \varphi$ with respect to $T$}.
For finitely axiomatizable theories $T$ we can reduce this problem to the ordinary WSOQE problem.
\begin{proposition}
  \label{reduction-of-background-theory-solutions}
  Let $T$ be a finitely axiomatizable first-order theory, let $\varphi$ be a first-order formula and let $\sigma$ be a substitution with $\mathrm{dom}(\sigma) \subseteq \seq{X}$.
  Then the following are equivalent.
  \begin{enumerate}
  \item \label{reduction.theory-witness} $\sigma$ is a WSOQE-witness for $\Exists{\seq{X}} \varphi$ with respect to $T$.
  \item \label{reduction.and-witness} $\sigma$ is a WSOQE-witness for $\Exists{\seq{X}}(T \land \varphi)$.
  \end{enumerate}
\end{proposition}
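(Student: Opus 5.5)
Since $T$ is finitely axiomatizable, I will identify $T$ with the single closed first-order sentence obtained as the conjunction of its finitely many axioms, written again $T$. Statement \ref{reduction.theory-witness} then reads $\models T \limp (\Exists{\seq{X}}\varphi \liff \varphi\sigma)$. Statement \ref{reduction.and-witness} reads $\models \Exists{\seq{X}}(T\land\varphi) \liff (T\land\varphi)\sigma$. The whole proof is a chain of propositional rewritings that use two facts. First, $T$ contains no predicate variables from $\seq{X}$, since it is a closed first-order theory in the base language and the $\seq{X}$ are bound variables. Second, $\dom(\sigma)\subseteq\seq{X}$. Together these give $T\sigma = T$ and $\Exists{\seq{X}}(T\land\varphi)\iff T\land\Exists{\seq{X}}\varphi$.

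With these two facts, \ref{reduction.and-witness} becomes
$$\models \bigl(T\land \Exists{\seq{X}}\varphi\bigr) \liff \bigl(T\land\varphi\sigma\bigr).$$
It remains to check that, for arbitrary formulas $A,B$ and a closed formula $T$, $\models (T\land A)\liff(T\land B)$ holds if and only if $\models T\limp(A\liff B)$. I will argue this model-wise. Fix a structure $\mathcal{M}$ and an environment $\theta$; since $A$ and $B$ may have free individual variables, the quantification over $\theta$ is needed. If $\mathcal{M},\theta\models T$, both sides reduce to $A\liff B$ at $(\mathcal{M},\theta)$. If $\mathcal{M},\theta\not\models T$, both sides hold trivially. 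Applying this with $A=\Exists{\seq{X}}\varphi$ and $B=\varphi\sigma$ gives the equivalence of \ref{reduction.theory-witness} and \ref{reduction.and-witness}. Both directions follow from this one biconditional, so no separate arguments are needed.

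I do not expect a real obstacle here. The only point needing care is justifying $T\sigma=T$ and pulling $T$ out of the scope of $\Exists{\seq{X}}$, which requires that $T$ does not mention any $X\in\seq{X}$. I will state this explicitly as the standing convention that background theories are formulated in the first-order language without the predicate variables to be eliminated. Finite axiomatizability is used only so that $T\land\varphi$ is a genuine first-order formula, so that \ref{reduction.and-witness} is an instance of the ordinary WSOQE problem.
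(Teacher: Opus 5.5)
Your proof is correct and uses the same ingredients as the paper: treating $T$ as a single closed sentence so that $T \models \psi$ becomes $\models T \limp \psi$, using that $T$ contains no variable from $\seq{X}$ to get $T\sigma = T$ and $\Exists{\seq{X}}(T \land \varphi) \iff T \land \Exists{\seq{X}} \varphi$, and then propositional rewriting. The only difference is packaging. The paper proves each direction separately, and only for the nontrivial implication $\Exists{\seq{X}}(\cdot) \limp (\cdot)\sigma$, since the converse holds for any substitution; you instead handle the full biconditional in one step via the pointwise equivalence of $\models (T \land A) \liff (T \land B)$ and $\models T \limp (A \liff B)$.
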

\begin{proof}
  \ref{reduction.theory-witness} $\implies$ \ref{reduction.and-witness}: It suffices to show $\models \Exists{\seq{X}} (T \land \varphi) \limp (T \land \varphi)\sigma$.
  Since $\sigma$ is a WSOQE-witness for $\Exists{\seq{X}} \varphi$ with respect to $T$ we have $T \models \Exists{\seq{X}} \varphi \liff \varphi\sigma$.
  In particular, we get $T \models \Exists{\seq{X}} \varphi \limp \varphi\sigma$.
  By the deduction theorem and since $T$ is finite we get $\models T \limp (\Exists{\seq{X}} \varphi \limp \varphi\sigma)$.
  By propositional equivalences we then get $\models (T \land \Exists{\seq{X}} \varphi) \limp (T \land \varphi\sigma)$.
  Since $T$ does not contain any variables from $\seq{X}$ we further get $\models \Exists{\seq{X}}(T \land \varphi) \limp (T \land \varphi)\sigma$.

  \ref{reduction.and-witness} $\implies$ \ref{reduction.theory-witness}:
  It suffices to show $T \models \Exists{\seq{X}} \varphi \limp \varphi\sigma$.
  Since $\sigma$ is a WSOQE-witness for $\Exists{\seq{X}}(T \land \varphi)$ we get in particular $\models \Exists{\seq{X}}(T \land \varphi) \limp (T \land \varphi)\sigma$.
  Since $T$ does not contain any variables from~$\seq{X}$ this is equivalent to $\models (T \land \Exists{\seq{X}}\varphi) \limp (T \land \varphi\sigma)$.
  Thus we get $\models (T \land \Exists{\seq{X}}\varphi) \limp \varphi\sigma$ and by propositional equivalences we have
  $\models T \limp (\Exists{\seq{X}}\varphi \limp \varphi\sigma)$.
  Since $T$ consists of closed formulas we then get $T \models \Exists{\seq{X}}\varphi \limp \varphi\sigma$.\qedhere  
\end{proof}

\subsection{Graph reachability problems}
We introduce a class of background theories where WSCAN can find solutions using equality reasoning although finding a derivation can be challenging in practice.
Let~$G=(V,E)$ be a finite directed graph and $I, F \subseteq V$.
We want to find a subset~\mbox{$X \subseteq V$} of nodes which contains $I$ (the initial nodes), is closed under the edge relation of $G$ and avoids all nodes in $F$ (the fail nodes).
In this case we call $X$ a \emph{graph reachability solution for $(G, I, F)$}.

We can encode graph reachability problems in the following way.
Consider the set of vertices~\mbox{$V = \mset{1, \dots, n}$} and let $a_1, \dots, a_n$ be distinct fresh constant symbols standing for the nodes in the graph.
Furthermore, let $E$ be a binary predicate symbol denoting the edge relation of the graph $G$.
Then the theory of the graph $G$ can be axiomatized by a first-order clause set $T(G)$ given by
\begin{align*}
  &\mset{ a_i \noeq a_j \suchthat i \neq j \in V} \\
  \union &\mset{E(a_i, a_j) \suchthat \text{$i$ and $j$ are connected by an edge in $G$}} \\
  \union &\mset{\neg E(a_i,a_j) \suchthat \text{$i$ and $j$ are not connected by an edge in $G$}} \\
  \union &\mset{u \oeq a_1 \lor \dots \lor u \oeq a_n}. 
\end{align*}
To be a graph reachability solution, $X$ has to satisfy the following clause set.
\begin{align*}
  R(I,F) := \mset{X(a_i) \suchthat i \in I} \union \mset{\neg X(a_i) \suchthat i \in F} \union \mset{\neg X(u) \lor \neg E(u,v) \lor X(v)}.
\end{align*}
Then $(G, I, F)$ has a graph reachability solution if and only if $T(G) \models \Exists{X} R(I,F)$.
Furthermore, if $(G, I, F)$ has a graph reachability solution, then a $\WSOQE$-witness for~$\Exists{X} R(I,F)$ with respect to $T(G)$ gives us a first-order definition of a graph reachability solution for $(G, I, F)$.
By \Cref{reduction-of-background-theory-solutions} such a solution is a solution to the ordinary WSOQE problem for $\Exists{\seq{X}}(T(G) \land R(I,F))$.
We show how our method can be used to compute a graph reachability solution on a simple example.
This example also shows that paramodulation inferences are sometimes necessary for purified clause deletion to be possible.
\begin{example}
Consider a simple graph reachability problem. Let $G$ be the graph consisting of the three nodes $1, 2, 3$ where the only edge is from $1$ to $2$, the only initial node is $1$ (indicated by the double circle) and the only fail node is $3$ (indicated by the dashed circle):
\begin{center}
  \begin{tikzpicture}[>=stealth, node distance=1.5cm,
  every edge/.style={->, thick}]

    \node[draw, circle, double] (1) {$1$};
    \node[draw, circle] (2) [right of=1] {$2$};
    \node[draw, circle, dashed] (3) [right of=2] {$3$};

    \draw[->] (1) to (2);
  \end{tikzpicture}
\end{center}
The only graph reachability solution is the set $\mset{1, 2}$.
We now give an $X$-eliminating $\mathcal{C}$-derivation from $T(G) \union R(\mset{1}, \mset{3})$ which uses several paramodulation inferences.
First, note that $\neg E(a_2,v)$ is derivable by paramodulation inferences from $T(G)$ via
\begin{prooftree}
  \AxiomC{$v \oeq a_1 \lor v \oeq a_2 \lor  v \oeq a_3$}
  \AxiomC{$\neg E(a_2,a_1)$}
  \BinaryInfC{$v \oeq a_2 \lor v \oeq a_3 \lor \neg E(a_2, v)$}
  \AxiomC{$\neg E(a_2,a_2)$}
  \BinaryInfC{$v \oeq a_3 \lor \neg E(a_2, v)$}
  \AxiomC{$\neg E(a_2, a_3)$}
  \BinaryInfC{$\neg E(a_2,v)$}
\end{prooftree}
From $T(G)$ and $\neg E(a_2,v)$ we can further derive $\neg E(a_1,u) \lor \neg E(u,v)$ via
\begin{prooftree}
  \AxiomC{$u \oeq a_1 \lor u \oeq a_2 \lor u \oeq a_3$}
  \AxiomC{$\neg E(a_2,v)$}
  \BinaryInfC{$u \oeq a_1 \lor u \oeq a_3 \lor \neg E(u, v)$}
  \AxiomC{$\neg E(a_1,a_1)$}
  \BinaryInfC{$u \oeq a_3 \lor \neg E(a_1, u) \lor \neg E(u,v)$}
  \AxiomC{$\neg E(a_1, a_3)$}
  \BinaryInfC{$\neg E(a_1, u) \lor \neg E(u,v)$}
\end{prooftree}
Denote by $N^{+}$ the clause set $T(G)$ together with the derived clauses given in the above two derivations.
The clause set $R(\mset{1}, \mset{3})$ is given by
\begin{align*}
  (1)\ &X(a_1), \\
  (2)\ &\neg X(a_3), \\
  (3)\ &\neg X(u) \lor \neg E(u,v) \lor X(v). \\
  \intertext{The resolvent between $1$ and $3$ after variable elimination is given by}
  (4)\ &\neg E(a_1, v) \lor X(v)
\end{align*}
Now consider the following $\mathcal{C}$-derivation from $T(G) \union R(\mset{1}, \mset{3})$.
$$
D=
\Axiom$T(G) \union \mset{1, 2, 3}\fCenter$
\RightLabel{$\mathrm{ParMod}^\ast$}
\UnaryInf$N^+ \union \mset{1, 2, 3}\fCenter$
\RightLabel{$\mathrm{Res}_{1.1,3.1}$}
\UnaryInf$N^+ \union \mset{1, 2, 3, 4}\fCenter$
\RightLabel{$\mathrm{PurDel}_{1.1}$}
\UnaryInf$N^+ \union \mset{2, 3, 4}\fCenter$
\RightLabel{$\mathrm{PurDel}_{4.2}$}
\UnaryInf$N^+ \union \mset{2, 3}\fCenter$
\RightLabel{$\mathrm{ExtPurDel}_X^{-}.$}
\UnaryInf$N^+\fCenter$
\DisplayProof
$$
From a semantic perspective it is not necessary to add the first-order clauses~$\neg E(a_2, v)$ and~\mbox{$\neg E(a_1, u) \lor \neg E(u,v)$} to the clause set as they are semantically entailed by $T(G)$ which does not contain $X$, i.e., $N^+$ is logically equivalent to $T(G)$.
However, they were necessary to ensure that $4.2$ is purified in the last $\mathrm{PurDel}_{4.2}$ step.

We now compute the witness corresponding to the derivation $D$.
Note that $D$ has length~$10$: There are $6$ paramodulation steps, one resolution step, two purified clause deletion steps and one extended purity deletion step.
Then $I(D) = \mset{8, 9}$.
We now look for an acyclic purification annotation $f$ for $D$.

Consider the purified clause deletion step $\mathrm{PurDel}_{1.1}$ with conclusion $N_8 := N^{+} \union \mset{2, 3, 4}$.
Note that the $1.1$-resolvable pointed clauses in $N_8$ are $R_{1.1}(N_8) = \mset{2.1, 3.1}$.
The resolvent between $1.1$ and $2.1$ is $a_1 \noeq a_3$ which is subsumed by $T(G)$.
The resolvent between $1.1$ and~$3.1$ is $a_1 \noeq u \lor \neg E(u,v) \lor X(v)$ which after variable elimination reduces to $\neg E(a_1,v) \lor X(v)$ which is subsumed by $4$.
Thus, a $1.1$-purification subsumption $s_8$ for $N_8$ is given by
\begin{align*}
  s_8(2.1) &= a_1 \noeq a_3 \in T(G) \\
  s_8(3.1) &= \neg E(a_1, v) \lor X(v).
\end{align*}
The corresponding purification subsumption graph is
\begin{center}
  \begin{tikzpicture}[>=stealth,
  every node/.style={draw=none},
  every edge/.style={->, thick},
  every edge quotes/.style={draw=none, fill=none, font=\footnotesize, inner sep=0pt}]

    \node (1) {$\neg X(a_3)$};
    \node (2) [right=3cm of 1] {$a_1 \noeq a_3$};
    \node (3) [below=0.5cm of 1] {$\neg X(u) \lor E(u,v) \lor X(v)$};
    \node (4) [below=0.5cm of 2] {$\neg E(a_1, v) \lor X(v)$};
    \node (5) [left=of 1] {$N^{+} \setminus \mset{a_1 \noeq a_3}$};

    \draw[->] (1) -- node[above, draw=none, inner sep=0pt,font=\footnotesize] {$\neg X(a_3)$} (2);
    \draw[->] (3) -- node[above, draw=none, inner sep=0pt,font=\footnotesize] {$\neg X(u)$} (4);
  \end{tikzpicture}
\end{center}
which is acyclic and has maximal length $1$.
Thus we set $f(8) = 1$.

Now consider the derivation step $\mathrm{PurDel}_{4.2}$ with conclusion~\mbox{$N_9 := N^{+} \union \mset{2, 3}$}.
Note that the $4.2$-resolvable clauses with $N_9$ are~\mbox{$R_{4.2}(N_9) = \mset{2.1, 3.1}$}.
The resolvent between $4.2$ and~$2.1$ is $\neg E(a_1, v) \lor v \noeq a_3$ which after variable elimination reduces to $\neg E(a_1, a_3)$ and which is subsumed by $T(G)$.
The resolvent between $4.2$ and $3.1$ is $u \noeq v' \lor \neg E(a_1, v') \lor \neg E(u,v) \lor X(v)$ which after variable elimination reduces to $\neg E(a_1,u) \lor \neg E(u, v) \lor X(v)$ which is subsumed by clause~\mbox{$\neg E(a_1, u) \lor E(u,v) \in N^+$}.
Thus, a $4.2$-purification subsumption $s_9$ for $N_9$ is given by
\begin{align*}
  s_9(2.1) &= \neg E(a_1, a_3) \in T(G) \\
  s_9(3.1) &= \neg E(a_1, u) \lor E(u,v) \in N^{+}.
\end{align*}
The corresponding purification subsumption graph is
\begin{center}
  \begin{tikzpicture}[>=stealth,
  every node/.style={draw=none},
  every edge/.style={->, thick},
  every edge quotes/.style={draw=none, fill=none, font=\footnotesize, inner sep=0pt}]

    \node (1) {$\neg X(a_3)$};
    \node (2) [right=2.5cm of 1] {$\neg E(a_1, a_3)$};
    \node (3) [below=0.5cm of 1] {$\neg X(u) \lor E(u,v) \lor X(v)$};
    \node (4) [below=0.5cm of 2] {$\neg E(a_1, u) \lor E(u,v)$};
    \node (5) [left=of 1] {$N^{+} \setminus \mset{\neg E(a_1, a_3), \neg E(a_1, u) \lor E(u,v)}$};

    \draw[->] (1) -- node[above, draw=none, inner sep=0pt,font=\footnotesize] {$\neg X(a_3)$} (2);
    \draw[->] (3) -- node[above, draw=none, inner sep=0pt,font=\footnotesize] {$\neg X(u)$} (4);
  \end{tikzpicture}
\end{center}
which is acyclic and has maximal length $1$.
Thus we set $f(9) = 1$.

We can now compute the corresponding witness as 
\begin{align*}
  \sigma(D, f) &= \tau_{\mathrm{PurDel}_{1.1}}^1\tau_{\mathrm{PurDel}_{4.2}}^1\tau_{\mathrm{ExtPurDel}_X^{-}} \\
  &= [X \leftarrow \neg B_{1.1}^1][X\leftarrow \neg B_{4.2}^1][X \leftarrow \lambda u. \bot] 
\end{align*}
Note that 
\begin{align*}
  B_{1.1}^1 &\iff \lambda u. \neg X(u) \land u \noeq a_1 \text{ and} \\
  B_{4.2}^1 &\iff \lambda u. \neg X(u) \land \Forall{v}(u \noeq v \lor \neg E(a_1, v)) \iff \lambda u. \neg X(u) \land \neg E(a_1, u).
\end{align*}
Therefore we get
\begin{align*}
  \sigma(D,f) &\iff [X \leftarrow \lambda u. X(u) \lor u \oeq a_1][X \leftarrow \lambda u. X(u) \lor E(a_1, u)][X \leftarrow \lambda u. \bot] \\
  &\iff [X \leftarrow \lambda u. X(u) \lor u \oeq a_1][X \leftarrow \lambda u. E(a_1, u)] \\
  &\iff [X \leftarrow \lambda u. E(a_1, u) \lor u \oeq a_1]
\end{align*}

Note that $T(G) \models E(a_1, u) \lor u \oeq a_1 \liff u \oeq a_1 \lor u \oeq a_2$ so the witness matches the graph reachability solution $\mset{1, 2}$.
\end{example}

\section{Implementation}
\label{sec.implementation}

The presented method is implemented in the Scala programming language~\cite{scala} as part of the General Architecture for Proof Theory (GAPT) software package~\cite{GAPT}, which provides useful data structures for developing algorithms in computational logic, including terms, formulas and clauses.
The prototype is available in version 2.19.0 of GAPT (\texttt{https://www.logic.at/gapt/release\_archive.html}).
Section 8.4 of the GAPT user manual details how to use the prototype.

To evaluate our witness construction method, we implemented the SCAN algorithm so that, in addition to finding a logically equivalent first-order clause set, it provides a~$\mathcal{C}$-derivation from which our method extracts a witness.

In the following, let $\seq{X}$ be the tuple of predicate variables to be eliminated.
Our implementation operates in two alternating phases.
First, we run a preprocessing phase where the clause set is simplified by eagerly applying variable elimination, subsumption deletion, tautology deletion and, if applicable, extended purity deletion.
As part of this phase we also add non-redundant constraint factors of all clauses in the clause set.

The second phase is the purification phase, where we pick a clause $C$ from the active clause set and an $\seq{X}$-literal $L \in C$ which induces the pointed clause $P = C[\underline{L}]$.
Denote by $N$ the rest of the active clause set.
We repeatedly add those constraint resolvents~$R$ between $P$ and $N$ such that $N \not\subsumesLvelim{L^\perp} R$.
During this phase we also eagerly apply variable elimination and subsumption deletion, but to avoid the issues of \Cref{ex:necessary-to-treat-tautological-clause-not-as-redundant} we do not perform tautology deletion.
This phase runs until the pointed clause $P$ is purified in the rest of the clause set, at which point it is removed by purified clause deletion.

Afterwards, the two phases are repeated until no more $\seq{X}$-literals occur in the active clause set.
During the saturation process all performed derivation steps are stored in a $\mathcal{C}$-derivation which is later used to construct a witness.

The choice of pointed clause during the purification phase determines the outcome of the saturation process.
Different choices can lead to termination or non-termination.
This also affects which $\mathcal{C}$-derivations are found.
To find all possible $\mathcal{C}$-derivations from a given clause set,
we implemented backtracking on top of the saturation process, which yields $\mathcal{C}$-derivations based on different choices of pointed clauses.
This backtracking can be performed on-demand and allows us to find multiple derivations and therefore multiple witnesses.

After an $\seq{X}$-eliminating $\mathcal{C}$-derivation is found, we compute a corresponding witness according to \Cref{def:first-order-witness}.
For the derivation steps $S$ which are not purified clause deletion the computation of $\tau_S$ is straightforward.
For purified clause deletion steps~\mbox{$N \union \mset{P} / N$} we calculate the minimal $k \in \N$ such that $P$ is $k$-acyclically purified in $N$ by iterating all $P$-purification subsumptions $s$ for $N$ and computing a maximal path length of $G(N,P,s)$ while also checking for acyclicity of $G(N,P,s)$.
If $P$ is not $k$-acyclically purified in $N$ for any $k \in \N$, we resort to computing $\mathrm{\ell Res}_P$ up to redundancy using the purification process from the SCAN implementation and then use the predicate substitution $\tau_{\mathrm{PurDel}_P}$ from \Cref{def:witness-transforming-purified-clause-deletion}.
This might not terminate, but there are cases where $\mathrm{\ell Res}_P$ is finite up to redundancy while $P$ is not $k$-acyclically redundant in $N$.
Since $\mathrm{\ell Res}_P$ is potentially infinite, we provide parameters that limit the number of derivation steps performed in its computation.

Our prototype implementation has been tested on 44 examples from the file
\texttt{examples/predicateEliminationProblems.scala} included in GAPT 2.19.0.
These examples were either created by us or selected from the literature.
The tests were run using the OpenJDK
21.0.4 Java Runtime with 16GB heap space on a MacBook Pro M2 with 32GB RAM.
We first run our implementation of the SCAN algorithm to get a derivation and measure the execution time and the length of the derivation.
We limit the search to derivations that eliminate the predicates in at most 50 derivation steps.
Furthermore, for each example we set a timeout of 10 seconds.
If a derivation could be computed, we run the witness computation based on that derivation and measure its execution time separately.
Furthermore, we measure input clause set size and the resulting witness size (if found).
We measure the size of a literal by the number of non-logical symbols it contains.
The size of a clause is given by the sum of the sizes of its literals and the size of a clause set is given by the sum of the sizes of its clauses.
The size of a witness is measured by the number of logical and non-logical symbols it contains.
See the \texttt{runWscanBenchmark} method in the file \texttt{testing/src/main/scala/testWscan.scala} for the exact test protocol.

The aggregate results are as follows: The input size ranged from 2 to 177 with an average of 17.705.
For 40 of the 44 examples (90.91\%) a derivation was found.
Their lengths ranged from 0 (this example had no second-order quantifiers) to 34 with an average of 7.975.
In the remaining 4 cases the timeout of 10 seconds was reached without finding a derivation.
For the 40 terminating examples the running times for SCAN ranged from 0.119 milliseconds to 6181.111
milliseconds with an average of~375.435~milliseconds.
For all of the 40 examples where SCAN terminated, a witness could be computed as well.
The running times for the witness computation given the derivation ranged from 0.009 milliseconds to 4.684 milliseconds with an average of~0.619~milliseconds.
The corresponding witness sizes ranged from 0 (this example had no second-order quantifiers) to 54 with an average of 6.900.
This shows that witness computation takes significantly less time than finding the SCAN derivation itself.

We have not optimized our implementation for large clause sets.
Furthermore, we are not aware of a sufficiently large database of examples to test our method on.
Thus, a more thorough empirical evaluation is left as future work.

\section{Discussion and Future Work}
\label{sec.discussion}

In this section we discuss several aspects of our work and possible directions for future work.

\subsection*{First-Order Witnesses for $\seq{X}$-Eliminating Derivations}

We have presented two methods for witness construction: one that produces a fixpoint witness for arbitrary $\seq{X}$-eliminating $\mathcal{C}$-derivations, and one that produces a first-order witness for $\seq{X}$-eliminating $\mathcal{C}$-derivations with acyclic purification annotations. We conjecture that it is possible to produce first-order witnesses whenever there is an $\seq{X}$-eliminating $\mathcal{C}$-derivation from a clause set $N$ without any further assumptions.
\begin{conjecture}
\label{first-order-witness-problem}
Let $N$ be a clause set.
If there is an $\seq{X}$-eliminating $\mathcal{C}$-derivation from $N$, then there is a first-order \WSOQE-witness for $\Exists{\seq{X}} N$.
\end{conjecture}

\subsection*{Recursion-Free Horn Clauses}
Program verification problems are often encoded as constrained Horn clauses \cite{Bjorner15Horn} where unknown predicates $X$ stand for loop invariants necessary to prove the correctness of a given program.
Solving constrained Horn clauses means finding an interpretation of the unknown predicates $X$.
This is a restriction of the \FEQ-problem: Given a formula $\varphi$ with predicate variables in $\seq{X}$, find a first-order predicate substitution $\sigma$ such that $\models \varphi\sigma$.
Solving constrained Horn clauses then means solving \FEQ\ where $\varphi$ is a constrained Horn clause set.
A particular fragment of constrained Horn clauses, namely \emph{recursion-free} Horn clause sets, has been widely studied, see, e.g., \cite{Ruemmer2015,Unno2015}.
\begin{definition}
  Let $N$ be a clause set and let $X, Y$ be predicate variables.
  We write $X \prec_N Y$ if there is a clause $C \in N$ such that $X$ occurs negatively in $C$ and $Y$ occurs positively in~$C$.
  Denote by $\prec_N^{+}$ the transitive closure of $\prec_N$.
  For a tuple of predicate variables $\seq{X}$ we say~\emph{$N$ is recursion-free in $\seq{X}$} if $\prec_N$ is acyclic on $\seq{X}$, i.e., there is no predicate variable $Y \in \seq{X}$ with~\mbox{$Y \prec_N^{+} Y$}.
\end{definition}

We suspect that SCAN terminates on this class of clause sets and that our method can produce a \WSOQE-witness.
\begin{conjecture}
  Let $N$ be a Horn clause and let $\seq{X}$ be a tuple of predicate variables such that~$N$ is recursion-free in $\seq{X}$.
  Then there is an $\seq{X}$-eliminating $\mathcal{C}$-derivation from $N$ and an acyclic purification annotation $f$ such that $\sigma_{\mathrm{fo}}(D, f)$ is a \WSOQE-witness for $\Exists{\seq{X}} N$.
  If~$\models N'$, then $\sigma_{\mathrm{fo}}(D, f)$ is an \FEQ-witness for $\Exists{\seq{X}} N$.
\end{conjecture}
One can then check whether the produced \WSOQE-witness is an \FEQ-witness by checking whether the first-order conclusion of the produced $\mathcal{C}$-derivation is valid using, e.g., a first-order theorem prover.
This conjecture would give a procedure for solving recursion-free constrained Horn clause sets different from those in \cite{Ruemmer2015,Unno2015}.
It may be possible to extend this procedure based on redundancy criteria from resolution theorem proving in order to extend the class where it terminates beyond that of recursion-free constrained Horn clause sets.

\subsection*{Constructing Multiple Witnesses}
Different $\mathcal{C}$-derivations can result in different, non-equivalent witnesses. Thus, our method can construct multiple witnesses for a given input clause set.
As mentioned above, our prototype obtains different $\mathcal{C}$-derivations by backtracking over different choices of pointed clauses in the purification process of SCAN.

In practice, certain witnesses might be preferred over others, e.g., based on size.
Accounting for additional constraints like these when constructing witnesses is an avenue for future work.
An additional avenue is to analyze the relative logical strength of different witnesses.

\subsection*{Limitations for Finding Witnesses}
\label{sec.limitations-for-finding-witnesses}
The original SCAN algorithm uses Skolemization if the input formula is not in clause form and after the saturation process tries to undo the Skolemization to find a logically equivalent first-order formula.
While this reverse Skolemization is not possible in general, it does work in some cases, for example, if the input formula has the form $\Exists{\seq{u}} \varphi$ where $\varphi$ is quantifier-free.
We discuss such an example where the SCAN algorithm terminates and finds a SOQE-solution, but where no WSOQE-witness exists.

The input formula is $\Phi = \Exists{X} \Exists{u} \Exists{v} ( X(u) \land \neg X(v))$.
Skolemization of the first-order part yields the clause set $N = \mset{X(a), \neg X(b)}$ with Skolem constants~$a$ and~$b$.
An $X$-eliminating $\mathcal{C}$-derivation from $N$ is given by
$$
  D = \Axiom$\mset{X(a), \neg X(b)}\fCenter$
  \RightLabel{$\mathrm{Res}_{\underline{X(a)}, \underline{\neg X(b)}}$}
  \UnaryInf$\mset{X(a), \neg X(b), a \noeq b}\fCenter$
  \RightLabel{$\mathrm{PurDel}_{\underline{X(a)}}$}
  \UnaryInf$\mset{\neg X(b), a \noeq b}\fCenter$
  \RightLabel{$\mathrm{ExtPurDel}_X^{-}$}
  \UnaryInf$\mset{a \noeq b}\fCenter$
  \DisplayProof
$$
The original SCAN algorithm now reverses the Skolemization on the resulting formula~$a \noeq b$, to get the logically equivalent formula $\Exists{u} \Exists {v} u \noeq v$, i.e., $\Phi \iff \Exists{u} \Exists {v} u \noeq v$.
Let us now compute a witness.
We use the method from \Cref{sec.resolution-witnesses} as it also produces a first-order witness in this case as $\underline{X(a)}$ is a one-sided pointed clause.
Note that~\mbox{$\mathrm{\ell Res}_{\underline{X(a)}} = \lambda u.\neg X(u) \land u \noeq a$} and thus $\neg \mathrm{\ell Res}_{\underline{X(a)}} \iff \lambda u. X(u) \lor u \oeq a$.
Now a witness is given by 
$$\sigma(D) = [X \leftarrow \lambda u. X(u) \lor u \oeq a][X \leftarrow \lambda u. \bot] \iff [X \leftarrow \lambda u. u \oeq a].$$
However, this witness contains a Skolem constant which is not in the language of the input formula.
In fact, no WSOQE-witness exists in the input language:
Assume $\Phi$ had a WSOQE-witness $\witness = \lambda u. \varphi(u)$.
Then $\witness$ is a first-order predicate in the empty language and in every structure $\mathcal{M}$ it defines the subset $\witness^{\mathcal{M}}$ given by~\mbox{$\mset{m \in M \suchthat \mathcal{M} \models \varphi(u)}$}.
For every two-element model~\mbox{$\mathcal{M} = \mset{m_1, m_2}$}
we have
$$\mathcal{M} \models \Exists{X} \Exists{u} \Exists{v} (X(u) \land \neg X(v)).$$
Since $\witness$ is a witness we get~\mbox{$\mathcal{M} \models \Exists{u} \Exists{v}(\varphi(u) \land \neg \varphi(v))$}.
Thus $\witness^{\mathcal{M}}$ is neither the empty set nor the whole set $M$.
However, in the empty language, the only first-order parameter-free definable sets in a structure
$\mathcal{M}$ are the empty set and $M$ itself, see~\cite[Theorem 2.1.2]{Hodges97Shorter}, which is a contradiction.

This shows a fundamental limitation for finding witnesses, even if SOQE-solutions exist.
For practical applications it would be useful to add Skolemization and reverse Skolemization steps to the implementation and investigate classes of formulas where witness construction succeeds in the presence of Skolemization.

\subsection*{Quantifier Alternations}

We have discussed the WSOQE problem in the context of existential quantifiers, but one can also consider the dual problem for universal quantifiers:
Given a formula $\Forall{\seq{X}} \varphi$ with first-order~$\varphi$, find first-order predicates $\seq{\witness}$ such that
\begin{equation*}
  \Forall{\seq{X}} \varphi \iff \varphi[\seq{X} \leftarrow \seq{\witness}].
\end{equation*}
One can reduce this problem to the existential case since for all first-order predicates~\mbox{$\seq{\witness}$} we have $\Forall{\seq{X}} \varphi \iff \varphi[\seq{X} \leftarrow \seq{\witness}]$ if and only if $\Exists{\seq{X}} \neg \varphi \iff \neg \varphi[\seq{X} \leftarrow \seq{\witness}]$ by writing $\Forall{\seq{X}}$ as $\neg \Exists{\seq{X}} \neg$.
This way, one can solve the problem for an arbitrary quantifier prefix by successively eliminating alternating blocks of existential and universal quantifiers starting with the innermost block.

While the reduction above works if one has a general WSOQE method that can handle any input formula,
our method only
works when the input is a clause set,
i.e., the domain variables are universally
quantified.
The above reduction introduces a negation on the input formula,
meaning that we would now have to deal with existential quantifiers on
the domain variables as well, e.g., via Skolemization.
However, as stated before there are fundamental limitations when introducing Skolemization.
For input formulas where the first-order part is quantifier-free our method is applicable though.

\subsection*{Improvement over Ackermann's Lemma}

Recall Ackermann's Lemma:
\begin{theorem}
  \label{ackermanns-lemma}
  Let $\varphi$, $\psi$ be first-order formulas where $X$ only occurs positively in~$\varphi$ and $X$ does not occur in $\psi$.
  Then
  \begin{align*}
    \Exists{X} (\varphi \land \Forall{\seq{u}} (X(\seq{u}) \limp \psi(\seq{u}, \seq{v}))) \iff \varphi[X \leftarrow \lambda \seq{u}. \psi(\seq{u}, \seq{v})].
  \end{align*}

  Let $\varphi$, $\psi$ be first-order formulas where $X$ only occurs negatively in~$\varphi$ and $X$ does not occur in $\psi$.
  Then
  \begin{align*}
    \Exists{X} (\varphi \land \Forall{\seq{u}} (\psi(\seq{u}, \seq{v}) \limp X(\seq{u}))) \iff \varphi[X \leftarrow \lambda \seq{u}. \psi(\seq{u}, \seq{v})].
  \end{align*}
\end{theorem}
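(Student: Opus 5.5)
The right-to-left direction of both equivalences is immediate once $\psi$ is taken as the witness. We aim to show that $\lambda \seq{u}. \psi(\seq{u}, \seq{v})$ is a WSOQE-witness for the left-hand side. That is, the left-hand side with $X$ replaced by $\lambda\seq{u}.\psi$ is equivalent to $\varphi[X \leftarrow \lambda \seq{u}. \psi]$. The second conjunct becomes $\Forall{\seq{u}}(\psi(\seq{u},\seq{v}) \limp \psi(\seq{u},\seq{v}))$, which is valid. Hence $\varphi[X \leftarrow \lambda\seq{u}.\psi] \imp \Exists{X}(\dots)$ holds simply by choosing the interpretation of $X$ to be the relation defined by $\psi$ (with the free parameters $\seq{v}$ fixed by the environment).

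For the left-to-right direction in the first case, fix a structure $\mathcal{M}$, an environment $\theta$, and a relation $R$ such that $\mathcal{M},\theta\{X\leftarrow R\} \models \varphi \land \Forall{\seq{u}}(X(\seq{u}) \limp \psi(\seq{u},\seq{v}))$. The second conjunct says that $R \subseteq \psi^{\mathcal{M},\theta}$, where $\psi^{\mathcal{M},\theta}$ is the relation defined by $\psi$ under $\theta$. The key step is a monotonicity lemma: if $X$ occurs only positively in $\varphi$ and $R \subseteq R'$, then $\mathcal{M},\theta\{X\leftarrow R\}\models\varphi$ implies $\mathcal{M},\theta\{X\leftarrow R'\}\models\varphi$. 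This is proved by a routine induction on $\varphi$ (after pushing negations to atoms, or by a simultaneous induction for positive and negative occurrences, where negative occurrences are antitone). It is the semantic analogue of \Cref{monotonicity-applying-substitutions}. Applying it with $R' = \psi^{\mathcal{M},\theta}$ gives $\mathcal{M},\theta\models\varphi[X\leftarrow\lambda\seq{u}.\psi]$. Variable capture is avoided by renaming bound variables of $\varphi$ away from $\seq{v}$ before substituting.

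The second case is dual. Here the constraint reads $\psi^{\mathcal{M},\theta} \subseteq R$ and $X$ is negative in $\varphi$, so $\varphi$ is antitone in $X$; shrinking $R$ to $\psi^{\mathcal{M},\theta}$ preserves truth. Alternatively, we reduce it to the first case by substituting $\neg X$ for $X$ and using that $\Exists{X}$ and $\Exists{\neg X}$ range over the same relations.

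The only step needing care is the monotonicity induction together with the substitution lemma $\mathcal{M},\theta\models\varphi[X\leftarrow\lambda\seq{u}.\psi]$ iff $\mathcal{M},\theta\{X\leftarrow\psi^{\mathcal{M},\theta}\}\models\varphi$. Both are standard, and I expect no real obstacle beyond bookkeeping of the free parameters $\seq{v}$. Since the theorem is classical, one could also simply cite \cite{Ackermann35Untersuchungen}. The argument above is the one I would write out for self-containment.
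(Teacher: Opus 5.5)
Your argument is correct. The easy direction instantiates $X$ by $\lambda\seq{u}.\psi$. The nontrivial direction combines the constraint $R \subseteq \psi^{\mathcal{M},\theta}$ (resp.\ $\psi^{\mathcal{M},\theta} \subseteq R$) with monotonicity (resp.\ antitonicity) of $\varphi$ in $X$ and the substitution lemma, which is the standard proof of Ackermann's Lemma. The paper gives no proof of its own but cites \cite[Lemma 6.1]{Gabbay08Second}, where essentially this monotonicity argument is used, so your self-contained write-up follows the same route.
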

\begin{proof}
  See, e.g.,~\cite[Lemma 6.1]{Gabbay08Second}.
\end{proof}
Note that in both cases Ackermann's Lemma provides a witness $\lambda \seq{u}. \psi(\seq{u}, \seq{v})$.
Therefore Ackermann's Lemma is a method for solving certain instances of WSOQE.
We show that our method extends Ackermann's Lemma on clause sets:
\begin{proposition}
  \label{improvement-of-ackermanns-lemma}
  Let $N'$ be a finite clause set where $X$ only occurs positively and let $C(\seq{u})$ be a clause where $X$ does not occur.
  Further set $N := N' \union \mset{\neg X(\seq{u}) \lor C(\seq{u})}$.
  Then there is an~$X$-eliminating derivation $D$ from $N$ such that $\sigma(D) \iff [X \leftarrow \lambda \seq{u}. C(\seq{u})]$, which is the witness produced by Ackermann's Lemma applied to $\Exists{X} N$.

  Let $N'$ be a finite clause set where $X$ only occurs negatively and let $C(\seq{u})$ be a clause where~$X$ does not occur.
  Further set $N := N' \union \mset{X(\seq{u}) \lor C(\seq{u})}$.
  Then there is an $X$-eliminating derivation $D$ from $N$ such that $\sigma(D) \iff [X \leftarrow \lambda \seq{u}. \neg C(\seq{u})]$, which is the witness produced by Ackermann's Lemma applied to $\Exists{X} N$.
\end{proposition}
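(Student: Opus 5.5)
Consider the first case; the second is symmetric. The derivation I would use is $D = (\mathrm{PurDel}_P, \mathrm{ExtPurDel}_X^{+})$ with $P := \underline{\neg X(\seq{u})} \lor C(\seq{u})$.

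\emph{Step 1: the purified clause deletion step is legal.} We must check that $P$ is purified in $N'$, i.e.\ $N' \subsumesLvelim{X} \mathrm{Res}_P^{\leq 1}(N')$. Clauses of $N'$ itself are handled by reflexivity (\Cref{relationships-between-subsumption-relations}). The remaining elements are resolvents. Let $Q = \underline{X(\seq{s})} \lor E$ with $Q \in R_P(N')$. Its resolvent is
\[
\mathrm{Res}(P, Q) = \seq{u} \noeq \seq{s} \lor C(\seq{u}) \lor E ,
\]
where $\seq{u}$ is renamed apart. Since the $\seq{u}$ are distinct fresh variables, repeated variable elimination gives $\mathrm{Res}(P,Q) \velimtrans C(\seq{s}) \lor E$, which is a superset of $E$.

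Here lies the obstacle: $E$ is not in $N'$, and no clause of $N'$ need subsume $C(\seq{s})\lor E$. So in general $D$ must first add these resolvents by $\mathrm{Res}$ steps. Each such resolvent, after $\mathrm{VarElim}$, still contains only positive $X$-literals. Resolving $P$ into it again only replaces one more $X$-literal by an instance of $C$. The closure is therefore finite up to variable elimination, because each step strictly decreases the number of $X$-literals. This is the one-sided situation of \Cref{one-sided-implies-acyclically-purified}.

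I would therefore set $N'' := N' \cup \mset{\text{all iterated resolvents (after }\mathrm{VarElim})}$, which is a finite set. Then $P$ is purified in $N''$, since every resolvent of $P$ with a member of $N''$ variable-eliminates to a member of $N''$. The derivation becomes $D = (\mathrm{Res}/\mathrm{VarElim}$ steps$, \mathrm{PurDel}_P, \mathrm{ExtPurDel}_X^{+})$. The final step is legal because every remaining clause containing $X$ does so only positively, and after it no $X$ remains.

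\emph{Step 2: compute the witness.} All $\mathrm{Res}$ and $\mathrm{VarElim}$ steps contribute $\mathrm{id}$, so $\sigma(D) = \tau_{\mathrm{PurDel}_P}[X \leftarrow \lambda\seq{u}.\top]$. The designated literal of $P$ is negative, so $\tau_{\mathrm{PurDel}_P} = [X \leftarrow \mathrm{\ell Res}_P]$. Since $P$ is one-sided, the proof of the lemma on one-sided clauses gives
\[
\mathrm{\ell Res}_P = \lambda\seq{c}.\, X(\seq{c}) \land \Forall{\seq{u}}(\seq{c}\noeq\seq{u} \lor C(\seq{u})) \iff \lambda\seq{c}.\, X(\seq{c}) \land C(\seq{c}).
\]
Substituting $X \leftarrow \lambda\seq{u}.\top$ into this yields $\lambda\seq{u}.C(\seq{u})$, so $\sigma(D) \iff [X\leftarrow\lambda\seq{u}.C(\seq{u})]$ as claimed.

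\emph{The second case.} Take $P = \underline{X(\seq{u})} \lor C(\seq{u})$ and finish with $\mathrm{ExtPurDel}_X^{-}$. Now
\[
\tau_{\mathrm{PurDel}_P} = [X \leftarrow \neg\mathrm{\ell Res}_P], \qquad \neg\mathrm{\ell Res}_P \iff \lambda\seq{c}.\,X(\seq{c}) \lor \neg C(\seq{c}).
\]
Composing with $[X\leftarrow\lambda\seq{u}.\bot]$ gives $\lambda\seq{u}.\neg C(\seq{u})$.

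The main obstacle is the bookkeeping in Step 1. One must show that the iterated-resolvent set is finite and closed under resolution with $P$ modulo $\velimtrans$ and $L$-injective subsumption. I would argue this by induction on the number of positive $X$-literals, as in \Cref{one-sided-implies-acyclically-purified}.
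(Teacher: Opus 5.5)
Your proof is correct and follows essentially the same route as the paper. The paper uses the same derivation: $\mathrm{Res}$ steps to build the closure under resolution with $P$, then $\mathrm{PurDel}_P$, then $\mathrm{ExtPurDel}_X^{+}$, with finiteness from the count of positive $X$-literals and the same $\mathrm{\ell Res}_P$ computation. The only difference is that you variable-eliminate each resolvent, whereas the paper keeps the raw constraint closure $\mathrm{Res}_P^{<\omega}(N')$, so that $P$ is purified in it by plain membership and reflexivity, which saves the bookkeeping modulo $\velimtrans$.
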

\begin{proof}
  Let $N' = \mset{C_1, \dots, C_n}$.
  Informally, the derivation proceeds as follows:
  First, form the resolution closure $\resclosure{P}{N'}$ with respect to the $X$-pointed clause $P = \underline{\neg X(\seq{u})} \lor C(\seq{u})$, resulting in a clause set $N_{\mathrm{Res}_P}$.
  We show later that this process terminates.
  Then delete $P$ by purified clause deletion, resulting in a clause set $N'_{\mathrm{Res}_P}$.
  The clause set $N'_{\mathrm{Res}_P}$ contains $X$ only positively, so extended purity deletion is applicable, resulting in a clause set $N^\ast$.
  More formally, the $X$-eliminating derivation is given by
  $$
  D=\Axiom$N\fCenter$
    \RightLabel{$\mathrm{Res}^\ast$}
    \UnaryInf$N_{\mathrm{Res}_P}\fCenter$
    \RightLabel{$\mathrm{PurDel}_P$}
    \UnaryInf$N'_{\mathrm{Res}_P}\fCenter$
    \RightLabel{$\mathrm{ExtPurDel}_X^{+}$}
    \UnaryInf$N^\ast\fCenter$
    \DisplayProof
  $$
  Note that
  $$\mathrm{\ell Res}_{P} = \lambda \seq{v}. X(\seq{v}) \land \Forall{\seq{u}} (\seq{v} \noeq \seq{u} \lor C(\seq{u})) \iff \lambda \seq{v}. X(\seq{v}) \land C(\seq{v}).$$
  Thus we have 
  $$\sigma(D) \iff [X \leftarrow \mathrm{\ell Res}_P[X \leftarrow \lambda \seq{u}. \top]] = [X \leftarrow \lambda \seq{v}. \top \land C(\seq{v})] \iff [X \leftarrow \lambda \seq{u}. C(\seq{u})].$$

  To finish the proof it remains to show that $\resclosure{P}{N'}$ is finite and to show that the clause set $N'_{\mathrm{Res}_P}$ contains $X$ only positively.
  One can show by induction on $\resclosure{P}{N'}$ that every clause $C \in \resclosure{P}{N'}$ only contains $X$ positively and whenever
  \begin{prooftree}
    \AxiomC{$C$}
    \AxiomC{$P$}
    \BinaryInfC{$R$}
  \end{prooftree}
  is a constraint resolution inference between $C$ and $P$ upon $\neg X(\seq{u})$, then $R$ has one fewer positive occurrences of $X$ than $C$.
  This implies that the number of positive $X$-literals is a termination measure on $\resclosure{P}{N'}$, showing its finiteness.

  The other version can be proved by analogous arguments.
\end{proof}

Note that our method can solve \Cref{ex.main-example.first-witness} where Ackermann's Lemma is not applicable since the clause set contains a clause where $X$ occurs both positively and negatively.
Thus our method strictly extends Ackermann's Lemma.

\subsection*{Computing Witnesses with DLS(*)}
As this work extends a known algorithm for solving SOQE to solve the more general WSOQE problem, the question arises whether other algorithms like DLS \cite{Doherty97Computing} and DLS* \cite{Doherty98General,Nonnengart98Fixpoint} can be equally modified to solve WSOQE and how the resulting witnesses compare.

\subsection*{Application to Forgetting}
Methods for computing forgetting solutions and uniform
interpolants have been developed for description logics and modal
logics~\cite{KoopmannSchmidt13a,KoopmannSchmidt13c,KoopmannSchmidt15a,AlassafSchmidtSattler22}.
For applications such as knowledge processing, agent applications and
Boolean unification, the possibility of extending these to compute also
witnesses for eliminated variables would be attractive.
Using a WSOQE-solution instead of a SOQE-solution has the advantage that the structure of the
existing knowledge base is left largely intact.
Functionality to return witnesses could
provide a promising avenue to develop methods for such applications, but
such development is subject to future work.

\section{Conclusion}
\label{sec.conclusion}

We introduced WSCAN, an extension of the SCAN algorithm on clause sets that computes witnesses for existential second-order quantifiers and thereby solves the more general WSOQE-problem.
Given an $\seq{X}$-eliminating derivation $D$ from a clause set $N$, we can produce a witness for $\Exists{\seq{X}} N$.
This witness may be infinite in general. However, if all purified clause deletions of $D$ have a certain finiteness property, such as being acyclically purified, then we can guarantee that the witness is first-order.
To the best of our knowledge there are currently no other WSOQE-algorithms applicable to arbitrary clause sets.
This work paves the way for new applications of the SCAN algorithm in a variety of areas,
for example, modal correspondence theory, knowledge representation or verification.

On a more abstract level, we see this work as a contribution to bridging the gap between
second-order quantifier elimination (SOQE) and solving formula equations (FEQ), which will be of interest to the
respective communities investigating these problems.
SOQE has been studied mostly in the context of modal logic, description logics, knowledge representation and answer set programming.
On the other hand, FEQ is studied (usually in different formalisms and under different names)
in the verification and automated (inductive) theorem proving communities.
We believe that the complex of problems consisting of SOQE, WSOQE, and FEQ provides a suitable
{\em common logical foundation} for work that is done on all of these topics, as has also been suggested in~\cite{Wernhard17Boolean}.
This perspective not only allows us to relate these seemingly different problems to one another
on a logical level, but also stimulates cross-fertilization in terms of practical solution
ideas, techniques, and algorithms.

\section*{Declarations}

\subsection*{Funding}
This research was funded in part by the Austrian Science Fund (FWF) grant number 10.55776/P35787.

\subsection*{Competing interests}
The authors have no competing interests to declare that are relevant to the content of this article.

\bibliography{./main.bib}

\end{document}